\newlength\figureheight
\newlength\figurewidth
\DeclareMathOperator{\e}{\mathrm{e}}
\DeclareMathOperator{\Ran}{Ran}
\DeclareMathOperator{\Id}{Id}
\DeclareMathOperator*{\argmin}{arg\,min}
\DeclareMathOperator{\rank}{rank}
\DeclareMathOperator{\R}{\mathbb{R}}
\DeclareMathOperator{\N}{\mathbb{N}}
\DeclareMathOperator{\diag}{diag}
\DeclareMathOperator{\dg}{dg}
\DeclareMathOperator{\ran}{ran}
\newcommand{\f}[1]{\mathbf{#1}}
\DeclareMathOperator{\PTX}{\y{P}_{T_{\f{H}}}}
\newcommand{\C}{\mathbb{C}}
\newcommand\hkm{^{(k-1)}}
\newcommand\hk{^{(k)}}
\newcommand\hkk{^{(k+1)}}
\newcommand\Mn{\mathcal{M}_n}
\newcommand\MnOplus{\Mn^{\oplus T}}
\newcommand\Rdd{\mathcal{M}_{d_1,d_2}}
\newcommand\Rddn{\mathcal{M}_{d_1 n,d_2 n}}
\newcommand\Rnn{\R^{n \times n}}
\newcommand{\noteC}[1]{\footnote{{\color{blue} \textbf{CK:} #1}}}
\newcommand{\HQTA}{\y{H}(\y{Q}_{T}(\f{A}))}
\newcommand{\HA}{\f{H}_{\f{A}}}
\newcommand{\Or}{\mathbb{O}}
\newcommand\bb[1]{\mathbb{#1}}
\newcommand\y[1]{\mathcal{#1}}
\def\Ex{\mathbb{E}}
\newtheorem{theorem}{Theorem}[section]
\newtheorem{lemma}{Lemma}[section]
\newtheorem{proposition}{Proposition}[section]
\newtheorem{property}{Property}[section]
\newtheorem{corollary}{Corollary}[section]
\newtheorem{remark}{Remark}[section]
\newtheorem{definition}{Definition}[section]
\newtheorem{problem}{Problem}[section]
\begin{document}

\title{Learning Transition Operators From Sparse Space-Time Samples}

\author{Christian~K\"ummerle,~
        Mauro~Maggioni,~
        and~Sui~Tang
\thanks{C. K\"ummerle is with the Department
of Computer Science, University of North Carolina at Charlotte, Charlotte,
NC, 28223, USA, e-mail: kuemmerle@uncc.edu.}
\thanks{M. Maggioni is with the Department of Mathematics and the Department of Applied Mathematics \& Statistics, Johns Hopkins University, Baltimore, MD, 21218, e-mail: mauromaggionijhu@icloud.com.}
\thanks{S. Tang is with the Department of Mathematics, University of California, Santa Barbara, Isla Vista, CA 93117, USA, e-mail: suitang@ucsb.edu.}}

\maketitle

\begin{abstract}
We consider the nonlinear inverse problem of learning a transition operator $\f{A}$ from {partial observations at different times}, in particular from {sparse observations} of entries of its powers $\f{A},\f{A}^2,\cdots,\f{A}^{T}$. This \emph{Spatio-Temporal Transition Operator Recovery} problem is motivated by the recent interest in learning time-varying graph signals that are driven by graph operators depending on the underlying graph topology. We address the nonlinearity of the problem by embedding it into a higher-dimensional space of suitable block-Hankel matrices, where it becomes a low-rank matrix completion problem, even if $\f{A}$ is of full rank.
For both a uniform and an adaptive random space-time sampling model, we quantify the recoverability of the transition operator via suitable measures of incoherence of these block-Hankel embedding matrices. For graph transition operators these measures of incoherence depend on the interplay between the dynamics and the graph topology. 
We develop a suitable non-convex iterative reweighted least squares (IRLS) algorithm, establish its quadratic local convergence, and show that, in optimal scenarios, no more than $\mathcal{O}(rn \log(nT))$ space-time samples are sufficient to ensure accurate recovery of a rank-$r$ operator $\f{A}$ of size $n \times n$. This establishes that {spatial samples} can be substituted by a comparable number of {space-time samples}.
We provide an efficient implementation of the proposed IRLS algorithm with space complexity of order $O(r n T)$ and per-iteration time complexity {linear} in $n$. Numerical experiments for transition operators based on several graph models confirm that the theoretical findings accurately track empirical phase transitions, and illustrate the applicability and scalability of the proposed algorithm.
\end{abstract}

\begin{IEEEkeywords}
Operator learning; block Hankel matrix completion; iterative reweighted least squares; nonlinear inverse problem; graph signal processing.
\end{IEEEkeywords}

\section{Introduction}

Signals that arise from social, biological or transport networks are typically interconnected and structured, and can be modeled as residing on graphs. In many modern applications, the graph signals are time-varying and driven by graph  operators that are dependent on the underlying graph topology. For example,  the traffic flow on the road network is changing during a day; spatial temperature profiles measured by a sensor network vary at different time instances.    Estimating such graph signals and  dynamical processes from sparse observations is a research topic of wide interest (see for example \cite{ioannidis2018semi,pasdeloup2016characterization,ioannidis2018inference,dong2016learning,thanou2017learning,segarra2017network,mateos2019connecting,pasdeloup2017characterization} and references therein). 

We consider dynamical processes on graphs in the form of (discrete) linear dynamical systems:
\begin{align} \label{system:eq0}
x_{t+1} =\f{A}x_t: t=1,\ldots, 
\end{align}
where $x_t$ is the graph signal and the transition operator $\f{A}$  is typically a function of an algebraic descriptor of the graph structure (e.g. the adjacency matrix of the graph).  Examples for such transition operators include the random walk over a graph and its variations, heat operators, and other averaging processes. Powers of $\f{A}$ lead to both multiscale analyses on graphs \cite{CMDiffusionWavelets}, and eigenvectors of $\f{A}$ often play an important role in many machine learning applications such as dimension reduction and clustering \cite{DiffusionPNAS}. These models are in part motivated by applications that include modeling traffic in transportation networks \cite{deri2016new}, spatially-distributed atmospheric variables (e.g. temperature, pressure) measured by sensor networks \cite{thanou2017learning}, and neural activity in different regions of the brain \cite{sporns2010networks}.

In this paper, we are interested in learning $\f{A}$ from partial space-time observations of a temporal evolution. Let $\mathcal{X}_0 \in \mathbb{R}^{n\times m}$ be a set of $m$ initial states, one per column. We observe a discrete time series of length $T$ satisfying
\begin{align}\label{system:eq1}
\mathcal{X}_{t+1} = \f{A} \mathcal{X}_t\,, \qquad t\in[T]:=1,\ldots,T
\end{align} via spatio-temporal samples
\begin{equation} \label{eq:spatial:samples}
\mathcal{Y}_t=S_t(\mathcal{X}_t)\,,
\end{equation}
for $t\in[T]$ and $S_t:\mathbb{R}^{n\times m} \rightarrow \mathbb{R}^{m_t}$ representing a linear subsampling operator, typically returning a subset of the entries of its input.

{If we set aside the time evolution in \cref{system:eq1}, the task of recovering $\mathcal{X}_t$ at a fixed time $t$ from the observations $\mathcal{Y}_t$ may be considered as a \emph{completion problem}}, and may be tackled by \emph{low-rank matrix completion} techniques \cite{CR09,Candes10,Davenport16,ChiLuChen19}, even when the number of observations $m_t$ is much less than the number of entries in $\mathcal{X}_t$, if $\mathcal{X}_t$ has low rank. When $S_t \equiv \Id$, the identity map, related problems are also pursued in the model reduction community, to extract dominant eigenvectors of $\f{A}$, for example via the dynamic mode decomposition (DMD) \cite{schmid2010dynamic,KutzBruntonProctor16}.

In this paper, however, we are interested in situations where $T>1$ and both $\y{X}_t$ and $\f{A}$ are  \emph{not} necessarily  low rank.  At any single time $t$, we are not able to recover $\f{A}$ from $\mathcal{Y}_t$ and $\mathcal{Y}_{t+1}$, as in practice we often have $m_t, m_{t+1}\lesssim n$, due to application-specific constraints.  We seek to compensate the insufficient spatial samples at a single given time $t$ by leveraging the temporal dependent observations across time,  and tackle the fundamental problem of recovering $\f{A}$ from \textit{space-time} samples $\{\mathcal{Y}_t: t\in[T]\}$.  We restrict our attention, for simplicity, to recovering $\f{A}$ from \emph{partial observations} of $\f{A}, \f{A}^2,\cdots, \f{A}^T$, which, in the notation above, corresponds to $\mathcal{X}_0=\Id \in \R^{n \times n}$ in \eqref{system:eq1}.  Already in this setting, learning $\f{A}$ is a nonlinear inverse problem (as long as $T>1$), that is beyond the scope of regular matrix completion, dynamic mode decomposition, or subspace-based techniques \cite{coutino2020state}.

\subsection{Spatio-Temporal Transition Operator Recovery} \label{sec:problem:setup}
In the context of dynamical systems, it is expected that the recoverability of a transition operator $\f{A} \in \R^{n \times n}$ depends significantly on the structure of the space-time sampling as well as on spectral properties of $\f{A}$.
Denote the sampling locations at  $t \in [T]$ by $\Omega_t \subseteq [n] \times [n]$, and let the corresponding subsampling operator $S_t: \mathbb{R}^{n \times n} \to \R^{\Omega_{t}}$ be
\begin{align} \label{eq:def:sampling:St}
S_{t}(\f{M}):&= \left(\langle E_{i,j}, \f{M} \rangle\right)_{(i,j) \in \Omega_{t}} = \left(\f{M}_{i,j}\right)_{(i,j) \in \Omega_{t}}\,,
\end{align} 
where $E_{i,j}$ is the matrix with $1$ in its $(i,j)$-th entry and $0$ elsewhere, for $i,j \in [n]$. In applications, $\Omega_t$ may correspond to an observation model with mobile sensors that are moved to different locations at different times. 

Denoting the set of all possible space-time sampling locations by $I:= [n] \times [n] \times [T]$, we define the \emph{sampling set}
\begin{equation} \label{eq:Omega:def}
\Omega:= (\Omega_1 \times \{1\}) \cup \dots\cup (\Omega_T \times \{T\}) \subset I\,.
\end{equation}
Let $\y{M}_{n_1,n_2}$ denote the set of real $n_1 \times n_2$ matrices, abbreviated as $\Mn$ if $n=n_1=n_2$. 
We define the nonlinear monomial operator $\mathcal{Q}_T: \Mn \to  \MnOplus$ as
\begin{equation} \label{eq:matrix:powers}
\mathcal{Q}_T(\f{A}):= \f{A} \oplus \f{A}^2 \oplus \f{A}^3 \oplus \ldots \oplus \f{A}^T \in \MnOplus\,,
\end{equation}
and the sampling operator $P_{\Omega}: \MnOplus \to \R^{|\Omega|}$ as
\begin{equation}\label{eq:def:Somega} 
\begin{split}
\f{\widetilde{X}} := &\f{X}_1 \oplus \f{X}_2 \oplus \ldots  \oplus \f{X}_T  \mapsto \\
&  P_{\Omega}(\f{\widetilde{X}}) = \begin{bmatrix} S_1(\f{X}_1), & S_2(\f{X}_2), & \ldots, & S_T(\f{X}_T) \end{bmatrix}
\end{split}
\end{equation} 
where $S_{t}$ is as in \cref{eq:def:sampling:St}, for $t\in[T]$. 
We consider the following:
\begin{problem}[Spatio-Temporal Transition Operator Recovery] \label{problem}
Given a space-time sampling set $\Omega \subset I$, recover $\f{A} \in \Mn$ from the space-time samples
\begin{equation}\label{observations}
\f{y} = P_{\Omega}(\mathcal{Q}_T(\f{A})) = \left[S_1(\f{A}^1), S_2(\f{A}^2), \ldots, S_T(\f{A}^T) \right]\,,
\end{equation}
or from noisy space-time samples $\f{y} = P_{\Omega}(\mathcal{Q}_T(\f{A})) + \eta$, where $\eta$ is an (unknown) additive noise vector.
\end{problem}
We focus on two questions arising naturally in \Cref{problem}:
\begin{itemize}
\item Under which conditions on $\f{A}$ and on the distribution and size of the sampling set $\Omega$ can we guarantee to accurately estimate $\f{A}$ in \Cref{problem}?
\item Is there a computationally efficient recovery method to estimate $\f{A}$ in \Cref{problem} in these cases?
\end{itemize}

It is well-known from the literature on other structured inverse problems such as sparse vector recovery and matrix completion that deterministic sampling sets may not enable recovery from a {minimal} size of the sampling set $|\Omega|$ \cite{FR13,Davenport16}. For this reason, we focus on {random} space-time sampling schemes, and consider two types of {random} models: 
\begin{enumerate}
\item \textbf{Uniform sampling}: for $m \leq n^2 T$, a sampling set $\Omega$ consists of $m$ spatio-temporal samples in $[n] \times [n] \times [T]$ sampled uniformly at random without replacement;
\item \textbf{Adaptive sampling}: for each space-time index $(i,j,t) \in [n] \times [n] \times [T]$, let $p_{i,j,t}\in[0,1]$. An {adaptive sampling set} $\Omega \subseteq [n] \times [n] \times [T]$ consists of triplets $(i,j,t)$ drawn from i.i.d. Bernoulli trials with success probabilities $p_{i,j,t}$. The expected total number of samples is therefore $m_{\text{exp}} := \mathbb{E}[|\Omega|] = \sum_{i,j=1}^{n}\sum_{t=1}^T p_{i,j,t}$.
\end{enumerate}

While uniform sampling is conceptually simple as it only has one free parameter, $m= | \Omega |$, adaptive sampling is more flexible, in particular because its sampling probabilities $\{p_{i,j,t}\}_{(i,j,t) \in I}$ can be tuned to include prior information about a specific instance of \Cref{problem}.
\subsection{Our Contribution}
We tackle the spatio-temporal transition operator recovery problem by first applying an embedding into a structured {block Hankel matrix space}, under which the nonlinear relationship between different powers $\f{A}, \f{A}^2,\ldots,\f{A}^T$ of a matrix $\f{A}$ is mapped to a {low-rank} property of the block Hankel matrix $\HQTA$, essentially in a one-to-one manner (\Cref{thm:blkHankel:lowrank}). 
The block Hankel operator $\mathcal{H}: \MnOplus \to \Rddn$, with parameters $d_1, d_2 \in \N$ s.t. $T=d_1+d_2-1$, is defined as:
\begin{equation} \label{eq:BlockHankel:operatordef}
 \mathcal{H}\left(\f{X}_1\!\oplus\!\ldots\!\oplus\!\f{X}_T\right)\!:=\!
 \begin{bmatrix}
 	\f{X}_1\!\!\! & \f{X}_2\!\!\! & \f{X}_3\!\!\! & \reflectbox{$\ddots$} & \f{X}_{d_2}\!\!\! \\
 	\f{X}_2\!\!\! & \f{X}_3\!\!\! & \reflectbox{$\ddots$} & \reflectbox{$\ddots$} & \reflectbox{$\ddots$}   \\
    \f{X}_3\!\!\! &  \reflectbox{$\ddots$} & \reflectbox{$\ddots$} & \reflectbox{$\ddots$} &  \reflectbox{$\ddots$} \\
	\reflectbox{$\ddots$} &  \reflectbox{$\ddots$} & \reflectbox{$\ddots$} & \reflectbox{$\ddots$} &  \f{X}_{T-1}\!\!\! \\
    \f{X}_{d_1}\!\!\! & \reflectbox{$\ddots$} & \reflectbox{$\ddots$}  & \f{X}_{T-1}\!\!\!  &  \f{X}_{T}\!\!\!
 \end{bmatrix}\!\!
 \end{equation}
Related embeddings have been used to design computational methods for the solution of classical problems in signal processing and  system identification, see \Cref{scalarhankel} for details.

We then deploy an efficient low-rank optimization algorithm based on Iteratively Reweighted Least Squares (IRLS), which combines an iterative minimization of quadratic majorizing functions with an appropriate smoothing strategy for a log-determinant objective \cite{Daubechies10,Mohan10,KMV21} and, at the same time, respects the block Hankel structure. 
We address the connection between the choice of a space-time sampling set $\Omega$ and the identifiability of $\f{A}$ by proving a local convergence result for the proposed algorithm, called Transition Operator IRLS (\texttt{TOIRLS}), which shows that the operator $\f{A}$ can be efficiently computed from a number of spatio-temporal samples that is comparable to the sample complexity of using only spatial samples at time $T=1$,
for random sampling models based on either uniform or adaptive sampling. 
In particular, we show in \Cref{thm:convergence:mobilesensors} that in the noiseless case, with high probability, \texttt{TOIRLS} exhibits locally quadratic convergence if initialized close enough to the ground truth block matrix $\HQTA$, as soon as {only $\Omega(\mu_0 r n \log(n T))$ uniform or adaptive samples are provided}. An informal version of this result may be stated as follows:

\begin{theorem}[Local Convergence of \texttt{TOIRLS}, informal version] \label{thm:main:informal}
Let $\f{A} \in \Mn$ be a transition operator of rank $r$, and let $\HA := \HQTA$, with $\y{Q}_{T}(\f{A})$ and $\y{H}$ as in \cref{eq:matrix:powers} and \eqref{eq:BlockHankel:operatordef}, respectively.
Assume that either
\begin{enumerate}
\item[(i)] $\Omega$ is a space-time sampling set drawn by {uniform sampling} of cardinality 
\begin{equation} \label{eq:sample:complexity:uniform:informal}
m \gtrsim \mu_0 r n \log(n T),
\end{equation}
where $\mu_0$ is the incoherence factor (see \Cref{def:incoherence}) of $\HA$, or 
\item[(ii)] $\Omega$ is obtained by {adaptive sampling} with Bernoulli parameters
\[
p_{i,j,t} \gtrsim \min\left( \mu_{i,j,t}  \frac{r}{n T} \log(n T) ,1\right),
\]
where $\mu_{i,j,t}$ is a local incoherence (see \Cref{def:incoherence}) of $\HA$.
\end{enumerate}
If, additionally, an iterate of \texttt{TOIRLS} (\Cref{algo:IRLS:graphcompletion}), with observations $\f{y}= P_{\Omega}(\y{Q}_{T}(\f{A}))$, is close enough to $\y{Q}_{T}(\f{A})$, then, with high probability, the subsequent iterates converge to $\y{Q}_{T}(\f{A})$ with a quadratic convergence rate.
\end{theorem}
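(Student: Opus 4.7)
The plan is to reduce \Cref{thm:main:informal} to a local convergence statement for IRLS applied to a structured low-rank matrix completion problem on the embedded block Hankel matrix $\HA = \y{H}(\y{Q}_T(\f{A}))$, and then to analyze that algorithm's dynamics on the tangent space at $\HA$. The block Hankel embedding theorem (\Cref{thm:blkHankel:lowrank}) tells us that $\HA$ is a rank-$r$ matrix in $\Rddn$ whenever $\f{A}$ has rank $r$, and that $\f{A}$ can be recovered from $\HA$. Thus the observed samples $\f{y}=P_\Omega(\y{Q}_T(\f{A}))$ can be interpreted as a structured linear measurement $\y{A}_\Omega(\HA)$ of the low-rank block Hankel matrix, where $\y{A}_\Omega$ is the composition of $P_\Omega$ with the pseudoinverse of $\y{H}$ on $\Range(\y{H})$. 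The goal becomes: show that TOIRLS, restricted to the affine block Hankel subspace $\{\y{H}(\f{Z}) : \f{Z} \in \MnOplus\}$, converges locally quadratically to $\HA$.

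The first substantive step is to establish a restricted well-conditioning of $\y{A}_\Omega$ on the tangent space $T_{\HA}$ of the rank-$r$ manifold at $\HA$, intersected with the block Hankel subspace. Concretely, I would prove a bound of the form $\|\y{P}_{T_{\HA}} \y{P}_{\y{H}}(\Id - \frac{n^2 T}{m}\y{A}_\Omega^*\y{A}_\Omega) \y{P}_{T_{\HA}} \y{P}_{\y{H}}\| \le 1/2$ (and its analogue for adaptive Bernoulli sampling) with high probability, where $\y{P}_{\y{H}}$ is the orthogonal projection onto block Hankel matrices. This is a standard matrix Bernstein argument, and the scaling $m \gtrsim \mu_0 r n \log(nT)$ drops out of the tail term once one controls the operator norm and variance of the summands, using the incoherence parameter $\mu_0$ of \Cref{def:incoherence} to bound the entries of $\y{P}_{T_{\HA}}\y{P}_{\y{H}}(E_{i,j,t})$. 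For the adaptive case the same computation, but with the summand $E_{i,j,t}$ reweighted by $p_{i,j,t}^{-1/2}$, yields the local-incoherence condition on $p_{i,j,t}$.

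The main algorithmic step is to establish the quadratic contraction inequality for a single TOIRLS iterate. Following the framework of log-determinant smoothed IRLS, each iterate minimizes a quadratic surrogate with weight operator $W_k$ determined by $(\HA^{(k)} (\HA^{(k)})^\top + \eps_k^2 \Id)^{-1/2}$ (and its transpose companion), subject to the data constraint $\y{A}_\Omega(\cdot) = \f{y}$ and the block Hankel structure. Close to $\HA$, the smoothing $\eps_k$ is driven by $\sigma_{r+1}$ of the current iterate, which is already small, and $W_k$ concentrates around the projector $\Id - \y{P}_{T_{\HA}}$ on the orthogonal complement of the tangent space. Combining this alignment with the tangent-space well-conditioning from the previous step, the optimality conditions of the reweighted quadratic problem produce a bound of the form $\|\HA^{(k+1)} - \HA\| \lesssim \|\HA^{(k)} - \HA\|^2 / \sigma_r(\HA)$, which is the desired local quadratic rate. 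Finally, since $\y{Q}_T$ is recovered from $\HA$ by a bounded linear map, the same rate transfers from $\HA^{(k)}$ to $\y{Q}_T^{(k)}$.

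The hardest part will be making the weight-alignment step rigorous in the presence of both the Hankel constraint and the uneven sampling density it induces on the ambient matrix: entries near the anti-diagonal of $\y{H}$ are sampled much more often than entries on the boundary. Handling this requires that the block Hankel projection $\y{P}_{\y{H}}$ be composed carefully with the tangent space projection so that the Bernstein-style concentration is measured in the right norm, and that the perturbation analysis of $W_k$ around the ground-truth projector be done intrinsically on $\Range(\y{H})$. Provided these structural refinements are carried out, the rest of the argument follows the template of IRLS for low-rank matrix completion, and both conclusions of \Cref{thm:main:informal} follow by applying the two sampling models at the concentration step.
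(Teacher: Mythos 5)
Your proposal follows essentially the same route as the paper: a local restricted isometry property on the tangent space at $\HA$ established via matrix Bernstein with the (local) incoherence parameters controlling the summands (the paper's \Cref{lemma:main:sampling}), a perturbation argument transferring this regularity to a neighborhood of the ground truth, and a weight-operator alignment step yielding the quadratic contraction $\|\f{H}_{k+1}-\HA\| \lesssim \|\f{H}_k - \HA\|^2/\sigma_r(\HA)$ (the paper's \Cref{eq:MatrixIRLS:tangentspace:localRIP:perturbation,lemma:MatrixIRLS:localRIP,thm:convergence:template}). You also correctly flag the uneven entry multiplicities of the Hankel embedding as the main structural subtlety, which the paper resolves exactly as you suggest, by working with the normalized block Hankel operator $\y{G}$ and the diagonal rescaling $\y{D}$.
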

In \Cref{sec:numerical:experiments}, we provide numerical experiments that illustrate that the order of convergence in \Cref{thm:main:informal} captures the empirical behavior of \texttt{TOIRLS} for several transition operators on random graphs, and that its behavior appears robust to additive noise in the observations.

This result implies that despite the fact that samples are taken across $T$ different powers of $\f{A}$, not only from $\f{A}$ itself, \texttt{TOIRLS} recovers $\f{A}$ from {essentially as few uniformly random samples} as in the classical low-rank matrix completion setting \cite{CR09,koren_bell_volinsky,CandesTao10,Chen15,ChiLuChen19}, where $O( \nu_0 r n \log(n))$ samples are necessary in a uniform sampling model for the unique recovery of $\f{A}$ by any algorithm, where $\nu_0$ is the standard incoherence of $\f{A}$ \cite{Chen15}. 

We also analyze the incoherence of $\f{H}_{\f{A}}$ by relating it to that of $\f{A}$ for several families of transition operators, see \Cref{sec:incoherence:estimates}. In particular, we show that if $\f{A}$ is an orthogonal matrix or a projection, the incoherence $\mu_0$ of $\f{H}_{\f{A}}$ {coincides} with the incoherence $\nu_0$ of $\f{A}$, implying that the {same order} $\Omega( \nu_0 r n \log(n))$ of samples as in conventional matrix completion is sufficient in our setting, at least when $T \lesssim n$.

Unlike in conventional matrix completion, our results are nontrivial also when $\f{A}$ is of full rank, i.e. $r=\rank(\f{A})=n$: in this case, \Cref{thm:main:informal} implies that $O(\mu_0 n^2 \log(n T))$ samples, scattered over the $T$ observation times, are sufficient to ensure local convergence of \texttt{TOIRLS}, i.e. we pay a multiplicative oversampling factor of $O(\mu_0\log(n T))$ over the $n^2$ degrees of freedom of $\f{A}$. 
In particular, recovery is possible with a budget of only $O(n\log(n))$ sensors and $T=n$ observation times.

Finally, our results, such as \Cref{thm:main:informal}.2, on local incoherence-based sampling of specific space-time locations, inform adaptive sampling schemes that can be more data-efficient than uniform sampling.

Our results are presented in \Cref{sec:results:mobile:sensors} in the context of the spatio-temporal transition operator recovery problem. However, we point out that they are valid more generally for the problem of recovering a low-rank block Hankel matrix via \Cref{algo:IRLS:graphcompletion} if its output is chosen to be the entire matrix $\widetilde{\f{X}}^{(K)} \in \MnOplus$ instead of its restriction to its first block $\f{A}^{(K)}$. 

\begin{remark}
In this work, we focus on the algorithmic scheme \texttt{TOIRLS} optimizing non-convex surrogates in order to explore the fundamental information-theoretic properties of the underlying problem instead of a more traditional convex approach used for other low-rank optimization problems \cite{CR09,Chen15,DingChen20,chen_chi14,ye_kim_jin_lee} (see also \Cref{sec:matrix:completion,sec:recovery:structured:signals} below) for two reasons: 
\begin{itemize} 
\item we are able to ensure fast, albeit local, convergence, with high probability, under minimal assumptions on the sample complexity \cref{eq:sample:complexity:uniform:informal}. Using nuclear norm minimization on block Hankel matrices, we surmise that it is possible to also obtain an exact recovery result, albeit with possibly worse dependence on the sample complexity, with additional logarithmic factors in $r$, $\mu_0$ and potentially $T$, when using techniques such as a dual certificates or a leave-one-out analysis \cite{Chen15,DingChen20};
\item using a nuclear norm approach does not by itself lead to a scalable algorithm, as nuclear norm minimization is equivalent to a semidefinite program (SDP) with matrix variables of size $O(n T) \times O(n T)$. While some recent approximate solvers for large-scale SDPs require space of order only $O(r nT + m)$ \cite{Yurtsever-ScalableSDP2021,Ding-OptimalStorageSDP2021}, these methods do not find high-accuracy solutions. On the other hand, methods which provably solve the original SDP (e.g., interior-point methods \cite{Alizadeh-PrimalDualIntPointSDP1998} or augmented Lagrangian methods \cite{Sun-SDPNAL2020}) have storage requirements of $O(n^2 T^2)$ or larger.

We show in \Cref{thm:TOIRLS:computationalcost:Xkk} that \texttt{TOIRLS} is a scalable algorithm with space complexity of  $O(r nT + m)$, a per-iteration time complexity linear in $n$, and quickly leads to high-accuracy solutions thanks to the guaranteed local quadratic convergence rate.
\end{itemize}
\end{remark}

\subsection{Related Work} \label{s:relatedwork}
The transition operator recovery problem and the proposed low-rank modeling have connections to several different fields, which we briefly discuss.

\subsubsection{Low-Rank Matrix Completion} \label{sec:matrix:completion}
pioneered by \cite{Fazel02,CR09,CandesTao10,Gross-2011,Chen15} and popularized by applications in recommender systems \cite{ZhouWilkinsonNetflix08,koren_bell_volinsky}, the problem of recovering a low-rank matrix  from a subset of its entries or from underdetermined linear observations has been analyzed using both convex \cite{Recht10,CandesTao10,Chen15} and non-convex formulations \cite{MontanariKO10,SunL15,ChiLuChen19}. The a minimal sufficient condition for global convergence in the case of uniform samples is due to \cite{DingChen20}, where it was shown that $\Omega( \nu_0 r n \log(n) \log(\nu_0 r) )$ uniform samples are sufficient for the convex nuclear norm minimization approach to succeed with high probability if $\nu_0$ is the incoherence factor of \cite{Chen15}, $n$ the dimensionality and $r$ the rank of the matrix to be recovered. Local quadratic convergence in the presence of only $\Omega( \nu_0 r n \log(n))$ random observations was established for low-rank completion for a method similar to \texttt{TOIRLS} in \cite{KMV21} and in \cite{ZilberNadler-SIAMMDS2022} for a Gauss-Newton method, improving previous works on related algorithms \cite{Mohan10,Fornasier11,KS18} and \cite{bauch2021rank}, respectively. Low-rank matrix completion is a special case of the transition operator recovery problem \Cref{problem} corresponding to $T=1$; for $T>1$, however, \Cref{problem} is nonlinear in the transition operator.

A nonlinear generalization of the matrix completion problem that is different from ours was considered in \cite{Ongie2017algebraic,Ongie21}, where the low-rank properties of \emph{tensorized} data matrices are leveraged. While these problems also involve polynomial dependencies on a ground truth matrix, these dependencies are \emph{columnwise} and do not comprise the rich algebraic structure of matrix polynomials present in \Cref{problem}. The adaptive sampling model of \Cref{sec:problem:setup} had been considered for $T=1$ in the works \cite{ChenBhoSangWard15,EftekhariWakinWard-2018}.

\subsubsection{Dynamical Sampling}
here the aim is to recover a linear dynamical system from the union of coarse spatial samples at multiple time instances.   A  mathematical theoretical framework was proposed  in \cite{aldroubi2013dynamical,aldroubi2017dynamical} for linear systems of the form \eqref{system:eq0},  motivated by the pioneering work of \cite{lu2009spatial} that considered the space-time sampling of bandlimited diffusion fields over the real line. Several works \cite{tang2017universal, lai2019undersampled,aldroubi2019frames, aldroubi2017iterative,ulanovskii2021reconstruction} focus on the case where the transition operator $\f{A}$ in \eqref{system:eq0} is known, and the goal is to obtain sampling theorems ensuring  exact recovery of the initial state. For the case where $\f{A}$ is unknown, it has been shown that the eigenvalues of the matrix $\f{A}$ can be recovered from the space-time samples of a single trajectory, see \cite{aldroubi2014krylov, tang2017system, cheng2021estimate}. It is typically assumed that the observation operator $\mathcal{S}_t$ is deterministic and independent of $t$. Our paper is the first one, to our knowledge, to provide results for estimating $\f{A}$ from random space-time samples, i.e., for random subsampling operators $\mathcal{S}_t$ varying over the time $t$.

\subsubsection{System Identification}\label{scalarhankel}   
consider a linear time-invariant dynamical system
\begin{equation} 
\begin{aligned}
x_{t+1}&=\f{A} x_t+\f{B} u_t\\
y_t&=\f{C} x_t+\f{D} u_t \label{LIS2}
\end{aligned} 
\end{equation}
where $u_t$ is the input vector and $y_t$ is the output vector. The parameter estimation problem considered in control theory aims to recover the system matrices $\f{A}, \f{B} , \f{C}, \f{D}$ from the input-output pairs $(u_t,y_t)$. Classical results show that  a necessary condition to ensure identifiability is that $\f{C}$ is full rank \cite{bellman1970structural}. In general, this problem is ill-posed and the focus is to learn system matrices up to similarity transformations (see subspace identification methods \cite{ljung1998system,qin2006overview}) or the impulse response function (also called the Markov parameters) that determines the input-output map, both from a single trajectory \cite{fattahi2021learning,oymak2019non,sarkar2019near} and from multiple trajectories in \cite{zheng2020non,sun2020finite,tu2017non}. In the case of $\f{B}=\f{D}=0$ and $\f{C}=\f{I}$, a sufficient and necessary condition for the identifiability of $\f{A}$ from a single trajectory with a fixed initial condition is that $\f{A}$ has only one Jordan block for each of its eigenvalues, together with {certain constraints} on the initial condition \cite{stanhope2014identifiability,duan2020identification}. The low-rankness of a block-Hankel embedding of suitable powers and products of the matrices $\f{A},\f{B},\f{C}$ and $\f{D}$ similar to \cref{eq:BlockHankel:operatordef} is known to underlie the Kalman-Ho \cite{ho1966effective,oymak2019non} method for finding a \emph{realization} of the system, and has been explicitly used as optimization objective in \cite{fazel2013hankel,markovsky2013structured,markovsky,Grussler2018low}. However, the observation matrix $\f{C}$ is fixed in all the works the authors are aware of within this line of research, whereas in our setting, $\f{C}$ is random and varies over time $t$.

\subsubsection{Recovery of Structured Signals} \label{sec:recovery:structured:signals}
many structured signal recovery problems can be represented in the following abstract form: for a normed vector space $V$ over $\mathbb{C}$ and a \textit{known} linear operator $\f{A}: V \to V$, one is interested in recovering a signal $f\in V$ that is $M$-sparse in terms of eigenfunctions $\{v_j\}$ of $\f{A}$, i.e., $f=\sum_{j\in J}c_jv_j$, where $\{c_j\}_j$ is a set of coefficients in $\C$ and $|J|=M$. The goal is to recover $\{c_j\}_{j \in J}$ and $\{v_j\}_{j \in J}$ from observations $\mathcal{F}(\f{A}^{\ell} f)$ for $\ell=0,1,\ldots,L$ where $\mathcal{F}:V\rightarrow \mathbb{C}$ is a linear functional. For example, let $V$ be a vector space consisting of continuous functions on the real line and $\f{A}$ be a shift operator $(\f{A} f)(x) =f(x+1)$.
If one takes $v_j=\e^{a_jx}$ for $j\in[M]$ where $\{a_j\}_{j=1}^{M}$ are distinct complex numbers and $\mathrm{Im}(a_j)\in[-\pi,\pi)$, $\mathcal{F}(f)=f(x_0)$ for some $x_0\in \mathbb{R}$ and $L=2M-1$, then the recovery problem corresponds to the classical estimation of a sparse exponential sum, called harmonic retrieval. The other instances of this problem include super-resolution, blind deconvolution, recovery of signals with finite rate innovation; we refer to \cite{peter2013generalized,heckel2017generalized} for more details. Many well-known algorithmic approaches for these estimation problems are related to the Hankel matrix formed from the samples $\{\mathcal{F}(\f{A}^{\ell} f)\}_{\ell=1}^{L}$, including Prony's method \cite{potts2010parameter}, matrix pencil methods \cite{hua_sarkar}, and the algorithms MUSIC \cite{liao2016music} and ESPRIT \cite{roy1989esprit}. A generalization to irregular sets of samples has been considered in \cite{chen_chi14,jin_lee_ye,cai_wang_wei19,kuemmerle2019completion}, by formulating the signal recovery problem as a low-rank Hankel matrix completion, relating the problem to techniques discussed in \Cref{sec:matrix:completion} above. Multidimensional versions of these setting have also been considered in these works, leveraging low-rank properties of suitable multilevel Hankel or Toeplitz matrices, as well as in \cite{Yang16}. Specific to the $2D$ case, block Hankel matrices with Hankel blocks arise in these applications. In contrast, here the blocks of the block Hankel matrices we leverage are in general not Hankel, but a general matrix related to a linear dynamical system.
Finally, we remark that while the above works focus on recovering \emph{scalar signals}, with the shift operator $\f{A}$ being known, here we address the problem of estimating the unknown $\f{A}$, from partial observations along different trajectories.

\subsubsection{Graph Learning in Signal Processing}
there have been significant research efforts for inferring graph topology from observations of graph signals. This includes \cite{NIPS2010_8df707a9}, where the graph topology is estimated from full observations of a solution of a system of linear SDEs on the graph, via a regularized least squares approach, in particular focusing on the length of time the system needs to be observed in order to estimate the graph topology, as well as on the role of sparsity of the graph topology. Other existing approaches leverage a model based on graph filters \cite{segarra2016blind, segarra2017network}, or enforce sparsity \cite{maretic2017graph} or smoothness \cite{kalofolias2016learn,dong2016learning,thanou2017learning,egilmez2018graph} of signals using a penalized likelihood approach. Only a few works consider the graph signals as states of an underlying dynamical system, evolving according to the topology of the graph, e.g., \cite{coutino2020state}, in which case the single-trajectory states are observed via a fixed observation matrix that is static over time. In all cases, the identifiability of graph topology remains to be a challenging problem, as do the recovery algorithms, which lack theoretical guarantees.

\subsection{Outline}
The paper is organized as follows. In \Cref{section:lowrankoptimization}, we present in which sense the transition operator recovery problem is equivalent to a rank minimization problem over block Hankel matrices constrained to an affine space. In \Cref{section:IRLS}, we introduce \texttt{TOIRLS}, or Transition Operator Iteratively Reweighted Least Squares, to solve the resulting structured rank minimization problem. We introduce incoherence notions of block Hankel matrices in \Cref{sec:main:theoretical:results}, and present \Cref{thm:convergence:mobilesensors}, our main result that establishes local quadratic convergence of \texttt{TOIRLS} for respective sample complexities under both uniform and adaptive space-time sampling models. In \Cref{sec:computational:considerations}, we elaborate on computational considerations for \texttt{TOIRLS}, before presenting extensive numerical explorations in \Cref{sec:numerical:experiments}. In \Cref{sec:proof:vandermonde}, we provide the proof of the main theorem of \Cref{section:lowrankoptimization} and in \Cref{section:proof:main:theorem} a proof outline of \Cref{thm:convergence:mobilesensors}.
We conclude the main part of the paper in \Cref{section:conclusion}. Finally, we present useful incoherence estimates in \Cref{sec:coherence:estimates}, present a complete proof of \Cref{thm:convergence:mobilesensors} in \Cref{sec:proofs:appendix} and detail a practical implementation of \texttt{TOIRLS} in \Cref{sec:computational:details}.

\subsection{Notation}
We briefly summarize some notational conventions we use in this paper. The set of \emph{orthogonal matrices} of dimension $d$ is denoted by $\mathbb{O}^{d} = \big\{ \f{M} \in \Mn: \f{M}^*\f{M} = \Id \big\}$, while $ \Id$ is the identity matrix (omitting its dimension whenever suitable). If $\y{M}_{d_1 \times d_2}$ and $\f{v} \in \R^{d}$ is an arbitrary vector of dimension  $d := \min(d_1,d_2)$, the operator $\dg: \R^{d} \rightarrow \y{M}_{d_1 \times d_2}$ maps $\f{v}$ to the (generalized) diagonal matrix $\dg(\f{v}) \in \y{M}_{d_1 \times d_2}$ with $\dg(\f{v})_{ij} = \f{v}_{i}$ if $i=j$ and $\dg(\f{v})_{ij} = 0$ otherwise. For any matrix $\f{H}$, we denote its spectral norm by $\|\f{H}\| := \sigma_1(\f{H})$ and define the spectral norm ball with radius $\xi > 0$ around $\f{H}$ as $\y{B}_{\f{H}}(\xi) : = \left\{ \f{M} \in \Rddn: \left\| \f{M} - \f{H} \right\|  \leq \xi \right\}$.

\section{Recovering Transition Operators from Space-Time Samples using Low-Rank Optimization}
\label{section:lowrankoptimization}
In this section, we detail an approach to solve the transition operator recovery problem introduced in \Cref{sec:problem:setup}. 
A fundamental issue is the \emph{nonlinearity} of the operator $\mathcal{Q}_T$: $\f{A} \mapsto \mathcal{Q}_T(\f{A}) := \f{A} \oplus \f{A}^2 \oplus \f{A}^3 \oplus \ldots \oplus \f{A}^T$. We \emph{linearize} this nonlinearity by the transformation into the \emph{structured subspace} $\operatorname{Im} \y{H} \subset \Rddn$, where $\y{H}:\MnOplus \to \Rddn$ is the block Hankel operator of \cref{eq:BlockHankel:operatordef} with parameters $d_1, d_2 \in \N$. $\mathcal{H}$ maps a direct sums of $(n \times n)$ matrices to a \emph{block Hankel matrix} with $d_1$ block rows and $d_2$ block columns. In the remainder of the paper, we call $d_1$ and $d_2$ satisfying $T= d_1+d_2 -1$ the (first and second) \emph{pencil parameter} of $\y{H}$, in accordance with \cite{hua_sarkar,chen_chi14}.

The block Hankel operator $\y{H}$ enables us to recover the operator $\f{A}$ and its powers $\f{A}^2, \f{A}^3, \ldots \f{A}^T$ from a block Hankel matrix that is \emph{low rank} (\Cref{thm:blkHankel:lowrank}), an observation which lies at the core of our approach.
We use a dedicated low-rank optimization to recover a block Hankel-structured low-rank matrix $\y{H}(\widetilde{\f{X}}^*)$ compatible with the samples $P_{\Omega}(\mathcal{Q}_T(\f{A}))$  taken at space-time locations $\Omega$. If a sufficient number of random samples $\Omega$ from a sampling model in \Cref{sec:problem:setup} are provided, the hope is there is a unique generator $ \widetilde{\f{X}}^* = \mathcal{Q}_T(\f{A})$ for the Hankel matrix, from which the transition operator $\f{A}$ can then be directly inferred.

\subsection{Rank Minimization over Block Hankel Matrices} \label{sec:rankmin:blockhankel}
As a justification for our search for low-rank matrices in the subspace of block Hankel structured matrices, we establish in \Cref{thm:blkHankel:lowrank} the strong relationship between the rank of a block Hankel matrix $\f{H} \in \Rddn$ and the rank of an underlying ``generator'' matrix $\f{A}$. We say that $\overset{\square}{\f{H}} \in \R^{Dn}$ is a \emph{square extension} of $\f{H}$ if it is a block Hankel matrix with pencil parameters $D$ and $D$ whose first $T$ block anti-diagonals coincide with the $T$ anti-diagonals of $\f{H}$, and can otherwise have arbitrary entries in the last $2D-d_1-d_2$ blocks. 

 \begin{theorem} \label{thm:blkHankel:lowrank}
 	Recall the monomial operator $\mathcal{Q}_T: \Mn \to  \MnOplus, \f{A} \mapsto \f{A} \oplus \f{A}^2 \oplus \f{A}^3 \oplus \ldots \oplus \f{A}^T$  from \cref{eq:matrix:powers}, and the block Hankel operator $\mathcal{H}: \MnOplus \to \Rddn $ from \cref{eq:BlockHankel:operatordef}, with pencil parameters $d_1,d_2$. Then:
	\begin{enumerate}
	\item for any $\f{A} \in \Mn$,
 	\[
 	 \rank \left(\HQTA \right) =  \rank(\f{A})\,;
 	\]
	\item for any block Hankel matrix $\f{H} \in \Rddn$ with $(n \times n)$ blocks, that has a positive semidefinite square extension $\overset{\square}{\f{H}} \in \y{M}_{Dn,Dn}$, $D=\max(d_1,d_2)$, which has its first block $\f{H}_1 \in \y{M}_{n}$ of rank $r$, and at least one other block $\f{H}_{j} \in \y{M}_n$, $j>1$, of rank $r$,
	there exists a pair of matrices $(\f{Y},\f{M})$, where $\f{Y} \in \y{M}_{n,r}$ and $\f{M} \in \y{M}_{r}$ is symmetric, with $\rank(\f{H}) = \rank(\f{Y}) = \rank(\f{M}) = r$ such that
	\begin{equation} \label{eq:Hpsd:repr:Thm}
	\f{H} = \y{H}\big( \f{Y} \f{Y}^* \oplus \f{Y}\f{M}\f{Y}^* \oplus \ldots \oplus \f{Y}\f{M}^{T-1} \f{Y}^*\big).
	\end{equation}
	\end{enumerate}
 \end{theorem}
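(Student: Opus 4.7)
The plan for Part 1 is a direct rank-preserving factorization of the block Hankel matrix:
\[
\HQTA = \begin{bmatrix} \f{I}_n \\ \f{A} \\ \vdots \\ \f{A}^{d_1-1} \end{bmatrix} \f{A} \begin{bmatrix} \f{I}_n & \f{A} & \cdots & \f{A}^{d_2-1} \end{bmatrix},
\]
since the $(i,j)$-block of the right-hand side equals $\f{A}^{i-1}\f{A}\f{A}^{j-1} = \f{A}^{i+j-1}$. Rank subadditivity yields $\rank(\HQTA) \le \rank(\f{A})$, while the reverse inequality follows because $\f{A}$ itself appears as the $(1,1)$-block of $\HQTA$.

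For Part 2, the plan is to exploit the positive semidefiniteness and block Hankel structure of $\overset{\square}{\f{H}}$ simultaneously to extract a symmetric shift operator $\f{M} \in \y{M}_r$. First I would factorize $\overset{\square}{\f{H}} = \f{L}\f{L}^*$ with $\f{L}$ of full column rank $s = \rank(\overset{\square}{\f{H}})$, and let $\f{L}_i \in \y{M}_{n, s}$ denote its $i$-th block row. The block Hankel property is equivalent to the key identity $\f{L}_i \f{L}_j^* = \f{L}_1 \f{L}_{i+j-1}^*$ for $1\le i,j\le D$. Combining the rank assumptions on $\f{H}_1$ and $\f{H}_{j_0}$ with the positive semidefiniteness of principal $2\times 2$ block submatrices of the form $\left[\begin{smallmatrix} \f{H}_1 & \f{H}_i \\ \f{H}_i & \f{H}_{2i-1}\end{smallmatrix}\right]$, I would establish $s = r$ and the containment $\Ran(\f{L}_i) \subseteq \Ran(\f{L}_1)$ for all $i \in [D]$. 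Setting $\f{Y} := \f{L}_1 \in \y{M}_{n,r}$, which has full column rank $r$ since $\f{H}_1 = \f{Y}\f{Y}^*$ has rank $r$, this containment allows the definition $\f{M}_i := \f{Y}^{\dagger}\f{L}_i \in \y{M}_r$ satisfying $\f{L}_i = \f{Y}\f{M}_i$ and $\f{M}_1 = \f{I}_r$.

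Substituting $\f{L}_i = \f{Y}\f{M}_i$ into the Hankel identity gives $\f{Y}\f{M}_i\f{M}_j^*\f{Y}^* = \f{Y}\f{M}_{i+j-1}^*\f{Y}^*$, and the injectivity of $\f{Y}$ yields $\f{M}_i\f{M}_j^* = \f{M}_{i+j-1}^*$. Specializing to $j = 1$ forces $\f{M}_i = \f{M}_i^*$, so every $\f{M}_i$ is symmetric; defining $\f{M} := \f{M}_2$, an induction on $t$ then shows $\f{M}_t = \f{M}^{t-1}$, whence $\f{H}_t = \f{Y}\f{M}^{t-1}\f{Y}^*$ for every $t\in [T]$. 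The second rank hypothesis $\rank(\f{H}_{j_0}) = r$ forces $\f{M}^{j_0-1}$, and hence $\f{M}$ itself, to be invertible in $\y{M}_r$, giving $\rank(\f{M}) = r$. Finally, $\rank(\f{H}) = r$ follows from the factorization $\f{H} = \tilde{\f{L}}\tilde{\f{R}}$ with $\tilde{\f{L}}$ the block column $[\f{Y};\, \f{Y}\f{M};\, \ldots;\, \f{Y}\f{M}^{d_1-1}]$ and $\tilde{\f{R}}$ the block row $[\f{Y}^* \; \f{M}\f{Y}^* \; \cdots \; \f{M}^{d_2-1}\f{Y}^*]$, both of rank at most $r$, combined with the lower bound from the $(1,1)$-block. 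The main obstacle I anticipate is the range-containment step $\Ran(\f{L}_i) \subseteq \Ran(\f{L}_1)$ at the level of the Cholesky factors: the standard PSD inclusion only gives $\Ran(\f{H}_i) \subseteq \Ran(\f{H}_1)$ for the blocks themselves, and transferring this to the $\f{L}_i$ requires first establishing $\rank(\overset{\square}{\f{H}}) = r$ from the two single-block rank hypotheses, which is the most delicate part of the argument.
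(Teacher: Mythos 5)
Your Part 1 is correct and in fact more elementary than the paper's argument: the paper passes through the Jordan decomposition of $\f{A}$ to obtain an analogous column-times-row factorization, whereas your direct identity $\HQTA = \f{C}\,\f{A}\,\f{R}$ with $\f{C} = \begin{bmatrix}\Id & \f{A}^* & \cdots & (\f{A}^{d_1-1})^*\end{bmatrix}^*$ and $\f{R} = \begin{bmatrix}\Id & \f{A} & \cdots & \f{A}^{d_2-1}\end{bmatrix}$, combined with the submatrix lower bound from the $(1,1)$-block, needs no spectral input. For Part 2 your architecture --- Gram factorization $\overset{\square}{\f{H}} = \f{L}\f{L}^*$, a shift relation between block rows, symmetry of $\f{M}$ forced by the Hankel symmetry --- is the same as the paper's, and everything \emph{after} your step ``$s=r$'' is sound: once $\f{L}\in\y{M}_{Dn,r}$ and $\f{L}_1$ has full column rank, the Hankel identity $\f{L}_i\f{L}_1^* = \f{L}_1\f{L}_i^*$ together with surjectivity of $\f{L}_1^*$ gives $\Ran(\f{L}_i) = \Ran(\f{L}_i\f{L}_1^*) \subseteq \Ran(\f{L}_1)$ directly, and your relations $\f{M}_i\f{M}_j^* = \f{M}_{i+j-1}^*$, the symmetry obtained at $j=1$, and the induction $\f{M}_t=\f{M}^{t-1}$ all go through.

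The genuine gap is exactly the step you flag as delicate, and it cannot be closed: $\rank(\overset{\square}{\f{H}}) = r$ does \emph{not} follow from the hypotheses that $\f{H}_1$ and one further block have rank $r$. Take $n=1$, $d_1=d_2=D=2$, $\overset{\square}{\f{H}} = \left[\begin{smallmatrix}1&0\\0&1\end{smallmatrix}\right]$, i.e.\ $h_1=h_3=1$, $h_2=0$: this is positive semidefinite, its first and third blocks have rank $1$, yet its rank is $2$; and indeed no scalars $Y,M$ with $h_t = YM^{t-1}Y$ exist, since $h_2=0$ would force $M=0$, contradicting $h_3=1$. So no amount of work with $2\times 2$ principal block submatrices will deliver $s=r$. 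The paper's own proof does not attempt this: it opens with ``Since the block Hankel matrix $\f{H}$ has a square extension $\overset{\square}{\f{H}}$ that is positive semidefinite and of rank $r$, there exists $\f{L}\in\y{M}_{Dn,r}$ \dots'', i.e.\ it treats $\rank(\overset{\square}{\f{H}})=r$ as part of the hypothesis rather than as a consequence of the two block conditions (the second block condition is then used only to obtain full column rank of the shifted factor and hence invertibility of $\f{M}$). To repair your write-up you should do the same --- assume the PSD extension itself has rank $r$ --- after which your remaining steps and the paper's coincide.
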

We refer to \Cref{sec:proof:vandermonde} for a proof of \Cref{thm:blkHankel:lowrank}. Related results have appeared in \cite{FeldmannHeinig96,Yang16,Andersson-2017Structure}; however, \Cref{thm:blkHankel:lowrank} does not follow from these results.

\Cref{thm:blkHankel:lowrank} implies a close relationship between a \emph{low-rank property} of block Hankel matrices $\y{H}(\widetilde{\f{X}})$ as in \cref{eq:BlockHankel:operatordef} and the existence of an operator $\f{A}$ such that $\mathcal{Q}_T(\f{A}) = \widetilde{\f{X}}$. While \Cref{thm:blkHankel:lowrank}.1 implies that the rank of generator matrix $\f{A}$ is inherited by its block Hankel image, \Cref{thm:blkHankel:lowrank}.2 is a statement in the other direction, i.e. about the existence of an underlying rank-$r$ generator matrix $\f{M}$ of a matrix semigroup associated to a rank-$r$ block Hankel matrix. 
We show the latter statement only if a positive semidefinite extension exists, noting that similar statements apply if additional, typically weak algebraic constraints are imposed on a general low-rank block Hankel matrix $\f{H} \in \y{H}_{d_1 n, d_2 n}$, see \cite{Tismenetsky92, FeldmannHeinig96}.
In particular, note that is $\f{A} \in \Mn$ if of full rank $n$, \Cref{thm:blkHankel:lowrank} implies that $\y{H}(\mathcal{Q}_T(\f{A}))$ is also of rank $n$. This is, however, \emph{low-rank} if we choose $\min(d_1,d_2) >1$, as the maximal rank of a $(d_1 n \times d_2 n)$ matrix is $\min(d_1,d_2)n$, and \emph{not} $n$. 

At a high level, \Cref{thm:blkHankel:lowrank} illustrates that the nonlinear relationship between the matrix powers $\f{A},\f{A}^2,\ldots,\f{A}^T$ is \emph{translated} to \emph{linear dependences} of the blocks in an associated block Hankel matrix in $\operatorname{Im} \y{H}$, motivating the pursuit of an optimization approach that aims to find a completion of a block Hankel matrix $\y{H}(\widetilde{X})$ that is \emph{both low-rank} and \emph{compatible} with the spatio-temporal measurements parametrized by the sampling operator $P_{\Omega}$ from \cref{eq:def:Somega}. This suggest a block Hankel structured \emph{rank minimization problem} 
\begin{equation} \label{eq:Hankel:rank:min}
	\min_{\widetilde{\f{X}} \in \MnOplus} \rank\Big(\mathcal{H}(\widetilde{\f{X}})\Big) \text{ s.t. } P_{\Omega}(\widetilde{\f{X}}) =  \f{y}
\end{equation}
where $\f{y} = P_{\Omega}(\y{Q}_T(\f{A}))$ the subset of observed entries of $\y{Q}_T(\f{A}) = \f{A} \oplus \f{A}^2 \oplus \f{A}^3 \oplus \ldots \oplus \f{A}^T$, indexed by $\Omega $.
More generally, for a given regularization parameter $\lambda \geq 0$, we define the \emph{data fitting function} $G_{\Omega,\f{y}}^{\lambda}: \MnOplus \to \R$
 \begin{equation} \label{eq:data:fit:fct}
 G_{\Omega,\f{y}}^{\lambda}(\widetilde{\f{X}}) =
 \begin{cases}
 	\iota_{P_{\Omega}^{-1}(\f{y})}(\widetilde{\f{X}}), & \text{ if } \lambda = 0, \\
 	\frac{1}{\lambda} \left\| P_{\Omega}(\f{\widetilde{X}}) - \f{y}\right\|_2^2, & \text{ if } \lambda > 0,
 \end{cases}
 \end{equation}
where $\iota_{P_{\Omega}^{-1}(\f{y})}: \MnOplus \to \R \cup \{\infty\}$ is $0$ if $P_{\Omega}(\widetilde{\f{X}}) = \f{y}$, and $\infty$ otherwise.
We then formulate the rank minimization problem
\begin{equation} \label{eq:Hankel:rank:min:Glambda}
	\min_{\widetilde{\f{X}} \in \MnOplus} \rank\Big(\mathcal{H}(\widetilde{\f{X}})\Big) +  G_{\Omega,\f{y}}^{\lambda}(\widetilde{\f{X}})\,,
\end{equation}
which reduces to \cref{eq:Hankel:rank:min} for $\lambda = 0$. In the presence of inexact measurements with additive noise such that $\f{y} = P_{\Omega}(\y{Q}_T(\f{A})) + \eta$ for some $\eta \in \R^{m}$, it can be beneficial to choose a positive regularization parameter $\lambda > 0$ \cite{Bunea-ReducedRank2011,Klopp-RankPenalized2011}.

Rank minimization problems such as \cref{eq:Hankel:rank:min} and \cref{eq:Hankel:rank:min:Glambda} are well-known to be NP-hard in general \cite{Recht10}, and different convex and non-convex reformulations of such problems have been studied for \emph{unstructured} problems, i.e., for the case that $\widetilde{\f{X}}$ itself is low-rank \cite{CandesTao10,MontanariKO10,recht,Vandereycken13,Park16,MaWangChiChen19}; see \cite{Davenport16,ChiLuChen19} for recent surveys.

While \cref{eq:Hankel:rank:min} enables us to formulate or problem in the language of optimization and to relate it to a common algorithmic paradigm in machine learning and signal processing, it poses several challenges from an optimization perspective. First, the rank objective is a non-convex and and non-smooth function, so that it is non-trivial to use derivative-based algorithms. Furthermore, unlike most low-rank optimization problems, the search space of \cref{eq:Hankel:rank:min} is the strict subspace $\left\{ \y{H}(\widetilde{\f{X}}): \widetilde{\f{X}}\in \MnOplus \right\}$ of $\Rddn$, making the problem a \emph{structured low-rank optimization problem} \cite{fazel2013hankel,markovsky,chen_chi14,Cai-2018Spectral}. Lastly, due to the large dimensionality of the ambient space $\Rddn$ even for moderate $n$ and $T$, only computationally efficient methods can be used for transfer operators of non-trivial size.

\section{Our Approach: Iteratively Reweighted Least Squares}\label{section:IRLS}
In several works in the literature, rank minimization problems have been tackled by designing optimization algorithms that optimize non-convex, \emph{smoothed} objective functions whose minimizers are designed to coincide with those of the rank objective in many cases. It was observed in \cite{FHB03,Candes13,KMV21} that optimizing a log-determinant objective often leads to solutions of underdetermined linear systems of very low-rank even in the presence of relatively few samples. Similarly, objective functions based on the Schatten-$p$ quasi-norm \cite{Mohan10,Lu2015,Ongie17,KS18,Giampouras20} and the smoothed clipped absolute deviation (SCAD) of the singular values \cite{Mazumder20} have been used to derive competitive algorithms for a variety of low-rank matrix recovery problems in signal processing and statistics. 

\begin{figure}
    \centering
    \resizebox{0.3\textwidth}{!}{
    \begin{tikzpicture}
    \begin{axis}[legend style={at={(.8,1)}, anchor=west, cells={anchor=west}, font=\footnotesize, rounded corners=2pt}, scale=1, xmin=-1.5, xmax=1.5, ymin=-5, ymax=3, axis x line=middle,thick, axis y line=middle,thick,yticklabels={,,},xticklabels={,,}]
        \addplot[red, thick, restrict expr to domain={(x>=-0.4)&&(x<=-0.000001)||(x>=0.000001)&&(x<=0.4)}{1:1}, samples=1000]{ln abs(x)};
        \label{graphic1}
        \addlegendentry{$f(\sigma)$}
        \addplot[blue, ultra thick, dashed, restrict expr to domain={(x<=-.4)||(x>=.4)}{1:1}, samples=1000]{ln abs(x)};
        \label{graphic2}
        \addlegendentry{$f_{\varepsilon}(\sigma)$}
        \addplot[red, thick, restrict expr to domain={(x>=-3)&&(x<=-.4)||(x>=.4)&&(x<=5)}{1:1}, samples=1000]{ln abs(x)};
        \addplot[blue, ultra thick, dashed] expression[domain=-.4:.4, samples=1000]{3.125*x^2-.5+ln(.4)};
         \addplot[black,mark=*, dashed] plot coordinates{(.4,-0.916) (.4,0)} node [above] {$\varepsilon$};
        \addplot[black,mark=*, dashed] plot coordinates{(-.4,-0.916)(-.4,0)} node [above] {$-\varepsilon$};
    \end{axis}
    \end{tikzpicture}
    }
\caption{Illustration of the smoothing $f_{\varepsilon}(\sigma)$ of $f(\sigma)=\log|\sigma|$.}
\label{figure_majorization}
\end{figure}
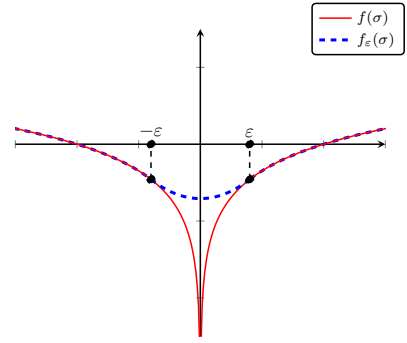

We propose an algorithm that adapts these ideas to the block Hankel rank minimization problem \eqref{eq:Hankel:rank:min:Glambda} as presented in \Cref{sec:rankmin:blockhankel}, which can be interpreted as an \emph{Iteratively Reweighted Least Squares (IRLS)} \cite{holland_welsch,Daubechies10,Mohan10,Fornasier11,Ongie17,KS18} strategy. 
Instead of optimizing the rank objective \cref{eq:Hankel:rank:min:Glambda} directly, let $\varepsilon >0$ be a  {smoothing parameter} and define the {smoothed log-deteterminant objective} $F_{\varepsilon}: \Rddn \to \R$ as 
\begin{equation} \label{eq:smoothing:Fpeps}	
F_{\varepsilon}(\f{M}) := \sum_{i=1}^{d n} f_{\varepsilon}(\sigma_i(\f{M})),\end{equation} where $d = \min(d_1, d_2)$ and
\begin{equation}
\quad f_{\varepsilon}(\sigma) =
\begin{cases}
 \log|\sigma|, & \text{ if } \sigma \geq \varepsilon, \\
 \log(\varepsilon) + \frac{1}{2}\Big( \frac{\sigma^2}{\varepsilon^2}-1\Big), & \text{ if } \sigma < \varepsilon,
 \end{cases}
 \end{equation}
which is continuously differentiable. If $J_{\varepsilon}: \MnOplus \to \R \cup \{\infty\}$ is the \emph{$\varepsilon$-smoothed surrogate objective} defined as
 \begin{equation} \label{eq:eps:smoothed:surrogate}
 J_{\varepsilon}(\widetilde{\f{X}}) = F_{\varepsilon}(\y{H}(\widetilde{\f{X}})) +  G_{\Omega,\f{y}}^{\lambda}(\widetilde{\f{X}}),
 \end{equation}
for a matrix $\widetilde{\f{X}} \in \MnOplus$, the steps of an iteration of IRLS can be understood as, first, the minimization of a quadratic model function $Q_{\varepsilon}(\cdot | \f{M}): \Rddn \to \Rddn$ that is an appropriate, {global upper bound} of $ J_{\varepsilon}(\cdot)$, leading to a {weighted least squares problems} and, second, as an update of the smoothing parameter $\varepsilon$ and refinement of the quadratic model function $Q_{\varepsilon}$ using the solution of the last weighted least squares problem.
The quadratic model functions $Q_{\varepsilon}(\cdot | \f{M})$ can be defined implicitly using {weight operators}, with which we are then able to formulate \texttt{TOIRLS}, an IRLS algorithm for transition operator learning (\Cref{algo:IRLS:graphcompletion}).

\begin{definition}[{see also \cite{K19,KFMMV21}}] \label{def:weightoperator:MatrixIRLS}
	Let $\f{M} \in \Rddn$ be a matrix with singular value decomposition $\f{M} = \f{U} \dg(\sigma) \f{V}^{*}$, where  $\f{U} \in \mathbb{O}^{d_1 n }$,
	$\f{V} \in \mathbb{O}^{d_2 n}$, and $\varepsilon >0$.
\begin{enumerate}
	\item The \emph{optimal weight operator} $W_{\f M}: \Rddn \to \Rddn$ associated to $\f{M}$ and $\varepsilon$ is the linear operator
\begin{equation} \label{eq:def:W}
	W_{\f M}(\f{Z}) = \f{U}  \Sigma_{\varepsilon,d_1}^{-1} \f{U}^{*} \f{Z} \f{V} \Sigma_{\varepsilon,d_2}^{-1} \f{V}^{*},
\end{equation}
where $\Sigma_{\varepsilon,d_1}  \in \y{M}_{d_1 n}$ and $\Sigma_{\varepsilon,d_2}  \in \y{M}_{d_2 n}$ are diagonal with $(\Sigma_{\varepsilon,d_1})_{ii} = \max(\sigma_i,\varepsilon)$ for $i \in [d_1 n]$ and $(\Sigma_{\varepsilon,d_2})_{jj} = \max(\sigma_j,\varepsilon)$ for $j \in [d_2 n]$.\footnote{with the convention that $\sigma_i = 0$ for $\min(d_1,d_2) n < i \leq \max(d_1,d_2) n$.}
\item Let $\mathcal{H}: \MnOplus \to \Rddn$ be the block Hankel operator of \cref{eq:BlockHankel:operatordef}. We define the \emph{effective weight operator} $\widetilde{W}_{\f M}: \MnOplus \to \MnOplus$ as the linear operator
\[
\widetilde{W}_{\f M} (\widetilde{\f{Z}}):= \mathcal{H}^{\;*} W_{\f M} \mathcal{H}(\widetilde{\f{Z}})\,.
\]
\end{enumerate}
\end{definition}

The choice of $W_{\f M}$ in \Cref{def:weightoperator:MatrixIRLS} in the weighted least squares problem \cref{eq:MatrixIRLS:Xdef} of \Cref{algo:IRLS:graphcompletion} can be regarded as the \emph{geometric operator mean} of the one-sided weight operator notions of the first IRLS papers considering rank optimization \cite{Mohan10,Fornasier11}. While a detailed discussion is beyond the scope of this paper, we note that in \cite{K19,KFMMV21} it is shown that the associated quadratic model function $Q_{\varepsilon}(\cdot | \f{M})$ not only majorizes the $ \varepsilon$-smoothed surrogate objective $ J_{\varepsilon}(\cdot)$ of \cref{eq:eps:smoothed:surrogate} pointwise, but also is \emph{optimal} in the sense that any smaller weight operator does not lead to majorizing quadratic model functions. Using the pointwise majorization, it is possible to show that the iterates $(\widetilde{\f{X}}\hk)_{k \geq 1}$ of \Cref{algo:IRLS:graphcompletion} lead to a monotonically decreasing sequence $\big( J_{\varepsilon_k}(\widetilde{\f{X}}\hk) \big)_{k \geq 1}$, and that each accumulation point of $(\widetilde{\f{X}}\hk)_{k \geq 1}$ is a stationary point of $J_{\overline{\varepsilon}}(\cdot)$, where $\overline{\varepsilon}:= \lim_{k \to \infty} \varepsilon_k$ \cite{KMV21}. 

\begin{algorithm}[h]
\caption{\texttt{TOIRLS} Transition Operator Iteratively Reweighted Least Squares} \label{algo:IRLS:graphcompletion}
\begin{algorithmic}
\STATE{\bfseries Input:} Indices $\Omega \subset I$, observations $\f{y} \in \R^m$, rank estimate $\widetilde{r} \leq n$, regularization parameter $\lambda \geq 0$, first pencil parameter $1 \leq d_1 \leq n$. 
\STATE Set $\varepsilon^{(0)}=\infty$ and $\widetilde{W}_{\f{H}_0} = \mathcal{H}^{\;*} \mathcal{H}$ with $\mathcal{H}: \MnOplus \to \Rddn$ as in \cref{eq:BlockHankel:operatordef} where $d_2 = T-d_1 +1$.

\FOR{$k=1$ to $K$}
\STATE \textbf{Solve weighted least squares problem} 
\vspace*{-2mm}
\begin{equation} \label{eq:MatrixIRLS:Xdef}
\widetilde{\f{X}}\hk :=\argmin\limits_{\widetilde{\f{X}} \in \MnOplus} \left\{\langle \widetilde{\f{X}}, \widetilde{W}_{\f{H}_{k-1}}(\widetilde{\f{X}}) \rangle + G_{\Omega}^{\lambda}(\widetilde{\f{X}}) \right\}\,,
\end{equation}
where $G_{\Omega}^{\lambda}: \MnOplus \to \R $ is the data fitting function of \cref{eq:data:fit:fct} and $\widetilde{W}_{\widetilde{\f{X}}^{(k-1)}}$ is the effective weight operator of \Cref{def:weightoperator:MatrixIRLS}.
\STATE \textbf{Update smoothing:} \label{eq:MatrixIRLS:bestapprox} Compute 	$(\widetilde{r}+1)$-st singular value of $\f{H}_{k} = \y{H}(\widetilde{\f{X}}\hk)$ to update

\begin{equation} \label{eq:MatrixIRLS:epsdef}
\varepsilon_k := \min\left(\varepsilon_{k-1},\sigma_{\widetilde{r}+1}\left(\f{H}_{k}\right)\right).
\end{equation}

\STATE \textbf{Update weight operator:} For $r_k := \big|\big\{i \in [d n]: \sigma_i\big(\f{H}_{k}\big) > \varepsilon_k\big\}\big|$, compute reduced rank-$r_k$ singular value decomposition of of $\f{H}_{k}$ to obtain leading $r_k$ singular values $\sigma_i\big(\f{H}_{k}\big)$, $i=1,\ldots,r_k$ and matrices $\f{U}\hk \in \R^{n d_1 \times r_k}$ and $\f{V}\hk \in \R^{n d_2 \times r_k}$, use this to update $\widetilde{W}_{\f{H}_k}$ as defined in \Cref{def:weightoperator:MatrixIRLS}.
\STATE $k = k + 1$.
\ENDFOR 
\STATE Extract the first block $\f{A}^{(K)}:= \left[\widetilde{\f{X}}^{(K)} \right]_{1:n,1:n}$ of $\widetilde{\f{X}}^{(K)}$.
\STATE{\bfseries Output:} $\f{A}^{(K)}$.
\end{algorithmic}
\end{algorithm}

While the domain of the weighted least squares step \cref{eq:MatrixIRLS:Xdef} of \Cref{algo:IRLS:graphcompletion} is $\MnOplus$, by the definition of the effective weight operator $\widetilde{W}_{\f{M}}$, a spectral reweighting \emph{in the subspace of block Hankel matrices} is applied \emph{implicitly}. As initialization for $k=1$, the weight operator $W_{\f{H}_0}$ \cref{eq:def:W} is chosen to be the identity operator, implying that the \emph{effective} weight operator $\widetilde{W}_{\f{H}_0} = \mathcal{H}^{\;*} \mathcal{H} = \y{D}^2$ is a diagonal operator that is constant for each summand of $\MnOplus$, and which amounts to the multiplicity of each block in the block Hankel image \cref{eq:BlockHankel:operatordef} defined via the operator $\y{H}$; cf. \cref{eq:DEijt:def} in \Cref{sec:coherence:estimates}.

\paragraph{Choice of regularization parameter $\lambda$}
the parameter $\lambda \geq 0$ in \Cref{algo:IRLS:graphcompletion} determines which surrogate objective $J_{\varepsilon}(\cdot)$ is optimized by \texttt{TOIRLS} and which underlying rank objective \cref{eq:Hankel:rank:min:Glambda} is chosen. As described in \Cref{sec:rankmin:blockhankel}, the choice of $\lambda = 0$, which imposes an affine constraint defined by the sampling operator $P_{\Omega}$ and the observation vector $\f{y} \in \R^{m}$, is appropriate if exact space-time samples are provided to the algorithm. While an optimal choice might correspond to some $\lambda > 0$ in the presence of \emph{inexact} space-time samples that depends on the order of magnitude of the noise, it turns out that $\lambda = 0$ is surprisingly robust to noise in practice, as explored in \Cref{sec:numerical:noisy:observations}. Theoretically, this observation is related to the so-called \emph{quotient property} of the measurement operator, which has been used to establish robust guarantees for equality-constrained low-rank and sparse recovery methods \cite{Wojtaszczyk10,Candes.2011,Liu.2011,KKM22}. 

\paragraph{Choice of rank estimate $\widetilde{r}$}
if the rank $r=\rank(\f{A})$ of the transition operator $\f{A}$ to be recovered is known, one should choose $\widetilde{r} = r$. If $r$ is unknown, or if only a vague estimate is available, it is advisable to \emph{overestimate} the true rank, i.e. to choose $\widetilde{r} \geq r$. While exact recovery of the transition operator might need more samples in that case, \Cref{algo:IRLS:graphcompletion} seems to be often able to good estimates for $\f{A}$ in that case.

\paragraph{Update rule for smoothing parameter $\varepsilon_k$}
after each weighted least squares step, the smoothing parameter $\varepsilon_k$ is updated, cf. \cref{eq:MatrixIRLS:epsdef}, in a non-increasing manner. This distinguishes IRLS from a conventional majorize-minimize (MM) method \cite{Lange16} for the smoothed surrogate objective $J_{\varepsilon}(\cdot)$ for a fixed $\varepsilon$. Similarly to related IRLS methods \cite{Daubechies10,Fornasier11,voronin_daubechies,KS18,KMV21}, the choice of the update rule quantifies the distance to a matrix of target rank $\widetilde{r}$ that is compatible with the observations $\f{y}$, playing a crucial role in the design of the algorithm due to the non-convexity of $F_{\varepsilon_k}(\cdot)$. If $\varepsilon_k$ is large, $J_{\varepsilon_k}(\cdot)$ will possess much fewer non-global minima than if $\varepsilon_k$ is small, in which case, however, $F_{\varepsilon_k}(\cdot)$ resembles much more the concave log-determinant objective that is known to constitute a powerful surrogate for the rank function \cite{Foucart18}.

For a complexity analysis and implementation details, we refer to  \Cref{sec:computational:considerations}. 
\section{Main Results} \label{sec:main:theoretical:results}
In this section, we present a convergence theory for \Cref{algo:IRLS:graphcompletion} for the problem of recovering transition operators from sparse time-space samples. 

It has been an open problem to establish global convergence of similar IRLS methods to minimizers of non-smooth, non-convex surrogate objectives such as \cref{eq:eps:smoothed:surrogate} underlying the respective problems \cite{Daubechies10,Mohan10,KS18,KMV21}, despite it being observed numerically in simulations. For this reason, we restrict the convergence analysis for \texttt{TOIRLS} to a \emph{local} one, which is based on the assumption we are given an iterate $\widetilde{\f{X}}^{(k)} \in \MnOplus$ that is close to a ground truth which is an image $\mathcal{Q}_T(\f{A})$ of a transition operator $\f{A}$. We quantify this using the set
\begin{equation} \label{eq:def:BAxi}
\y{B}_{\HA}(\xi) : = \left\{ \f{H} \in \Rddn: \left\| \f{H} - \f{H}_{\f{A}}\right\|  \leq \xi \right\}
\end{equation}
that contains matrices close to the block Hankel matrix  $\f{H}_{\f{A}} := \y{H}(\mathcal{Q}_T(\f{A}))$.

With \Cref{thm:convergence:mobilesensors} in \Cref{sec:results:mobile:sensors} below, we show sufficient conditions on the number of space-time samples, under either the uniform and adaptive sampling model, that, with high probability, guarantee the local convergence of \texttt{TOIRLS} to the ground truth, and therefore the recovery of $\f{A}$.

\subsection{Incoherence for Block Hankel Matrices} \label{sec:incoherence}
Due to the coordinate-wise nature of  either of our sampling models, even for a fixed dimensionality $n$ and fixed rank $r$, it cannot be expected that each transition operator $\f{A}$ will require a similar number of samples for successful recovery. In particular, a more \emph{localized} transition operator with a non-zero pattern that is not very distributed will not benefit from space samples at locations associated to its zero coordinates.

In order to quantify which transition operators can be recovered by either of our sampling models, we therefore introduce a notion of \emph{incoherence} for the block Hankel embedding matrix $\f{H}_{\f{A}}$ of a transition operator $\f{A}$. This extends the fundamental ideas in low-rank matrix completion \cite{CR09,recht,Chen15}, where the difficulty of a completion problem is measured by the \emph{incoherence} of a low-rank matrix with respect to the standard basis.
We also introduce local incoherence quantities, to be used to guide the adaptive sampling scheme. 

Let $\f{T}_{\f{Z}}$ be the {tangent space to the manifold of rank-$r$ matrices $\y{M}_r = \{ \f{X} \in \Rddn: \rank(\f{X}) = r \}$ at $\f{Z} \in \Rddn$,} where $r \in \N$ and $\f{Z} \in \Rddn$ is a rank-$r$ matrix with compact singular value decomposition $\f{Z} = \f{U} \Sigma \f{V}^*$ with $\f{U} \in \R^{n d_1 \times r}$ and $\f{V} \in \R^{n d_2 \times r}$ with orthonormal columns, and $\f{\Sigma} \in \R^{r \times r}$ the diagonal matrix of non-increasing singular values of $\f{Z}$. By \cite{Vandereycken13},
\begin{equation} \label{eq:tangent:space:def}	
\f{T}_{\f{Z}}\!:=\!\!\{ \f{U} \f{M}_1^* + {\f{M}}_2 \f{V}^*\!: \!\f{M}_1 \in \R^{n d_2 \times r}, {\f{M}}_2 \in \R^{n d_1 \times r} \}.\!
\end{equation} 

\begin{definition}\label{def:incoherence}
Let $\f{Z} \in \Rddn$ be a rank-$r$ matrix. Let $\left\{ \f{B}_{i,j,t}: (i,j,t) \in I \right\}$ be the standard basis of the space of block Hankel matrices.\footnote{See \Cref{lemma:Hankel:action} in \Cref{sec:coherence:estimates} for an explicit representation.} 
Let $d_1,d_2$ be the pencil parameters of the block Hankel operator $\y{H}$, and $c_s:=\frac{T(T+1)}{d_1 d_2}$.
\begin{enumerate}
\item For $1 \leq i, j \leq n$ and $1 \leq t \leq T$, we define the \emph{local incoherence at space-time index $(i,j,t)$} of $\f{Z}$ as
\begin{equation} \label{eq:mu:ijk}
\mu_{i,j,t} :=  \frac{n T}{c_s r} \|\mathcal{P}_{\f{T}_{\f{Z}}} (\f{B}_{i,j,t})\|_F^2\,.
\end{equation}
\item We say that $\f{Z}$ is \emph{$\mu_0$-incoherent} if there exists a constant $\mu_0 \geq 1$ such that
\begin{equation}
\label{eq:incoherence}
\begin{aligned}
\max_{1 \leq i, j \leq n, 1 \leq t \leq T} \|\mathcal{P}_{\f{T}_{\f{Z}}} (\f{B}_{i,j,t})\|_F
 \leq \sqrt{ \mu_0 c_s \frac{r}{n T}}\,,
\end{aligned} 
\end{equation}
i.e., if $\max_{1 \leq i, j \leq n, 1 \leq t \leq T} \mu_{i,j,t} \leq \mu_0$. We call the smallest $\mu_0$ satisfying \cref{eq:incoherence} the \emph{incoherence parameter} of $\f{Z}$. 
\end{enumerate}
\end{definition}
Intuitively, a rank-$r$ matrix $\f{Z}$ is $\mu_0$-incoherent with small $\mu_0$ if the projections of all elements of the {standard basis of the space of block Hankel matrices} $\left\{ \f{B}_{i,j,t}\right\}$ onto the tangent space $T_{\f{Z}}$ associated to $\f{Z}$ are {small}. In order to use an incoherence notion that is adequate for our purposes of understanding the fundamental difficulty of an instance of \Cref{problem}, we follow the notion of \cite[Definition 3.3.1]{K19} in \cref{eq:incoherence}, which is a slightly weaker notion than the notions used in the context of structured low-rank matrices \cite[(27)]{chen_chi14} and \cite{cai_wang_wei19}. In fact, $\mu_0$ in \cref{eq:incoherence} can be upper bounded by the incoherence parameter of \cite{recht,chen_chi14} (see also \cite[Remark B.1.]{KMV21}, the discussion of \Cref{sec:incoherence:estimates} and \Cref{lemma:incoherence:bound:traditional}).

\subsection{Local Quadratic Convergence of \texttt{TOIRLS}}\label{sec:results:mobile:sensors}
We are now ready to state local convergence guarantees of \texttt{TOIRLS} (\Cref{algo:IRLS:graphcompletion}) for the recovery of transition operators from space-time samples. 
\begin{theorem}[Local Quadratic Convergence of \texttt{TOIRLS}] \label{thm:convergence:mobilesensors}
There exist absolute constants $\widetilde{c}_0,C$ such that the following holds. Let $\f{A} \in \Mn$ be a rank-$r$ transition operator, 
let $\f{H}_{\f{A}} = \y{H}(\mathcal{Q}_T(\f{A}))$ be the block Hankel matrix associated to the first $T$ time scales of $\f{A}$, where $\y{H}: \MnOplus \to \Rddn$ is the block Hankel embedding map \cref{eq:BlockHankel:operatordef} with pencil parameters $d_1,d_2$. Let $\widetilde{\f{X}}\hk$ be the $k$-th iterate of \Cref{algo:IRLS:graphcompletion} with inputs $\Omega$, $\f{y} = P_{\Omega}(\y{Q}_T(\f{A}))$ and $\widetilde{r} = r$, assume that the smoothing parameter \cref{eq:MatrixIRLS:epsdef} satisfies $\varepsilon_k = \sigma_r(\y{H}(\widetilde{\f{X}}\hk))$.  Let $\kappa:=\sigma_1(\f{H}_{\f{A}})/\sigma_r(\f{H}_{\f{A}})$ denote the condition number of $\f{H}_{\f{A}}$.

Suppose that one of the following statements holds:
\begin{enumerate}
\item \emph{[Uniform sampling]} $\f{H}_{\f{A}}$ is $\mu_0$-incoherent and that $\Omega$ is a random subset of cardinality $m$ uniformly drawn without replacement in the set of space-time samples $I= [n] \times [n] \times [T]$ , with
\begin{equation} \label{eq:mainthm:uniform:samplecomplexity}
m = \Omega( c_s \mu_0 r n \log(n T) ),
\end{equation}
and, furthermore, $\y{H}(\widetilde{\f{X}}\hk) \in  \y{B}_{\HA}(R_0)$ with\footnote{recall that $d:=\min(d_1,d_2)$}
\begin{equation} \label{eq:proximity:assumption:mainthm:1}
R_0:=\widetilde{c}_0 \bigg(\frac{\mu_0}{nT}\bigg)^{3/2} \frac{r^{1/2}}{\kappa (dn-r)^{1/2}}  \sigma_r(\HA) \,.
\end{equation}
\item \emph{[Adaptive sampling]} With $(\mu_{i,j,t})_{(i,j,t) \in I}$ being the local incoherences \cref{eq:mu:ijk} of $\f{H}_{\f{A}} $, $\Omega$ consists of random index triplets $(i,j,t) \in I$ that are independently observed according to Bernoulli distributions with probabilities $p_{i,j,t}$ each satisfying
\begin{equation}  \label{eq:sample:complexity:condition:localcoherence}
p_{i,j,t} \geq \min\left( C c_s  \frac{\mu_{i,j,t}\log(n T)}{n T} r ,1\right)\,,
\end{equation}
and, furthermore, $\y{H}(\widetilde{\f{X}}\hk)\!\in\!\y{B}_{\HA}(R_0)$ with
\begin{equation} \label{eq:proximity:assumption:mainthm:2}
R_0:=\widetilde{c}_0 \!\min_{(i,j,t) \in I}\! \bigg(\frac{\mu_{i,j,t}\log(nT)}{nT}\bigg)^{3/2}\!\! \frac{r^{1/2}  \sigma_r(\HA)}{\kappa (dn-r)^{1/2}}\,.
\end{equation}
\end{enumerate}
Then, with probability of  of at least $1 - 2 n^{-2}$, the subsequent iterates of \texttt{TOIRLS} (\Cref{algo:IRLS:graphcompletion}) converge to $\f{H}_{\f{A}}$, i.e. $\y{H}(\widetilde{\f{X}}^{(k+\ell)}) \xrightarrow{\ell \to \infty} \f{H}_{\f{A}}$, with quadratic convergence rate: for a dimension-dependent constant $\nu$.\footnote{See \Cref{sec:thm:convergence:mobilesensors:proof} for a possible choices for $\nu$.}
\begin{align*}
&\|\y{H}(\widetilde{\f{X}}^{(k+\ell+1)}) - \f{H}_{\f{A}}\|  \\ &\leq \min(\nu \|\y{H}(\widetilde{\f{X}}^{(k+\ell)})-\f{H}_{\f{A}}\|^{2} , \|\y{H}(\widetilde{\f{X}}^{(k+\ell)})-\f{H}_{\f{A}}\|)\,.
\end{align*}
\end{theorem}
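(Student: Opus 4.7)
The plan is to study one iteration of \Cref{algo:IRLS:graphcompletion} starting at an iterate $\widetilde{\f{X}}\hk$ with $\f{H}_k := \y{H}(\widetilde{\f{X}}\hk) \in \y{B}_{\HA}(R_0)$, and to show directly that $\|\f{H}\hkk - \HA\| \leq \nu\|\f{H}_k - \HA\|^2$ for a constant $\nu$ of the stated size; the assertion then follows by induction in $\ell$, as $R_0$ has been chosen so that $\nu R_0 < 1$ and the contraction is therefore self-improving inside $\y{B}_{\HA}(R_0)$. Throughout, let $\f{T}^*$ denote the tangent space to the rank-$r$ manifold at $\HA$, let $\f{T}_k$ denote the tangent space at the best rank-$r$ approximation $\f{H}_k^{(r)}$ of $\f{H}_k$, and let $\{\f{B}_{i,j,t}\}_{(i,j,t) \in I}$ be the standard block Hankel basis of \Cref{def:incoherence}.

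The first step is to establish a \emph{block Hankel restricted isometry}. For $\lambda = 0$ the first-order optimality condition for \cref{eq:MatrixIRLS:Xdef} reads $\widetilde{W}_{\f{H}_k}(\widetilde{\f{X}}\hkk) \in \Range(P_\Omega^*)$ together with $P_\Omega(\widetilde{\f{X}}\hkk) = \f{y}$. Decomposing $W_{\f{H}_k} = \varepsilon_k^{-2}(\Id - \pr_{\f{T}_k}) + \pr_{\f{T}_k}W_{\f{H}_k}\pr_{\f{T}_k}$ shows that, as $\varepsilon_k$ shrinks, the $\f{T}_k^\perp$-component of $\f{H}\hkk$ is driven to essentially zero, while its $\f{T}_k$-component is determined by inverting the restricted normal operator $\pr_{\f{T}_k}\y{H}^* P_\Omega^* P_\Omega\y{H}\pr_{\f{T}_k}$. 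Its well-conditioning, i.e.\ a restricted isometry on $\f{T}_k \cap \Range\,\y{H}$, is proved by matrix Bernstein applied to $\pr_{\f{T}^*}(R_\Omega - \Id)\pr_{\f{T}^*}$, where $R_\Omega := \sum_{(i,j,t)\in\Omega} p_{i,j,t}^{-1}\langle \f{B}_{i,j,t},\cdot\rangle\,\f{B}_{i,j,t}$ (with $p_{i,j,t} = m/|I|$ in the uniform case). The local incoherence bound $\|\pr_{\f{T}^*}(\f{B}_{i,j,t})\|_F^2 \leq c_s\mu_{i,j,t}r/(nT)$ from \cref{eq:mu:ijk} controls both the $L^\infty$ norm of each summand and, upon summation, the variance proxy, giving a $(1\pm 1/2)$ restricted isometry on $\f{T}^* \cap \Range\,\y{H}$ with probability at least $1-2n^{-2}$ under either \cref{eq:mainthm:uniform:samplecomplexity} or \cref{eq:sample:complexity:condition:localcoherence}. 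Wedin's $\sin\Theta$ theorem, combined with $R_0 \ll \sigma_r(\HA)$, then transfers this property from $\f{T}^*$ to $\f{T}_k$ with no loss in constants.

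The second step extracts the quadratic rate. The $\varepsilon_k^{-2}$ suppression from Step~1 implies that $\pr_{\f{T}_k^\perp}(\f{H}\hkk)$ is essentially forced to match $\pr_{\f{T}_k^\perp}(\HA)$, while the $\f{T}_k$-component of $\f{H}\hkk - \HA$ is controlled through the RIP by the same forcing term. Since both $\HA$ and $\f{H}_k^{(r)}$ lie on the rank-$r$ manifold and $\HA \in \f{T}_{\HA}$, the quadratic curvature of this manifold at $\f{H}_k^{(r)}$ yields $\|\pr_{\f{T}_k^\perp}(\HA)\|_F = \|\pr_{\f{T}_k^\perp}(\HA - \f{H}_k^{(r)})\|_F \lesssim \|\f{H}_k - \HA\|^2/\sigma_r(\HA)$, which is the source of the quadratic rate. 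Passing from Frobenius to spectral norm in $\f{T}_k^\perp \cap \Range\,\y{H}$ (which introduces the $(dn-r)^{1/2}$ factor) and tracking the condition number through the weight operator $W_{\f{H}_k}$ (which introduces the $\kappa$ factor) yields $\|\f{H}\hkk - \HA\| \leq \nu\|\f{H}_k - \HA\|^2$ with the stated~$\nu$.

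The main obstacle is the block Hankel restricted isometry itself. The blocks $\f{B}_{i,j,t}$ are not orthonormal: their Frobenius norms depend on the Hankel multiplicity of the triple $(i,j,t)$, and this is precisely what introduces the factor $c_s = T(T+1)/(d_1 d_2)$ of \Cref{def:incoherence}; one must carefully normalize in both the variance bound and the $L^\infty$ bound of the Bernstein sum. For the adaptive model the sample complexity \cref{eq:sample:complexity:condition:localcoherence} only allocates probability commensurate with $\mu_{i,j,t}$ rather than the global $\mu_0$, so the Bernstein analysis must be split by local incoherence and rely on a sharp computation of $\|\pr_{\f{T}^*}(\f{B}_{i,j,t})\|_F^2$ in terms of the leverage scores of the singular vectors of $\HA$, which the estimates collected in \Cref{sec:coherence:estimates} provide. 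A secondary difficulty is the propagation through the effective weight operator $\widetilde{W}_{\f{H}_k} = \y{H}^*W_{\f{H}_k}\y{H}$, which couples the $\MnOplus$ least squares step to the block Hankel geometry and is what determines the $\kappa$ and $(dn-r)^{1/2}$ scaling appearing in both $R_0$ and $\nu$.
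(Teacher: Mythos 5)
Your outline follows the same route as the paper's proof: a local restricted isometry on the tangent space at $\HA$ established by matrix Bernstein with the (local) incoherences of the block Hankel basis (\Cref{lemma:main:sampling}), a Wedin-type transfer of that property to the tangent space at the current iterate (\Cref{eq:MatrixIRLS:tangentspace:localRIP:perturbation}), and a quadratic rate obtained from the interplay of the $\varepsilon_k^{-2}$ weighting on $\f{T}_k^{\perp}$ with the quadratic smallness of $\y{P}_{\f{T}_k^{\perp}}(\HA)$. In the paper this last step is packaged as the weight-operator estimates of \Cref{prop:weightoperator} together with the null-space inequality of \Cref{lemma:MatrixIRLS:localRIP}, but the mechanism you describe is the same one.

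The one genuine gap is in the induction over $\ell$. The per-step quadratic bound is extracted from $\varepsilon_{k} = \sigma_{r+1}(\f{H}_{k}) \le \|\f{H}_{k}-\HA\|$ (Eckart--Young, since $\HA$ has rank $r$), so it requires that the smoothing parameter of the \emph{current} iterate equals $\sigma_{r+1}$ of the current block Hankel matrix. But the update rule \cref{eq:MatrixIRLS:epsdef} takes a minimum with the previous value, so after one step you must additionally prove $\sigma_{r+1}(\f{H}_{k+1}) \le \varepsilon_{k}$; otherwise $\varepsilon_{k+1}$ is stale, the identity $\varepsilon_{k+1}=\sigma_{r+1}(\f{H}_{k+1})$ fails, and the quadratic bound does not reapply at the next step. ``$\nu R_0<1$ and the contraction is self-improving'' does not supply this. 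The paper closes it in \cref{eq:sigma:decreasing} via \Cref{lemma:etaksigmarp1Xk}, which shows $\|\f{H}_k-\HA\| \lesssim \sqrt{\|\y{R}_\Omega\|}\sqrt{dn-r}\,\sigma_{r+1}(\f{H}_k)$ --- and this bound, rather than a Frobenius-to-spectral conversion on $\f{T}_k^{\perp}$, is where the $(dn-r)^{1/2}$ in $R_0$ actually originates. The remaining ingredients of your sketch (the normalization of the non-orthonormal $\f{B}_{i,j,t}$ producing the factor $c_s$, the leverage-score computation for the adaptive model, the Wedin transfer of the RIP at the cost of a factor $2$ in the constant) match the paper's argument.
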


\Cref{thm:convergence:mobilesensors} justifies that the spatio-temporal transition operator recovery problem can be solved efficiently using \texttt{TOIRLS} given a number of random samples that is, up to constants, only logarithmically larger than the $r (2n -r) = O(r n)$ free parameters that are required to describe a rank-$r$ transition operator $\f{A} \in \Mn$. In the case of adaptive sampling, the condition \cref{eq:sample:complexity:condition:localcoherence} can be translated into a bound on the number of expected samples $m_{\text{exp}}$ since $m_{\text{exp}} = \mathbb{E}[|\Omega|] = \sum_{(i,j,t) \in I} p_{i,j,t} \geq 
C c_s \frac{r}{n T} \log(n T) \sum_{(i,j,t) \in I} \mu_{i,j,t} $ (if the constants in \cref{eq:sample:complexity:condition:localcoherence} are small enough to attain the minimum in the first argument). See \Cref{sec:incoherence:estimates} for further discussion.

The proximity assumptions \cref{eq:proximity:assumption:mainthm:1,eq:proximity:assumption:mainthm:2}, which ensure that the spectral norm error of subsequent iterates of \texttt{TOIRLS} {decreases with a quadratic convergence rate}, are comparably restrictive due to their dependence on the $n$, $d$ and $T$, which makes it hard to find an initialization that satisfy the conditions for large-scale problems. However, extending the convergence radius of IRLS methods remains an open problem even for simpler problems such as sparse vector and unstructured low-rank matrix recovery if a non-convex objective such as \cref{eq:smoothing:Fpeps} is used \cite{Daubechies10,KS18,KMV21}. In \Cref{section:numerical:recoverability}, we provide numerical experiments illustrating that in practice, exact recovery of transition operators is observed empirically with an empirical probability of essentially $1$ once enough samples are provided, even if \texttt{TOIRLS} is initialized with the natural, data-agnostic weights of $\widetilde{W}_{\f{H}_0} = \mathcal{H}^{\;*} \mathcal{H}$ as in \Cref{algo:IRLS:graphcompletion}. 

The statements of \Cref{thm:convergence:mobilesensors} address the case of exact observations $\f{y} = P_{\Omega}(\y{Q}_T(\f{A}))$ that are not perturbed by any noise. In \Cref{sec:numerical:noisy:observations}, we provide numerical experiments suggesting that \texttt{TOIRLS} is in practice robust in the presence of noisy observations (including with a choice of the regularization parameter $\lambda =0$).

The proof strategy for \Cref{thm:convergence:mobilesensors} is outlined in \Cref{section:proof:main:theorem} and detailed in \Cref{sec:thm:convergence:mobilesensors:proof}. 

We note that while \Cref{sec:results:mobile:sensors} focuses on the the behavior of \Cref{algo:IRLS:graphcompletion} in the context of the recovery of a transition operator $\f{A}$ from space-time samples, it \emph{is} possible to extend the applicability of \Cref{thm:convergence:mobilesensors} to \Cref{algo:IRLS:graphcompletion} recovering---more generally---rank-$r$ block Hankel matrices by choosing as output the entire matrix $\widetilde{\f{X}}^{(K)} \in \MnOplus$ instead of its restriction to its first block $\f{A}^{(K)}$. In particular, in this setting, the ground truth $\HA$ can be substituted by any ground truth $\y{H}(\widetilde{\f{X}}_0)$ with $\rank(\y{H}(\widetilde{\f{X}}_0)) = r$, using the same notions of (local) incoherence as in the presented results.

\begin{remark}
We recall that the pencil parameter $d_1$ is a free parameter in \Cref{algo:IRLS:graphcompletion}. \Cref{thm:convergence:mobilesensors} also has implications for the choice of $d_1$: for uniform sampling, the sampling complexity \cref{eq:mainthm:uniform:samplecomplexity} is minimized if we choose $d_1$ such that $c_s \cdot \mu_0$ (both $c_s$ and $\mu_0$ depend on $d_1$) is minimized. The factor $c_s = \frac{T(T+1)}{d_1 d_2} = \frac{T(T+1)}{d_1 (T-d_1+1)}$ is minimized for $d_1 = \left\lfloor (T+1)/2\right\rfloor$, yielding a roughly square block Hankel matrix. Such a choice is observed to be favorable also for other problems using structured low-rank optimization \cite{chen_chi14,Cai-2022structured}.

A priori, the dependence of $\mu_0$ on $d_1$ is unclear; however, numerical experiments conducted in \Cref{sec:numerical:pencildependence} suggest that this choice of $d_1$ also minimizes the product $c_s  \mu_0$ at least in some of situations we consider.
\end{remark}

\subsection{Examples and Discussion of Sample Complexity} \label{sec:incoherence:estimates}
We now attempt to better understand the implications of \Cref{sec:results:mobile:sensors} and, in particular, the sample complexity conditions \cref{eq:mainthm:uniform:samplecomplexity} and \cref{eq:sample:complexity:condition:localcoherence} for uniform and adaptive sampling schemes. We provide sufficient conditions on the sample complexity by providing bounds on the incoherence parameter $\mu_0$ and local incoherences $\mu_{i,j,t}$, respectively, in various examples.

It is instructive to relate $\mu_0$, the incoherence of the block Hankel matrix  $\f{H}_{\f{A}} = \y{H}(\mathcal{Q}_T(\f{A}))$, with the now-classical incoherence notion \cite{CandesTao10,Chen15} of the transition operator $\f{A}$ (which coincides with $\f{H}_{\f{A}}$ in the static case of $T=1$). While in general there is no direct relationship between these two notions, as the singular vectors of $\f{H}_{\f{A}}$  may not be always expressed in terms of the singular vectors of $\f{A}$,
In two particular cases, when $\f{A}$ is an orthogonal matrix or a positive semi-definite matrix, it is possible to establish a simple relationship between these incoherence notions.
\paragraph{Orthogonal matrices}  If $\f{A} \in \Or^{n} := \{ \f{X} \in \Mn: \f{X}^* \f{X} = \Id\}$, it holds that $\mathrm{rank} ({\f{A}})=n$. In this case, the incoherence parameter $\mu_0$ of $\f{H}_{\f{A}}$ satisfies 
\[
\mu_0 \leq 1 =: \widetilde{\mu}_0
\]
and, furthermore, the local incoherences $\mu_{i,j,t}$ of  $\f{H}_{\f{A}}$ satisfy
\[
\sum_{(i,j,t) \in I} \mu_{i,j,t} \leq T n^2,
\] 
see \Cref{coherence:unitary} for details. The two parts of \Cref{thm:convergence:mobilesensors} therefore imply that for both uniform and adaptive sampling, $\Theta( c_s n^2 \log(Tn))$ space-time samples are sufficient to establish local convergence of IRLS with high probability. 
These results are consistent with the intuition that a dynamical system driven by an orthogonal transition operator is energy-preserving, and from the bound $\Theta( c_s n^2 \log(Tn))$, we see that up to a logarithmic factor of $\log(Tn)$, space-time samples contain a comparable amount of information to that of static samples. As the resulting sample complexity bound is of the same order in both cases, we expect adaptive sampling and uniform sampling to exhibit similar behavior for orthogonal transition operators. We refer to \Cref{sec:numerical:Tdependence} for numerical experiments.

\paragraph{Positive semi-definite matrices} Let $d_1 \leq d_2$ without loss of generality. If the transition operator is a positive semidefinite matrix $\f{A} = \sum_{i=1}^r \lambda_i \f{u}_i\f{u}_i^*$ with (positive) eigenvalues $\lambda_i$ and corresponding eigenvectors $\f{u}_i$, we show that the incoherence parameter $\mu_0$ of $\mathcal{H}\big(\mathcal{Q}_T(\f{A})\big)$ satisfies
\begin{equation} \label{eq:mu:tilde:mobile:uniform}
\mu_0 \leq \max_{1\leq i\leq n}{\sum_{\ell=1}^{r}\frac{nd_2 (\f{u}_{\ell})_{i}^2}{r(\sum_{s=0}^{d_1-1}\lambda_{\ell}^{2s})}} =: \widetilde{\mu}_0.
\end{equation}  
In particular, if $\f{A}$ is a rank-$r$ projection and if $d_1 = d_2$, this bound becomes (with $e_i$ denoting the $i$-th canonical basis vector)
\[
\widetilde{\mu}_0  = \max_{1 \leq i \leq n} \frac{n}{r} \left\| \f{U}^* e_i  \right\|_2^2 =:  \nu_0\,,
\]
which coincides with the incoherence constant of $\f{A}$ as defined in the low-rank matrix completion literature \cite{CandesTao10,Chen15}. In this case, we obtain a space-time sampling bound $ \Theta(c_s \nu_ 0 r n \log(n T))$, which is just slightly more than the necessary condition of $\Theta ( c_s \nu_ 0 r n \log(n))$ for exact recovery by any method under a uniform sampling model \cite{CandesTao10,Chen15}. For adaptive sampling,  we show an upper bound for $\sum_{i,j,t} p_{i,j,t}$ as $\Theta (rn\log(nT)\log(T))$ ($T\geq 3$), and this bound can be improved to $\Theta(r n \log(nT))$ if $\f{A}$ is a rank $r$ projection. Our bound is comparable with the one obtained in \cite[Theorem 2]{ChenBhoSangWard15} for Bernoulli sampling for low-rank matrix completion, namely, $\Theta(rn\log^2(n))$ for the case $T=1$. We refer to \Cref{coherence:psd} for proofs of the presented estimates.

In general, $\widetilde{\mu}_0 $ could be larger or smaller than  $\nu_0$, depending on the interplay of the spectrum of $\f{A}$ with the coherence of the eigenvectors.  For very \emph{spiky} operators $\f{A}$ with large incoherence $\nu_0$ and quickly decaying spectrum, however, the best estimate we obtain from \cref{eq:mu:tilde:mobile:uniform} is $\widetilde{\mu}_0 \leq d_2 \nu_0$. This implies that in such a setting, our estimates lead to a sufficient condition of $\Omega( c_s \nu_0 r n T \log(nT))$ required samples, which is rather pessimistic.

\section{Computational Considerations} \label{sec:computational:considerations}

If $\f{H}_{k-1} = \y{H}(\widetilde{\f{X}}\hkm)$ is the block Hankel matrix at iteration $k-1$, the solution $\widetilde{\f{X}}\hk$ of the weighted least squares \cref{eq:MatrixIRLS:Xdef} can be written as (see \Cref{lemma:explicit:weighted:leastsquares} in \Cref{sec:computational:details})
\begin{equation*}
\widetilde{\f{X}}\hk  = \widetilde{W}_{\f{H}_{k-1}}^{-1}  P_{\Omega}^* \left( \lambda\Id +  P_{\Omega} \widetilde{W}_{\f{H}_{k-1}}^{-1} P_{\Omega}^* \right)^{-1}(\f{y})\,.
\end{equation*}

However, using this formula directly can be impractical as we have no explicit representation of the inverse $\widetilde{W}_{\f{H}_{k-1}}^{-1}: \MnOplus \to \MnOplus$ of the effective weight operator $\widetilde{W}_{\f{H}_{k-1}}: \MnOplus \to \MnOplus$, unlike in the case of \emph{unstructured} low-rank optimization, where the optimization domain is not restricted to a strict linear subspace such as $\y{H}(\MnOplus) \subset \Rddn$ and for which a related IRLS method was studied in \cite{KMV21}. A space and memory-efficient implementation of the weighted least squares step leveraging an underlying ``low-rank plus diagonal'' structure of $\widetilde{W}_{\f{H}_{k-1}}$ can still be achieved, as can be seen in \Cref{thm:TOIRLS:computationalcost:Xkk}.

\begin{theorem} \label{thm:TOIRLS:computationalcost:Xkk}
Let $\widetilde{\f{X}}\hkm \in \Rdd$ be the $(k-1)$-st iterate of \texttt{TOIRLS} (\Cref{algo:IRLS:graphcompletion}) for an observation vector $\f{y} \in \R^m$ with $m = |\Omega|$, $\widetilde{r}=r$, and $\lambda \geq 0$. Assume that $r_{k-1} = r$. Then an approximation of the $k$-th iterate $\widetilde{\f{X}}\hk$ of \texttt{TOIRLS} can be computed within $N_{\text{CG\_inner}}$ steps of a conjugate method solving a $O(r n T) \times O(r n T)$ linear system with space complexity of $O(r n T + m)$ and in $O(N_{\text{CG\_inner}} r T ( m + n \log T + n r T) )$ time.
\end{theorem}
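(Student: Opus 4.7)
My overall strategy is to avoid ever materializing or inverting the effective weight operator $\widetilde{W}_{\f{H}_{k-1}}$, or the full block Hankel matrix $\f{H}_{k-1} \in \Rddn$, by exploiting two structural ingredients: (a) a ``scaled identity plus rank-$r$'' factorization of the spectral weights $\Sigma_{\varepsilon,d_s}^{-1}$ ($s=1,2$), and (b) fast FFT-based matrix-vector products for block Hankel matrices. First, from first-order optimality of \eqref{eq:MatrixIRLS:Xdef} (and the corresponding KKT conditions if $\lambda = 0$), I would justify the closed-form expression $\widetilde{\f{X}}\hk = \widetilde{W}_{\f{H}_{k-1}}^{-1} P_\Omega^* (\lambda \Id + P_\Omega \widetilde{W}_{\f{H}_{k-1}}^{-1} P_\Omega^*)^{-1}(\f{y})$ stated in the excerpt. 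The key observation is that $\Sigma_{\varepsilon,d_s}^{-1} = \varepsilon^{-1}\Id + \widetilde{D}_s$, where $\widetilde{D}_s$ is a diagonal matrix with exactly $r_{k-1} = r$ nonzero entries, so that $W_{\f{H}_{k-1}}(\f{Z}) = \varepsilon^{-2}\f{Z}$ plus a correction of ``effective rank'' $O(r)$ built from the leading singular vectors $\f{U}_r \in \R^{d_1 n \times r}$ and $\f{V}_r \in \R^{d_2 n \times r}$. Composing with $\y{H}^*$ and $\y{H}$ yields $\widetilde{W}_{\f{H}_{k-1}} = \varepsilon^{-2} \y{H}^* \y{H} + (\text{correction with range of dimension } O(rnT))$, in which $\y{H}^*\y{H}$ is a diagonal operator on $\MnOplus$ scaling each block by its antidiagonal multiplicity.

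Given this decomposition, I would apply the Sherman--Morrison--Woodbury identity to rewrite the action of $\widetilde{W}_{\f{H}_{k-1}}^{-1} P_\Omega^*$ in terms of (i) the trivial diagonal inverse $\varepsilon^{2}(\y{H}^*\y{H})^{-1}$, which takes $O(nT)$ time, and (ii) the solution of a symmetric positive definite linear system of size $O(rnT) \times O(rnT)$, whose unknowns live in the range of the low-rank correction, spanned by pushforwards of $\f{U}_r, \f{V}_r$ through $\y{H}^*$. This reduced system I would solve inexactly by conjugate gradient, so that each inner step requires only matrix-vector products with the reduced operator, never an explicit inverse of $\widetilde{W}_{\f{H}_{k-1}}$ or of the full outer operator $\lambda \Id + P_\Omega \widetilde{W}_{\f{H}_{k-1}}^{-1} P_\Omega^*$.

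For the per-iteration cost, I would bound the primitives needed for one application of the reduced operator as follows: applying $\y{H}$ or $\y{H}^*$ to an argument encoded in the low-rank factors costs $O(n r T \log T)$ using the standard FFT-based fast Hankel--vector product applied blockwise to the $n \times n$ building blocks, which avoids the $O(n^2T^2)$ cost of forming $\f{H}_{k-1}$ explicitly; applying $P_\Omega$ or $P_\Omega^*$ on factored inputs costs $O(rTm)$, since only the $m$ observed entries are read and written; and the dense $O(rT) \times O(rT)$ multiplications and diagonal scalings involving $\f{U}_r, \f{V}_r, \widetilde{D}_1, \widetilde{D}_2$ contribute an $O(nr^2T^2)$ term. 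Because the Kronecker factorization of $W_{\f{H}_{k-1}}$ requires $O(rT)$ such primitives per application of the reduced operator, each CG step costs $O(rT(m + n\log T + nrT))$, giving the claimed running time. The space complexity of $O(rnT + m)$ follows by storing only the factors $(\f{U}_r, \widetilde{D}_1, \f{V}_r, \widetilde{D}_2)$, the samples $\f{y}$, and $O(1)$ CG iterates of size $O(rnT)$.

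The main obstacle will be organizing the SMW reduction carefully enough that no step produces, or even implicitly references, the full Hankel matrix $\f{H}_{k-1} \in \Rddn$ or the dense effective weight operator $\widetilde{W}_{\f{H}_{k-1}}$. In particular, the case $\lambda = 0$ is delicate: the outer inverse $(P_\Omega \widetilde{W}_{\f{H}_{k-1}}^{-1} P_\Omega^*)^{-1}$ is then taken on $\R^m$ without a regularizing $\lambda \Id$ term, so one has to verify that restricting the SMW formula to the sampled coordinates and interleaving it with the FFT-based Hankel matvecs still maintains the claimed complexity and does not introduce any $m \times m$ dense linear algebra.
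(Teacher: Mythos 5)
Your proposal is correct and follows essentially the same route as the paper's proof (Lemma C.2 in the appendix): the explicit weighted-least-squares formula, the ``scaled identity plus low-rank correction'' splitting of $W_{\f{H}_{k-1}}$ (which the paper phrases via the tangent-space parametrization $P_{\f{T}_k}$ and a diagonal matrix $\f{D}_{S_k}$, equivalent to your $\Sigma_{\varepsilon,d_s}^{-1} = \varepsilon^{-1}\Id + \widetilde{D}_s$ splitting), a Sherman--Morrison--Woodbury reduction to an $O(rnT)\times O(rnT)$ SPD system solved by CG, and FFT-based block-circulant embeddings for the Hankel matrix-vector products. The obstacle you flag for $\lambda=0$ resolves benignly: after the SMW reduction the only $m\times m$ inverse that appears is that of $\f{K}=\lambda\varepsilon_k^{-2}\f{I}+P_\Omega \y{D}^{-2}P_\Omega^*$, which is diagonal (each diagonal entry counts the antidiagonal multiplicity of the corresponding observed block), so no dense $m\times m$ linear algebra is ever needed.
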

\Cref{thm:TOIRLS:computationalcost:Xkk} follows directly from \Cref{lemma:Algo:implementation} in \Cref{sec:computational:details}, using the implementation outlined in \Cref{algo:TOIRLS:implementation}. The linear systems solved within \Cref{lemma:Algo:implementation} can be shown to be well-conditioned under reasonable assumptions, in which case a constant number $N_{\text{CG\_inner}}$ of CG iterations is sufficient to obtain an \emph{accurate} approximation of $\widetilde{\f{X}}\hk$.

As stated in the weight operator update step of \Cref{algo:IRLS:graphcompletion}, the action of the effective weight operator $ \widetilde{W}_{\f{H}_{k-1}}^{-1}$ only uses information about the $r_{k-1}$ leading singular vector pairs and singular values of $\y{H}(\widetilde{\f{X}}\hkm)$. In particular, if for all iterations where the smoothing update \cref{eq:MatrixIRLS:epsdef} is such that  $\varepsilon_k = \sigma_{\widetilde{r}+1}\left(\f{H}_{k}\right)$, it holds that $r_k=\widetilde{r}$. This means that, in this case, only $\widetilde{r}$ singular values and singular vector pairs of $\f{H}_{k}$ need to be computed in the weight update step of \Cref{algo:IRLS:graphcompletion}, and these can be computed up to high precision using matrix-matrix multiplications with a randomized block Krylov method in \cite{MuscoMusco15,yuan2018superlinear} in $O( mT r +  r T ( \log T+ r T) n  +  Tn r^2 )$ time (using fast multiplication with block circulant matrices, see also proof of \Cref{lemma:Algo:implementation}).

We conclude that one iteration of \texttt{TOIRLS} can be computed with a time complexity that is \emph{linear} in the dimension $n$ of the transition operator $\f{A}$, at least if it is of rank $r= O(1)$. For example, if $|\Omega| = m = \Theta( rn \log (n T))$ space-time samples of an $O(1)$-incoherent ground truth $\f{A}$ are provided uniformly at random, one full \texttt{TOIRLS} iteration using \Cref{lemma:Algo:implementation} takes $O( nT^2 \log(n T) )$ time.

\section{Numerical Experiments} \label{sec:numerical:experiments}
In this section we explore the numerical performance of \texttt{TOIRLS}, \Cref{algo:IRLS:graphcompletion} for estimating transition operators from sparse space-time samples. We consider operators $\f{A} \in \R^{n \times n}$ associated with random graph models, as well as orthogonal matrices $\f{A}$. These experiments are meant to shed light on the sharpness of our sampling complexity results \Cref{thm:convergence:mobilesensors}, and verify they are consistent with the empirically observed behavior. While there are no dedicated computational approaches to our recovery problem available in the literature, we include also comparisons with the interior-point algorithm \cite{waltz2006interior} used in the nonlinear optimization wrapper \texttt{fmincon} of MATLAB, minimizing the objective $f: \Mn \to \R$
\begin{equation} \label{eq:nonlinear:leastsquares}
f(\f{B}) := \left\| P_{\Omega}(\mathcal{Q}_T(\f{B}))  -  P_{\Omega}(\mathcal{Q}_T(\f{A})) \right\|_2^2
\end{equation}
using finite difference gradient approximations. 

\paragraph{Numerical setup} since the number of degrees of freedom is $r (2n -r)$ for a rank-$r$ matrix $\f{A} \in \Mn$ and $r(n-(r-1)/2)$
for a symmetric $(n \times n)$ rank-$r$ matrix $\f{A}$, we define, for $m_{total} = |\Omega|$ space-time samples, the oversampling factor $\rho$ as, respectively, 
\[
\rho = \frac{m_{total}}{r (2n -r)}
\qquad\text{and}\qquad
\rho = \frac{m_{total}}{r(n-(r-1)/2)}\,.
\]
The average number of spatial samples taken at each time instance is
$
{m_{single}}={m_{total} }/{T}
$.
We use $m_{1}$ to denote the number of samples taken at $T=1$.  We will use both the uniform and adaptive schemes described in \Cref{sec:problem:setup}.

In the numerical experiments, we use \texttt{TOIRLS} as outlined in \Cref{algo:IRLS:graphcompletion} using the implementation described in \Cref{sec:computational:details} for computing the tangent spaces, and solving the linear systems associated to the weighted least squared problems with a conjugate gradient method.\footnote{For problem instances with relatively large ambient dimension $n$, such as the Minnesota road network graph of \Cref{sec:dep:graphtopology}, we use a MATLAB implementation that follows closely the steps outlined in the proof of \Cref{lemma:Algo:implementation}.2. For problems with larger number of time steps $T$, we used matrix-vector multiplications in \Cref{algo:TOIRLS:implementation} that include antiaveraging of block Hankel matrices instead of block-wise fast Fourier transforms as these turned out to be faster for the problem dimensions we were interested in.} Unless stated otherwise, we use \Cref{algo:IRLS:graphcompletion} with stopping criterion combining a maximal number of iterations $N_0 = 250$, a tolerance $\text{tol} = 10^{-11}$ with respect to the relative change in Frobenius norm, and $\text{tol\_CG}=10^{-13}$ for the conjugate gradient step. We provide the true $\rank(\f{A})$ of the ground truth as the rank estimate $\widetilde{r} = \rank(\f{A})$ to the algorithm. If not stated otherwise, we provide \texttt{TOIRLS} with the pencil parameter $d_1 = \left\lfloor (T+1)/2\right\rfloor$, leading an (approximately) square dimensionality of the block Hankel embedding space $\ran(\y{H})$. 

\paragraph{Evaluation metrics}  we define the recovery error of an estimator $\f{\hat A}$ of $\f{A}$ as  $$\text{Rec}_{\f{A}}:={\|\f{\hat A} -\f{A}\|_F}/{\|\f{A}\|_F}.$$
For a random model, unless stated otherwise, we run 10 independent trials  and report the mean and standard deviation of the recovery errors. 

\paragraph{Graph Topology-Induced Transition Operators}
We consider operators representing dynamics on different graphs and random graph models. Let $\mathcal{G}=(V,E,\f{W})$ be an \emph{undirected weighted graph}  with $n$ vertices $V=\{v_1,\cdots, v_n\}$, edges $ E \subset\mathcal{ V\times V}$ and adjacency matrix $\f{W} \in \Rnn$, i.e., $\f{W}_{ij} = 1$ if $(i,j) \in E$ and $\f{W}_{ij} = 0$ otherwise. The \emph{degree} of a vertex $v_i \in \mathcal{V}$ is $\mathrm{deg}(v_i)=\sum_{j=1}^{n} \f{W}_{ij}$.
Given a graph $\mathcal{G}$, a variety of associated transition operators can be defined, encoding structural information about the graph \cite{chung1997spectral} and associating to the graph certain dynamical processes on it.

\begin{definition} \label{def:graph:operators}
The \emph{normalized diffusion operator} of a graph $\mathcal{G}=(V,E,\f{W})$ is $\f{A} :=   (\f{D}^{-1})^{\frac{1}{2}}\f{W}  (\f{D}^{-1})^{\frac{1}{2}}$, where $\f{D} :=\mathrm{diag}(\mathrm{deg}(v_i))_{v_i\in V}$ and $\f{D}^{-1}$ denotes its pseudo-inverse. The \emph{normalized graph Laplacian operator} is $\f{L} = \Id -\f{A}$. The \emph{random walk matrix}  $\f{P}$ is $ \f{D}^{-1}\f{W}$, and the \emph{heat diffusion operator} for time parameter $\tau >0$ is $\exp(-\tau \f{L})$.
\end{definition}

\subsection{Recoverability in the Noiseless Setting} \label{section:numerical:recoverability}
We first investigate the empirical recoverability of transition operators $\f{A}$ by \Cref{algo:IRLS:graphcompletion} from spatio-temporal samples $\Omega$ given different numbers of time steps $T$, sampling schemes and different sample complexities. Furthermore, we consider different types of transition operators that include both full and low-rank matrices, symmetric and non-symmetric matrices, orthogonal matrices and operators associated to the topology of graphs.

\subsubsection{Dependence on Number of Time Steps $T$} \label{sec:numerical:Tdependence}
For a first experiment, we fix the number $m = |\Omega|$ of uniformly sampled space-time samples from $\mathcal{Q}_T(\f{A}):= \f{A} \oplus \f{A}^2 \oplus \f{A}^3 \oplus \ldots \oplus \f{A}^T$ and consider different choices of $T$.

\paragraph{Random orthogonal matrices} we consider random orthogonal matrices $\f{A} \in \Rnn$, with $n=50$, sampled from the Haar measure on the orthogonal group $\mathcal{O}(n) = \{\f{A} \in \Rnn: \f{A}\f{A}^T = \f{A}^T \f{A} = \mathbf{I} \}$. In this case, $\f{A}$ has $n^2 = 2500$ degrees of freedom,\footnote{In fact, an orthogonal matrix has only $n(n-1)/2$ degrees of freedom; however, as reconstruction method is oblivious to the orthogonality constraints, we neglect these in our calculation.} and we first fix the total number of space-time samples (uniform sampling) to $m_{\text{total}} = 7500$, corresponding to an oversampling factor of $\rho = 3$. We investigate the performance of the proposed approach when $T$ is between $10$ and $50$, i.e., the average spatial samples per time instance ranges from $750$ to $150$. We report the results in \Cref{table:rom}: our approach is able to recover $\f{A}$ accurately after about $30$ IRLS iterations, even when $T$ grows larger, increasing the apparent nonlinearity of the problem.
This is consistent with our theoretical analysis: in short, in an energy-preserving system, one can trade spatial samples for an equal amount of temporal samples without loss of information. 

\begin{table}[H]
\begin{center}
\small{
\begin{tabular}{|c|c|c|c|c|c|c|cc|||}
\hline
$T$ & $\text{Rec}_{\f{A}}$ & $m_{\text{single}}$ & $\operatorname{dof}(\f{A})$ & $\rho$ & Iterations \\ \hline
10    &        $(2.3 \pm 0.2)\cdot 10^{-14} $             &     750                 & 2500  & 3 &  $20.9 \pm 0.3$       \\ \hline
  20 &         $(5.7  \pm 0.8)\cdot 10^{-14} $              &           375        & 2500     &  3   &   $25.1 \pm 0.6$         \\ \hline
   30 &          $(1.1 \pm 0.2)\cdot 10^{-13} $              &            250     & 2500      & 3    &  $27.5 \pm 0.8$     \\ \hline
      40 &          $(1.1 \pm 0.1)\cdot 10^{-13} $              &             187.5     & 2500        &  3 &       $30.2 \pm 2$     \\ \hline
      50 &          $(1.8 \pm 0.3)\cdot 10^{-13} $              &            150      & 2500        &   3  &  $32.8 \pm 2$     \\ \hline

\end{tabular}
}
\end{center}
\caption{The estimation errors for random orthogonal matrices  of size $50\times 50$ using uniform sampling with replacement.  }\label{table:rom}
\end{table}

\begin{figure}[b]
\begin{center}
    \setlength\figureheight{24mm} 
    \setlength\figurewidth{30mm}
\input{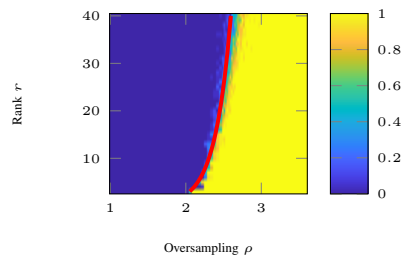}
\caption{Phase transition plot for orthogonal matrices, with oversampling factor $\rho$ on the $x$-axis and time steps $T$ on $y$-axis. Yellow corresponds to exact recovery for all random instances of the problem, blue corresponds to no recovery. Red line: $1 + 0.21 \log(nT)$.}
\label{fig:OrthoMat:experiment}
\end{center}
\end{figure}
In \Cref{fig:OrthoMat:experiment}, we report on an experiment with the same data and sampling model, but where we vary both the number of time steps $T = 1,\ldots, 40$ and the oversampling factor $\rho = 1,\ldots, 3.5$. For $24$ random instances, we visualize the empirical probability of exact recovery (defined as a relative Frobenius error of $\text{Rec}_{\f{A}} < 10^{-4}$). We observe the existence of a sharp phase transition between no recovery and exact recovery for all instances, at an oversampling factor between $\rho=2$ and $\rho = 2.7$, depending weakly on $T$. This is consistent with \Cref{thm:main:informal}, which predicts exact recovery from $\rho n^2 \gtrsim n^2 \log(n T)$ samples, since here $\mu_0=1$, cf. \Cref{sec:incoherence:estimates}. In fact, the phase transition in \Cref{fig:OrthoMat:experiment} occurs at around $\rho \approx 1+0.21 \log(nT)$ for the tested parameters.

\paragraph{Erd\H{o}s-R\'enyi Graphs.}
In the next experiment, we consider graph matrices $\f{A}$ associated with Erd\H{o}s-R\'enyi graphs \cite{ErdoesRenyi-PMD1959,Gilbert-AnnalsMS1959} with $n = 60$ nodes with connectivity probability of $p = 0.8$. With $\mathcal{T}_{r}(\cdot)$ the map from a matrix to its best rank-$r$ approximation, for $r$ between $1$ and $60$, we create rank-truncated continuous-time heat diffusion operators $\f{A} = \mathcal{T}_{r}\left( \exp(-\tau \f{L})\right) \in \Rnn$, with $\f{L}$ the normalized graph Laplacian as in \Cref{def:graph:operators}, $\tau = 0.4$, on an instantiation of an Erd\H{o}s-R\'enyi graph. In \Cref{fig:ErdoesRenyi:groundtruth:rank20}, we depict $\mathcal{Q}_T(\f{A})$ for such a transition operator, with $r=20$ and $T=7$ time steps. 
\begin{figure*}[b]
\includegraphics[width=1\textwidth]{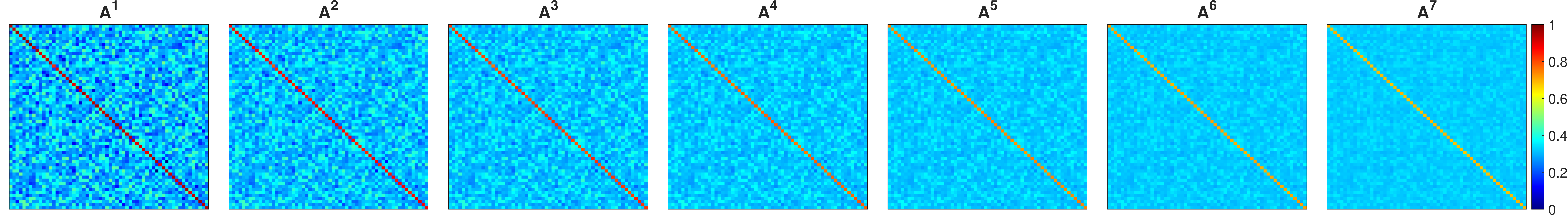}
\caption{Left column: Realization of Erd\H{o}s-R\'enyi graph with $n=60$ and $p=0.8$. Other columns: Aggregation of matrix of powers $\mathcal{Q}_T(\f{A})$ of rank-$20$ truncation $\f{A}$ of heat diffusion operator (color scheme normalized across powers, log-scale).}
\label{fig:ErdoesRenyi:groundtruth:rank20}
\end{figure*}

As they are symmetric, such matrices have $\operatorname{dof}(\f{A}) = r(n-(r-1)/2)$ degrees of freedom. In \Cref{fig:ER:T_dependence}, we visualize the recovery performance of \texttt{TOIRLS} for varying numbers of samples $m = |\Omega|$, for three different numbers of time steps $T$. In the left column of \Cref{fig:ER:T_dependence}, we see that the phase transition for $T=1$ occurs extremely close to the information theoretical limit---in this case, the setting coincides with low-rank matrix completion via \texttt{MatrixIRLS} as described in \cite{KMV21}. For $T=4$, the transition occurs at around $m = 1.5 r n$. Apart from the fact it is expected that generally, the phase transition will occur at larger sample complexities than for $T=1$ due its the logarithmic dependence on $T$, it is remarkable that the quadratic dependence of $\operatorname{dof}(\f{A})$ on $r$ is not reflected in the empirical transition curve. However, this is still compatible with \Cref{thm:convergence:mobilesensors}, as dependence on $r$ in the sufficient condition is linear. As expected due to the logarithmic dependence on $T$, we observe a similar, but slightly deteriorated transition curve for $T=7$.

\begin{figure*}
\begin{subfigure}[c]{0.25\textwidth}
\vspace*{-37.5mm}
\includegraphics[width=1\textwidth]{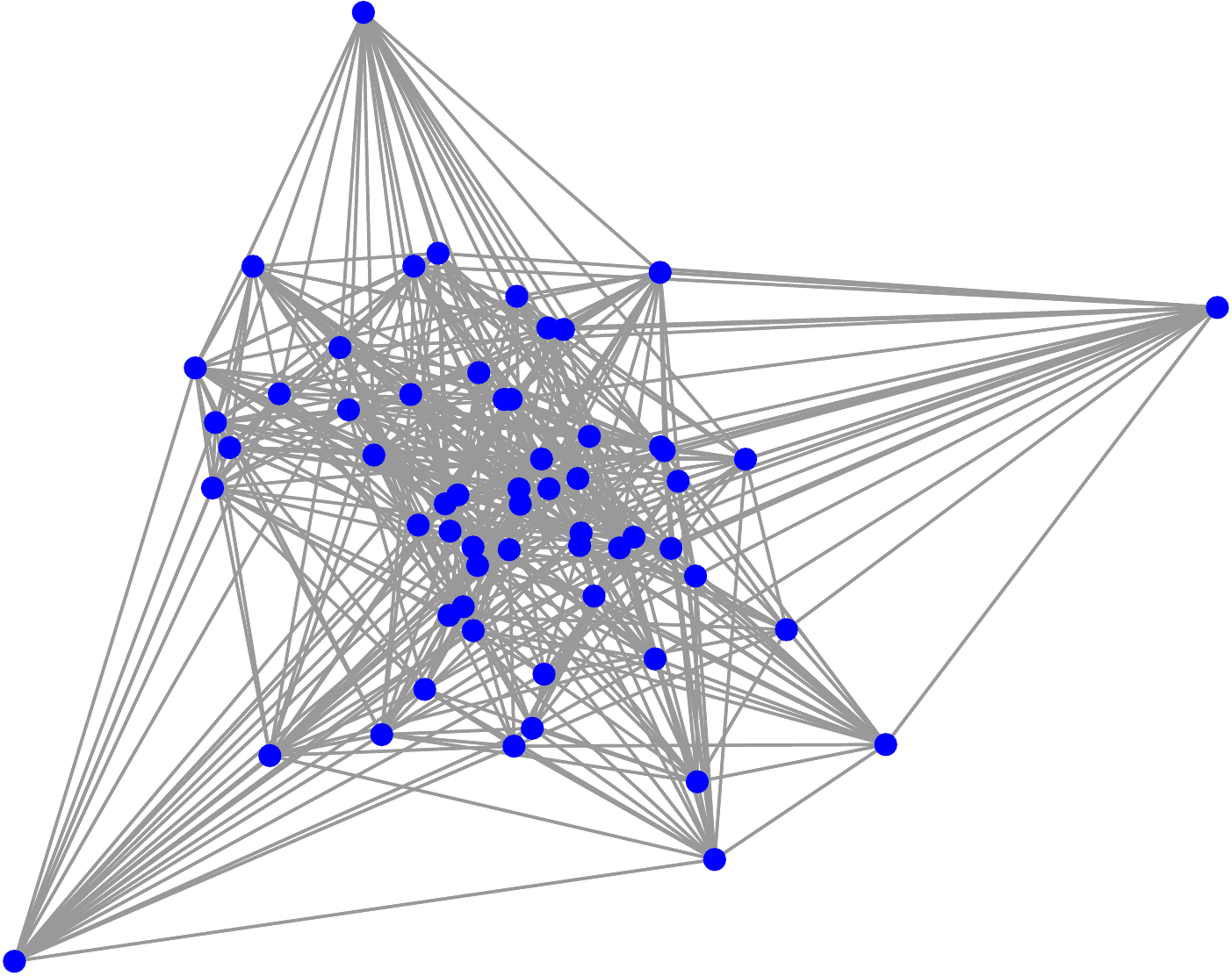}
\end{subfigure}
\begin{subfigure}[b]{0.216\textwidth}
\vspace*{-5mm}
    \setlength\figureheight{24.0mm}
    \setlength\figurewidth{30mm}
    \hspace*{5mm}
\input{experiment_GC_ER_good_PT_uniform_T_1.tex}
\end{subfigure}
\begin{subfigure}[b]{0.256\textwidth}
\hspace*{4mm}
    \setlength\figureheight{24.6mm} 
    \setlength\figurewidth{30mm}
\input{experiment_GC_ER_good_PT_uniform_T_4_pencil_3.tex}
\end{subfigure}
\begin{subfigure}[b]{0.232\textwidth}
\hspace*{2mm}
    \setlength\figureheight{24.6mm}
    \setlength\figurewidth{30mm}
\input{experiment_GC_ER_good_PT_uniform_T_7.tex}
\end{subfigure}
\caption{Left: Erd\H{o}s-R\'enyi graph with $n=60$, $p=0.8$. Center to right: Phase transition plots for rank $r$-approximations of heat diffusion operators of Erd\H{o}s-R\'enyi graphs, \emph{uniform sampling}. Increasing rank on $y$-axis, increasing number of samples $m$ on $x$-axis.
Center left: $T = 1$. Center right: $T=4$. Right: $T=7$. Red curved line: Number of degrees of freedom $\operatorname{dof}(\f{A}) = r(n-(r-1)/2)$ of operator $\f{A}$; Pink dotted line: $1.5 r n$.}
\label{fig:ER:T_dependence}
\end{figure*}

\subsubsection{Choice of Pencil Parameter $d_1$.} \label{sec:numerical:pencildependence}
In the experiments of \Cref{sec:numerical:Tdependence}, we always chose the first pencil parameter $d_1$ so that block Hankel matrices $\y{H}(\f{\widetilde{X}})$ are square or as square as possible, i.e., such that $d_1 = \left\lfloor (T+1)/2\right\rfloor$. 

Revisiting the experiments of \Cref{sec:numerical:Tdependence} for the Erd\H{o}s-R\'enyi graph model and $T=7$ time steps, we now explore the sensitivity of the problem to the choice of $d_1$. In \Cref{fig:ER:pencil_dependence}, we observe that for $d_1=1$/$d_2=7$, the phase transition occurs only for significantly more samples $m$ than for the square choice of $d_1 = d_2 = 4$; for example, it can be seen that for $r=20$, the transition is at $m = 3600$ or $\rho \approx 3.56$ for $d_1=1$, whereas it is at $m=2500$ or $\rho \approx 2.47$ for $d_1=4$. For $d_1=1$, the recovery problem becomes impossible if the rank of $\f{A}$ satisfies $r=60$ due to a lack of any low-rank property of the embedding matrix $\y{H}\big(\y{Q}_T(\f{A})\big)$, and the experiment indicates that even for lower ranks $r \ll 60$, this choice of $d_1$ is disadvantageous. For $d_1=2$ and $d_1=3$, the behavior is quite similar to the square case in this example with a just slightly worse phase transition.

Furthermore, we illustrate in the last column of \Cref{fig:ER:pencil_dependence} the values of the $d_1$-dependent product $c_s \mu_0$, where $c_s = \frac{T(T+1)}{d_1 d_2}$ is the constant of \Cref{def:incoherence} and $\mu_0$ is the incoherence parameter \cref{eq:incoherence} of $\HQTA$ for a given choice of the pencil parameter $d_1$. The values are illustrated with a one standard deviation confidence interval across $24$ realizations of the Erd\H{o}s-R\'enyi model. We observe that $c_s \mu_0$ is minimal for $d_4=1$ for essentially all ranks $r$, indicating that our sample complexity bound \cref{eq:mainthm:uniform:samplecomplexity} in \Cref{thm:convergence:mobilesensors} indeed justifies a square choice for the pencil parameter such that $d_1 = \left\lfloor (T+1)/2\right\rfloor$. 
\begin{figure*}[h]
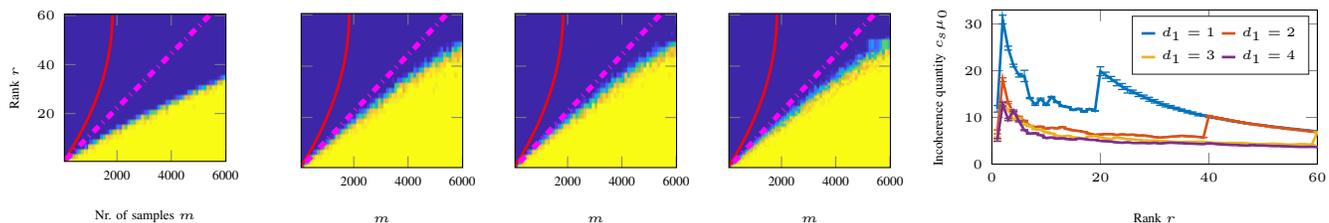

\begin{subfigure}[b]{0.20\textwidth}
\vspace*{1mm}
    \setlength\figureheight{20.5mm}
    \setlength\figurewidth{25mm}
\input{experiment_GC_ER_good_PT_uniform_T_7_pencil_1.tex}
\end{subfigure}
\begin{subfigure}[b]{0.15\textwidth}
    \setlength\figureheight{20.5mm}
    \setlength\figurewidth{25mm}
\input{experiment_GC_ER_good_PT_uniform_T_7_pencil_2.tex}
\end{subfigure}
\begin{subfigure}[b]{0.15\textwidth}
    \setlength\figureheight{20.5mm}
    \setlength\figurewidth{25mm}
\input{experiment_GC_ER_good_PT_uniform_T_7_pencil_3.tex}
\end{subfigure}
\begin{subfigure}[b]{0.15\textwidth}
    \setlength\figureheight{20.5mm}
    \setlength\figurewidth{25mm}
\input{experiment_GC_ER_good_PT_uniform_T_7_pencil_4.tex}
\end{subfigure}
\begin{subfigure}[b]{0.32\textwidth}
    \setlength\figureheight{20.5mm}
    \setlength\figurewidth{45.5mm}
\input{experiment_GC_ER_T_7_coherences.tex}
\end{subfigure}
\caption{First columns: Experiment as in \Cref{fig:ER:T_dependence} for $T=7$, for different pencil parameters $d_1$. From left to right: $d_1 = 1$, $d_1 = 2$, $d_1 = 3$ and $d_1=4$. Last column: Value of $c_s \mu_0$ of $\HQTA$ for $d \in \{1,\ldots,4\}$.}
\label{fig:ER:pencil_dependence}
\end{figure*}

\subsubsection{Uniform Sampling vs. Adaptive Sampling} \label{sec:uniform:adaptive:experiments}
Next, we explore the empirical benefits of \emph{adaptive sampling} compared to uniform sampling for the recovery of transition operators $\f{A}$. In particular, we assume that we have knowledge about the local incoherences $\mu_{i,j,t}$ of $\f{A}$ for all $(i,j,t) \in I$, see \cref{eq:mu:ijk} in \Cref{sec:incoherence}, and design an adaptive sampling scheme with probabilities $p_{i,j,t} = c \mu_{i,j,t}$ for all $(i,j,t) \in I$, where $c>0$ is chosen such that the expected number of samples $m_{\text{exp}} = \mathbb{E}[|\Omega|] = \sum_{(i,j,t) \in I} p_{i,j,t}$. We vary then $m_{\text{exp}}$ in a similar manner as $m$ above for uniform sampling. We note that this is not a very realistic sampling scheme, since local incoherences are not immediately accessible as they require the knowledge of $\f{A}$. An implementable approximation of the ideal adaptive sampling scheme was proposed in \cite{ChenBhoSangWard15} for the related low-rank matrix completion problem; however, an application to our setting is beyond the scope of this paper.

\begin{figure*}[t]
\includegraphics[width=1\textwidth]{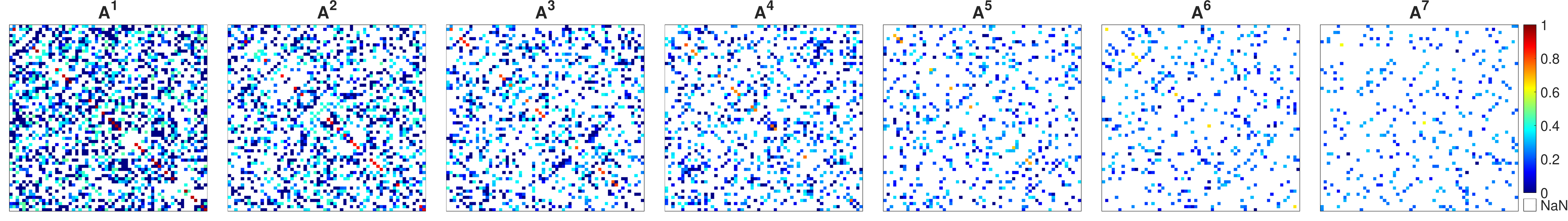}
\caption{Adaptive space-time samples $P_{\Omega}(\mathcal{Q}_T(\f{A}))$ of rank-$20$ truncation $\f{A}$ of heat diffusion operator ($1010$ degrees of freedom, log-scale) with $m_{\text{exp}} = 3000$.}
\label{fig:ErdoesRenyi:sampling:rank20}
\end{figure*}
In \Cref{fig:ErdoesRenyi:sampling:rank20}, we illustrate a realization of an expected number of $m_{\text{exp}} = 3000$ adaptive samples in the Erd\H{o}s-R\'enyi setting of \Cref{sec:numerical:Tdependence}, corresponding to an oversampling factor $\rho \approx 2.97$. Applying \texttt{TOIRLS} to the recovery of heat diffusion operators associated with Erd\H{o}s-R\'enyi graphs from adaptive sampling, we report the results of the experiment of \Cref{sec:numerical:Tdependence} in \Cref{fig:ER:T:adaptivesampling}. It can be seen that for $T=1$ the phase transition is very similar to the one corresponding to uniform sampling (see \Cref{fig:ER:T_dependence}), as it was already close to the information theoretic threshold $\rho = 1$ (red curve). For the dynamic cases $T=4,7$, we see that we obtain a modest improvement compared to uniform sampling, with the phase transition exceeding the line $m_{\text{exp}} = 1.5 rn$ especially for large $r$ and $T=4$, and achieving recoverability of full rank operators from around $6,000$ samples for $T=7$, unlike in the uniform sampling case.

\begin{figure*}[t]
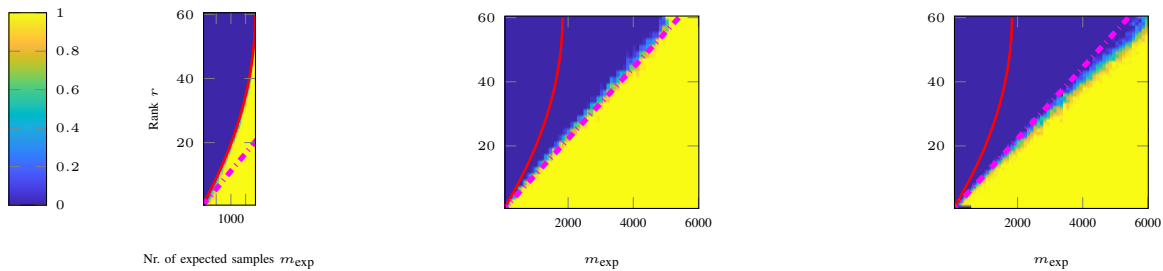

\begin{subfigure}[b]{0.33\textwidth}
    \setlength\figureheight{25.6mm}
    \setlength\figurewidth{25mm}
    \hspace*{10mm}
\input{experiment_GC_ER_good_PT_adaptive_T_1.tex}
\end{subfigure}
\begin{subfigure}[b]{0.33\textwidth}
\hspace*{8mm}
    \setlength\figureheight{25.6mm}
    \setlength\figurewidth{30mm}
\input{experiment_GC_ER_good_PT_adaptive_T_4_pencil_3.tex}
\end{subfigure}
\begin{subfigure}[b]{0.33\textwidth}
\hspace*{7mm}
    \setlength\figureheight{25.6mm}
    \setlength\figurewidth{30mm}
\input{experiment_GC_ER_good_PT_adaptive_T_7.tex}
\end{subfigure}
\caption{Experiment as in \Cref{fig:ER:T_dependence} for \emph{adaptive sampling} with $m_{\text{exp}}$ expected space-time samples. Left: $T = 1$. Center: $T=4$. Right: $T=7$. Red curved line: Number of degrees of freedom $\operatorname{dof}(\f{A}) = r(n-(r-1)/2)$ of operator $\f{A}$; Pink dotted line: $1.5 r n$.}
\label{fig:ER:T:adaptivesampling}
\end{figure*}

\paragraph{Community Graphs}
\begin{figure*}[t]
\includegraphics[width=1\textwidth]{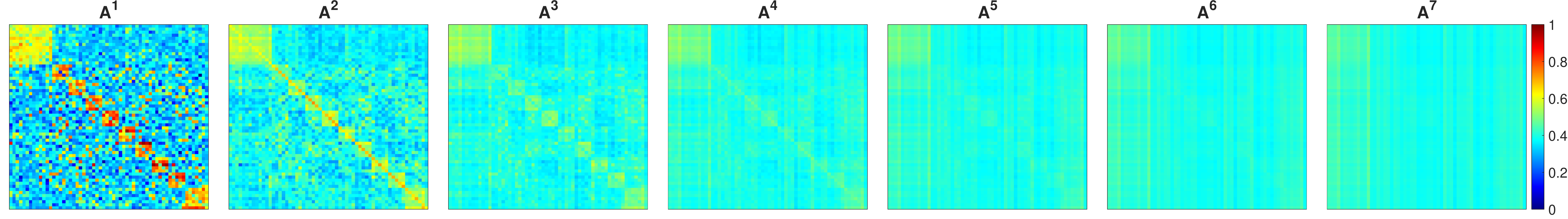}
\includegraphics[width=1\textwidth]{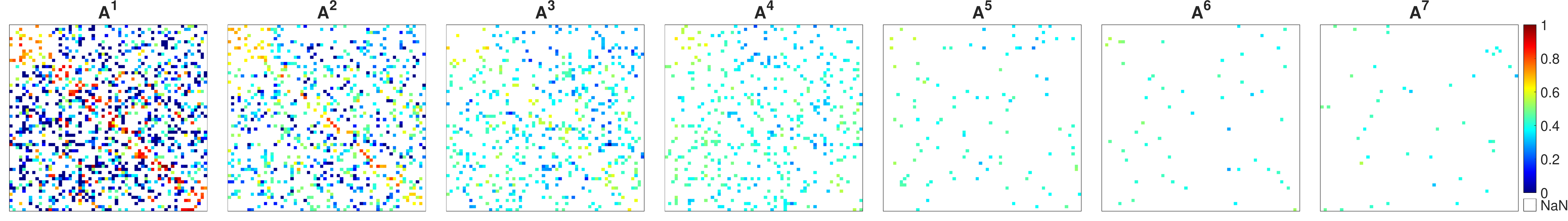}
\caption{First row: Rank-$r$ truncation $\f{A}$ of random walk transition operator of random community graph with $60$ nodes for $r=20$, aggregation $\mathcal{Q}_T(\f{A})$ of $T=7$ matrix powers; Second row: Adaptive space-time samples $P_{\Omega}(\mathcal{Q}_T(\f{A}))$ of rank-$20$ truncation $\f{A}$ (log-scale) with $m_{\text{exp}} = 3000$}
\label{fig:Community:groundtruth}
\end{figure*}
the improved efficiency of the adaptive sampling scheme has been rather modest for the heat diffusion operator based on a Erd\H{o}s-R\'enyi graph in our parameter setting due to the relatively benign spectral decay of $\f{A}$. We now consider a \emph{community graph} with $n=60$ vertices and $10$ communities (eight of size $5$, one of size $13$ and one of size $7$), with dense connections within a community, and independent random inter-community edges with probability $1/10$. We use the Graph Signal Processing (GSP) toolbox \cite{perraudin2014} to create such graphs, and define the associated transition matrix as the truncated random walk matrix $\f{A} = \mathcal{T}_{r}\left(\f{P}\right) = \mathcal{T}_{r}\left(\f{D}^{-1}\f{W}\right)$, cf. \Cref{def:graph:operators}. Note that this matrix is in general asymmetric. For $r=20$ and $T=7$, we visualize $\mathcal{Q}_T(\f{A})$, i.e., $\f{A}$ and its powers $\f{A}^2, \ldots, \f{A}^7$ in \Cref{fig:Community:groundtruth}, together with an example of adaptive samples for this setting with $m_{\text{exp}}=3,000$, computed based on local incoherences. Comparing \Cref{fig:Community:groundtruth} with the adaptive sampling pattern for the Erd\H{o}s-R\'enyi heat diffusion model (\Cref{fig:ErdoesRenyi:sampling:rank20}), we note that the sampling density for larger time steps such as $t=5,6,7$ is {smaller} for community graphs, indicating that the adaptive sampling focuses now more on smaller time scales than for the Erd\H{o}s-R\'enyi model. This is expected, since the spectrum of the (untruncated) transition matrix decays faster than for the Erd\H{o}s-R\'enyi heat diffusion operator above, indicating that sampling large time steps is less informative than sampling earlier time steps, see also \Cref{fig:Community:groundtruth}.

Empirically, this is confirmed in the experiment of \Cref{fig:Community:T7}, where we report on the phase transition for both adaptive sampling and uniform sampling, considering $T=7$ steps of a random walk. Unlike for the Erd\H{o}s-R\'enyi transition operators (\Cref{fig:ER:T_dependence} and \Cref{fig:ER:T:adaptivesampling}), we observe a significant difference between adaptive and uniform sampling for this model, as the uniform sampling scheme requires approximately the double amount of samples to obtain exact recovery, with this phase transition being located at around $m = 4.8 r n$ (uniform sampling) and $m = 2.4 rn $ (adaptive sampling), respectively. 

\begin{figure*}[t]
\begin{subfigure}[c]{0.25\textwidth}
\vspace*{-37.5mm}
\includegraphics[width=1\textwidth]{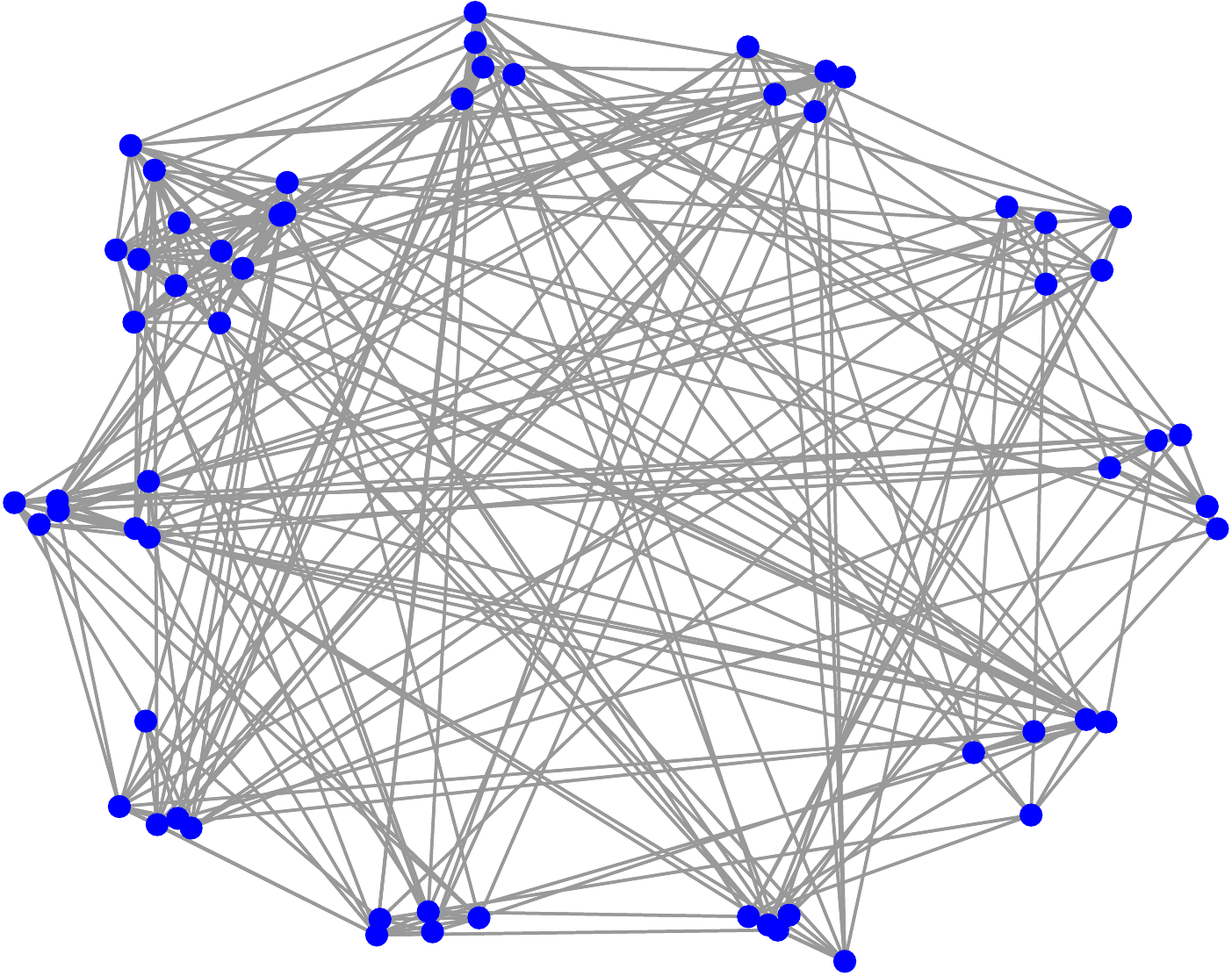}
\end{subfigure}
\begin{subfigure}[b]{0.33\textwidth}
\hspace*{2mm}
    \setlength\figureheight{24.6mm}
    \setlength\figurewidth{30mm}
\input{experiment_GC_Community_uniform_T_7.tex}
\end{subfigure}
\begin{subfigure}[b]{0.33\textwidth}
\hspace*{2mm}
    \setlength\figureheight{24.6mm}
    \setlength\figurewidth{30mm}
\input{experiment_GC_Community_adaptive_T_7.tex}
\end{subfigure}
\caption{Left: Community graph with $n=60$ vertices in $10$ communities; Center and right: Phase transition, uniform vs. adaptive sampling for the recovery of truncated random walk matrices of random community graph, $T=7$ time steps. Center: Uniform sampling, pink dotted line: $4.8 r n$. Right: Adaptive sampling, pink dotted line: $2.4 r n$. Red curved line: $\operatorname{dof}(\f{A})= r(2n-r)$ of $(n \times n)$-matrix $\f{A}$ of rank-$r$.}
\label{fig:Community:T7}
\end{figure*}

\subsubsection{Dependence on Graph Topology} \label{sec:dep:graphtopology}
We now elucidate how the recovery of transition operators $\f{A}$ by \Cref{algo:IRLS:graphcompletion} depends on the \emph{topology} of an underlying \emph{graph}.

\paragraph{Random walk matrix} We recall that a random walk matrix, cf. \Cref{def:graph:operators}, is suitable to reveal structural information of a graph: The multiplicity of the eigenvalue $1$ is equal to the number of connected components; the second largest eigenvalue $\lambda_2$ that describes the mixing rate of the random walks; the spectral gap $|\lambda_1-\lambda_2|$ represents how well the graph is connected. We refer the readers to \cite{chung1997spectral} for a detailed discussion.

In \Cref{table:randomwalk}, we report on experiments on the recovery of (full-rank) random walk matrices $\f{A}=\f{P} = \f{D}^{-1}\f{W}$ associated to two different graphs, both with $n=50$ nodes: A very regular \emph{path graph} , and a more irregular community graph (one community of size $9$, eight of size $5$, and a single node) with inter-cluster connection probability of $1/50$, cf. \Cref{fig:random:graphmodels:community:path}. As in \Cref{sec:uniform:adaptive:experiments}, we use both uniform and adaptive sampling. We denote the number of degrees of freedom by $\operatorname{dof}(\f{A})$ and, in case of adaptive sampling, the number of samples located at time $t=1$ (averaged across 10 realizations) as $\overline{\text{m}_{\text{1}}}$. 

We observe that for the path graph, recovery by \Cref{algo:IRLS:graphcompletion} is possible for uniform sampling at an oversampling factor of $\rho=3$, as the recovery error $\text{Rec}_{\f{A}}$ is of the order of magnitude of the algorithmic tolerance with $\text{Rec}_{\f{A}} \approx 10^{-10}$, while exact recovery fails for $\rho=2.8$ even if adaptive sampling is chosen; i.e., the performance is essentially the same for uniform and adaptive sampling. For the community graph, for which we now consider $T=10$ time steps instead of $T=5$, an oversampling factor of $\rho = 8$ is not sufficient for uniform sampling to recover the transition operator, however, a much smaller sample complexity corresponding to $\rho=3.5$ leads already to exact recovery for adaptive sampling. While these graphs are simple examples, they show illustrate that the difficulty of the setup is negatively affected by the irregularity of the underlying graph.

\begin{figure}[H] 
\begin{center}
\begin{subfigure}[c]{0.24\textwidth}
\includegraphics[width=\linewidth, keepaspectratio]{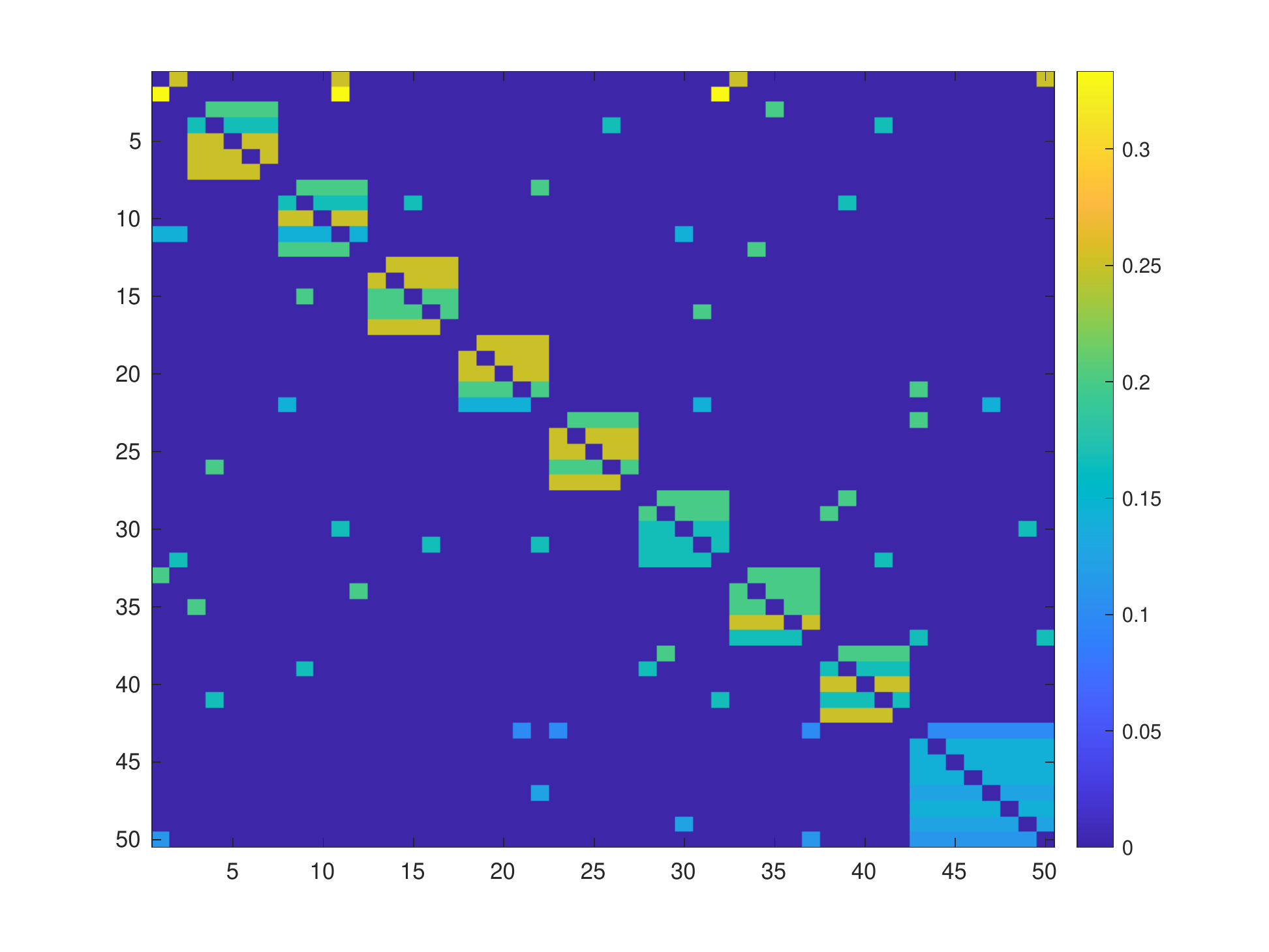}
\caption{Community graph}
\end{subfigure}
\begin{subfigure}[c]{0.24\textwidth}
\includegraphics[width=1\linewidth, keepaspectratio]{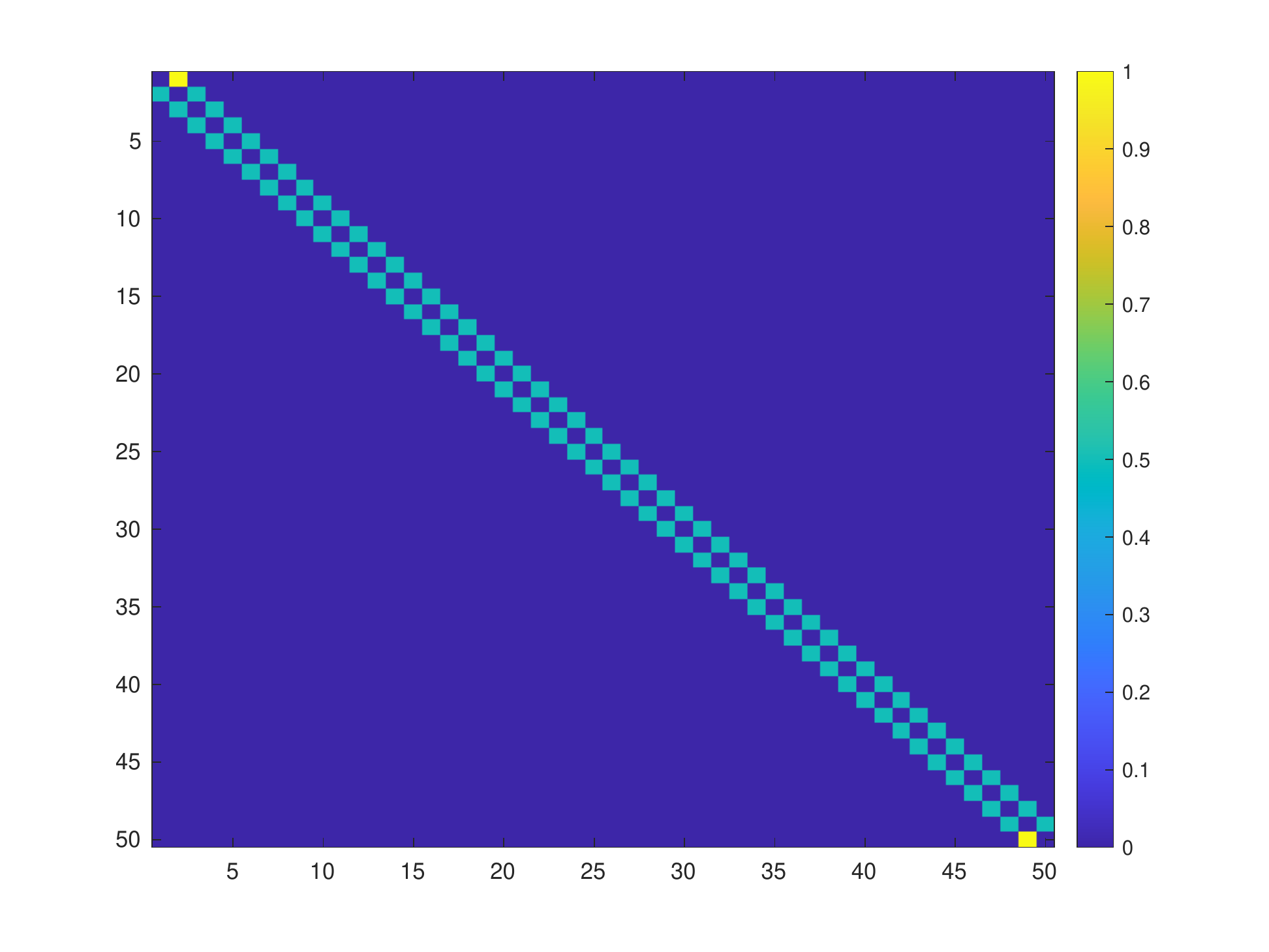}
\caption{Path graph}
\end{subfigure}
\end{center}
\caption{\footnotesize The plot of two random walk matrices used in the simulation. Both of matrices are sparse with significant nonzero entries. } \label{fig:random:graphmodels:community:path}
\vspace{-0.1in}
\end{figure}

\begin{table*}[t]
\begin{center}
\begin{tabular}{| l| l| l| l| l| l| l| l|}
\hline
Models & Sampling & $T$ &  $\text{m}_{\text{single}}$ & $\overline{\text{m}_{\text{1}}}$ &$\operatorname{dof}(\f{A})$& $\rho$ &$ \text{Rec}_{\f{A}}$  \\ \hline
Path graph    &     uniform &    5  &    1500 & &  2500      &          3    &     $9.6\cdot 10^{-11} \pm 9.2\cdot 10^{-20}$\noteC{How is it possible that the standard deviation is so small?}   \\ \hline
Path graph    &     adaptive &    5  &   & 2169 &  2500      &          2.8    &     $7.2\cdot 10^{-4} \pm 2.3\cdot 10^{-3}$   \\ \hline
Community        &     uniform &    10 &  2000& &  2500 & 8& $7.6\cdot 10^{-5} \pm \cdot  5.8\cdot10^{-8} $            \\ \hline
Community        &     adaptive &    10 &  &  2297&   2500 & 3.5& $4.4\cdot 10^{-13} \pm \cdot  5.8\cdot10^{-13} $            \\ 
  \hline
\end{tabular}\caption{Recovery errors $\text{Rec}_{\f{A}}$ of \Cref{algo:IRLS:graphcompletion} for random walk matrix of path/community graphs for different sampling sets.}\label{table:randomwalk}

\end{center}
\end{table*}

\paragraph{Heat diffusion operator} 
We now revisit the heat diffusion operators  $\f{A} = \mathcal{T}_{r}\left( \exp(-\tau \f{L})\right)$ from \Cref{sec:numerical:Tdependence}, focussing on how the number of time steps $T$,  the heat diffusion scale  $\tau$ and the structure of the underlying graph $\mathcal{G}$ determine the recoverability of $\f{A}$ from time-space samples that are sampled uniformly at random, for settings of slightly larger scale. To that end, we consider a Swiss roll graph with $n = |V| = 200$ nodes and graph representing the roads of the state of Minnesota \cite{UFSparseMatrix-2011,Kolodziej-2019SuiteSparse} with $n=|V|=2642$ nodes, using the default settings of the GSP toolbox \cite{perraudin2014}; see Figure \ref{fig:random:graphmodels} for a visualization.

 \begin{figure}[H] 
\centering
\includegraphics[width=.22\textwidth, keepaspectratio]{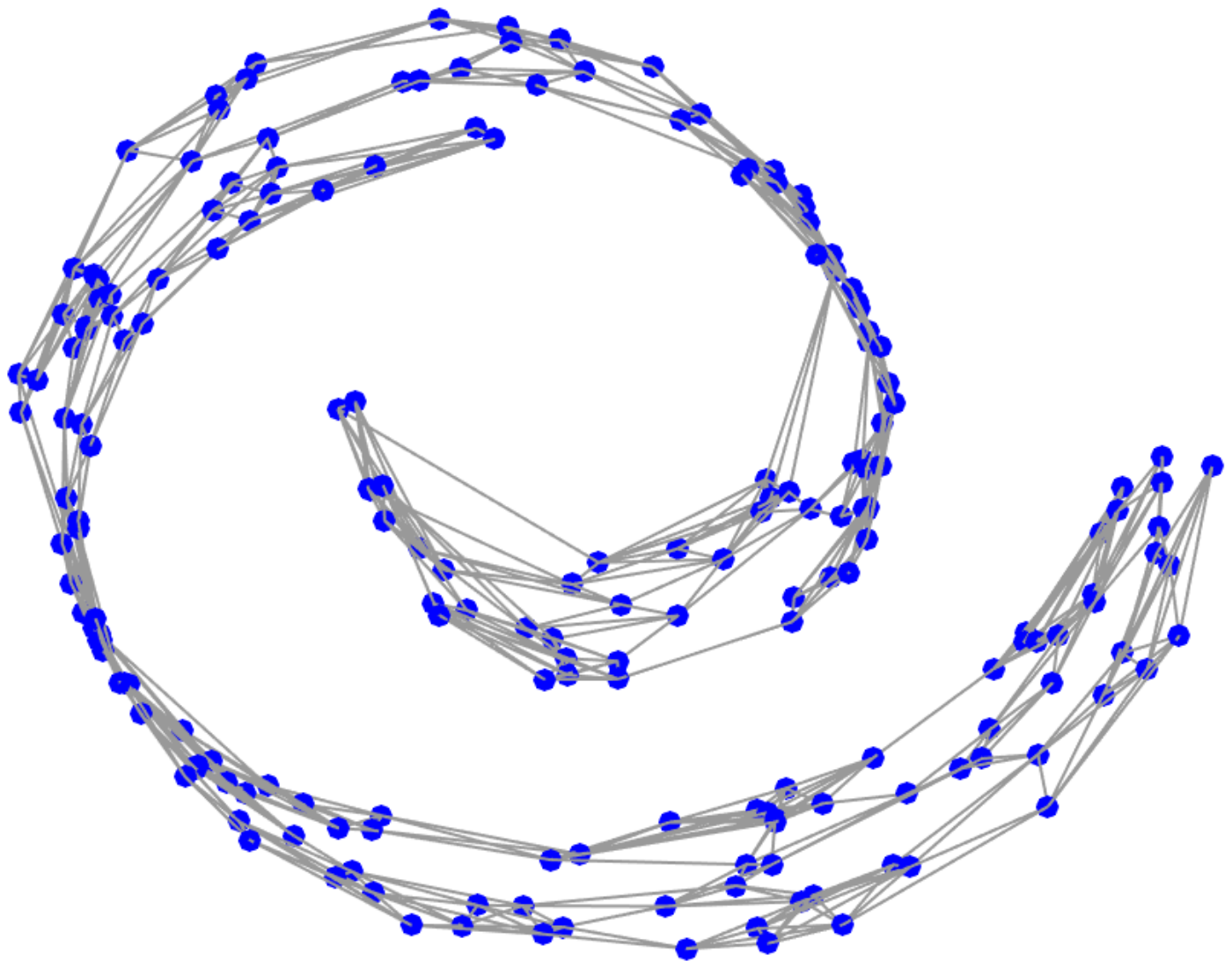}
\includegraphics[width=.22\textwidth, keepaspectratio]{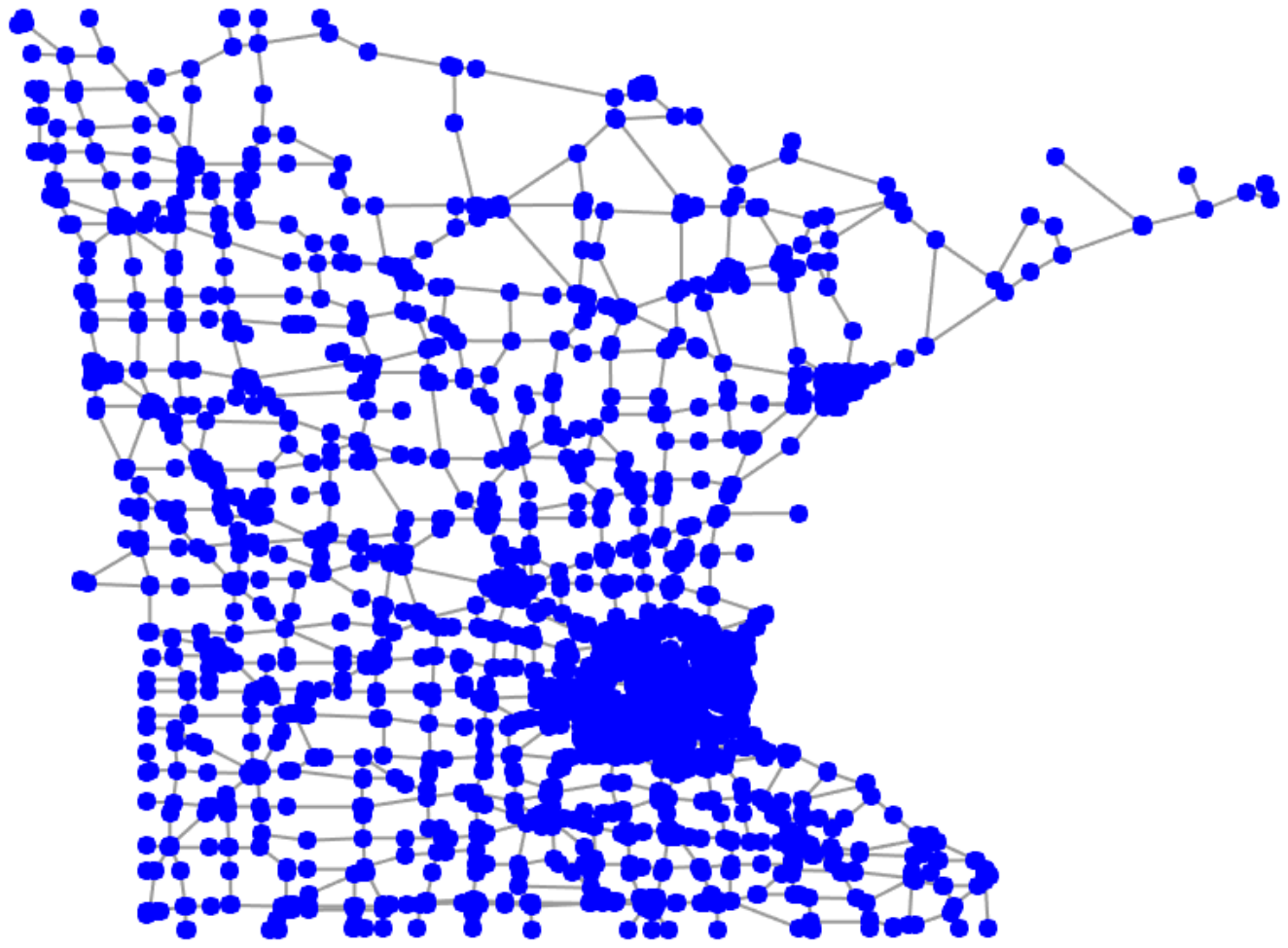}
\caption{\footnotesize Left: Swiss roll graph. Right: Minnesota road network graph.}\label{fig:random:graphmodels}
\vspace{-0.1in}
\end{figure}

Furthermore, we consider heat diffusion operators corresponding to \emph{slow} and \emph{fast} energy dissipation, corresponding to small and large choices of $\tau$. Since we choose $r=10$, we have that $\lambda_{min}(\f{A})=\exp(-\tau\lambda_{10}({\f{L})})$, which is larger for slow energy dissipation and smaller or faster energy dissipation. Here, $\lambda_{10}({\f{L}})$ corresponds to the $10$th largest eigenvalue of the Laplacian $\f{L}$.

\begin{table*}
\begin{center}
\begin{tabular}{|l|l|l|l|l|l|l|l|l|l}
\hline
Models  &  $\lambda_{min}(\f{A})$ &$T$ &  $\text{m}_{\text{single}}$ & $\operatorname{dof}(\f{A})$ &$\rho$ &$\text{Rec}_{\f{A}}$ \\ \hline
Swiss roll            &     0.21&   6 &1629    & 1955&  5&  $1.4\cdot 10^{-13} \pm 4.2\cdot 10^{-14} $   \\ \hline
Swiss roll            &     0.89&   6&  978  & 1955&  3&  $1.9\cdot 10^{-12} \pm 3.4\cdot 10^{-12} $   \\ \hline
Minnesota            &    0.64&   5 &  32082  & 26375&  6&  $8.3\cdot 10^{-13} \pm 5.2\cdot 10^{-13} $   \\ \hline
Minnesota            &     0.9&   5 &  21100  & 26375&  4&  $3.8\cdot 10^{-11} \pm 6.6\cdot 10^{-11} $   \\ \hline
\end{tabular}\caption{Parameter choices for accurate recovery of rank-$10$ heat diffusion operators using uniform sampling with replacement, sample complexities at phase transition.} \label{table:heatdiffusion} 
\end{center}
\end{table*}
In \Cref{table:heatdiffusion}, we report parameter choices and sample sizes and the smallest oversampling factor $\rho$ (stepsize $0.5$) that allows for accurate recovery, i.e., $\text{Rec}_{\f{A}}$ of the order of the stopping condition, of the transition operator via \Cref{algo:IRLS:graphcompletion}. We observe the transition operator is recoverable even if the expected number of samples $\text{m}_{\text{single}}$ for a single time step is below the number of degrees of freedom $\operatorname{dof}(\f{A})$, which would not be possible in the static setting of $T=1$. Furthermore, we see that in the case of faster energy dissipation ($\lambda_{min}(\f{A})$ small), the threshold oversampling factor $\rho$ is larger. This is consistent with the worse bound $\widetilde{\mu}_0$ in \cref{eq:mu:tilde:mobile:uniform} and the sufficient condition \cref{eq:mainthm:uniform:samplecomplexity} for the local convergence of \Cref{algo:IRLS:graphcompletion}.

\subsection{Comparison with Black-Box Nonlinear Optimization}  
We now compare the performance of \Cref{algo:IRLS:graphcompletion} for the problem studied in this paper to the one of a black-box nonlinear optimization solver applied to the nonlinear least squares objective \cref{eq:nonlinear:leastsquares} \cite{waltz2006interior}, as used by the wrapper function \texttt{fmincon} of MATLAB. For \texttt{fmincon}, we use the zero-padded observations $P_{\Omega}^* P_{\Omega}(\mathcal{Q}_T(\f{A}))$ as initialization.

Unlike \Cref{algo:IRLS:graphcompletion}, this method is not able to algorithmically utilize the low-rank structure of the problem in the case of rank-truncated transition operators $\f{A}$, which is why it is not suitable to handle large-scale problem instances with many unknowns such as, for example, those associated to the Minnesota road network graph considered in \Cref{table:heatdiffusion}--in fact, it is infeasible to run it on a personal computer already for transition operator sizes of $n > 200$, unlike \Cref{algo:IRLS:graphcompletion}.

Instead, we consider rank-truncated heat diffusion operator associated to the Swiss roll graph used in the first row of \Cref{table:heatdiffusion}, and the random walk matrices associated to a path graph and the community graph of \Cref{table:randomwalk}, each with uniform sampling.

Setting the maximal number of iterations equal to $200$, we report the observed recovery errors $\text{Rec}_{\f{A}}$ in \Cref{table:randomwalk:fmincon}. We see that for the oversampling factors for which \Cref{algo:IRLS:graphcompletion} essentially leads to exact recovery, \texttt{fmincon} exhibits recovery errors of the order $10^{-1}$ or $10^{-2}$ for the Swiss roll and community graph models, indicating that exact recovery does not happen. For the path graph, on the other hand, the recovery error is of order $10^{-7}$.
Taking the often larger computational cost of the generic nonlinear optimization solver into account, while it cannot be ruled out that an increase in the iteration number will eventually lead to smaller errors, we conclude that \Cref{algo:IRLS:graphcompletion} works significantly better than \texttt{fmincon} for irregular graphs.   

\begin{table*}[t]{\small 
\begin{center}
\begin{tabular}{|l|l|l|l|l|l|l|l|l|}
\hline
Models & Sampling & $n$ & Rank $r$& $T$ &  $\text{m}_{\text{single}}$ or $\overline{\text{m}_{\text{1}}}$ &$\operatorname{dof}(\f{A})$& $\rho$ &$ \text{Rec}_{\f{A}}$  \\ \hline
Swiss roll  & uniform & 200 & 10 & 6 & 1629 & 1955 & 5 & $5.5\cdot 10^{-1} \pm 3.3\cdot 10^{-2} $ \\ \hline
Path graph & adaptive & 50 & 50 & 5 & 2169 & 2500 & 3 & $6.8\cdot 10^{-8} \pm  2.4\cdot 10^{-8}$ \\ \hline
Community & adaptive  & 50 & 50 & 10 & 2297 & 2500 & 3.5 & $1.5\cdot 10^{-2} \pm 2.3\cdot 10^{-3} $ \\ 
  \hline
\end{tabular}
\caption{Recovery errors using interior-point solver of nonlinear least squares formulation \cref{eq:nonlinear:leastsquares} (\texttt{fmincon}).} \label{table:randomwalk:fmincon}
\end{center}}
\end{table*}

\subsection{Robustness to Noisy Observations} \label{sec:numerical:noisy:observations}
In all previous experiments in \Cref{sec:numerical:experiments}, we have assumed that the observations $\f{y} \in \R^{m}$ provided to the recovery method correspond to \emph{exact} space-time samples $\f{y} = P_{\Omega}(\y{Q}_T(\f{A}))$. While, taken literally, the local convergence statements of \Cref{thm:convergence:mobilesensors} only apply to this setting, for practical applicability it is important that the problem is also solvable in the presence of \emph{additive noise} such that $\f{y} = P_{\Omega}(\y{Q}_T(\f{A})) + \eta$, where the noise $\eta$ is unknown to the algorithm.  We investigate the noise robustness of the IRLS approach of  \Cref{algo:IRLS:graphcompletion} by reconsidering the Erd\H{o}s-R\'enyi and community graph transition operator models of \Cref{sec:numerical:Tdependence} and \Cref{sec:uniform:adaptive:experiments} for noisy observations with random spherical noise such that $\f{y} = P_{\Omega}(\y{Q}_T(\f{A}))  + \eta = P_{\Omega}(\y{Q}_T(\f{A})) + \frac{\|P_{\Omega}(\y{Q}_T(\f{A}))\|_2}{\sqrt{\text{SNR}}} \f{v}$, where $\f{v}$ is a vector drawn uniformly at random from the unit sphere and SNR correponds to the signal-to-noise ration $\text{SNR}  = \|P_{\Omega}(\y{Q}_T(\f{A}))\|_2^2/ \|\eta\|_2^2$. Despite the presence of noise, we apply  \Cref{algo:IRLS:graphcompletion} with regularization parameter $\lambda = 0$. 
\begin{figure}[h]
\begin{subfigure}[b]{0.50\textwidth}
    \setlength\figureheight{35mm}
    \setlength\figurewidth{60mm}
    \input{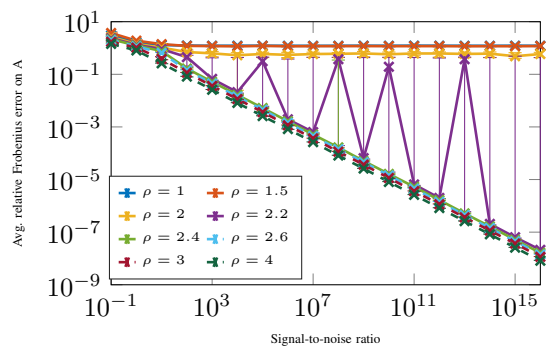}
    \caption{Heat diffusion operator, Erd\H{o}s-R\'enyi graph}
\end{subfigure}
\begin{subfigure}[b]{0.50\textwidth}
    \setlength\figureheight{35mm}
    \setlength\figurewidth{60mm}
        \input{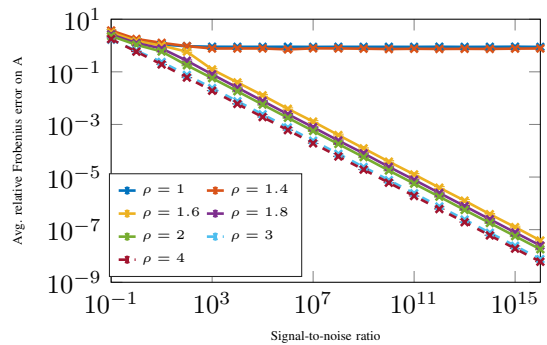}
        \caption{Random walk, Community graph}
\end{subfigure}
\caption{Median recovery error $\text{Rec}_{\f{A}}$ with $25\%$ and $75\%$ quantiles, vs. signal-to-noise ratio (SNR), $100$ realizations, different oversampling factors $\rho$. $T=7$ time steps, $n=50$ nodes, rank $r=20$.}
\label{fig:noisy:observations}
\end{figure}

We observe that for sample complexities below the phase transition thresholds in \Cref{sec:numerical:Tdependence} and \Cref{sec:uniform:adaptive:experiments}, the resulting recovery errors $\text{Rec}_{\f{A}}$ are consistently of the order $10^{-1}$ to $10^{1}$, as even in the noiseless case recovery is not possible. For $\rho$ chosen above the phase transition threshold, we observe a linear decrease in $\text{Rec}_{\f{A}}$ with respect to the SNR in the log-log plots of \Cref{fig:noisy:observations} with an approximate slope of $-1/2$, empirically supporting the relationship
\begin{equation} \label{eq:noise:dependence}
\|\f{\hat A} -\f{A}\|_F \asymp {1}/{\sqrt{\text{SNR}}} \asymp \|\eta\|_2
\end{equation}
whenever exact recovery occur in the noiseless case. For $\rho=2.2$ in \Cref{fig:noisy:observations}(a), we see that the accuracy of the outputs of \Cref{sec:numerical:experiments} has large variance: this is because the sample complexity is set to be right at the phase transition for the uniform sampling (see \Cref{fig:ER:T_dependence}). 

We recall that in view of \cref{eq:Hankel:rank:min:Glambda} and the majorization-minimization interpretation of IRLS \cite{Daubechies10,KMV21,KFMMV21}, it is possible to choose a regularization parameter $\lambda$ in  \Cref{algo:IRLS:graphcompletion} that is adapted to the noise level via cross validation, leading to a potential improvement in the dependency of $\text{Rec}_{\f{A}}$ with respect to $\|\eta\|_2$. This necessitates determining an additional free parameter in the method. In fact, \Cref{fig:noisy:observations} and \eqref{eq:noise:dependence} suggest that the improvement might be modest, of the order of a constant, and that the choice $\lambda = 0$ may be a valid option even in the case of noisy observations.

\section{Proof of the Low-Rank Property of Block Hankel Matrix}  \label{sec:proof:vandermonde}
Before proceeding with the proof of \Cref{thm:blkHankel:lowrank}, we show a corollary of \Cref{thm:blkHankel:lowrank} that generalizes the well-known Vandermonde decomposition for Hankel matrices. 
\begin{corollary}[Generalized Vandermonde Decomposition] \label{corollary:generalized:vandermonde}
Let $\f{H} \in \Rddn$ be a block Hankel matrix with $(n \times n)$ blocks, that has a positive semidefinite square extension $\overset{\square}{\f{H}} \in \y{M}_{Dn,Dn}$, $D=\max(d_1,d_2)$, such that
	\begin{itemize}
	\item its first block $\f{H}_1 \in \y{M}_{n}$ if of rank $r$, and
	\item at least one other block $\f{H}_{j} \in \y{M}_n$, $j>1$, is of rank $r$.
	\end{itemize}
Then there exists a triple $(\f{U},\f{N},\Sigma)$ with $\f{U} \in \y{M}_{n,r}$ with orthonormal columns, $\Sigma \in \y{M}_r$ positive definite diagonal, and $\f{N} \in \y{M}_r$ that is $\Sigma$-self-adjoint\footnote{which means that $\f{N} \Sigma = \Sigma \f{N}^*$.}, such that each block $\f{H}_k$ satisfies
	\[
	\f{H}_k = \f{U} \f{N}^{k-1} \Sigma \f{U}^*\,,
	\]	
	and $\f{H}$ has the generalized Vandermonde decomposition
	\begin{equation} \label{eq:gen:Vandermonde}
	\f{H} = \mathcal{V}_{d_1}(\f{U},\f{N}) \Sigma \mathcal{V}_{d_2}(\f{U},\f{N})^*,
	\end{equation}
	where
	\[
	\begin{split}
	\mathcal{V}_m(\f{N},\Lambda)&= \begin{pmatrix} (\f{U} \f{N}^0)^{*} &  (\f{U} \f{N}^1)^{*} & \hdots & (\f{U} \f{N}^{m-1})^{*} \end{pmatrix}^* \\
	&\in \R^{n m \times r}
	\end{split}
	\] 
	is a \emph{generalized Vandermonde matrix} with $m \in \N$.
\end{corollary}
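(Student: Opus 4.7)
The plan is to apply Theorem \ref{thm:blkHankel:lowrank}.2 to reduce to a rank-$r$ factorization in terms of a \emph{symmetric} generator $\f{M}$, and then perform a change of basis that simultaneously orthogonalizes the left factor and diagonalizes the ``weight'' matrix; the $\Sigma$-self-adjointness of $\f{N}$ will then come for free from the symmetry of $\f{M}$.

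\textbf{Step 1 (reduction via Theorem \ref{thm:blkHankel:lowrank}).} Under the stated hypotheses, Theorem \ref{thm:blkHankel:lowrank}.2 yields a pair $(\f{Y},\f{M}) \in \y{M}_{n,r} \times \y{M}_r$ with $\rank(\f{Y}) = \rank(\f{M}) = r$, $\f{M}^* = \f{M}$, and blocks
\[
\f{H}_k \;=\; \f{Y}\f{M}^{k-1}\f{Y}^*,\qquad k=1,\ldots,T.
\]
\textbf{Step 2 (orthogonalization of $\f{Y}$).} Take a compact SVD $\f{Y} = \f{U}\f{D}\f{V}^*$, where $\f{U} \in \y{M}_{n,r}$ has orthonormal columns, $\f{V} \in \mathbb{O}^r$ is orthogonal and $\f{D} \in \y{M}_r$ is the positive definite diagonal matrix of singular values (strictly positive since $\rank(\f{Y})=r$). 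Set
\[
\widetilde{\f{M}} \;:=\; \f{V}^*\f{M}\f{V}\in\y{M}_r,\qquad \Sigma \;:=\; \f{D}^2,\qquad \f{N}\;:=\;\f{D}\,\widetilde{\f{M}}\,\f{D}^{-1}.
\]
Then $\Sigma$ is positive definite diagonal, and $\widetilde{\f{M}}$ is symmetric since $\f{M}$ is.

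\textbf{Step 3 (blockwise identity).} Substituting the SVD into the factorization of $\f{H}_k$ gives
\[
\f{H}_k \;=\; \f{U}\f{D}\,\widetilde{\f{M}}^{k-1}\,\f{D}\f{U}^*.
\]
By construction $\f{N}^{k-1} = \f{D}\,\widetilde{\f{M}}^{k-1}\f{D}^{-1}$, so $\f{N}^{k-1}\Sigma = \f{D}\,\widetilde{\f{M}}^{k-1}\f{D}$, and therefore
\[
\f{H}_k \;=\; \f{U}\,\f{N}^{k-1}\,\Sigma\,\f{U}^*
\]
for every $k=1,\ldots,T$, which is the block identity claimed.

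\textbf{Step 4 ($\Sigma$-self-adjointness and Vandermonde form).} A direct computation gives $\f{N}\Sigma = \f{D}\widetilde{\f{M}}\f{D}$ and $\Sigma\f{N}^* = \f{D}\widetilde{\f{M}}^*\f{D}$, which coincide because $\widetilde{\f{M}}$ is symmetric; thus $\f{N}$ is $\Sigma$-self-adjoint. Iterating this relation yields $\f{N}^{k-1}\Sigma = \Sigma(\f{N}^{k-1})^*$ for all $k$. Consequently, the $(i,j)$ block of $\mathcal{V}_{d_1}(\f{U},\f{N})\,\Sigma\,\mathcal{V}_{d_2}(\f{U},\f{N})^*$ equals
\[
(\f{U}\f{N}^{i-1})\,\Sigma\,(\f{U}\f{N}^{j-1})^* \;=\; \f{U}\,\f{N}^{i-1}\Sigma(\f{N}^{j-1})^*\f{U}^* \;=\; \f{U}\,\f{N}^{i+j-2}\Sigma\,\f{U}^* \;=\; \f{H}_{i+j-1},
\]
so the block Hankel matrix $\f{H}$ admits the generalized Vandermonde decomposition \cref{eq:gen:Vandermonde}.

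The only nontrivial ingredient is Theorem \ref{thm:blkHankel:lowrank}.2; the rest is a matrix factorization bookkeeping, with the key observation being that the congruence $\f{N}=\f{D}\widetilde{\f{M}}\f{D}^{-1}$ converts the symmetry of $\widetilde{\f{M}}$ into $\Sigma$-self-adjointness of $\f{N}$ with respect to $\Sigma=\f{D}^2$. No step poses a substantive obstacle beyond invoking the theorem.
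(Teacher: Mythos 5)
Your proof is correct and follows essentially the same route as the paper's: invoke Theorem \ref{thm:blkHankel:lowrank}.2, take a compact SVD of $\f{Y}$, conjugate $\f{M}$ by the right singular vectors and the singular values to define $\f{N}$ and $\Sigma=\f{D}^2$, and use the symmetry of $\f{M}$ to get $\Sigma$-self-adjointness and hence the Vandermonde form. The only cosmetic difference is notation ($\f{D}$ versus $\widetilde{\Sigma}$), and your implicit use of $\f{V}^*\f{M}^{k-1}\f{V}=(\f{V}^*\f{M}\f{V})^{k-1}$ is justified by the orthogonality of $\f{V}$ established in Step 2.
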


\begin{proof}[{Proof of \Cref{corollary:generalized:vandermonde}}]
If $\f{Y} \in \y{M}_{n,r}$ and $\f{M} \in \y{M}_{r}$ with $\rank(\f{H}) = \rank(\f{Y}) = \rank(\f{M}) = r$ are the matrices from \Cref{thm:blkHankel:lowrank}.2, we can write the square extension $\overset{\square}{\f{H}}\in \y{M}_{Dn,Dn}$ of $\f{H}$ as $\overset{\square}{\f{H}}  = \f{L}\f{L}^*$, where $ \f{L} = \begin{bmatrix} \f{L}_{1}^*, &\ldots, &\f{L}_D^*  \end{bmatrix}^* \in \y{M}_{Dn,r}$ is a block matrix with $\f{L}_{1} = \f{Y}$ and 
\[
\f{L}_{j+1} = \f{Y} \f{M}^{j}
\]
for $1 \leq j \leq D-1$. Let $\f{Y}  = \f{U} \widetilde{\Sigma} \f{V}^*$ be a singular value decomposition, where $\widetilde{\Sigma} \in \y{M}_{r}$ contains the non-zero singular values of $\f{Y}$, $\f{U} \in \y{M}_{n,r}$ the corresponding left singular vectors and $\f{V} \in \y{M}_{r}$ the right singular vectors in their columns. Defining $\Sigma = \widetilde{\Sigma}^2$ and $\f{N}= \widetilde{\Sigma} \f{V}^*\f{M} \f{V} \widetilde{\Sigma}^{-1}$, we can write each block $\f{H}_k$ of the block Hankel matrix $\f{H}$ as
\begin{align*}
\f{H}_k &= \f{Y} \f{M}^{k-1} \f{Y}^* = \f{U} \widetilde{\Sigma} \f{V}^*  \f{M}^{k-1} \f{V} \widetilde{\Sigma}^* \f{U}^* \\&= \f{U} \widetilde{\Sigma} \f{V}^*  \f{M}^{k-1} \f{V} \widetilde{\Sigma}^{-1}  \widetilde{\Sigma}^2 \f{U}^* = \f{U} \f{N}^{k-1} \Sigma \f{U}^*,
\end{align*}
since
\begin{align*}
\f{N}^k &=  (\widetilde{\Sigma} \f{V}^*\f{M} \f{V} \widetilde{\Sigma}^{-1})^{k} = \widetilde{\Sigma} \f{V}^*\f{M} \f{V} \widetilde{\Sigma}^{-1} (\widetilde{\Sigma} \f{V}^*\f{M} \f{V} \widetilde{\Sigma}^{-1})^{k-1} \\&=  \widetilde{\Sigma} \f{V}^*\f{M}^k \f{V} \widetilde{\Sigma}^{-1}
\end{align*}
for each $k \in \N$, using the fact that $\f{V}$ has orthogonal columns. Furthermore, we can verify that $\f{N}$ is $\Sigma$-adjoint, since
\[
\Sigma \f{N}^* = \widetilde{\Sigma}^2 \widetilde{\Sigma}^{-1} \f{V}^* \f{M} \f{V} \widetilde{\Sigma} = \widetilde{\Sigma} \f{V}^* \f{M} \f{V} \widetilde{\Sigma}^{-1}  \widetilde{\Sigma}^2 = \f{N} \Sigma,
\]
using also that $\f{M}$ is symmetric and that $\Sigma = \widetilde{\Sigma}^2$. The $\Sigma$-adjointness of $\f{N}$ allows us to write the occurrence of $\f{H}_k$ in the $i$-th row block and the $j$-th column block as
\[
\f{H}_k   = \f{U} \f{N}^{i-1} \Sigma (\f{N}^*)^{j-1} \f{U}^*
\]
for $i$ and $j$ satisfying $k= i + j -1$. From this, we see that $\f{H}$ attains the generalized Vandermonde decomposition \cref{eq:gen:Vandermonde} since
\[
\f{H}_k =  \left(\mathcal{V}_{d_1}(\f{U},\f{N}) \Sigma \mathcal{V}_{d_2}(\f{U},\f{N})^*\right)_{i,j} = \f{U} \f{N}^{i-1} \Sigma (\f{N}^*)^{j-1} \f{U}^*
\]
for $i,j$ with  $k= i + j -1$, using the definition $\mathcal{V}_m(\f{N},\Lambda)= \begin{pmatrix} (\f{U} \f{N}^0)^{*} &  (\f{U} \f{N}^1)^{*} & \hdots & (\f{U} \f{N}^{m-1})^{*} \end{pmatrix}^*$ for $m=d_1$ and $m=d_2$, respectively.
\end{proof}
We now continue with the proof of \Cref{thm:blkHankel:lowrank}. The proof has similarities to the proof of a similar result for block Toeplitz matrices \cite[Lemma 2]{Yang16}.

\begin{proof}[Proof of \Cref{thm:blkHankel:lowrank}]
For the first statement of \Cref{thm:blkHankel:lowrank}, let $\f{A}$ be a rank-$r$ matrix and denote by $\f{A}=\f{U} \f{J} \f{U}^{-1}$ its Jordan decomposition, i.e., $\f{U} \in \C^{n \times n}$ is invertible and $\f{J} \in \C^{n \times n}$ is an upper triangular matrix with $\rank(\f{J}) = r$. This decomposition shows that $\mathcal{H}\big(\mathcal{Q}_T(\f{A})) \in \Rddn$ is similar to the matrix 
\begin{align*}
\f{S} &:=\begin{pmatrix}
\f{J}&\f{J}^2&\cdots&\f{J}^{d_1}\\
\f{J}^2&\f{J}^3&\cdots&\f{J}^{d_1+1}\\
\vdots&\vdots&\cdots&\vdots\\
\f{J}^{d_2}&\f{J}^{d_2+1}&\cdots&\f{J}^{T}
\end{pmatrix}\\&=\begin{pmatrix}\f{J}\\ \f{J}^2\\ \vdots \\ \f{J}^{d_2}\end{pmatrix}\begin{pmatrix} \Id &\f{J}& \cdots& \f{J}^{d_1-1}\end{pmatrix} =:\f{S}_1 \f{S}_2,
\end{align*}
where $\Id \in \Rnn$ is the identity matrix. We note that $\rank(\f{S}_1)\geq \rank(\f{J})=  r$, as the linear independence of each $r$ of its columns is implied by the linear independence of any $r$ columns of $\f{J}$. To show the reverse inequality, we observe that the rows of the lower blocks of $\f{S}_1$ are all, in fact, in the row space of $\f{J}$ as these blocks are simply subsequent powers of $\f{J}$, which implies that $\rank(\f{S}_1) \leq \rank(\f{J}) = r$. Furthermore, it holds that $\rank(\f{S}_2)=n$ due to the full rank of $\Id \in \Rnn$. The decomposition above shows that $\rank(\f{S})\leq \min\{r,n\}=r$. On the other hand, Let $\f{E}_1=\begin{pmatrix} \Id &0&0&\cdots&0 \end{pmatrix} \in \mathbb{R}^{n\times nd_2}$. Thus, $\f{S}_1=\f{S} \f{E}_1^{\top}$ and  $\rank(\f{S}\f{E}_1^\top)=r \leq \min\{\rank(\f{S}),\rank(\f{E}_1)\}$, which finally implies $\rank(\f{S})=r$.

We continue with the proof of the second statement of \Cref{thm:blkHankel:lowrank}. Since the block Hankel matrix $\f{H} \in \Rddn$ has a square extension $\overset{\square}{\f{H}} \in \y{M}_{Dn,Dn}$ that is positive semidefinite and of rank $r$, there exists a block matrix $\f{L} \in \y{M}_{D n \times r}$ such that 
\begin{align*}
\overset{\square}{\f{H}} &=   \begin{bmatrix}
 	\f{H}_1 & \f{H}_2 & \reflectbox{$\ddots$} & \f{H}_{D} \\
 	\f{H}_2 & \reflectbox{$\ddots$}& \reflectbox{$\ddots$} & \reflectbox{$\ddots$}\\
	\reflectbox{$\ddots$} &  \reflectbox{$\ddots$} & \reflectbox{$\ddots$} &  \f{H}_{2D-2} \\
	\f{H}_{D} &  \reflectbox{$\ddots$} & \f{H}_{2D-2} &  \f{H}_{2D-1}
 \end{bmatrix} \\&= \f{L} \f{L}^* = \begin{bmatrix} \f{L}_{1} \\ \f{L}_{2} \\ \vdots \\ \f{L}_{D}  \end{bmatrix} \begin{bmatrix} \f{L}_{1}^* & \f{L}_{2}^* \ldots &\f{L}_{D}^{*} \end{bmatrix},
\end{align*}
where the columns of $\f{L}$ are linear independent. In the last equality, we wrote $ \f{L} = \begin{bmatrix} \f{L}_{1}^*, &\ldots, &\f{L}_D^*  \end{bmatrix}^*$ with block row matrix with blocks $\f{L}_{j} \in \y{M}_{n,r}$ for each $j=1,\ldots,D$. Now we denote the lower submatrix of $\f{L}$ as $\f{L}_{L} =  \begin{bmatrix} \f{L}_{1}^*, &\ldots, &\f{L}_{D-1}^{*}  \end{bmatrix}^*$ and the upper submatrix $\f{L}$ as $\f{L}_{U} =  \begin{bmatrix} \f{L}_{2}^*, &\ldots, &\f{L}_{D}^{*}  \end{bmatrix}^*$. Due to the Hankel structure of $\overset{\square}{\f{H}}$, the $D-1$ left lower and $D-1$ right upper blocks coincide, and therefore $\f{L}_{U} \f{L}_{L}^* = \f{L}_{L} \f{L}_{U}^*$. Since by assumption at least one other matrix block of $\overset{\square}{\f{H}}$, besides the first one, is of rank $r = \rank(\overset{\square}{\f{H}})$, it follows that both $\f{L}_{L}$ and $\f{L}_{U}$ have full column rank. Furthermore, since this matrix is symmetric and its columns are both in the span of the columns of $\f{L}_{L}$ and $\f{L}_{U}$, then the column spans of $\f{L}_{U}$ and $\f{L}_{L}$ coincide. Thus, there exists a unique invertible matrix $\f{M} \in \y{M}_{r}$ such that
\begin{equation} \label{eq:Mexistunique}
\f{L}_{U} = \f{L}_{L} \f{M}.
\end{equation}
Comparing the $D-1$ blocks of the latter matrix, we note that $\f{L}_{j+1} = \f{L}_{j} \f{M}$ for each $j=1,\ldots,D-1$, and therefore  
\begin{equation} \label{eq:Yjp1}
\f{L}_{j+1} = \f{L}_{1} \f{M}^{j}
\end{equation} 
for each $j=1,\ldots,D-1$. Inserting this into $\overset{\square}{\f{H}} =  \f{L} \f{L}^*$, we observe that $\f{H}_2 = \f{L}_1 \f{M} \f{L}_1^* = \f{L}_1 \f{M}^* \f{L}_1^*$, implying that $\f{M}$ is symmetric. The representation \cref{eq:Hpsd:repr:Thm} of $\f{H}$ follows with $\f{Y} = \f{L}_1$ from inserting the definition of the right hand side and multiplying the resulting block matrices.
\end{proof}

\section{Proof Outline for \Cref{thm:convergence:mobilesensors}} \label{section:proof:main:theorem}

The proof of \Cref{thm:convergence:mobilesensors} consists of multiple steps. First, we formulate a local restricted isometry property, \Cref{assumption:1}, and show that it holds with high probability for uniform and adaptive sampling if enough samples are provided or if the local sampling probabilities are large enough, respectively (\Cref{lemma:main:sampling}). Using perturbation arguments, we then show with \Cref{eq:MatrixIRLS:tangentspace:localRIP:perturbation,lemma:MatrixIRLS:localRIP} that this regularity also extends to the neighborhood of the ground truth. 

In order to have a chance of establishing recovery guarantees, it is necessary to understand when a coodinatewise sampling operator provided related to the sampling set $\Omega \subset I$ is invertible restricted to a subspace associated to low-rank matrices. 

Let $\y{P}_{\f{T}_{\f{Z}}}: \MnOplus \to \MnOplus$ (or, in short, $\y{P}_\f{T}$) be the orthogonal projection onto $\f{T}_{\f{Z}}$. If $\y{H}$ is the block Hankel operator \cref{eq:BlockHankel:operatordef} and $E_{i,j,t}$ is element $(i,j,t)$ of the standard basis of $\MnOplus$, s.t.
\[
\langle E_{i,j,t}, \f{\widetilde{X}} \rangle_F   = \langle E_{i,j}, \f{X}_t \rangle_F = (\f{X}_t)_{i,j}
\]
for any $\f{\widetilde{X}} = \f{X}_1 \oplus \f{X}_2 \oplus \f{X}_3 \oplus \ldots \oplus \f{X}_T \in \MnOplus$, then we define \emph{normalized block Hankel operator} $\y{G}:\MnOplus \to \Rddn$ as
 \begin{equation} \label{def:G:operator}
\y{G}(\widetilde{\f{X}}) := \sum_{t=1}^{T} \sum_{i,j = 1}^{n} \langle E_{i,j,t}, \f{\widetilde{X}} \rangle_F \frac{\mathcal{H}(E_{i,j,t})}{\|\mathcal{H}(E_{i,j,t})\|_F}.
\end{equation}
With this definition, we formulate the following property.

\begin{property}[Local restricted isometry property] \label{assumption:1}
Let $\f{Z} \in \Mn$ be of rank $r$, let  $\f{T}_{\f{Z}}  \subset \Rddn$ be the associated tangent space \cref{eq:tangent:space:def} to the manifold of rank-$r$ matrices, and let $\alpha >0$. Let $\y{R}_{\Omega}: \MnOplus  \to \MnOplus$ be a self-adjoint, normalized sampling operator relative to a sampling set $\Omega \subset I$. We say that $\y{R}_{\Omega}$ satisfies the \emph{local restricted isometry property} with respect to $\f{T}_{\f{Z}}$ and constant $\alpha$ if 
\begin{equation} \label{eq:localRIP:condition}
\left\|\mathcal{P}_{\f{T}_{\f{Z}}}\mathcal{G}\mathcal{R}_{\Omega}\mathcal{G}^{*}\mathcal{P}_{\f{T}_{\f{Z}}}- \mathcal{P}_{\f{T}_{\f{Z}}}\mathcal{G}\mathcal{G}^{*}\mathcal{P}_{\f{T}_{\f{Z}}}\right\| \leq \alpha,
\end{equation}
where $\mathcal{P}_{\f{T}_{\f{Z}}}: \MnOplus  \to \MnOplus$ is the orthogonal projection onto the linear subspace $\f{T}_{\f{Z}}$. 
\end{property}

Condition \cref{eq:localRIP:condition} is referred as a \emph{local} restricted isometry property since it is not a restricted isometry property with respect to the entire manifold of low-rank matrices \cite{Recht10,Davenport16}, but rather one that holds with respect to a particular (tangent) subspace associated to the low-rank matrix manifold around a point. Similar conditions have been used for structured low-rank matrix completion \cite[Lemma 1]{chen_chi14}, \cite[Lemma 20]{ye_kim_jin_lee}.

With \Cref{lemma:main:sampling}, we establish \Cref{assumption:1} with high probability for uniform and adaptive sampling with respect to the block Hankel matrix $\f{H}_{\f{A}}$ associated to a transition operator $\f{A}$.

\begin{lemma}[Local RIP for sampling operators] \label{lemma:main:sampling}
Let $\f{A} \in \Mn$ be of rank-$r$, let $\f{T} := \f{T}_{\f{H}_{\f{A}}}$ be the tangent space to the rank-$r$ matrix manifold at the block Hankel matrix $\f{H}_{\f{A}} = \y{H}(\mathcal{Q}_T(\f{A}))$ associated to $\mathcal{Q}_T(\f{A})$. Let $0 < \alpha < 1$ and $\y{G}:\MnOplus \to \Rddn$ be the normalized block Hankel operator of \cref{def:G:operator}. 
There exists a constant $C > 0$ such that the following holds:
\begin{enumerate}
\item \emph{[Uniform sampling model]} Suppose that $\Omega$ is a random subset of cardinality $m$ uniformly drawn without replacement among the set of space-time samples $I = [n] \times [n] \times [T]$ . Let $\y{R}_{\Omega}: \MnOplus  \to \MnOplus$ be the normalized sampling operator
\begin{equation} \label{def:ROmega}
\f{L} \to \y{R}_{\Omega}(\f{L}) := \sum_{(i,j,t) \in \Omega} \frac{n^2 T}{ m} \langle E_{i,j,t}, \f{L}\rangle_F E_{i,j,t}.
\end{equation}
Then $\y{R}_{\Omega}$ satisfies the local restricted isometry property with respect to $\f{H}_{\f{A}}$ with constant $\alpha$ (\Cref{assumption:1}), with probability at least $1-n^{-2}$, provided that
\begin{equation} \label{eq:sample:complexity:condition1}
m \geq \frac{C  c_s}{\alpha^2} \mu_0 r n \log(n T)\,,
\end{equation} 
if $\f{H}_{\f{A}}$ is $\mu_0$-incoherent as per \Cref{def:incoherence}.

\item  \emph{[Adaptive sampling]} Suppose that $\Omega$ consists of random index triplets $(i,j,t) \in I$ that are independently observed according to Bernoulli distributions with probabilities $(p_{i,j,t})_{(i,j,t) \in I}$. Let $\y{R}_{\Omega}: \MnOplus  \to \MnOplus$ be the normalized sampling operator
\begin{equation} \label{def:ROmega:adaptive:mobile}
\f{L} \to \y{R}_{\Omega}(\f{L}) := \sum_{(i,j,t) \in \Omega} \frac{1}{p_{i,j,t}} \langle E_{i,j,t}, \f{L}\rangle_F E_{i,j,t}.
\end{equation}
Then $\y{R}_{\Omega}$ satisfies the local restricted isometry property with respect to $\f{H}_{\f{A}}$ with constant $\alpha$ (\Cref{assumption:1}), with probability at least $1 - n^{-2}$,  provided that, for each $(i,j,t) \in I$, we have
\begin{equation}  \label{eq:sample:complexity:condition:localcoherencemain}
p_{i,j,t} \geq \min\left( \frac{C c_s}{\alpha^2} \mu_{i,j,t}  \frac{r}{n T} \log(n T) ,1\right)\,,
\end{equation}
if $(\mu_{i,j,t})_{(i,j,t) \in I}$ are the local incoherences \cref{eq:mu:ijk} of $\f{H}_{\f{A}}$ as in \Cref{def:incoherence}.
\end{enumerate}
\end{lemma}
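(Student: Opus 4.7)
\textbf{Proof plan for Lemma~\ref{lemma:main:sampling}.}
The strategy is to view the operator
$\mathcal{P}_{\mathbf{T}}\mathcal{G}\mathcal{R}_{\Omega}\mathcal{G}^{*}\mathcal{P}_{\mathbf{T}}$ as a sum of independent (or nearly independent) random self-adjoint operators on $\mathcal{M}_{d_1n,d_2n}$ whose expectation equals $\mathcal{P}_{\mathbf{T}}\mathcal{G}\mathcal{G}^{*}\mathcal{P}_{\mathbf{T}}$, and then apply an operator-valued Bernstein inequality. The key algebraic identity we will use is $\mathcal{G}^*\mathcal{G}=\mathcal{I}_{\MnOplus}$, which follows from the orthonormality of $\{\mathcal{G}(E_{i,j,t})\}_{(i,j,t)\in I}=\{\mathbf{B}_{i,j,t}\}_{(i,j,t)\in I}$; this implies $\mathbb{E}[\mathcal{R}_\Omega]=\mathcal{I}_{\MnOplus}$ in both sampling models. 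Writing a single rank-one summand as $Q_{i,j,t}:\mathbf{M}\mapsto \langle \mathcal{P}_{\mathbf{T}}\mathcal{G}(E_{i,j,t}),\mathbf{M}\rangle \mathcal{P}_{\mathbf{T}}\mathcal{G}(E_{i,j,t})$, we have $\sum_{(i,j,t)\in I}Q_{i,j,t}=\mathcal{P}_{\mathbf{T}}\mathcal{G}\mathcal{G}^*\mathcal{P}_{\mathbf{T}}$, and $Q_{i,j,t}^2=\|\mathcal{P}_{\mathbf{T}}\mathcal{G}(E_{i,j,t})\|_F^2\,Q_{i,j,t}$. Crucially, by \Cref{def:incoherence} and $\mathcal{G}(E_{i,j,t})=\mathbf{B}_{i,j,t}$,
\begin{equation*}
\|\mathcal{P}_{\mathbf{T}}\mathcal{G}(E_{i,j,t})\|_F^2 \;=\; c_s\,r\,\mu_{i,j,t}/(nT)\,,
\end{equation*}
which ties the per-sample magnitude of these rank-one operators directly to the (local) incoherence.

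\textbf{Adaptive sampling.} Here $\mathcal{P}_{\mathbf{T}}\mathcal{G}\mathcal{R}_\Omega\mathcal{G}^*\mathcal{P}_{\mathbf{T}}-\mathcal{P}_{\mathbf{T}}\mathcal{G}\mathcal{G}^*\mathcal{P}_{\mathbf{T}}=\sum_{(i,j,t)\in I}X_{i,j,t}$, with independent mean-zero summands $X_{i,j,t}=(\delta_{i,j,t}/p_{i,j,t}-1)Q_{i,j,t}$. Using $\|X_{i,j,t}\|\le p_{i,j,t}^{-1}\|\mathcal{P}_{\mathbf{T}}\mathcal{G}(E_{i,j,t})\|_F^2$ and the condition on $p_{i,j,t}$ yields a uniform bound $L\lesssim \alpha^2/\log(nT)$. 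For the variance, $\mathbb{E}[X_{i,j,t}^2]\preceq p_{i,j,t}^{-1}\|\mathcal{P}_{\mathbf{T}}\mathcal{G}(E_{i,j,t})\|_F^2\,Q_{i,j,t}$, so
\begin{equation*}
\Big\|\sum_{(i,j,t)}\mathbb{E}[X_{i,j,t}^2]\Big\|\;\le\;\max_{(i,j,t)}\frac{c_s r\mu_{i,j,t}}{nT\,p_{i,j,t}}\;\big\|\mathcal{P}_{\mathbf{T}}\mathcal{G}\mathcal{G}^*\mathcal{P}_{\mathbf{T}}\big\|\;\lesssim\;\alpha^2/\log(nT),
\end{equation*}
using $\|\mathcal{P}_{\mathbf{T}}\mathcal{G}\mathcal{G}^*\mathcal{P}_{\mathbf{T}}\|\le 1$. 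Matrix Bernstein \cite{Tropp12} then yields the desired $\alpha$-bound with failure probability $\le 2d\exp(-c\log(nT))\le n^{-2}$ for a sufficiently large absolute constant $C$ in \cref{eq:sample:complexity:condition:localcoherencemain}, where $d$ is the effective dimension on which these self-adjoint operators act nontrivially (the subspace $\mathbf{T}$ has $\dim\mathbf{T}\le r(d_1+d_2)n$, so $\log d\lesssim\log(nT)$).

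\textbf{Uniform sampling.} Sampling $m$ indices uniformly without replacement can be handled by a standard decoupling/Chernoff comparison showing it is no worse than sampling with replacement; we therefore analyze the latter. With $\mathcal{R}_\Omega=\frac{n^2T}{m}\sum_{k=1}^m\langle E_{\omega_k},\cdot\rangle E_{\omega_k}$ for $\omega_k$ i.i.d.\ uniform on $I$, we set $X_k=\frac{n^2T}{m}Q_{\omega_k}-\frac{1}{m}\mathcal{P}_{\mathbf{T}}\mathcal{G}\mathcal{G}^*\mathcal{P}_{\mathbf{T}}$, which are i.i.d.\ and mean-zero. Under the $\mu_0$-incoherence assumption, $\|X_k\|\le \frac{n^2T}{m}\cdot\frac{c_s r\mu_0}{nT}=\frac{c_s\mu_0 rn}{m}$, and $\mathbb{E}[X_k^2]\preceq\frac{n^4T^2}{m^2}\cdot\frac{c_s r\mu_0}{nT}\cdot\frac{1}{n^2T}\mathcal{P}_{\mathbf{T}}\mathcal{G}\mathcal{G}^*\mathcal{P}_{\mathbf{T}}$, giving $\|\sum_k\mathbb{E}[X_k^2]\|\le c_s\mu_0 rn/m$. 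Matrix Bernstein then delivers the $\alpha$-bound with probability $\ge 1-n^{-2}$ once $m\ge C c_s\mu_0 rn\log(nT)/\alpha^2$.

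\textbf{Expected obstacles.} The chief subtleties are: (i) correctly identifying the rank-one summands $Q_{i,j,t}$ and the basis identification $\mathcal{G}(E_{i,j,t})=\mathbf{B}_{i,j,t}$, so that the incoherence parameters of \Cref{def:incoherence} plug in cleanly; (ii) controlling the ambient dimension factor in matrix Bernstein by restricting attention to the effective subspace $\mathbf{T}+\mathcal{G}\mathbf{T}$ of dimension $O(rnT)$ so that the logarithmic slack remains $O(\log(nT))$; and (iii) for uniform sampling, transferring bounds from with-replacement to without-replacement via a standard Chernoff-type comparison. Once these are in place, all probabilistic bounds are routine consequences of matrix Bernstein.
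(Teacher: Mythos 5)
Your proposal is correct and follows essentially the same route as the paper's proof: the paper also writes $\mathcal{P}_{\f{T}}\mathcal{G}\mathcal{R}_{\Omega}\mathcal{G}^{*}\mathcal{P}_{\f{T}}-\mathcal{P}_{\f{T}}\mathcal{G}\mathcal{G}^{*}\mathcal{P}_{\f{T}}$ as a sum of mean-zero rank-one perturbations $(\delta_{i,j,t}/p_{i,j,t}-1)\mathcal{P}_{\f{T}}\mathcal{G}E_{i,j,t}E_{i,j,t}^{*}\mathcal{G}^{*}\mathcal{P}_{\f{T}}$ (resp.\ the with-replacement analogue), bounds the summands and variance by the (local) incoherences exactly as you do, applies the noncommutative Bernstein inequality, and transfers from with- to without-replacement via the standard comparison from \cite{recht}. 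The only cosmetic difference is that the paper uses the ambient dimension $n^2 d_1 d_2$ in the Bernstein dimension factor rather than the effective subspace dimension, which still gives the same $\log(nT)$ slack.
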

We note that the incoherence parameters of the matrix $\f{H}_{\f{A}}$ play an important role in quantifying the number of space-time samples that are sufficient to establish \Cref{assumption:1} for sampling operators. The proof can be found in \Cref{sec:proof:samplingoperator:assumption}. 

In \Cref{eq:MatrixIRLS:tangentspace:localRIP:perturbation}, we extend \Cref{assumption:1} to a neighborhood of $\f{H}_{\f{A}}$. 

\begin{lemma} \label{eq:MatrixIRLS:tangentspace:localRIP:perturbation}
Assume that the local restricted isometry property \Cref{assumption:1} holds true for a normalized sampling operator $\y{R}_{\Omega}: \MnOplus  \to \MnOplus$ with respect to $\f{H}_{\f{A}}= \HQTA$, where $\f{A}$ has a rank-$r$, and constant $\alpha > 0$. If $\f{H} \in \Rddn$ is of rank $r$ and
\begin{equation} \label{eq:HXtilde:closeness:lemma}
\f{H} \in  \y{B}_{\HA}\left( \frac\alpha8 \left(\sqrt{\left\| \y{R}_{\Omega} \right\| \left(1 + \alpha \right)} + 1\right)^{-1} \sigma_r(\HA) \right),
\end{equation}
then
\begin{equation*}
\left\|\PTX \mathcal{G}\mathcal{R}_{\Omega}\mathcal{G}^{*}\PTX- \PTX\mathcal{G}\mathcal{G}^{*}\PTX\right\| \leq 2 \alpha,
\end{equation*}
where $\f{T}_{\f{H}}$ is the tangent space to the rank-$r$ manifold at $\f{H}$.
\end{lemma}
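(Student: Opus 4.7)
The plan is to compare $\y{P}_1 \y{Q} \y{P}_1$ against $\y{P}_0 \y{Q} \y{P}_0$, where $\y{Q} := \y{G}\y{R}_\Omega \y{G}^* - \y{G}\y{G}^*$, $\y{P}_0$ denotes the orthogonal projection onto $\f{T}_{\HA}$ and $\y{P}_1 := \PTX$. Using the telescoping identity
\[
\y{P}_1 \y{Q} \y{P}_1 = \y{P}_0\y{Q}\y{P}_0 + (\y{P}_1 - \y{P}_0)\y{Q}\y{P}_0 + \y{P}_1\y{Q}(\y{P}_1 - \y{P}_0),
\]
and invoking \Cref{assumption:1} to control the first summand by $\alpha$, the problem reduces to estimating the projection perturbation $\delta := \|\y{P}_1 - \y{P}_0\|$ together with the one-sided operator norms $\|\y{Q}\y{P}_0\|$ and $\|\y{Q}\y{P}_1\|$.

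The sharpest ingredient I plan to establish is the one-sided bound $\|\y{Q}\y{P}_0\|^2 \leq \|\y{R}_\Omega\|(1+\alpha)$. Because the normalized images $\{\y{H}(E_{i,j,t})/\|\y{H}(E_{i,j,t})\|_F\}$ form an orthonormal system in $\Rddn$, the operator $\y{G}$ is an isometry, so $\y{G}^*\y{G} = \Id_{\MnOplus}$ and $\y{Q}^2 = \y{G}(\y{R}_\Omega-\Id)^2\y{G}^*$. Since $\y{R}_\Omega$ is diagonal in the orthonormal basis $(E_{i,j,t})$ with non-negative eigenvalues, the Loewner inequality $\y{R}_\Omega^2 \preceq \|\y{R}_\Omega\|\y{R}_\Omega$ holds in both sampling models, yielding
\[
\y{Q}^2 \preceq (\|\y{R}_\Omega\|-2)\,\y{Q} + (\|\y{R}_\Omega\|-1)\,\y{G}\y{G}^*.
\]
Sandwiching by $\y{P}_0$, using $\|\y{Q}\y{P}_0\|^2 = \|\y{P}_0\y{Q}^2\y{P}_0\|$ (by self-adjointness of $\y{Q}$), \Cref{assumption:1}, and $\|\y{P}_0\y{G}\y{G}^*\y{P}_0\|\leq 1$, delivers $\|\y{Q}\y{P}_0\|^2 \leq (\|\y{R}_\Omega\|-2)\alpha + (\|\y{R}_\Omega\|-1) \leq \|\y{R}_\Omega\|(1+\alpha)$.

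Next, I would import a standard sin-$\Theta$/tangent-space perturbation estimate for the rank-$r$ manifold (e.g., as stated in \cite{Vandereycken13} and used in \cite{KMV21}): for rank-$r$ matrices $\f{H}, \HA$ with $\|\f{H}-\HA\| \leq \sigma_r(\HA)/2$, one has $\delta \leq 2\|\f{H}-\HA\|/\sigma_r(\HA)$. Combining this with the trivial bound $\|\y{Q}\y{P}_1\|\leq \|\y{Q}\y{P}_0\| + \|\y{Q}\|\delta$ and $\|\y{Q}\|\leq \|\y{R}_\Omega\|+1$, the telescoping decomposition gives
\[
\|\y{P}_1\y{Q}\y{P}_1\| \leq \alpha + 2\delta\sqrt{\|\y{R}_\Omega\|(1+\alpha)} + \delta^2(\|\y{R}_\Omega\|+1).
\]
Writing $\zeta := \sqrt{\|\y{R}_\Omega\|(1+\alpha)}+1$ and using $\|\y{R}_\Omega\|+1 \leq \zeta^2$, the closeness hypothesis \cref{eq:HXtilde:closeness:lemma} translates to $\delta\zeta \leq \alpha/4$, so $2\delta\zeta + \delta^2\zeta^2 \leq \alpha/2 + \alpha^2/16 \leq \alpha$ for $\alpha\leq 8$, giving $\|\y{P}_1\y{Q}\y{P}_1\|\leq 2\alpha$ as claimed.

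The main obstacle is the sharper one-sided estimate in the first step: using the trivial bound $\|\y{Q}\y{P}_0\|\leq \|\y{R}_\Omega\|+1$ instead of $\sqrt{\|\y{R}_\Omega\|(1+\alpha)}$ would shrink the permissible spectral-norm radius by a factor of $\sqrt{\|\y{R}_\Omega\|}$ and noticeably weaken the proximity conditions \cref{eq:proximity:assumption:mainthm:1,eq:proximity:assumption:mainthm:2} in \Cref{thm:convergence:mobilesensors}. A minor additional technicality is to ensure that the Loewner inequality $\y{R}_\Omega^2 \preceq \|\y{R}_\Omega\|\y{R}_\Omega$ applies uniformly in the adaptive case where the diagonal entries $1/p_{i,j,t}$ of $\y{R}_\Omega$ vary across $(i,j,t)\in\Omega$, which nonetheless follows directly from the non-negativity of these entries.
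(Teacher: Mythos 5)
Your proposal follows essentially the same route as the paper: decompose the perturbed quantity into the unperturbed term (controlled by \Cref{assumption:1}), plus cross terms in $\y{P}_1-\y{P}_0$, which are bounded by combining a one-sided estimate of order $\sqrt{\|\y{R}_\Omega\|(1+\alpha)}$ (the paper proves exactly $\|\y{R}_{\Omega}\y{G}^{*}\y{P}_{\f{T}}\|\le\sqrt{\|\y{R}_\Omega\|(1+\alpha)}$ via the same Loewner inequality $\y{R}_\Omega^2\preceq\|\y{R}_\Omega\|\y{R}_\Omega$ and the local RIP) with a tangent-space perturbation bound. Your three-term telescope of $\y{Q}$ versus the paper's five-term expansion, and your bound on $\|\y{Q}\y{P}_0\|$ rather than on $\|\y{R}_\Omega\y{G}^*\y{P}_{\f{T}}\|$, are cosmetic variations of the same idea, and your explicit treatment of the $\delta^2\|\y{Q}\|$ term is arguably cleaner than the paper's, which silently conflates $\|\y{R}_\Omega\y{G}^*\PTX\|$ with $\|\y{R}_\Omega\y{G}^*\y{P}_{\f{T}}\|$ in its first cross term.

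One quantitative caveat: the paper's own perturbation estimate (\Cref{lemma:PT:perturbation}, imported from \cite{wei_cai_chan_leung,KMV21}) gives $\delta\le 4\|\f{H}-\HA\|/\sigma_r(\HA)$, not the constant-$2$ version you invoke. With constant $4$ your arithmetic yields $\delta\zeta\le\alpha/2$ and hence $\|\y{P}_1\y{Q}\y{P}_1\|\le\alpha+\alpha+\alpha^2/4=2\alpha+\alpha^2/4$, slightly overshooting the claimed $2\alpha$; this is harmless downstream (where $\alpha=1/5$) and can be repaired by shrinking the radius in \cref{eq:HXtilde:closeness:lemma} by a constant factor, but as written your proof of the exact constant hinges on a sharper sin-$\Theta$ bound than the one the paper actually has available. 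A second, very minor point: the intermediate estimate $\|\y{Q}\y{P}_0\|^2\le(\|\y{R}_\Omega\|-2)\alpha+(\|\y{R}_\Omega\|-1)$ implicitly assumes $\|\y{R}_\Omega\|\ge 2$; for $1\le\|\y{R}_\Omega\|<2$ you need the two-sided bound $|\langle\f{Z},\y{P}_0\y{Q}\y{P}_0\f{Z}\rangle|\le\alpha\|\f{Z}\|_F^2$, after which the conclusion $\|\y{Q}\y{P}_0\|^2\le\|\y{R}_\Omega\|(1+\alpha)$ still holds since the diagonal entries of $\y{R}_\Omega$ are at least $1$ in both sampling models.
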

The proof is postponed to \Cref{sec:RIP:neighborhood}.

As the next step, we establish a null space-type property that shows that not too much mass can be concentrated on the tangent space $\f{T}_{\f{H}}$ among block Hankel matrices in the null space of the sampling operator. 
\begin{lemma} \label{lemma:MatrixIRLS:localRIP}
 Let $\y{R}_{\Omega}: \MnOplus \to \MnOplus$ be a normalized sampling operator as in \cref{def:ROmega} or \cref{def:ROmega:adaptive:mobile}. If $\f{H} \in \Rddn$ is of rank $r$ and $T_{\f{H}} \subset \Rddn$ is the tangent space \cref{eq:tangent:space:def} to the rank-$r$ manifold at $\f{H}$, then
\begin{equation} \label{eq:PTGROmGPT:25}
\left\|\PTX\mathcal{G}\mathcal{R}_{\Omega}\mathcal{G}^{*}\PTX- \PTX\mathcal{G}\mathcal{G}^{*}\PTX\right\| \leq \frac{2}{5},
\end{equation}
implies
\[
\|\y{H}(\eta)\|_{F}^2 \leq \frac{5}{3}  \left( \left\|\y{R}_{\Omega}\right\|  + 8/5 \right)  \left\| \y{P}_{\f{T}_{\f{H}}^\perp} \y{H}(\eta)\right\|_F^2
\]
for each $\eta \in \ker \y{R}_{\Omega}$.
\end{lemma}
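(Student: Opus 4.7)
The plan is to set $\f{M}:=\y{H}(\eta)\in\mathrm{Im}(\y{H})$, decompose $\f{M}=\f{M}_T+\f{M}_{T^{\perp}}$ with $\f{M}_T=\y{P}_{\f{T}_{\f{H}}}\f{M}$, and then bound $\|\f{M}_T\|_F^2$ in terms of $\|\f{M}_{T^{\perp}}\|_F^2$ using two structural facts about the operators $P:=\y{G}\y{G}^{*}$ and $S:=\y{G}\y{R}_\Omega\y{G}^{*}$. Writing $\y{G}=\y{H}\y{D}^{-1}$ for the diagonal normalization $\y{D}=(\y{H}^{*}\y{H})^{1/2}$, one checks that $P$ is the orthogonal projection onto $\mathrm{Im}(\y{H})$, hence $P\f{M}=\f{M}$, and that $\y{G}^{*}\f{M}=\y{D}\eta$. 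The sampling operators in \eqref{def:ROmega} and \eqref{def:ROmega:adaptive:mobile} are diagonal in $\{E_{i,j,t}\}$ with non-negative weights $\tfrac{n^{2}T}{m}$ respectively $1/p_{i,j,t}$, so they are positive semidefinite and commute with $\y{D}$. Thus, for $\eta\in\ker\y{R}_\Omega$,
\[
S\f{M}=\y{G}\y{R}_\Omega\y{D}\eta=\y{G}\y{D}\y{R}_\Omega\eta=0,\qquad S\succeq 0.
\]

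Next, I would expand the identity $0=\langle\f{M},S\f{M}\rangle$ as
\[
\langle\f{M}_T,S\f{M}_T\rangle+2\langle\f{M}_T,S\f{M}_{T^{\perp}}\rangle+\langle\f{M}_{T^{\perp}},S\f{M}_{T^{\perp}}\rangle=0,
\]
and apply Cauchy--Schwarz for the positive semidefinite form $\langle\cdot,S\cdot\rangle$ to the cross term. Setting $a^{2}=\langle\f{M}_T,S\f{M}_T\rangle$ and $b^{2}=\langle\f{M}_{T^{\perp}},S\f{M}_{T^{\perp}}\rangle$, the above becomes $a^{2}+b^{2}=-2\langle\f{M}_T,S\f{M}_{T^{\perp}}\rangle\leq 2ab$, forcing equality in AM--GM and hence $a=b$. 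Since $\y{G}$ is an isometry, $b^{2}\leq\|\y{R}_\Omega\|\,\|\f{M}_{T^{\perp}}\|_F^{2}$, so also $\langle\f{M}_T,S\f{M}_T\rangle\leq\|\y{R}_\Omega\|\,\|\f{M}_{T^{\perp}}\|_F^{2}$. The local RIP hypothesis \eqref{eq:PTGROmGPT:25} then gives
\[
\langle\f{M}_T,P\f{M}_T\rangle\leq \langle\f{M}_T,S\f{M}_T\rangle+\tfrac{2}{5}\|\f{M}_T\|_F^{2}\leq \|\y{R}_\Omega\|\,\|\f{M}_{T^{\perp}}\|_F^{2}+\tfrac{2}{5}\|\f{M}_T\|_F^{2}.
\]

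To convert the left-hand side into a useful lower bound, I would use $P\f{M}=\f{M}$, which implies $(I-P)\f{M}_T=-(I-P)\f{M}_{T^{\perp}}$; since $I-P$ is an orthogonal projection, $\|(I-P)\f{M}_T\|_F=\|(I-P)\f{M}_{T^{\perp}}\|_F\leq\|\f{M}_{T^{\perp}}\|_F$, and consequently $\langle\f{M}_T,P\f{M}_T\rangle=\|P\f{M}_T\|_F^{2}=\|\f{M}_T\|_F^{2}-\|(I-P)\f{M}_T\|_F^{2}\geq\|\f{M}_T\|_F^{2}-\|\f{M}_{T^{\perp}}\|_F^{2}$. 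Substituting and rearranging gives $\tfrac{3}{5}\|\f{M}_T\|_F^{2}\leq(\|\y{R}_\Omega\|+1)\|\f{M}_{T^{\perp}}\|_F^{2}$, whence $\|\f{M}\|_F^{2}=\|\f{M}_T\|_F^{2}+\|\f{M}_{T^{\perp}}\|_F^{2}\leq\tfrac{5}{3}(\|\y{R}_\Omega\|+\tfrac{8}{5})\|\f{M}_{T^{\perp}}\|_F^{2}$, as claimed.

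The main obstacle is the cross term $\langle\f{M}_T,S\f{M}_{T^{\perp}}\rangle$: the hypothesis \eqref{eq:PTGROmGPT:25} controls only $\y{P}_T S\y{P}_T$, and a naive estimate $\|S\|\,\|\f{M}_T\|_F\|\f{M}_{T^{\perp}}\|_F$ would produce a quadratic (rather than linear) dependence on $\|\y{R}_\Omega\|$ in the final bound. The key observation that unlocks the correct constant is that the positive semidefiniteness of $\y{R}_\Omega$ makes $S$ positive semidefinite, so that PSD Cauchy--Schwarz combined with $S\f{M}=0$ collapses the cross term into a symmetric self-interaction and removes it from the analysis entirely.
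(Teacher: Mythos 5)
Your proof is correct and follows essentially the same route as the paper's: both exploit that $\y{H}(\eta)$ is annihilated by $\y{G}\y{R}_\Omega\y{G}^{*}$ and fixed by the projection $\y{G}\y{G}^{*}$, split along $\f{T}_{\f{H}}$ and its orthogonal complement, cancel the cross term using self-adjointness, and then apply the local RIP on the tangent block and the bound $\|\y{R}_\Omega\|$ on the complementary block, arriving at the identical constant $\tfrac{5}{3}(\|\y{R}_\Omega\|+\tfrac{8}{5})$. The only cosmetic differences are that the paper bundles everything into the single self-adjoint operator $\y{G}\y{R}_\Omega\y{G}^{*}+\mathrm{Id}-\y{G}\y{G}^{*}$, so that plain self-adjointness (no PSD Cauchy--Schwarz detour is needed) already yields your identity $a=b$, whereas you handle the projection part separately via the leakage bound $\|(\mathrm{Id}-P)\f{M}_T\|_F\leq\|\f{M}_{T^{\perp}}\|_F$.
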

We refer to \Cref{sec:proof:localRIP:NSP} for the proof.

Finally, \Cref{thm:convergence:template} relates \Cref{assumption:1} with respect to the block Hankel matrix of a transition operator to the local quadratic convergence of \texttt{TOIRLS}.
\begin{proposition}[Local Convergence with Quadratic Rate] \label{thm:convergence:template}
There exists an absolute constant $c_0$ such that the following holds. Assume that $\f{A} \in \Mn$ is of rank $r$, and that \Cref{assumption:1} holds for the normalized sampling operator $\y{R}_{\Omega}: \MnOplus  \to \MnOplus$ with respect to $\f{H}_{\f{A}} := \HQTA$ and constant $\alpha = 1/5$. Let $\widetilde{\f{X}}\hk$ is the $k$-th iterate of \texttt{TOIRLS}  \Cref{algo:IRLS:graphcompletion} with inputs: $\Omega$, $\f{y} = P_{\Omega}(\y{Q}_T(\f{A}))$, and $\widetilde{r} = r$. If we assume that the smoothing parameter fulfills $\varepsilon_k = \sigma_{r+1}(\y{H}(\widetilde{\f{X}}\hk))$ and if $\y{H}(\widetilde{\f{X}}\hk) \in \y{B}_{\HA}\left( c_0 \|\y{R}_{\Omega}\|^{-3/2} r^{-1} \kappa^{-1}  (dn-r)^{-1/2} \sigma_r(\HA) \right)$, where $\kappa := \sigma_1(\f{H}_{\f{A}}) /  \sigma_r(\f{H}_{\f{A}})$ is the condition number of $\f{H}_{\f{A}}$,
then there exists $\nu$ such that for all $\ell\ge0$
\begin{equation*}
\begin{split}
&\|\y{H}(\widetilde{\f{X}}^{(k+\ell+1)}) - \f{H}_{\f{A}}\|   \\
&\leq \min(\nu \|\y{H}(\widetilde{\f{X}}^{(k+\ell)})-\f{H}_{\f{A}}\|^{2} , \|\y{H}(\widetilde{\f{X}}^{(k+\ell)})-\f{H}_{\f{A}}\|).
\end{split}
\end{equation*}
In other words, $\y{H}(\widetilde{\f{X}}^{(k+\ell)}) \xrightarrow{\ell \to \infty} \f{H}_{\f{A}}$ with quadratic convergence rate.
\end{proposition}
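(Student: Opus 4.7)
The proof will follow the architecture of local convergence results for IRLS methods on low-rank problems \cite{KMV21,KS18}, adapted to the block Hankel setting via the operator $\mathcal{H}$ and its normalized counterpart $\mathcal{G}$. My first move is a regularity-transport step. The hypothesis gives local RIP at $\mathbf{T}:=\mathbf{T}_{\mathbf{H}_\mathbf{A}}$ with constant $\alpha=1/5$; choosing $c_0$ small enough ensures the closeness radius fits the hypothesis of Lemma~\ref{eq:MatrixIRLS:tangentspace:localRIP:perturbation} (the $\|\mathcal{R}_\Omega\|^{-3/2}$ factor in the radius absorbs the $\|\mathcal{R}_\Omega\|$-dependence appearing there), so that a $2\alpha=2/5$ local RIP holds on $\mathbf{T}_k:=\mathbf{T}_{\mathbf{H}_k}$ for every iterate $\mathbf{H}_k=\mathcal{H}(\widetilde{\mathbf{X}}^{(k)})$ in the specified ball. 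Lemma~\ref{lemma:MatrixIRLS:localRIP} then yields the null-space bound $\|\mathcal{H}(\eta)\|_F^{2}\le \tfrac53(\|\mathcal{R}_\Omega\|+\tfrac85)\,\|\mathcal{P}_{\mathbf{T}_k^{\perp}}\mathcal{H}(\eta)\|_F^{2}$ for every $\eta\in\ker \mathcal{R}_\Omega=\ker P_\Omega$. Crucially, the error $\mathbf{E}_k:=\widetilde{\mathbf{X}}^{(k+1)}-\mathcal{Q}_T(\mathbf{A})$ lies in $\ker P_\Omega$, so this inequality applies to $\mathbf{E}_k$, reducing the task to controlling the normal component $\mathcal{P}_{\mathbf{T}_k^{\perp}}\mathcal{H}(\mathbf{E}_k)$.

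\textbf{Second step: optimality and the tangent/normal split.} The KKT conditions for the weighted least squares problem \eqref{eq:MatrixIRLS:Xdef} (with $\lambda=0$) force $\widetilde{W}_{\mathbf{H}_k}(\widetilde{\mathbf{X}}^{(k+1)})\perp\ker P_\Omega$, so that
\[
\langle \mathbf{E}_k,\,\widetilde{W}_{\mathbf{H}_k}(\mathbf{E}_k)\rangle = -\langle \mathbf{E}_k,\,\widetilde{W}_{\mathbf{H}_k}(\mathcal{Q}_T(\mathbf{A}))\rangle.
\]
Since $\widetilde{W}_{\mathbf{H}_k}=\mathcal{H}^{\,*}W_{\mathbf{H}_k}\mathcal{H}$, the right-hand side can be written as $-\langle \mathcal{H}(\mathbf{E}_k),W_{\mathbf{H}_k}(\mathbf{H}_\mathbf{A})\rangle$. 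I will decompose $\mathbf{H}_\mathbf{A}=\mathcal{P}_{\mathbf{T}_k}\mathbf{H}_\mathbf{A}+\mathcal{P}_{\mathbf{T}_k^{\perp}}\mathbf{H}_\mathbf{A}$, and invoke the standard rank-$r$ manifold curvature estimate
\[
\|\mathcal{P}_{\mathbf{T}_k^{\perp}}\mathbf{H}_\mathbf{A}\|\;\lesssim\;\frac{\|\mathbf{H}_k-\mathbf{H}_\mathbf{A}\|^{2}}{\sigma_r(\mathbf{H}_\mathbf{A})},
\]
valid because $\mathbf{H}_k$ and $\mathbf{H}_\mathbf{A}$ both have rank $r$ and are close in spectral norm (this is where the $\kappa$ and $\sigma_r$ factors in the radius enter). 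On $\mathbf{T}_k$, the weight operator $W_{\mathbf{H}_k}$ acts with norm $\le 1/\sigma_r(\mathbf{H}_k)^2$, whereas on $\mathbf{T}_k^{\perp}$ it scales like $1/\varepsilon_k^2$, with $\varepsilon_k=\sigma_{r+1}(\mathbf{H}_k)\le\|\mathbf{H}_k-\mathbf{H}_\mathbf{A}\|$ by Weyl. Combining these bounds yields an estimate of the form
\[
|\langle \mathcal{H}(\mathbf{E}_k),W_{\mathbf{H}_k}(\mathbf{H}_\mathbf{A})\rangle|\;\lesssim\;\|\mathcal{H}(\mathbf{E}_k)\|_F\cdot\frac{\|\mathbf{H}_k-\mathbf{H}_\mathbf{A}\|^{2}}{\varepsilon_k^{2}\,\sigma_r(\mathbf{H}_\mathbf{A})}\cdot\|\mathbf{H}_\mathbf{A}\|,
\]
together with a non-harmful $1/\sigma_r$-term from the tangent part.

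\textbf{Third step: quadratic rate via the weight-induced coercivity.} From the local RIP on $\mathbf{T}_k$ and the structure of $W_{\mathbf{H}_k}$, one shows a coercivity bound
\[
\langle \mathbf{E}_k,\widetilde{W}_{\mathbf{H}_k}(\mathbf{E}_k)\rangle\;\gtrsim\;\frac{1}{\varepsilon_k^{2}}\,\|\mathcal{P}_{\mathbf{T}_k^{\perp}}\mathcal{H}(\mathbf{E}_k)\|_F^{2}\;-\;\frac{1}{\sigma_r(\mathbf{H}_\mathbf{A})^{2}}\,\|\mathcal{P}_{\mathbf{T}_k}\mathcal{H}(\mathbf{E}_k)\|_F^{2},
\]
where the local RIP is used to absorb the tangent contribution into the null-space bound of Step~1. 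Combining this with the right-hand side bound from Step~2 and the null-space inequality produces, after cancellation of $\varepsilon_k^{2}$,
\[
\|\mathcal{H}(\mathbf{E}_k)\|_F\;\lesssim\;\frac{\|\mathbf{H}_k-\mathbf{H}_\mathbf{A}\|^{2}}{\sigma_r(\mathbf{H}_\mathbf{A})}\cdot\kappa\cdot\bigl(\|\mathcal{R}_\Omega\|+1\bigr)^{1/2},
\]
and converting from Frobenius to spectral norm via $\|\cdot\|\le\|\cdot\|_F\le\sqrt{dn-r+\mathrm{rank}(\mathbf{E}_k)}\,\|\cdot\|$ (restricted to the support of $\mathcal{P}_{\mathbf{T}_k}+\mathcal{P}_{\mathbf{T}_k^{\perp}\cap\mathbf{T}_{k+1}}$) accounts for the $(dn-r)^{1/2}$ factor in the radius and produces the claimed quadratic inequality with some $\nu$ of order $\kappa(\|\mathcal{R}_\Omega\|+1)^{1/2}(dn-r)^{1/2}/\sigma_r(\mathbf{H}_\mathbf{A})$. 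The second part of the min, $\|\mathcal{H}(\widetilde{\mathbf{X}}^{(k+1)})-\mathbf{H}_\mathbf{A}\|\le\|\mathcal{H}(\widetilde{\mathbf{X}}^{(k)})-\mathbf{H}_\mathbf{A}\|$, follows by choosing $c_0$ so that $\nu\cdot R_0\le 1$, which also guarantees that $\mathbf{H}_{k+1}$ stays in the convergence ball, closing the induction.

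\textbf{Main obstacles.} The hard part is the weight-operator analysis of Step~2--3: the weight $W_{\mathbf{H}_k}$ is defined via the SVD of $\mathbf{H}_k\in\operatorname{Im}\mathcal{H}$ but acts on all of $\mathbb{R}^{d_1n\times d_2n}$, while the effective weight $\widetilde{W}_{\mathbf{H}_k}=\mathcal{H}^{\,*}W_{\mathbf{H}_k}\mathcal{H}$ lives on $\mathcal{M}_n^{\oplus T}$. Reconciling the two requires that the projection $\mathcal{H}\mathcal{H}^{\,*}$ (equivalently the diagonal operator $\mathcal{D}^2$ of Section~\ref{sec:computational:considerations}) commutes appropriately with the tangent-space decomposition induced by $\mathbf{H}_k$, and does not distort the RIP estimates when passing between the $\mathcal{H}$- and $\mathcal{G}$-normalizations. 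A secondary obstacle is to ensure that the tangent perturbation bound $\|\mathcal{P}_{\mathbf{T}_k}-\mathcal{P}_{\mathbf{T}}\|\lesssim \|\mathbf{H}_k-\mathbf{H}_\mathbf{A}\|/\sigma_r$ feeds back into both Step~1 (used already via Lemma~\ref{eq:MatrixIRLS:tangentspace:localRIP:perturbation}) and the curvature bound of Step~2 with compatible constants, so that the final $c_0$ can indeed be chosen as an absolute constant independent of all problem parameters.
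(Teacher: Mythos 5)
Your Step 1 matches the paper exactly: perturb the local RIP from $\f{T}_{\f{H}_{\f{A}}}$ to $\f{T}_k$ via \Cref{eq:MatrixIRLS:tangentspace:localRIP:perturbation} (applied, to be precise, at the rank-$r$ truncation $\y{T}_r(\f{H}_k)$, not at $\f{H}_k$ itself, which is generically full rank), then invoke \Cref{lemma:MatrixIRLS:localRIP} to get the null-space bound. Your Steps 2--3 then attempt to re-derive from scratch what the paper simply cites as \Cref{prop:weightoperator} (Lemmas B.8 and B.9 of \cite{KMV21}): the orthogonality identity $\langle \f{E}_k,\widetilde{W}_{\f{H}_k}(\f{E}_k)\rangle=-\langle \y{H}(\f{E}_k),W_{\f{H}_k}(\f{H}_{\f{A}})\rangle$, the tangent/normal split of $\f{H}_{\f{A}}$ with the curvature estimate, and the $\varepsilon_k^{-2}$ coercivity of the weight on the normal component are all the right ingredients. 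However, your bookkeeping does not reproduce the paper's $\nu=\frac{20}{3\sigma_r(\f{H}_{\f{A}})}(1+6\kappa)(\|\y{R}_\Omega\|+8/5)\,r$: the null-space constant enters squared, so $\nu$ carries $(\|\y{R}_\Omega\|+1)^{1}$ rather than $(\|\y{R}_\Omega\|+1)^{1/2}$, it carries a factor $r$ (from the Schatten-$1$ norm of $W_{\f{H}_k}(\f{H}_{\f{A}})$), and it carries no $(dn-r)^{1/2}$ at all.

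The genuine gap is that your argument proves the contraction for a single step and never propagates the hypothesis $\varepsilon_{k}=\sigma_{r+1}(\y{H}(\widetilde{\f{X}}^{(k)}))$ to later iterations, which is required because the statement asserts the bound for all $\ell\ge 0$. The update rule \cref{eq:MatrixIRLS:epsdef} sets $\varepsilon_{k+1}=\min(\varepsilon_k,\sigma_{r+1}(\f{H}_{k+1}))$, so unless one proves $\sigma_{r+1}(\f{H}_{k+1})\le\sigma_{r+1}(\f{H}_k)$, the smoothing parameter may get stuck at the old value, the inequality $\varepsilon_{k+1}\le\|\y{H}(\eta^{(k+1)})\|$ (which drives the quadratic bound at the next step, via Eckart--Young) fails, and the induction does not close. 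The paper handles this with a dedicated argument (\Cref{lemma:etaksigmarp1Xk} together with \cref{eq:sigma:decreasing}), which lower-bounds $\sigma_{r+1}(\f{H}_k)$ against $\|\y{H}(\eta^{(k)})\|$ using the null-space property applied to the \emph{current} error and the bound $\|\y{P}_{\f{T}_k^\perp}(\f{H}_k)\|_F\le\sqrt{dn-r}\,\sigma_{r+1}(\f{H}_k)$; this is precisely where the factor $(dn-r)^{1/2}$ in the convergence radius is consumed. Your proposal instead attributes $(dn-r)^{1/2}$ to a Frobenius-to-spectral conversion inside $\nu$, where it does not belong, and the closing sentence ``which also guarantees that $\f{H}_{k+1}$ stays in the convergence ball'' addresses only the ball membership, not the smoothing-parameter condition. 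You would need to add this monotonicity argument for $\sigma_{r+1}$ to make the proof complete.
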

The proof (see \Cref{sec:weightoperator:structure}) crucially relies on estimates from \cite{KMV21} on the action of the weight operator $W_{\y{H}(\widetilde{\f{X}}\hk)}$ of \Cref{def:weightoperator:MatrixIRLS} where $\widetilde{\f{X}}\hk$ is an \texttt{TOIRLS} iterate, and combines them with \Cref{lemma:MatrixIRLS:localRIP}.

Putting the results of this section together amounts finally to the proof of \Cref{thm:convergence:mobilesensors}, which is detailed in \Cref{sec:thm:convergence:mobilesensors:proof}.

\section{Conclusion \& Outlook}\label{section:conclusion}
In this paper, we developed a framework for the learning of linear transition operators from random sparse observations of space-time samples, and provided a local convergence analysis for the non-convex optimization approach \texttt{TOIRLS} for solving that problem, quantifying the number and distribution of samples sufficient for convergence. The current work could be extended in several directions: the presented convergence analysis for \texttt{TOIRLS} is inherently local, i.e., requires an iterate that is already close to a low-rank ground truth matrix. The empirical results suggest that a global convergence of \texttt{TOIRLS} might be possible, despite being beyond the scope of this paper. Furthermore, not only entrywise, but general linear sampling operators could be considered, cf. \cref{eq:def:Somega}, as well as applications to a broader family of dynamical systems such as linear time-invariant systems with input terms \cref{LIS2}. Finally, it would be of interest to combine the setup considered in this paper with additional prior knowledge on the transition operator $\f{A}$, such as sparsity, which is common for example in the context of graph transition operators.

\appendix

\subsection{Incoherence Estimates} \label{sec:coherence:estimates}
In this subsection, we provide estimates for the incoherence parameters $\mu_{i,j,t}$ and $\mu_0$ of the block Hankel matrix $\mathcal{H}\big(\mathcal{Q}_T(\f{A})\big)$ associated with a linear operator $\f{A} \in \Mn$, which have been defined in \Cref{def:incoherence}. Also for use in other proofs, we state the following result that elucidates the action of the block Hankel operator $\y{H}$.
\begin{lemma} \label{lemma:Hankel:action}
Recalling the normalized block Hankel operator $\y{G}:\MnOplus \to \Rddn$ of \cref{def:G:operator}, let  $\left\{E_{i,j,t} \right\}_{(i,j,t) \in I}$ be the standard basis of $\MnOplus$ and $\left\{ \f{B}_{i,j,t}\right\}_{(i,j,t) \in I}$ the standard basis of the space of block Hankel matrices $\mathcal{H}(\MnOplus)$. Then we have that the diagonal operator $\y{D}: \MnOplus \to \MnOplus$
\begin{equation} \label{eq:DEijt:def}
\begin{split}
\y{D}(E_{i,j,t}) :&= \|\mathcal{H}(E_{i,j,t})\|_F E_{i,j,t} \\
&= \sqrt{\min(t,T+1-t,d_1,d_2) } E_{i,j,t},
\end{split}
\end{equation}
for each $(i,j,t) \in I$, satisfies $\y{H} = \y{G} \y{D}$, which is equivalent to
\[
\f{B}_{i,j,t}= \mathcal{H}(\y{D}^{-1} E_{i,j,t}) = \y{G}(E_{i,j,t}) \,.
\]
Furthermore, it holds that $\mathcal{H}^* \mathcal{H}: \MnOplus \to \MnOplus$ satisfies, for all $(i,j,t) \in I$,
\[
\mathcal{H}^* \mathcal{H} ( E_{i,j,t} ) = \min(t,T+1-t,d_1,d_2) E_{i,j,t}.
\]
\end{lemma}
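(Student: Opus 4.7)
The proof will be a direct unpacking of the block Hankel embedding \cref{eq:BlockHankel:operatordef} applied to the standard basis element $E_{i,j,t} \in \MnOplus$, which is zero in every summand except the $t$-th, where it equals the $(i,j)$-th standard matrix unit. The plan is to compute $\mathcal{H}(E_{i,j,t})$ explicitly, read off its Frobenius norm from a combinatorial count of anti-diagonal block positions, and then assemble the remaining statements by linearity and adjointness.

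First, I would observe that by definition of $\mathcal{H}$, the matrix $\mathcal{H}(E_{i,j,t}) \in \Rddn$ has a $1$ in every block position $(p,q) \in [d_1] \times [d_2]$ with $p+q-1 = t$ at the intra-block coordinate $(i,j)$, and zeros elsewhere. The number of such block positions is exactly the cardinality of $\{(p,q) \in [d_1] \times [d_2] : p+q-1 = t\}$, which a quick case analysis gives as $\min(t, T+1-t, d_1, d_2)$ (using $T = d_1+d_2-1$): the lower endpoint is $\max(1, t+1-d_2)$, the upper endpoint is $\min(d_1, t)$, and the count simplifies to the stated minimum. Since the nonzero entries are $1$s at distinct matrix positions, this yields $\|\mathcal{H}(E_{i,j,t})\|_F^2 = \min(t, T+1-t, d_1, d_2)$ and hence the claimed formula for $\y{D}(E_{i,j,t})$.

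Next, I would verify the factorization $\y{H} = \y{G} \y{D}$. Applying \cref{def:G:operator} to $E_{i,j,t}$ and noting $\langle E_{i',j',t'}, E_{i,j,t}\rangle_F = \delta_{(i',j',t'),(i,j,t)}$, only the single term $(i,j,t)$ survives, so $\y{G}(E_{i,j,t}) = \mathcal{H}(E_{i,j,t})/\|\mathcal{H}(E_{i,j,t})\|_F$. Combining with the already established $\y{D}(E_{i,j,t}) = \|\mathcal{H}(E_{i,j,t})\|_F \, E_{i,j,t}$ gives $\y{G}\y{D}(E_{i,j,t}) = \mathcal{H}(E_{i,j,t}) = \y{H}(E_{i,j,t})$, and both sides are linear, so the identity extends to all of $\MnOplus$. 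The equivalent reformulation $\f{B}_{i,j,t} = \y{G}(E_{i,j,t}) = \mathcal{H}(\y{D}^{-1}E_{i,j,t})$ then follows from rearranging, since $\f{B}_{i,j,t}$ is by definition the unit Frobenius-norm basis element of $\ran(\mathcal{H})$ corresponding to the $(i,j,t)$-th anti-diagonal coordinate.

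Finally, for the normal operator $\mathcal{H}^*\mathcal{H}$, I would compute its matrix elements in the standard basis: $\langle E_{i',j',t'}, \mathcal{H}^*\mathcal{H}(E_{i,j,t})\rangle_F = \langle \mathcal{H}(E_{i',j',t'}), \mathcal{H}(E_{i,j,t})\rangle_F$. Using the explicit description of $\mathcal{H}(E_{i,j,t})$ above, the nonzero entries of $\mathcal{H}(E_{i',j',t'})$ and $\mathcal{H}(E_{i,j,t})$ lie on the $t'$-th and $t$-th block anti-diagonals respectively and at intra-block positions $(i',j')$ and $(i,j)$, so the inner product vanishes unless $(i',j',t') = (i,j,t)$, in which case it counts exactly the $\min(t, T+1-t, d_1, d_2)$ shared nonzero entries. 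This proves $\mathcal{H}^*\mathcal{H}(E_{i,j,t}) = \min(t,T+1-t,d_1,d_2)\,E_{i,j,t}$. There is no real obstacle here; the only care required is the case analysis for the endpoints of $p \in [\max(1,t+1-d_2), \min(d_1,t)]$ in the counting step, which is best handled by checking the four regimes $t \le \min(d_1,d_2)$, $\min(d_1,d_2) < t \le \max(d_1,d_2)$, and $t > \max(d_1,d_2)$ separately.
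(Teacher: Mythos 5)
Your proposal is correct and follows essentially the same route as the paper's proof: counting the number of block positions $(p,q)$ with $p+q-1=t$ to get $\|\mathcal{H}(E_{i,j,t})\|_F^2=\min(t,T+1-t,d_1,d_2)$, and then deducing the factorization and the action of $\mathcal{H}^*\mathcal{H}$ on the standard basis. Your version is slightly more complete in that you explicitly verify the off-diagonal matrix elements $\langle \mathcal{H}(E_{i',j',t'}),\mathcal{H}(E_{i,j,t})\rangle_F$ vanish, whereas the paper only records the diagonal entries.
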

\begin{proof}
The first statement follows by combining the definition \cref{eq:DEijt:def} with \cref{def:G:operator}, and by counting the number of occurrences of each block in \cref{eq:BlockHankel:operatordef}. The last statement follows from
\begin{align*}
\langle E_{i,j,t}, &\mathcal{H}^* \mathcal{H} ( E_{i,j,t} ) \rangle = \big\|\mathcal{H} ( E_{i,j,t} )\big\|_F^2 = \big\|\mathcal{H} ( E_{i,j,t} )\big\|_F^2 \\&= \big\|\y{D}(E_{i,j,t}) \big\|_F^2 = \min(t,T+1-t,d_1,d_2)
\end{align*}
for all  $(i,j,t) \in I$ due to \cref{eq:DEijt:def}.
\end{proof}
With the following lemma, we bound the local incoherence parameter as defined in \Cref{def:incoherence} by the incoherence parameter based on the related incoherence parameter in \cref{eq:incoherences:Ud1d2} in the spirit of \cite[(27)]{chen_chi14} and \cite{cai_wang_wei19}. 

\begin{lemma} \label{lemma:incoherence:bound:traditional}
Let $\f{Z} \in \Rddn$ be of rank $r$ with leading left and right singular vector matrices $\f{U} \in \R^{n d_1 \times r}$ and $\f{V} \in \R^{n d_2 \times r}$, respectively, and let $\f{T}_{\f{Z}}$ be the associated tangent space \cref{eq:tangent:space:def}. Suppose there exists a positive constant $\mu_0>0$ such that 
 \begin{align} \label{eq:incoherences:Ud1d2}
 \max_{1 \leq i \leq j \leq n, 1 \leq t \leq T} \| \f{U}^* \f{B}_{i,j,t}\|_F &\leq  \sqrt{\mu_0 \frac{r}{nd_1}},\nonumber\\ \max_{1 \leq i \leq j \leq n, 1 \leq t \leq T} \| \f{B}_{i,j,t}\f{V} \|_F &\leq \sqrt{\mu_0 \frac{r}{nd_2}},
\end{align}
where $\left\{ \f{B}_{i,j,t}\right\}_{(i,j,t) \in I}$ is the standard basis of the space of block Hankel matrices $\mathcal{H}(\MnOplus)$. 

Then, for each $(i,j,t) \in [n]\times[n]\times [T]$, 
 \[ \| P_{\f{T}_{\f{Z}}}( \f{B}_{i,j,t})\|_{F}^2 \leq    \mu_0 c_s \frac{r}{nT}, 
\]
where $c_s = \frac{T(T+1)}{d_1 d_2}$. In particular, $\f{Z}$ is $\mu_0$-incoherent in the sense of \Cref{def:incoherence}.
 \end{lemma}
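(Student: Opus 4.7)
The plan is to exploit the explicit formula for the orthogonal projection onto the tangent space $\f{T}_{\f{Z}}$ of the rank-$r$ manifold at $\f{Z} = \f{U}\Sigma\f{V}^*$, namely
\[
\mathcal{P}_{\f{T}_{\f{Z}}}(\f{M}) \;=\; \f{U}\f{U}^*\f{M} + (\Id - \f{U}\f{U}^*)\, \f{M}\, \f{V}\f{V}^*,
\]
and to bound the Frobenius norms of the two summands separately using the two one-sided incoherence hypotheses~\eqref{eq:incoherences:Ud1d2}. The two summands are orthogonal in the Frobenius inner product because $\f{U}\f{U}^*(\Id - \f{U}\f{U}^*) = 0$, so I would first write
\[
\|\mathcal{P}_{\f{T}_{\f{Z}}}(\f{B}_{i,j,t})\|_F^2 \;=\; \|\f{U}\f{U}^*\f{B}_{i,j,t}\|_F^2 + \|(\Id - \f{U}\f{U}^*)\f{B}_{i,j,t}\f{V}\f{V}^*\|_F^2.
\]

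Next I would control each summand. Using the idempotence $\f{U}^*\f{U} = \Id_r$ together with the cyclicity of the trace, the first summand equals $\|\f{U}^*\f{B}_{i,j,t}\|_F^2$, which by the first bound in~\eqref{eq:incoherences:Ud1d2} is at most $\mu_0 r/(nd_1)$. For the second summand, the contraction $\|(\Id - \f{U}\f{U}^*)\cdot\|\leq 1$ and then $\|\f{B}_{i,j,t}\f{V}\f{V}^*\|_F = \|\f{B}_{i,j,t}\f{V}\|_F$ (again by $\f{V}^*\f{V} = \Id_r$) give an upper bound of $\mu_0 r/(nd_2)$ by the second bound in~\eqref{eq:incoherences:Ud1d2}.

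It only remains to combine the two estimates:
\[
\|\mathcal{P}_{\f{T}_{\f{Z}}}(\f{B}_{i,j,t})\|_F^2 \;\leq\; \mu_0 \frac{r}{n}\!\left(\frac{1}{d_1}+\frac{1}{d_2}\right) = \mu_0\,\frac{r}{n}\cdot\frac{d_1+d_2}{d_1 d_2},
\]
and to invoke the pencil-parameter identity $d_1 + d_2 = T+1$ together with the definition $c_s = T(T+1)/(d_1 d_2)$ to rewrite this upper bound as $\mu_0 c_s r/(nT)$, which is the desired inequality. The $\mu_0$-incoherence conclusion in the sense of \Cref{def:incoherence} then follows by taking the maximum over $(i,j,t) \in I$. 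There is no genuine obstacle in this argument—the only points that require a moment of care are the orthogonality of the two projection summands (to avoid picking up a cross term that would destroy the factor of $2$) and the algebraic identification $c_s(T+1)^{-1} = T/(d_1 d_2)$ using $T = d_1 + d_2 - 1$.
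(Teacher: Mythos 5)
Your proof is correct and follows essentially the same route as the paper's: both split $\mathcal{P}_{\f{T}_{\f{Z}}}$ into two Frobenius-orthogonal summands, bound one by $\|\f{U}^*\f{B}_{i,j,t}\|_F^2$ and the other by $\|\f{B}_{i,j,t}\f{V}\|_F^2$, and finish with the identity $d_1+d_2=T+1$. The only (immaterial) difference is the choice of splitting — you use $\f{U}\f{U}^*\f{M}+(\Id-\f{U}\f{U}^*)\f{M}\f{V}\f{V}^*$ while the paper uses the symmetric variant $\f{U}\f{U}^*\f{M}(\Id-\f{V}\f{V}^*)+\f{M}\f{V}\f{V}^*$.
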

 \begin{proof}
 It is well-known \cite[Eq. (3)]{recht} that the action of the projection operator  $P_{\f{T}_{\f{Z}}}$ can be written such that
 \begin{align*}
  P_{\f{T}}( \f{M}) &=\f{U} \f{U}^*  \f{M}+ \f{M}\f{V}\f{V}^*  -\f{U}\f{U}^*  \f{M}\f{V}\f{V}^* \\
  &=\f{U}\f{U}^* \f{M}(\f{I}-\f{V}\f{V}^*)+ \f{M} \f{V}\f{V}^*
 \end{align*}
 for any matrix $\f{M}$. Therefore, we estimate that
 \begin{equation} \label{eq:incoherence:conversion}
 \begin{split}
  \| P_{\f{T}}( \f{M})\|_{F}^2&=\|\f{U}\f{U}^*  \f{M}(\f{I}-\f{V}\f{V}^*)\|_{F}^2+\| \f{M}\f{V}\f{V}^*\|_{F}^2 \\
  &\leq \|\f{U}\f{U}^*  \f{M}\|_{F}^2\|(\f{I}-\f{V}\f{V}^*)\|^2+\|\f{M}\f{V}\f{V}^*\|_{F}^2 \\
  & \leq \|\f{U}\f{U}^*  \f{M}\|_{F}^2+\|\f{M}\f{V}\f{V}^*\|_{F}^2 \\&\leq \|\f{U}^* \f{M} \|_{F}^2+\| \f{M} \f{V}\|_{F}^2.
  \end{split}
  \end{equation}
  Thus, \cref{eq:incoherences:Ud1d2} implies that
  \begin{equation*}
    \| P_{\f{T}}( \f{B}_{i,j,t})\|_{F}^2  \leq \mu_0 \frac{r}{nd_1} + \mu_0 \frac{r}{nd_2} =  \frac{\mu_0 r}{n}\frac{d_1+d_2}{d_1d_2} = \mu_0 c_s \frac{r}{nT},
 \end{equation*}
 setting $\f{M} = \f{B}_{i,j,t}$ for any $(i,j,t) \in I$. 
 \end{proof}
  
Before providing the proof of the incoherence estimates for the examples of \Cref{sec:incoherence:estimates}, we note that it follows from \cref{eq:sample:complexity:condition:localcoherence} in \Cref{thm:convergence:mobilesensors} that for adaptive sampling with probabilities satisfying $p_{i,j,t} \geq C c_s \mu_{i,j,t}  \frac{r}{n T} \log(n T)$ for all $(i,j,t) \in I$, where $C$ is the constant from \cref{eq:sample:complexity:condition:localcoherence}.2 and $c_s= \frac{T(T+1)}{d_1 d_2}$, a number of
\begin{equation} \label{eq:mexp:lowerbound}
\begin{split}
m_{\text{exp}} := \mathbb{E}[|\Omega|] &= \sum_{(i,j,t) \in I} p_{i,j,t} \\
&\geq C c_s \sum_{(i,j,t) \in I} \mu_{i,j,t} \frac{r}{n T} \log(n T)
\end{split}
\end{equation}
expected samples will enable local convergence of \texttt{TOIRLS} in the adaptive model for $\Omega$.

\subsubsection{Orthogonal Matrices }\label{coherence:unitary}
To understand the incoherences of block Hankel matrices $\f{H}_{\f{A}} = \y{H}(\mathcal{Q}_T(\f{A}))$ associated to orthonormal matrices $\f{A} \in \mathbb{O}^{n}$, we observe that a compact singular value decomposition of $\f{H}_{\f{A}}$ can be given by $\f{H}_{\f{A}}  = \f{U}_{\f{H}_{\f{A}}} \Sigma \f{V}_{\f{H}_{\f{A}}}^*$ with
\begin{align*}
\f{U}_{\f{H}_{\f{A}}} = \frac{1}{\sqrt{d_1}}\begin{pmatrix}\f{A}\\ \f{A}^2\\ \vdots \\ \f{A}^{d_1}
\end{pmatrix} \,,\quad
\f{V}_{\f{H}_{\f{A}}} =  \frac{1}{\sqrt{d_2}} \begin{pmatrix}\f{I} \\ \f{A}  \\\cdots \\ \f{A}^{d_2-1} \end{pmatrix} \,,
\end{align*}
$\f{U}_{\f{H}_{\f{A}}}\in \mathcal{M}_{n d_1, n}$, $\f{V}_{\f{H}_{\f{A}}}\in \mathcal{M}_{n d_2, n}$ and $\Sigma = \sqrt{d_1d_2} \Id \in \Mn$.  Using this, we obtain the following proposition.

\begin{proposition}  \label{prop:orthogonalmatrix:incoherences}
If the transition operator  $\f{A} \in \mathbb{O}^{n}$ is an orthogonal matrix, then 
\begin{enumerate}
 \item $\f{H}_{\f{A}} = \y{H}(\mathcal{Q}_T(\f{A}))$ is $\mu_0$-incoherent with $\mu_0 \leq 1$. 
 \item The local incoherences of $\f{H}_{\f{A}} $ satisfy $\sum_{(i,j,t) \in I} \mu_{i,j,t}  \leq nT^2$.
 \end{enumerate}
As a consequence, for both uniform and adaptive sampling, a sample complexity of order $\Theta(n^2\log(nT))$ is sufficient to satisfy the assumption of \Cref{thm:convergence:mobilesensors}.  
\end{proposition}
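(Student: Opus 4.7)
The plan is to apply Lemma \ref{lemma:incoherence:bound:traditional} to reduce the proof of part 1 to bounding $\|\f{U}^* \f{B}_{i,j,t}\|_F^2$ and $\|\f{B}_{i,j,t} \f{V}\|_F^2$, where $\f{U}$ and $\f{V}$ are orthonormal bases of the column and row spaces of $\f{H}_{\f{A}}$. Since $\f{A} \in \Or^n$ is orthogonal, Theorem \ref{thm:blkHankel:lowrank}.1 gives $r := \rank(\f{H}_{\f{A}}) = \rank(\f{A}) = n$, so I work with $r = n$ throughout. For $\f{U} := \frac{1}{\sqrt{d_1}}[\f{A}; \f{A}^2; \ldots; \f{A}^{d_1}]$ the identity $(\f{A}^s)^* \f{A}^s = \Id$ yields $\f{U}^* \f{U} = \Id$, and the factorization $\f{H}_{\f{A}} = \f{P}\f{Q}$ with $\f{P} = \sqrt{d_1}\,\f{U}$ and $\f{Q} = [\Id, \f{A}, \ldots, \f{A}^{d_2-1}]$ (both of full rank $n$) identifies $\text{col}(\f{U}) = \text{col}(\f{H}_{\f{A}})$; an analogous construction produces an orthonormal basis $\f{V}$ spanning $\text{row}(\f{H}_{\f{A}})$.

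The core computation is a blockwise expansion. By Lemma \ref{lemma:Hankel:action}, $\f{B}_{i,j,t} = \y{H}(E_{i,j,t})/\sqrt{w_t}$ with $w_t = \min(t, T+1-t, d_1, d_2)$, and $\y{H}(E_{i,j,t})$ has the rank-one matrix $E_{i,j}$ placed in exactly the $w_t$ block positions $(s,k)$ with $s+k-1 = t$ (and is zero elsewhere). Block $k$ of $\f{U}^*\y{H}(E_{i,j,t})$ therefore equals $\frac{1}{\sqrt{d_1}}(\f{A}^{t+1-k})^* E_{i,j}$ when $t+1-k \in \{1,\ldots,d_1\}$ and $k \in \{1,\ldots,d_2\}$, and vanishes otherwise. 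Crucially, orthogonality of $\f{A}^s$ gives $\|(\f{A}^s)^* E_{i,j}\|_F = \|(\f{A}^s)^* e_i\|_2 = 1$, so each contributing block produces exactly $1/d_1$; summing over the $w_t$ nonzero blocks and normalizing yields $\|\f{U}^* \f{B}_{i,j,t}\|_F^2 = 1/d_1$. A symmetric argument on the right gives $\|\f{B}_{i,j,t} \f{V}\|_F^2 = 1/d_2$.

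Feeding these into Lemma \ref{lemma:incoherence:bound:traditional} with $\mu_0 = 1$ gives $\|P_{T_{\f{H}_{\f{A}}}} \f{B}_{i,j,t}\|_F^2 \leq 1/d_1 + 1/d_2 = (T+1)/(d_1 d_2) = c_s/T$, and substituting into the definition \cref{eq:mu:ijk} with $r = n$ yields $\mu_{i,j,t} \leq 1$ for every $(i,j,t) \in I$. This proves part 1 ($\mu_0 \leq 1$), and summing over $|I| = n^2 T$ indices proves part 2 ($\sum_{(i,j,t) \in I} \mu_{i,j,t} \leq n^2 T$). For the sample complexity claim, substituting $\mu_0 = 1, r = n$ into Theorem \ref{thm:convergence:mobilesensors}.1 gives the uniform bound $m \gtrsim c_s n^2 \log(nT)$, and combining $\sum \mu_{i,j,t} \leq n^2 T$ with \cref{eq:mexp:lowerbound} yields the matching adaptive bound $m_{\text{exp}} \gtrsim c_s n^2 \log(nT)$. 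Choosing $d_1, d_2$ so that the block Hankel embedding is approximately square ($d_1 d_2 = \Theta(T^2)$) makes $c_s = T(T+1)/(d_1 d_2) = O(1)$, giving the claimed $\Theta(n^2 \log(nT))$.

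The main obstacle is bookkeeping in the blockwise expansion. The key simplification is that orthogonality forces every nontrivial contribution to be exactly $1/d_1$ (respectively $1/d_2$), independent of the specific power of $\f{A}$, of the entry $(i,j)$, and of the time index $t$; as a result, the incoherence is insensitive to the spectrum and eigenvector structure of $\f{A}$, matching the intuition that energy-preserving dynamics spread information uniformly in time.
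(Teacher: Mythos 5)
Your proof is correct and follows essentially the same route as the paper's: both identify the explicit orthonormal column/row bases $\frac{1}{\sqrt{d_1}}[\f{A};\dots;\f{A}^{d_1}]$ and $\frac{1}{\sqrt{d_2}}[\Id;\dots;\f{A}^{d_2-1}]$, use norm preservation under orthogonal multiplication to get $\|\f{U}^*\f{B}_{i,j,t}\|_F^2=1/d_1$ and $\|\f{B}_{i,j,t}\f{V}\|_F^2=1/d_2$, and then invoke \Cref{lemma:incoherence:bound:traditional} and sum over $I$. Note that your bound $\sum_{(i,j,t)\in I}\mu_{i,j,t}\le n^2T$ is exactly what the paper's own proof derives (the $nT^2$ in the proposition statement appears to be a typo), so there is no discrepancy of substance.
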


\begin{proof} 
First, it is straightforward to verify from a block-wise computation that  $\|\f{U}_{\f{H}_{\f{A}}} ^{*}\f{B}_{i,j,t}\|_{F}=\frac{1}{\sqrt{d_1}}$ for each $(i,j,t) \in I$, since $\|(\f{A}^{j})^{*} \f{M}\|_F = \|\f{M}\|_F$ for each $j$, for each block $\f{M}$ of $\f{B}_{i,j,t}$, due to the preservation of norms through multiplication with orthogonal matrices.  Similarly, $\|\f{B}_{i,j,t}\f{V}_{\f{H}_{\f{A}}} \|_{F}=\frac{1}{\sqrt{d_2}}$ for each $(i,j,t) \in I$.

This implies that \cref{eq:incoherences:Ud1d2} is satisfied with $\mu_0$ as $r = \rank(\f{H}_{\f{A}})= \rank(\f{A}) = n$. In view of \Cref{lemma:incoherence:bound:traditional}, it follows that $\f{H}_{\f{A}}$ is $\mu_0$-incoherent in the sense of \Cref{def:incoherence} with $\mu_0 \leq 1$ since
\begin{equation*}
\| \y{P}_{\f{T}}( \f{B}_{i,j,t})\|_{F}^2 \leq   c_s \frac{1}{T}
\end{equation*}
for each $(i,j,t) \in I$, where $\y{P}_{\f{T}}$ is the projection operator onto the subspace $\f{T}= \f{T}_{\f{H}_{\f{A}}}$. This shows the first statement of \Cref{prop:orthogonalmatrix:incoherences}

Estimating the sum of local incoherences $\mu_{i,j,t}$ of $\f{H}_{\f{A}}$, we obtain

\begin{align*}
&\sum_{(i,j,t) \in I} \mu_{i,j,t}  =  \sum_{1\leq i\leq j\leq n,t=1,\ldots,T}   \frac{T}{c_s}  \|\mathcal{P}_{\f{T}} (\f{B}_{i,j,t})\|_F^2  \\&=  \sum_{1\leq i\leq j\leq n,t=1,\ldots,T}  \frac{d_1 d_2}{(d_1 + d_2)} \|\mathcal{P}_{\f{T}} (\f{B}_{i,j,t})\|_F^2 \nonumber \\
&\leq  \sum_{\substack{1\leq i\leq j\leq n\\ t=1,\ldots,T}}     \frac{ d_1 d_2}{ (d_1 + d_2)} \big(\|\f{U}_{\f{H}_{\f{A}}}^* \f{B}_{i,j,t}\|_F^2+\|\f{B}_{i,j,t}\f{V}_{\f{H}_{\f{A}}} \|_F^2\big) \\
&\leq \sum_{1\leq i\leq j\leq n,t=1,\ldots,T}     \frac{ d_1 d_2}{ (d_1 + d_2)} \left(\frac{1}{d_1}+ \frac{1}{d_2}\right)  \\
&=  \sum_{1\leq i\leq j\leq n,t=1,\ldots,T}  1 = Tn^2,
\end{align*}
using \cref{eq:incoherence:conversion} in the inequality. 

Therefore, in view of \cref{eq:mexp:lowerbound}, a sufficient number of expected space-time samples $m_{\text{exp}}$ to enable the local convergence guarantee of \Cref{prop:orthogonalmatrix:incoherences} is $ m_{\text{exp}} = \Theta(n^2\log(nT))$.
\end{proof}

\subsubsection{Positive Semi-Definite Matrices} \label{coherence:psd}
We now justify the bounds of \Cref{sec:incoherence:estimates} for positive semidefinite transition operators $\f{A} \in \Mn$. To this end, we provide a closed formula for a singular value decomposition for the associated block Hankel matrix $\f{H}_{\f{A}} = \y{H}(\mathcal{Q}_T(\f{A}))$ in \Cref{theorem:factorization}.
\begin{theorem}\label{theorem:factorization}
Suppose $\f{A} = \sum_{\ell=1}^r \lambda_\ell \f{u}_i\f{u}_i^* =  \f{U} \Lambda \f{U}^*$ is a positive semidefinite matrix with $r$ positive eigenvalues $\lambda_1,\ldots,\lambda_r$  and $\f{u}_1,\f{u}_2,\ldots, \f{u}_r$ are the corresponding eigenvectors, so that $\Lambda = \diag(\lambda_1,\ldots,\lambda_r)$ and $\f{U} = \begin{pmatrix} \f{u}_1 & \ldots & \f{u}_r\end{pmatrix}$. For a vector $\f{u} \in \R^n$, an integer $m$ and a scalar $\lambda$, we define
$$\mathcal{V}_{m,\lambda}(\f{u})=\begin{pmatrix} \f{u} \\ \lambda \f{u} \\ \vdots\\ \lambda^{m-1} \f{u}\end{pmatrix} \in \R^{n m}$$
and the \emph{generalized Vandermonde matrix}, in $\y{M}_{n m,r}$,
\begin{align*}
\mathcal{V}_m(\f{U},\Lambda)\!=\!\begin{pmatrix}\vline &  &\vline \\ \mathcal{V}_{m,\lambda_1}(\f{u}_1)&\ldots & \mathcal{V}_{m,\lambda_r}(\f{u}_r) \\ \vline & & \vline \end{pmatrix} 
\!=\! \begin{pmatrix} \f{U} \Lambda^0 \\  \f{U} \Lambda^1 \\ \vdots \\ \f{U} \Lambda^{m-1} \end{pmatrix}.
\end{align*}
Let $\vv{\Lambda^m}:= \diag( \vv{\lambda_\ell^m})_{\ell=1}^r$ with $\vv{\lambda_\ell^m} = \sqrt{\sum_{s=0}^{m-1}\lambda_\ell^{2s}}$ for all $\ell \in [r]$. Then a compact singular value decomposition of $\f{H}_{\f{A}}$ can be written such that
\begin{equation} \label{eq:SVD:blockhankel}
\f{H}_{\f{A}} = \f{U}_{\f{H}_{\f{A}}}\f{D} \f{V}_{\f{H}_{\f{A}}}^*
\end{equation}
where 
\begin{align*}
\f{U}_{\f{H}_{\f{A}}} &= \mathcal{V}_{d_1}(\f{U},\Lambda) (\vv{\Lambda^{d_1}})^{-1}\\ \f{V}_{\f{H}_{\f{A}}} &= \mathcal{V}_{d_1}(\f{U},\Lambda) (\vv{\Lambda^{d_2}})^{-1}\\
\f{D} &= (\vv{\Lambda^{d_1}})\Lambda (\vv{\Lambda^{d_2}})\,.
\end{align*}
In particular, the nonzero singular values of $\f{H}_{\f{A}}$ are 
$$\lambda_\ell\sqrt{ \sum_{t=0}^{d_1-1} \lambda_{\ell}^{2 t}}\sqrt{ \sum_{t=0}^{d_2-1} \lambda_{\ell}^{2 t}}, $$ 
for $\ell=1,\ldots, r$, and the first $r$ right and left singular vectors are $\{(\vv{\lambda_\ell^{d_2}})^{-1} \mathcal{V}_{d_2,\lambda_1}(\f{u}_i) \}_{\ell=1}^{r}$ and  $\{ (\vv{\lambda_\ell^{d_1}})^{-1}\mathcal{V}_{d_1,\lambda_1}(\f{u}_i)\}_{\ell=1}^{r}$, respectively.
\end{theorem}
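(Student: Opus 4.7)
The strategy is a direct verification: I will first check that the proposed product indeed reproduces $\f{H}_{\f{A}}$ block by block, and then show that the factors have the structural properties (orthonormal columns of $\f{U}_{\f{H}_{\f{A}}}, \f{V}_{\f{H}_{\f{A}}}$ and diagonal, nonnegative $\f{D}$) required for a compact SVD.

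For the first step, I would compute the $(i,j)$-th block of the proposed product. Since $\f{U}^*\f{U} = \Id_r$ and $\f{A}^k = \f{U}\Lambda^k \f{U}^*$, the $(i,j)$-block of $\mathcal{V}_{d_1}(\f{U},\Lambda)(\vv{\Lambda^{d_1}})^{-1}\f{D}(\vv{\Lambda^{d_2}})^{-1}\mathcal{V}_{d_2}(\f{U},\Lambda)^*$ equals
\[
\f{U}\Lambda^{i-1}(\vv{\Lambda^{d_1}})^{-1}\cdot(\vv{\Lambda^{d_1}})\Lambda(\vv{\Lambda^{d_2}})\cdot(\vv{\Lambda^{d_2}})^{-1}\Lambda^{j-1}\f{U}^* = \f{U}\Lambda^{i+j-1}\f{U}^* = \f{A}^{i+j-1},
\]
using that all the factors $\vv{\Lambda^{d_1}}, \vv{\Lambda^{d_2}}, \Lambda$ are diagonal and therefore commute. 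By the definition of $\y{H}$, this is exactly the $(i,j)$ block of $\HA$, so the factorization holds.

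For the second step, I would verify orthonormality of the columns of $\f{U}_{\f{H}_{\f{A}}}$ by a direct Gram-matrix computation:
\[
\mathcal{V}_{d_1}(\f{U},\Lambda)^*\mathcal{V}_{d_1}(\f{U},\Lambda) = \sum_{m=0}^{d_1-1}\Lambda^m \f{U}^*\f{U}\Lambda^m = \sum_{m=0}^{d_1-1}\Lambda^{2m} = (\vv{\Lambda^{d_1}})^2,
\]
since $\f{U}$ has orthonormal columns and $\Lambda$ is diagonal. Consequently,
\[
\f{U}_{\f{H}_{\f{A}}}^*\f{U}_{\f{H}_{\f{A}}} = (\vv{\Lambda^{d_1}})^{-1}(\vv{\Lambda^{d_1}})^2(\vv{\Lambda^{d_1}})^{-1} = \Id_r,
\]
and the analogous identity for $\f{V}_{\f{H}_{\f{A}}}$ follows with $d_1$ replaced by $d_2$. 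The matrix $\f{D} = (\vv{\Lambda^{d_1}})\Lambda(\vv{\Lambda^{d_2}})$ is already diagonal with $\ell$-th entry $\lambda_\ell\sqrt{\sum_{t=0}^{d_1-1}\lambda_\ell^{2t}}\sqrt{\sum_{t=0}^{d_2-1}\lambda_\ell^{2t}} > 0$ because all $\lambda_\ell > 0$.

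Finally, since $\f{U}_{\f{H}_{\f{A}}} \in \y{M}_{nd_1,r}$ and $\f{V}_{\f{H}_{\f{A}}} \in \y{M}_{nd_2,r}$ have orthonormal columns and $\f{D}$ is $r \times r$ diagonal with positive diagonal entries, \eqref{eq:SVD:blockhankel} is a compact SVD of $\HA$ (up to permutation that sorts the diagonal of $\f{D}$ in nonincreasing order, which can be absorbed into the columns of $\f{U}_{\f{H}_{\f{A}}}$ and $\f{V}_{\f{H}_{\f{A}}}$). Reading off the singular vectors and singular values column by column yields the remaining assertions. There is no substantive obstacle; the only thing to be careful about is that the Vandermonde-type matrices are defined blockwise and the diagonal normalizers do commute through the eigenbasis, which was the key algebraic observation used above.
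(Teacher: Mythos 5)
Your proof is correct and follows essentially the same route as the paper's: the paper likewise verifies \eqref{eq:SVD:blockhankel} by expanding the right-hand side blockwise and checks $\f{U}_{\f{H}_{\f{A}}}^*\f{U}_{\f{H}_{\f{A}}}=\Id_r$ and $\f{V}_{\f{H}_{\f{A}}}^*\f{V}_{\f{H}_{\f{A}}}=\Id_r$ using $\f{U}^*\f{U}=\Id_r$. Your write-up just spells out the Gram-matrix computation and the commuting-diagonal-factor step that the paper leaves implicit.
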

\begin{proof}
To prove the statements, the equality \cref{eq:SVD:blockhankel} can be verified expanding the right hand side, and furthermore, since $\f{U}^* \f{U} = \Id_{r}$, the orthogonality of the columns of the singular vector matrices can be verified, i.e. $\f{U}_{\f{H}_{\f{A}}}^* \f{U}_{\f{H}_{\f{A}}}  = \Id_{r}$ and $\f{V}_{\f{H}_{\f{A}}}^* \f{V}_{\f{H}_{\f{A}}}  = \Id_{r}$.
\end{proof}

Following the notation in \eqref{eq:SVD:blockhankel}, we compute estimates of $\|\f{U}_{\f{H}_{\f{A}}} (\f{B}_{i,j,t})\|_F$ and $\| (\f{B}_{i,j,t})\f{V}_{\f{H}_{\f{A}}} \|_F$. As a preparation of what follows, we recall from \cref{eq:DEijt:def} that 
\[
\|\y{H}(E_{i,j,t})\|_{F}=
\begin{cases}
 \sqrt{t} & \text{if } t\leq \mathrm{min}\{d_1,d_2\}, \\
 \sqrt{ \mathrm{min}\{d_1,d_2\}}& \text{if }  \begin{split} \mathrm{min}\{d_1,d_2\}  < t \\
 \leq  \mathrm{max}\{d_1,d_2\} \end{split},\\
\sqrt{T+1-t} & \text{if } t > \mathrm{max}\{d_1,d_2\}, \\
\end{cases}
\]
if $E_{i,j,t}$ is the standard basis matrix of index $(i,j,t) \in I$ of $\MnOplus$. 
The above observation yields the following lemma. 
\begin{lemma} \label{estimateslemma}
Let $\f{B}_{i,j,t}  = \mathcal{H}(E_{i,j,t})  / \|\mathcal{H}(E_{i,j,t})\|_F \in \Rddn$ be the standard basis matrix of index $(i,j,t) \in I$ of $\mathcal{H}(\MnOplus)$. For $2 \leq d_1\leq d_2$, we have the following identities if $\f{U}_{\f{H}_{\f{A}}}$ and $\f{V}_{\f{H}_{\f{A}}}$ are as in \Cref{theorem:factorization}:
 \[
 \begin{split}
&\|\f{U}_{\f{H}_{\f{A}}} ^* \f{B}_{i,j,t}\|_F= \\
&=
\begin{cases}
\sqrt{\sum_{\ell=1}^{r}\frac{ \sum_{s=0}^{t-1} \lambda_{\ell}^{2 s}  }{\sum_{s=0}^{d_1-1} \lambda_{\ell}^{2s} } \frac{\|\f{u}_{\ell}^*E_{ij}\|^2}{t}} , & \text{if } t< d_1, \\
\sqrt{\sum_{\ell=1}^{r}\frac{\|\f{u}_{\ell}^*E_{ij}\|^2}{d_1}},   & \text{if }  d_1 \leq t \leq  d_2, \\
\sqrt{\sum_{\ell=1}^{r}\frac{ \sum_{s=t-d_2}^{d_1-1} \lambda_{\ell}^{2s} }{\sum_{s=0}^{d_1-1} \lambda_{\ell}^{2 s}} \frac{\|\f{u}_{\ell}^*E_{ij}\|^2}{T+1-t}},  & \text{if } t> d_2,
\end{cases}
\end{split}
\]
and 
 \[
 \begin{split}
&\|\f{B}_{i,j,t}\f{V}_{\f{H}_{\f{A}}}\|_F = \\
&=
\begin{cases}
\sqrt{\sum_{\ell=1}^{r}\frac{\sum_{s=0}^{t-1}\lambda_{\ell}^{2s}}{\sum_{s=0}^{d_2 - 1}\lambda_{\ell}^{2s}} \frac{\|E_{ij}\f{u}_{\ell}\|^2}{t}}, & \text{if } t< d_1, \\
\sqrt{\sum_{\ell=1}^{r}\frac{\sum_{s=t-d_1}^{t-1}\lambda_{\ell}^{2s}}{\sum_{s=0}^{d_2-1}\lambda_{\ell}^{2s}} \frac{\|E_{ij}\f{u}_{\ell}\|^2}{d_1}},   & \text{if }  d_1 \leq t \leq  d_2,\\
\sqrt{\sum_{\ell=1}^{r}\frac{\sum_{s=t-d_1}^{d_2-1}\lambda_{\ell}^{2s}}{\sum_{s=0}^{d_2-1}\lambda_{\ell}^{2s}} \frac{\|E_{ij}\f{u}_{\ell}\|^2}{T+1-t}},  & \text{if } t> d_2. \\
\end{cases}
\end{split}
\] 
\end{lemma}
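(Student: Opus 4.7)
The plan is to evaluate both Frobenius norms by direct block-wise multiplication, using the explicit closed-form SVD $\f{H}_{\f{A}} = \f{U}_{\f{H}_{\f{A}}}\f{D}\f{V}_{\f{H}_{\f{A}}}^{*}$ from \Cref{theorem:factorization}, together with the explicit block structure of $\y{H}(E_{i,j,t})$. Since $\f{B}_{i,j,t} = \y{H}(E_{i,j,t})/\|\y{H}(E_{i,j,t})\|_F$, it suffices to compute $\|\f{U}_{\f{H}_{\f{A}}}^{*}\y{H}(E_{i,j,t})\|_F^2$ and $\|\y{H}(E_{i,j,t})\f{V}_{\f{H}_{\f{A}}}\|_F^2$, and divide by $\|\y{H}(E_{i,j,t})\|_F^2$, which by \Cref{lemma:Hankel:action} is $t$, $d_1$, or $T+1-t$ depending on whether $t<d_1$, $d_1\le t \le d_2$, or $t>d_2$.

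First I would write $\f{U}_{\f{H}_{\f{A}}}^{*}$ as the horizontal block matrix with $k$-th block $(\vv{\Lambda^{d_1}})^{-1}\Lambda^{k-1}\f{U}^{*}$ for $k=1,\ldots,d_1$, and record that $\y{H}(E_{i,j,t})$ is the block matrix whose $(k,\ell)$-block equals $E_{ij}$ exactly when $k+\ell = t+1$ with $1\le k \le d_1$, $1\le \ell \le d_2$, and is zero otherwise. Multiplying block-wise, the $\ell$-th block column of $\f{U}_{\f{H}_{\f{A}}}^{*}\y{H}(E_{i,j,t})$ equals $(\vv{\Lambda^{d_1}})^{-1}\Lambda^{t-\ell}\f{U}^{*}E_{ij}$ for each $\ell$ in the index range $[\max(1,t+1-d_1),\min(d_2,t)]$, and is zero outside this range. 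Taking Frobenius norms and using that the block columns are independent contributions to $\|\cdot\|_F^2$, I obtain
\[
\bigl\|\f{U}_{\f{H}_{\f{A}}}^{*}\y{H}(E_{i,j,t})\bigr\|_F^2 = \sum_{\ell}\sum_{m=1}^{r}\frac{\lambda_m^{2(t-\ell)}}{\sum_{s=0}^{d_1-1}\lambda_m^{2s}}\|\f{u}_m^{*}E_{ij}\|^2.
\]
Substituting $s'=t-\ell$ converts the $\ell$-sum into a sum over a contiguous range of powers of $\lambda_m^2$, and a case split on $t$ relative to $d_1$ and $d_2$ determines these endpoints: they are $[0,t-1]$ when $t<d_1$, the full range $[0,d_1-1]$ when $d_1\le t\le d_2$, and $[t-d_2,d_1-1]$ when $t>d_2$. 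Dividing by $\|\y{H}(E_{i,j,t})\|_F^2$ in each case yields exactly the three formulas claimed for $\|\f{U}_{\f{H}_{\f{A}}}^{*}\f{B}_{i,j,t}\|_F$.

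The second identity is established in a symmetric fashion: multiplying $\y{H}(E_{i,j,t})$ on the right by $\f{V}_{\f{H}_{\f{A}}}$ (whose $\ell$-th block row is $\f{U}\Lambda^{\ell-1}(\vv{\Lambda^{d_2}})^{-1}$), the $k$-th block row of the product is $E_{ij}\f{U}\Lambda^{t-k}(\vv{\Lambda^{d_2}})^{-1}$ for $k\in[\max(1,t+1-d_2),\min(d_1,t)]$. Squaring, summing, and performing the same reindexing leads to the stated three-case formula.

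The only non-mechanical step is the bookkeeping of the summation endpoints as $t$ traverses the three regimes, which is the main source of potential error but is straightforward given the transparent block structure of $\y{H}(E_{i,j,t})$. All algebraic simplifications reduce to recognizing that $\sum_{s\in J}\lambda_m^{2s}$ appears in the numerator while $\|\vv{\Lambda^{d_1}}\|$-type sums appear in the denominator, yielding the ratios displayed in the lemma.
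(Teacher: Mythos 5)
Your proposal is correct, and it is exactly the direct block-wise computation the paper intends: the paper states this lemma without proof (``The above observation yields the following lemma''), relying on the explicit SVD of \Cref{theorem:factorization} and the block structure of $\y{H}(E_{i,j,t})$, which is precisely what you carry out. Your bookkeeping of the summation range $\ell\in[\max(1,t+1-d_1),\min(d_2,t)]$ (and its analogue for the right multiplication) and the resulting endpoints in the three regimes all check out against the stated formulas.
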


We restrict our attention to \emph{normalized} positive semidefinite transition operators $\f{A}$ whose eigenvalues are within the interval $[0,1]$. To simplify the analysis and avoid unnecessary technicalities, we restrict ourselves to the case of $d_1=d_2$.

\begin{proposition}\label{estimates:positivedefinite} Let $\f{A}$ be a positive semidefinite transition operator as in \ref{theorem:factorization} and assume that $0 < \lambda_r \leq \ldots \leq \lambda_1 \leq 1$. Assume also that $d_1= d_2$ and $T=d_1+d_2-1 \geq 3$. Then:
\begin{enumerate}
 \item  $\f{H}_{\f{A}} = \y{H}(\mathcal{Q}_T(\f{A}))$ is $\mu_0$-incoherent with
 $$\mu_0  \leq  \max_{1\leq i\leq n}  \sum_{\ell=1}^{r}\frac{nd_2 (\f{u}_{\ell})_i^2}{r(1+\lambda_{\ell}^2+\cdots+\lambda_{\ell}^{2(d_2-1)})}.$$
 \item The local incoherences of $\f{H}_{\f{A}}$ satisfy 
 \begin{align}
 &\sum_{(i,j,t) \in I } \mu_{i,j,t} \notag  \\
 & \leq \frac{n^2 d_2}{r} \left(\sum_{t=1}^{d_2-1}\sum_{\ell=1}^{r} \frac{(1+\lambda_{\ell}^{2(d_2-t)})\sum_{s=0}^{t-1}\lambda_{\ell}^{2s}}{t\sum_{s=0}^{d_2-1}\lambda_{\ell}^{2s}}+\frac{1}{d_2} \right) \label{eq:loc:psd:incoherence:1} \\
 &\leq 4.4 n^2 T \log(T).   \label{eq:loc:psd:incoherence:2}
 \end{align}
Consequently, for the adaptive sampling model, it is possible to satisfy the assumption of \Cref{thm:convergence:mobilesensors} with
  \[
  m_{\text{exp}} = \mathbb{E}[|\Omega|] = \Theta( rn\log(nT)\log(T)) 
  \]
  expected samples. 
 \end{enumerate}
\end{proposition}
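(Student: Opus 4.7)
The proof will combine three ingredients that are already available in the paper: the explicit singular value decomposition of $\f{H}_{\f{A}}=\y{H}(\mathcal{Q}_T(\f{A}))$ from \Cref{theorem:factorization}, the tangent-space inequality $\|\mathcal{P}_{\f{T}}(\f{B}_{i,j,t})\|_F^2 \leq \|\f{U}_{\f{H}_{\f{A}}}^{*}\f{B}_{i,j,t}\|_F^2 + \|\f{B}_{i,j,t}\f{V}_{\f{H}_{\f{A}}}\|_F^2$ established inside the proof of \Cref{lemma:incoherence:bound:traditional}, and the explicit three-case formulas of \Cref{estimateslemma} for $\|\f{U}_{\f{H}_{\f{A}}}^{*}\f{B}_{i,j,t}\|_F$ and $\|\f{B}_{i,j,t}\f{V}_{\f{H}_{\f{A}}}\|_F$. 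Throughout, I will use the assumption $d_1=d_2$, and record that $c_s = T(T+1)/(d_1 d_2)$ together with $T=2d_2-1$ give $nT/(c_s r) = n d_2/(2r)$, which will convert all $L^2$ tangent-space norms into local incoherences.

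For part~1, I will bound each of the three cases of \Cref{estimateslemma} by a $t$-independent expression. In the range $t<d_1$, the elementary inequality $\sum_{s=0}^{t-1}\lambda_{\ell}^{2s} \leq t$ (which uses $\lambda_{\ell}\leq 1$) yields $\|\f{U}_{\f{H}_{\f{A}}}^{*}\f{B}_{i,j,t}\|_F^{2} \leq \sum_{\ell}(\f{u}_{\ell})_i^{2}/\sum_{s=0}^{d_2-1}\lambda_{\ell}^{2s}$. The middle case $t=d_1$ is an equality that satisfies the same bound since $\sum_{s=0}^{d_2-1}\lambda_{\ell}^{2s}\leq d_2$. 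The case $t>d_2$ reduces to the first via the substitution $t':=T+1-t$, after noting that $\sum_{s=t-d_2}^{d_1-1}\lambda_{\ell}^{2s} = \lambda_{\ell}^{2(d_2-t')}\sum_{s=0}^{t'-1}\lambda_{\ell}^{2s} \leq t'$. The $\f{V}$-side estimate is symmetric in $j$. Taking the maximum over $(i,j,t)$, multiplying by $n d_2/(2r)$, and rearranging delivers the claimed bound on $\mu_0$.

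For part~2, I will sum the closed-form expressions over all $(i,j,t)\in I$, using $\sum_i(\f{u}_{\ell})_i^{2}=1$ to collapse the spatial sums (which produces a factor of $n$ per sum). The contributions for $t<d_1$ and $t>d_2$ are paired via the involution $t\leftrightarrow T+1-t$; here the exponent-shift identity $\sum_{s=t-d_2}^{d_1-1}\lambda_{\ell}^{2s} = \lambda_{\ell}^{2(d_2-t')}\sum_{s=0}^{t'-1}\lambda_{\ell}^{2s}$ is what produces the symmetric factor $(1+\lambda_{\ell}^{2(d_2-t)})$ in the stated formula, while the single middle term $t=d_1$ contributes the residual $r/d_2$. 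Multiplication by $n d_2/(2r)$ yields \eqref{eq:loc:psd:incoherence:1}. To pass to \eqref{eq:loc:psd:incoherence:2}, the main estimate is the term-by-term inequality
\begin{equation*}
\frac{(1+\lambda_{\ell}^{2(d_2-t)})\sum_{s=0}^{t-1}\lambda_{\ell}^{2s}}{t\sum_{s=0}^{d_2-1}\lambda_{\ell}^{2s}} \leq \frac{2}{t},
\end{equation*}
which holds because each of the two factors $\sum_{s=0}^{t-1}\lambda_{\ell}^{2s}\big/\sum_{s=0}^{d_2-1}\lambda_{\ell}^{2s} \leq 1$ and $1+\lambda_{\ell}^{2(d_2-t)}\leq 2$ is easy to verify. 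Summing gives $2r H_{d_2-1}$, and the harmonic-number estimate $H_{d_2-1}\leq \ln(d_2-1)+\gamma \leq \ln(2d_2-1)=\log T$ (valid for all $d_2\geq 2$, hence $T\geq 3$) combined with the residual term and a short numerical check for $T=3$ upgrades the coefficient to $4.4$. The sample complexity $m_{\text{exp}}=\Theta(rn\log(nT)\log T)$ is then immediate from plugging into \eqref{eq:mexp:lowerbound} and using $c_s=O(1)$.

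The main obstacle is the bookkeeping of the three cases in \Cref{estimateslemma} together with the symmetric pairing via $t\leftrightarrow T+1-t$: one has to identify the exponent shift that makes the $(1+\lambda^{2(d_2-t)})$ factor appear as a genuine simplification rather than a loose upper bound, which is what allows the final sum to be controlled by a harmonic series rather than a sum whose terms can be as large as $O(1)$. The remaining verification of the $4.4$ constant is routine calculus once the correct asymptotic regime ($T\to\infty$ vs.\ $T=3$) is identified.
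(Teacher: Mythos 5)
Your proposal is correct and follows essentially the same route as the paper's proof: both parts rest on the explicit SVD of \Cref{theorem:factorization}, the bound $\|\mathcal{P}_{\f{T}}(\f{B}_{i,j,t})\|_F^2 \leq \|\f{U}_{\f{H}_{\f{A}}}^{*}\f{B}_{i,j,t}\|_F^2 + \|\f{B}_{i,j,t}\f{V}_{\f{H}_{\f{A}}}\|_F^2$ from the proof of \Cref{lemma:incoherence:bound:traditional}, the case analysis of \Cref{estimateslemma} with the $t\leftrightarrow T+1-t$ pairing and exponent shift producing the factor $1+\lambda_{\ell}^{2(d_2-t)}$, and finally the termwise bound by $2/t$ plus a harmonic-sum estimate (the paper phrases the monotonicity in $t$ via the auxiliary function $g(\lambda,d,t)$, which is the same computation as your $\sum_{s=0}^{t-1}\lambda_{\ell}^{2s}\leq t$). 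One small repair: the inequality $H_{d_2-1}\leq \ln(d_2-1)+\gamma$ goes the wrong way ($H_n-\ln n$ decreases \emph{to} $\gamma$ from above, so $H_n\geq\ln n+\gamma$); use instead $H_{d_2-1}<\gamma+\ln(d_2)$ as in the paper, or verify $H_{d_2-1}\leq\ln(2d_2-1)=\log T$ directly — either way the constant $4.4$ still holds with room to spare.
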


\begin{proof}
1. Define the function $g: [0,1] \times \N \times \N \mapsto \R$ such that 
\begin{equation} \label{eq:g:definition}
g(\lambda,d, t)=\frac{ \sum_{s=0}^{t-1} \lambda^{s} }{t \sum_{s=0}^{d-1} \lambda^{s}}.
\end{equation}
We observe that $g$ has following properties:
\begin{itemize}
\item For a fixed $d \in \mathbb{N}$ and $ \lambda \in [0, 1]$, $g(\lambda,d, t)$ is a decreasing function with respect to $t \in \mathbb{N}$.
\item For fixed $d  \in \mathbb{N}$  and $t  \in \mathbb{N}$, $g(\lambda,d, t)$ is a decreasing function with respect to $\lambda$ on $[0,\infty)$.
\end{itemize}
Using Lemma \ref{estimateslemma} and the properties of the function $g$, we obtain that
\begin{align*}
\max_{1 \leq i \leq j \leq n, 1 \leq t \leq T} \|\f{U}_{\f{H}_{\f{A}}}^* (\f{B}_{i,j,t})\|_F &=\max_{1 \leq i \leq j \leq n} \|\f{U}_{\f{H}_{\f{A}}}^* (\f{B}_{i,j,1})\|_F\\
&= \max_{1\leq i\leq n}\sqrt{\sum_{\ell=1}^{r}\frac{ (\f{u}_{\ell})_i^2}{\sum_{s=0}^{d_1-1}\lambda_{\ell}^{2s}}}.
\end{align*}
By symmetry, we have
\begin{align*}
\max_{1 \leq i \leq j \leq n, 1 \leq t \leq T}\|\f{B}_{i,j,t}\f{V}_{\f{H}_{\f{A}}}\|_F&= \max_{1\leq i\leq n}\sqrt{\sum_{\ell=1}^{r}\frac{ (\f{u}_{\ell})_i^2}{\sum_{s=0}^{d_2-1}\lambda_{\ell}^{2s}}}.
\end{align*}
From this and from the fact that $d_1 \leq d_2$, we see that we can choose 
$$\mu_0 =\max_{1\leq i\leq n}{\sum_{\ell=1}^{r}\frac{nd_2 (\f{u}_{\ell})_{i}^2}{r(\sum_{s=0}^{d_1-1}\lambda_{\ell}^{2s})}}$$ 
to satisfy the inequalities of \eqref{eq:incoherences:Ud1d2}, from which it follows that $\f{H}_{\f{A}}$ is $\mu_0$-incoherent by \Cref{lemma:incoherence:bound:traditional}.

2. To obtain an upper bound for $\sum_{i,j,t}\mu_{i,j,t}$, we use that
\begin{equation*}
\begin{split}
&\sum_{1\leq i ,j\leq n}\left(\|\f{U}_{\f{H}_{\f{A}}}^* \f{B}_{i,j,t}\|_F^2+\|\f{B}_{i,j,t}\f{V}_{\f{H}_{\f{A}}}\|_F^2 \right) \\
&= 
\begin{cases}
\sum_{\ell=1}^{r} 2 g(\lambda_{\ell}^2,d_2,t) n, & \text{ for }t < d_2, \\
\frac{ 2n}{d_2}, & \text{ for } t = d_2, \\
\sum_{\ell=1}^{r} 2 \lambda_{\ell}^{2(t-d_2)}g(\lambda_{\ell}^2,d_2,T-t+1) n,  & \text{ for }t > d_2.
\end{cases}
\end{split}
\end{equation*}
Therefore
 \begin{align*}
&\sum_{1\leq i, j\leq n} \sum_{t=1}^{T}\big( \| \f{U}_{\f{H}_{\f{A}}}^* (\f{B}_{i,j,t})\|_F^2+\|\f{B}_{i,j,t}\f{V}_{\f{H}_{\f{A}}}\|_F^2\big) \\
&= \sum_{t=1}^{d_2-1}\sum_{\ell=1}^{r} 2(1+\lambda_{\ell}^{2(d_2-t)})g(\lambda_{\ell}^2,d_2,t) n+\frac{2n}{d_2} = 2 G(\Lambda,T) n,
\end{align*} where $G(\Lambda,T):= \sum_{t=1}^{d_2-1} \sum_{\ell=1}^{r} (1+\lambda_{\ell}^{2(d_2-t)})g(\lambda_{\ell}^2,d_2,t) +\frac{1}{d_2}$ is a constant that only depends on the eigenvalues $(\lambda_{\ell})_\ell$, $d_2$ and $T$.  Finally, it follows then from \cref{eq:incoherence:conversion} that
\begin{align*}
\sum_{(i,j,t) \in I}\!\!  \mu_{i,j,t} 
&\leq \!\!\! \sum_{(i,j,t) \in I}     \frac{n d_2^2}{2r d_2} (\|\f{U}_{\f{H}_{\f{A}}}^* \f{B}_{i,j,t}\|_F^2\! +\! \|\f{B}_{i,j,t}\f{V}_{\f{H}_{\f{A}}}\|_F^2) \\
&\leq   \frac{n^2 d_2}{r} G(\Lambda,T),
\end{align*} 
which amounts to the first desired bound \cref{eq:loc:psd:incoherence:1}. Moreover, using the properties of $g$ described previously, one can show that 
 \[
 g(\lambda_{\ell}^2,d_2,t) \leq  g(0, d_2,t)=\frac{1}{t}.
\] 
Since furthermore $\sum_{t=1}^{d_2-1}\frac{1}{t} < c_\gamma + \log (d_2)$, where $c_\gamma < 0.58$ is the Euler-Mascheroni constant, we have 
  \begin{equation*} 
  \begin{split}
 G(\Lambda,L) &\leq \sum_{t=1}^{d_2-1} \sum_{\ell=1}^{r} \frac{1+\lambda_{\ell}^{2(d_2-t)}}{t} +\frac{1}{d_2} \\
 &\leq  2r(c_\gamma + \log(d_2)) +\frac{1}{d_2}  \\
 &\leq (1.16+ 2 \log(d_2) + 1/(r d_2)) r  \leq 4.4r \log(T),
 \end{split}
 \end{equation*} 
 using that $\lambda_\ell \leq 1$ for all $\ell \in [r]$ in the second inequality and $2 \leq d_2 \leq T$ in the last inequality. 
 This yields the second desired bound \cref{eq:loc:psd:incoherence:2} and concludes the proof.
 
Therefore, analgously as to the argument in the proof of \Cref{prop:orthogonalmatrix:incoherences}, it follows that a sufficient number of expected space-time samples $m_{\text{exp}}$ to enable the local convergence guarantee of \Cref{prop:orthogonalmatrix:incoherences} is $ m_{\text{exp}} = \Theta(r n \log(nT)\log(T))$.
\end{proof}

We conclude this section by noting that if $\f{A}$ is a rank-$r$ projection, this amounts to a positive semidefinite transition operator with $\lambda_1 = \lambda_2 = \ldots = \lambda_r = 1$. In this case, the function $g$ of \cref{eq:g:definition} can be simplified to $g(1,d, t) = \frac{1}{d}$, which simplifies the expression for $G(\Lambda,L)$ to $G(\Lambda,L) = (2 d_2-1)/d_2 = 2 - \frac{1}{d_2}$. This means that in fact, for rank-$r$ projection matrices, $m_{\text{exp}} = \Theta( r n  \log(nT))$ expected samples in the adaptive regime are sufficient.

\subsection{Proofs} \label{sec:proofs:appendix}
In the next sections, we provide the proofs of the main local convergence result for \texttt{TOIRLS}, \Cref{thm:convergence:mobilesensors}, as well as the proofs of \Cref{lemma:main:sampling,eq:MatrixIRLS:tangentspace:localRIP:perturbation,lemma:MatrixIRLS:localRIP,thm:convergence:template} which are auxiliary results for proving \Cref{thm:convergence:mobilesensors}.

\subsection{Proof of \Cref{thm:convergence:mobilesensors}} \label{sec:thm:convergence:mobilesensors:proof}
In this section, we provide the proof of \Cref{thm:convergence:mobilesensors}, which is based on combining \Cref{thm:convergence:template} and \Cref{lemma:main:sampling}. 
As an additional ingredient, we bound the spectral norm $\| \y{R}_{\Omega}\|$ of the normalized sampling operators $\y{R}_{\Omega}$ of \cref{def:ROmega,def:ROmega:adaptive:mobile}.

\begin{lemma} \label{lem:repetition}
Let $\Omega$ be a random subset of the index set $I= [n] \times [n] \times [T]$ of size $m$ that is sampled uniformly i.i.d. with replacement, where $m < n^2 T$. Let $\beta > 1$. Then with probability at least $1-(n^2 T)^{1-\beta}$, the maximal number of repetitions of any entry in $\Omega$ is less than 
$\frac{8}{3}\beta\log(n T)$ for $n \sqrt{T} \geq 9$ and $\beta > 1$.

 Consequently, we have that with probability of at least $1-(n^2 T)^{1-\beta}$, the operator $\y{R}_{\Omega}: \Rdd \to \Rdd$ of \cref{def:ROmega}
fulfills
\[
\|\y{R}_\Omega\| \leq \frac{8}{3}\beta \frac{n^2 T}{m} \log(n T),
\]
where $\|\y{R}_\Omega\|$ is the spectral norm of $\y{R}_\Omega$.
\end{lemma}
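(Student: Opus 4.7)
The plan is to reduce the repetition count of a fixed index to a standard binomial tail bound, union-bound over indices, and then observe that the normalized sampling operator is diagonal in the standard basis so its spectral norm is essentially the maximum repetition count. Concretely, for each $(i,j,t) \in I$, let $X_{i,j,t}$ denote the number of times the triple $(i,j,t)$ appears in the multiset $\Omega$. Under uniform i.i.d.\ sampling with replacement from $I$, we have $X_{i,j,t} \sim \mathrm{Binomial}(m,p)$ with $p = 1/(n^2 T)$, so $\mathbb{E}[X_{i,j,t}] = m/(n^2T) < 1$.

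Next I would apply the standard multiplicative Chernoff bound in its ``high-$k$'' form
\[
\mathbb{P}(X_{i,j,t} \geq k) \;\leq\; \binom{m}{k} p^k \;\leq\; \Bigl(\frac{emp}{k}\Bigr)^{k} \;\leq\; \Bigl(\frac{e}{k}\Bigr)^{k},
\]
valid for $k \geq emp$, where the last inequality uses $mp < 1$. For $k = \lceil \tfrac{8}{3} \beta \log(nT)\rceil$, taking logarithms reduces the tail inequality to the requirement $k(\log k - 1) \geq \beta \log(n^2 T)$. Since $\log(n^2 T) \leq 2\log(nT)$, this in turn reduces to $\log\!\bigl(\tfrac{8}{3}\beta \log(nT)\bigr) \geq 7/4$, i.e.\ $\tfrac{8}{3}\beta \log(nT) \geq e^{7/4}$. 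The assumption $n\sqrt{T} \geq 9$ forces $nT \geq 9$ and thus $\log(nT) \geq \log 9$, and combined with $\beta > 1$ gives $\tfrac{8}{3}\beta \log(nT) \geq \tfrac{8}{3} \log 9 > e^{7/4}$, closing the estimate. Hence $\mathbb{P}(X_{i,j,t} \geq \tfrac{8}{3}\beta\log(nT)) \leq (n^2 T)^{-\beta}$.

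A union bound over the $n^2 T$ indices then yields
\[
\mathbb{P}\Bigl(\max_{(i,j,t)\in I} X_{i,j,t} \geq \tfrac{8}{3}\beta \log(nT)\Bigr) \;\leq\; n^2 T \cdot (n^2 T)^{-\beta} \;=\; (n^2 T)^{1-\beta},
\]
which gives the claimed high-probability bound on the maximal multiplicity. For the operator-norm consequence, I would unpack the definition \eqref{def:ROmega}: since $\langle E_{i,j,t},E_{i',j',t'}\rangle_F = \delta_{(i,j,t),(i',j',t')}$, the operator $\mathcal{R}_\Omega$ acts diagonally on the orthonormal basis $\{E_{i,j,t}\}$ of $\Mn^{\oplus T}$ with eigenvalues $\tfrac{n^2 T}{m} X_{i,j,t}$. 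Its spectral norm therefore equals $\tfrac{n^2 T}{m}\max_{(i,j,t)} X_{i,j,t}$, and substituting the high-probability repetition bound gives $\|\mathcal{R}_\Omega\| \leq \tfrac{8}{3}\beta \tfrac{n^2 T}{m} \log(nT)$ on the same event.

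The only delicate point is the Chernoff calibration with the explicit constant $8/3$: one must verify that the tail inequality $(e/k)^k \leq (n^2 T)^{-\beta}$ holds for $k = \tfrac{8}{3}\beta\log(nT)$ under the mild size assumption $n\sqrt{T}\geq 9$. Everything else is a routine union bound and the structural observation that $\mathcal{R}_\Omega$ is diagonal in the canonical basis, which makes its spectral norm trivially equal to its largest diagonal entry.
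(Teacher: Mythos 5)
Your argument is correct and is precisely the "simple adaptation of \cite[Proposition 5]{recht}" that the paper invokes without writing out: a binomial tail bound $\mathbb{P}(X_{i,j,t}\geq k)\leq\binom{m}{k}p^k\leq (e/k)^k$ for the multiplicity of each index, a union bound over the $n^2T$ indices, and the observation that $\mathcal{R}_\Omega$ is diagonal in the basis $\{E_{i,j,t}\}$ so its spectral norm is $\tfrac{n^2T}{m}$ times the maximal multiplicity. Your calibration of the constant $8/3$ via $\tfrac{8}{3}\log 9>e^{7/4}$ under $n\sqrt{T}\geq 9$ checks out, so nothing further is needed.
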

The proof of \Cref{lem:repetition} is a simple adaptation of \cite[Proposition 5]{recht}.
We proceed to the proof of our main result, \Cref{thm:convergence:mobilesensors}.
\begin{proof}[{Proof of \Cref{thm:convergence:mobilesensors}.1}]
By choosing $\beta = 2$ in \Cref{lem:repetition}, it follows that with probability of at least $1-(n^2 T)^{-1}$,
\begin{equation} \label{eq:lem:repetition:implication:uniformmobile}
\|\y{R}_\Omega\|_{2} \leq \frac{16}{3} \frac{n^2 T}{m} \log(n T)
\end{equation}
Recall that $d= \min(d_1,d_2)$ was chosen to be the minimum of the pencil parameters $d_1$ and $d_2$, which satisfy $d_1 + d_2 -1 = T$. Let $c_0$ be the constant of \Cref{thm:convergence:template} and $C$ the constant of \Cref{lemma:main:sampling}.1.

Fix now $\alpha = 1/5$. From the statement of \Cref{lemma:main:sampling}.1, if follows that if
\begin{equation} \label{eq:m:bound1}
m \geq  25 C c_s \mu_0 r n \log(nT),
\end{equation}
with a probability at least $1-n^{-2}$ the normalized sampling operator $\y{R}_{\Omega}: \MnOplus  \to \MnOplus$ of \cref{def:ROmega} satisfies
\begin{equation*}
\left\|\mathcal{P}_{\f{T}}\mathcal{G}\mathcal{R}_{\Omega}\mathcal{G}^{*}\mathcal{P}_{\f{T}}- \mathcal{P}_{\f{T}}\mathcal{G}\mathcal{G}^{*}\mathcal{P}_{\f{T}}\right\| \leq \frac{1}{5},
\end{equation*}
i.e., \Cref{assumption:1} is satisfied with respect to $\f{T}= \f{T}_{\f{H}_{\f{A}}}$ and constant $\alpha = 1/5$.

Let now $\widetilde{\f{X}}\hk$ be such that $\y{H}(\widetilde{\f{X}}\hk)$ satisfies assumption \cref{eq:proximity:assumption:mainthm:1} of \Cref{thm:convergence:mobilesensors}.1. It follows from \Cref{thm:convergence:template} and \cref{eq:lem:repetition:implication:uniformmobile} that on an event $E$ of probability of at least $1-(n^2 T)^{-1} - n^{-2} \geq 1- 2n^{-2} $, if $c_0$ is the constant of \Cref{thm:convergence:template}, $C$ the constant of \cref{eq:sample:complexity:condition1} and $\widetilde{c}_0 := c_0 (75 C c_s/16 )^{3/2} $, it holds that
\[
\begin{split}
&\|\y{H}(\widetilde{\f{X}}\hk) - \f{H}_{\f{A}}\| \\&\leq \widetilde{c}_0 \mu_0^{3/2} (nT)^{-3/2} r^{1/2} \kappa^{-1}  (dn-r)^{-1/2} \sigma_r(\f{H}_{\f{A}})  \\
&= c_0 \frac{ (25 C c_s)^{3/2} \mu_0^{3/2} r^{1/2}}{(16/3)^{3/2} n^{3/2} T^{3/2} \kappa  (dn-r)^{1/2}} \sigma_r(\f{H}_{\f{A}})   \\
&= c_0 \frac{ (25 C c_s)^{3/2} \mu_0^{3/2} r^{3/2} n^{3/2} \log^{3/2}(nT) \sigma_r(\f{H}_{\f{A}})}{(16/3)^{3/2} n^{3} T^{3/2} \log^{3/2}(nT) r \kappa  (dn-r)^{1/2}} \\
&\leq c_0 \frac{m^{3/2}\sigma_r(\f{H}_{\f{A}}) }{(16/3)^{3/2} n^{3} T^{3/2} \log^{3/2}(nT) r \kappa  (dn-r)^{1/2}} \\
&\leq c_0 \frac{\sigma_r(\f{H}_{\f{A}})}{\|\y{R}_{\Omega}\|^{3/2} r \kappa  (dn-r)^{1/2}}, \\
\end{split}
\]
using also \cref{eq:m:bound1} in the second inequality.

Therefore, the conclusion of \Cref{thm:convergence:template} holds with constant (see the proof of \Cref{thm:convergence:template} in \Cref{sec:weightoperator:structure})
\begin{align*}
\nu &= \frac{20}{3 \sigma_r(\f{H}_{\f{A}})} (1 + 6 \kappa) \left(\|\y{R}_{\Omega}\| + 8/5 \right) r \\&\leq  \frac{20}{3 \sigma_r(\f{H}_{\f{A}})} (1 + 6 \kappa) \left(\frac{16}{3} \frac{n^2 T}{m} \log(n T)+ 8/5 \right) r,
\end{align*}
which means that $\|\y{H}(\widetilde{\f{X}}\hkk) - \f{H}_{\f{A}}\|  \leq \min(\nu \|\y{H}(\widetilde{\f{X}}\hk)-\f{H}_{\f{A}}\|^{2} , \|\y{H}(\widetilde{\f{X}}\hk)-\f{H}_{\f{A}}\|)$ and furthermore, $\y{H}(\widetilde{\f{X}}^{(k+\ell)}) \xrightarrow{\ell \to \infty} \f{H}_{\f{A}}$, on the event $E$ from above.

This finishes the proof of \Cref{thm:convergence:mobilesensors}.1.
\end{proof}
\begin{proof}[{Proof of \Cref{thm:convergence:mobilesensors}.2}]
Let $C >0$ be the constant of \cref{eq:sample:complexity:condition:localcoherence}. To show \Cref{thm:convergence:mobilesensors} in the case of adaptive sampling, we recall the definition 
\begin{equation*}
\f{L} \to \y{R}_{\Omega}(\f{L}) = \sum_{(i,j,t) \in \Omega} \frac{1}{p_{i,j,t}} \langle E_{i,j,t}, \f{L}\rangle_F E_{i,j,t}.
\end{equation*}
of the normalized sampling operator $\y{R}_{\Omega}: \MnOplus  \to \MnOplus$ in this case, cf. \cref{def:ROmega:adaptive:mobile}.

Fix $\alpha = 1/5$. It follows from the definition of $\y{R}_{\Omega}$ and the Bernoulli sampling model that
\begin{equation} \label{eq:lem:repetition:implication:adaptivemobile}
\left\| \y{R}_{\Omega}\right\| \leq  \min_{(i,j,t) \in I} \frac{1}{p_{i,j,t}} \leq \frac{n T}{25 C c_s r \log(n T) \min_{(i,j,t) \in I} \mu_{i,j,t} },
\end{equation}
using assumption \cref{eq:sample:complexity:condition:localcoherence} in the last inequality. Under the same assumption, it follows from \Cref{lemma:main:sampling}.2 that with probability at least $1- n^{-2}$, the local isometry property on $\f{T} = \f{T}_{\HA}$ with constant $1/5$ holds, i.e.,
\begin{equation*}
\left\|\mathcal{P}_{\f{T}}\mathcal{G}\mathcal{R}_{\Omega}\mathcal{G}^{*}\mathcal{P}_{\f{T}}- \mathcal{P}_{\f{T}}\mathcal{G}\mathcal{G}^{*}\mathcal{P}_{\f{T}}\right\| \leq \frac{1}{5},
\end{equation*}
which entails that \Cref{assumption:1} is satisfied for $\alpha = 1/5$. As above, it follows from \Cref{thm:convergence:template} and \cref{eq:lem:repetition:implication:adaptivemobile} that on an event of probability at least $1-n^{-2} \geq 1 - 2n^{-2}$, if $c_0$ is the constant of \Cref{thm:convergence:template}, $C$ the constant of \cref{eq:sample:complexity:condition:localcoherence} and $\widetilde{c}_0 = c_0 (25 C c_s )^{3/2}$,
\[
\begin{split}
&\|\y{H}(\widetilde{\f{X}}\hk) - \f{H}_{\f{A}}\| \\&\leq
\widetilde{c_0}  \frac{ \min_{(i,j,t) \in I} \mu_{i,j,t}^{3/2} r^{1/2} \log^{3/2}(nT)}{ (nT)^{3/2} \kappa  (dn-r)^{1/2}} \sigma_r(\f{H}_{\f{A}})   \\
&\leq c_0 \frac{(25 C c_s)^{3/2} r^{1/2} \log^{3/2}(n T) \min_{(i,j,t) \in I} \mu_{i,j,t}^{3/2} }{ (n T)^{3/2} \kappa  (dn-r)^{1/2}} \sigma_r(\f{H}_{\f{A}}) \\
&\leq c_0 \frac{1}{\|\y{R}_{\Omega}\|^{3/2} r \kappa  (dn-r)^{1/2}} \sigma_r(\f{H}_{\f{A}}), \\
\end{split}
\]
and therefore, the conclusion of \Cref{thm:convergence:template} holds with constant
\[
\begin{split}
\nu &= \frac{20}{3 \sigma_r(\f{H}_{\f{A}})} (1 + 6 \kappa) \left(\|\y{R}_{\Omega}\| + 8/5 \right) r  \\
&\leq  \frac{20}{3 \sigma_r(\f{H}_{\f{A}})} (1 + 6 \kappa) \left(\frac{n T (\min_{(i,j,t) \in I} \mu_{i,j,t} )^{-1}}{25 C c_s r \log(n T) }+ 8/5 \right) r,
\end{split}
\]
which means that $\|\y{H}(\widetilde{\f{X}}\hkk) - \f{H}_{\f{A}}\|  \leq \min(\nu \|\y{H}(\widetilde{\f{X}}\hk)-\f{H}_{\f{A}}\|^{2} , \|\y{H}(\widetilde{\f{X}}\hk)-\f{H}_{\f{A}}\|)$ and furthermore, $\y{H}(\widetilde{\f{X}}^{(k+\ell)}) \xrightarrow{\ell \to \infty} \f{H}_{\f{A}}$, and therefore concludes the proof of \Cref{thm:convergence:mobilesensors}.
\end{proof}

\subsection{Proof of \Cref{lemma:main:sampling}} \label{sec:proof:samplingoperator:assumption}
In this section, we prove \Cref{lemma:main:sampling}, our main result about the regularity of the normalized sampling operators  $\y{R}_{\Omega}: \MnOplus \to \MnOplus$ for the uniform and adaptive sampling models, see \cref{def:ROmega} and \cref{def:ROmega:adaptive:mobile}, respectively. The proof uses a non-commutative Bernstein inequality:

\begin{lemma}[{Noncommutative Bernstein inequality, cf. \cite[Theorem 4]{recht} or \cite[Theorem 5.4.1]{Ver18}}] \label{theorem:matrix:Bernstein}
Let $\y{Z}_1,\ldots,\y{Z}_{m}$ be independent, Hermitian zero-mean random operators of dimension $n^2 d_1 d_2 \times n^2 d_1 d_2$. Suppose that $\rho^2 =  \|\mathbb{E} \sum_{\ell=1} \y{Z}_\ell \y{Z}_{\ell}\|$ and $\|\y{Z}_\ell \| \leq M$ almost surely for all $\ell \in [m]$. Then for any $\alpha > 0$,
\[
\mathbb{P} \left( \left\| \sum_{\ell=1}^m \y{Z}_{\ell} \right\|  \right) \leq 2 n^2 d_1 d_2 \exp \left( \frac{-\alpha^2/2}{ \rho^2 + M \alpha / 3} \right).
\]
\end{lemma}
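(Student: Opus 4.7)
The plan is to adapt the classical \emph{matrix Laplace transform method}, following Ahlswede--Winter and Tropp, to the self-adjoint case stated here. First I would note that since each $\y{Z}_\ell$ is Hermitian, the operator norm of $S := \sum_{\ell=1}^{m} \y{Z}_\ell$ equals $\max(\lambda_{\max}(S), -\lambda_{\min}(S))$, so by a union bound it suffices to prove the one-sided tail bound
\[
\mathbb{P}\bigl(\lambda_{\max}(S) \geq \alpha\bigr) \leq n^2 d_1 d_2 \exp\!\left( \frac{-\alpha^2/2}{\rho^2 + M\alpha/3}\right),
\]
and then apply the identical argument to $-S$ (whose summands have the same norm bound and whose second moment has the same spectrum) to pick up the factor of $2$.

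The core step is the matrix Chernoff bound: for every $\theta > 0$,
\[
\mathbb{P}\bigl(\lambda_{\max}(S) \geq \alpha\bigr) \leq e^{-\theta \alpha}\, \mathbb{E}\bigl[\operatorname{tr} \exp(\theta S)\bigr].
\]
To control the matrix moment generating function of the sum, I would invoke Lieb's concavity theorem, which yields the subadditivity of the matrix cumulant generating function,
\[
\mathbb{E}\bigl[\operatorname{tr} \exp(\theta S)\bigr] \leq \operatorname{tr} \exp\!\Biggl(\sum_{\ell=1}^{m} \log \mathbb{E}\bigl[e^{\theta \y{Z}_\ell}\bigr]\Biggr).
\]
The main obstacle is then the per-summand bound: using the almost-sure bound $\|\y{Z}_\ell\| \leq M$ together with $\mathbb{E}[\y{Z}_\ell] = 0$, I would apply the scalar Bernstein-type estimate to the eigenvalues of $\y{Z}_\ell$ to obtain, for $0 < \theta < 3/M$, the semidefinite ordering
\[
\log \mathbb{E}\bigl[e^{\theta \y{Z}_\ell}\bigr] \preceq \frac{\theta^2/2}{1 - \theta M/3}\, \mathbb{E}[\y{Z}_\ell^2].
\]
This step is the technical heart of the proof; it follows from a Taylor expansion of $e^{\theta x}$ combined with the inequality $e^{\theta x} - 1 - \theta x \leq \frac{\theta^2 x^2}{2}\sum_{k \geq 0}(\theta M/3)^k$ valid for $|x| \leq M$, then promoted to the operator level via the spectral theorem applied to each Hermitian $\y{Z}_\ell$.

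Summing the previous semidefinite inequality over $\ell$ and using the monotonicity of the trace exponential yields
\[
\operatorname{tr} \exp\!\Biggl(\sum_{\ell=1}^{m}\log \mathbb{E}[e^{\theta \y{Z}_\ell}]\Biggr) \leq n^2 d_1 d_2 \cdot \exp\!\left( \frac{\theta^2/2}{1-\theta M/3}\,\rho^2\right),
\]
where the dimensional factor $n^2 d_1 d_2$ comes from $\operatorname{tr}(\exp(\cdot)) \leq \dim \cdot \exp(\lambda_{\max}(\cdot))$ and where I used the definition of $\rho^2$. Collecting the pieces gives
\[
\mathbb{P}\bigl(\lambda_{\max}(S) \geq \alpha\bigr) \leq n^2 d_1 d_2 \exp\!\left(-\theta \alpha + \frac{\theta^2 \rho^2/2}{1-\theta M/3}\right).
\]
The final step is to optimize over $\theta \in (0, 3/M)$: the standard choice $\theta = \alpha / (\rho^2 + M\alpha/3)$ (which automatically lies in the admissible range) produces the exponent $-\alpha^2/(2(\rho^2 + M\alpha/3))$. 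Combining this with the symmetric argument for $-S$ yields the stated two-sided bound with the factor $2n^2 d_1 d_2$. No additional ideas beyond Lieb concavity and the scalar Bernstein moment estimate are required; the only delicate point is keeping track of the semidefinite ordering when passing from the scalar inequality for $e^{\theta x} - 1 - \theta x$ to its matrix version.
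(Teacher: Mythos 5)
Your proposal is correct, and it is essentially the proof of the cited result: the paper does not prove \Cref{theorem:matrix:Bernstein} itself but quotes it from \cite{recht} and \cite{Ver18}, and your reconstruction is exactly the standard matrix Laplace-transform argument behind \cite[Theorem 5.4.1]{Ver18} (matrix Chernoff bound, Lieb concavity for subadditivity of the matrix cumulant generating function, the scalar Bernstein moment estimate lifted via the spectral theorem, and optimization over $\theta$, whose choice $\theta=\alpha/(\rho^2+M\alpha/3)$ indeed lies in $(0,3/M)$ and yields the stated exponent). Note that your Lieb-based route proves the stronger ``norm of the sum of second moments'' form of the variance parameter, which is the form actually stated in the lemma, whereas the Ahlswede--Winter/Golden--Thompson route of \cite{recht} naturally gives the weaker sum-of-norms variant.
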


The proof of \Cref{lemma:main:sampling}.1 follows the proof idea of \cite[Lemma 3]{chen_chi14} and \cite[Lemma 23]{LeeLiJinYe18}.
\begin{proof}[{Proof of \Cref{lemma:main:sampling}.1}]
If $\left\{E_{i,j,t} \right\}_{(i,j,t) \in I}$ is the standard basis of $\MnOplus$ and $\left\{ \f{B}_{i,j,t}\right\}_{(i,j,t) \in I}$ the standard basis of the space of block Hankel matrices $\mathcal{H}(\MnOplus)$, we recall from \Cref{lemma:Hankel:action} that $\y{G}(E_{i,j,t}) = \f{B}_{i,j,t}$ for each $(i,j,t) \in I$. 

We first assume a slightly different sampling model than that considered in the statement of \Cref{lemma:main:sampling}: let $\Omega = \{(i_{\ell},j_{\ell},t_{\ell})\}_{\ell=1}^{m} \subset I$ be a set of $m$ indices sampled uniformly i.i.d. \emph{with} replacement. 
For $\ell \in [m]$, define the operators $\y{Z}_{\ell}$ and $\widetilde{\y{Z}}_{\ell}$ such that
\begin{align*}
\mathcal{Z}_{\ell} &:= \frac{n^2 T}{m} \widetilde{\y{Z}}_{\ell}  -   \frac{1}{m} \mathcal{P}_{\f{T}} \mathcal{G} \mathcal{G}^* \mathcal{P}_{\f{T}}\\&:=  \frac{n^2 T}{m} \mathcal{P}_{\f{T}} \mathcal{G}E_{i_{\ell},j_{\ell},t_{\ell}} E_{i_{\ell},j_{\ell},t_{\ell}}^*  \mathcal{G}^* \mathcal{P}_{\f{T}}  - \frac{1}{m} \mathcal{P}_{\f{T}} \mathcal{G} \mathcal{G}^* \mathcal{P}_{\f{T}}.
\end{align*}
Then the expectation $\mathbb{E}[\widetilde{\mathcal{Z}}_{\ell}]$ of $\widetilde{\mathcal{Z}}_{\ell}$ satisfies
\begin{equation} \label{eq:ExpZell}
\begin{split}
\mathbb{E}[\widetilde{\mathcal{Z}}_{\ell}] &=  \mathbb{E}\left[\mathcal{P}_{\f{T}} \mathcal{G}E_{i_{\ell},j_{\ell},t_{\ell}} E_{i_{\ell},j_{\ell},t_{\ell}}^*  \mathcal{G}^* \mathcal{P}_{\f{T}}\right]\\& = \frac{1}{n^2 T} \sum_{i,j=1,i\leq j}^n \sum_{t=1}^T  \mathcal{P}_{\f{T}} \mathcal{G}E_{i,j,t} E_{i,j,t}^*  \mathcal{G}^* \mathcal{P}_{\f{T}}     \\
&= \frac{1}{n^2 T} \mathcal{P}_{\f{T}} \mathcal{G} \mathcal{G}^* \mathcal{P}_{\f{T}}
\end{split}
\end{equation}
and furthermore,
\begin{equation*}
\mathbb{E}[ \mathcal{Z}_{\ell} ] = \frac{n^2 T}{m} \mathbb{E} [ \widetilde{\y{Z}}_{\ell}  ] - \frac{1}{m} \mathcal{P}_{\f{T}} \mathcal{G} \mathcal{G}^* \mathcal{P}_{\f{T}} = 0.
\end{equation*}
Since for any $\f{M} \in \R^{d_1 n \times d_2 n}$,
\[
\widetilde{\mathcal{Z}}_{\ell}(\f{M}) =\langle \mathcal{P}_{\f{T}}(\f{B}_{i_{\ell},j_{\ell},t_{\ell}}), \f{M} \rangle_F \mathcal{P}_{\f{T}} (\f{B}_{i_{\ell},j_{\ell},t_{\ell}}),
\]
we obtain
\begin{align*}
\| \widetilde{\mathcal{Z}}_{\ell}(\f{M})\|_F &\leq \left|\langle \mathcal{P}_{\f{T}} (\f{B}_{i_{\ell},j_{\ell},t_{\ell}}), \f{M} \rangle_F\right|  \left\|\mathcal{P}_{\f{T}} (\f{B}_{i_{\ell},j_{\ell},t_{\ell}})\right\|_F \\&\leq  \|\mathcal{P}_{\f{T}} (\f{B}_{i_{\ell},j_{\ell},t_{\ell}})\|_F^2 \|\f{M}\|_F
\end{align*}
by Cauchy-Schwarz, and thus obtain
\begin{equation} \label{eq:PTGeomegaGPT:bound}
\begin{split}
\left\| \widetilde{\mathcal{Z}}_{\ell}\right\| &\leq  \left\|\mathcal{P}_{\f{T}} (\f{B}_{i_{\ell},j_{\ell},t_{\ell}})\right\|_F^2 \\
&\leq \max_{1\leq i \leq j \in [n], t \in [T]}  \| \mathcal{P}_{\f{T}} (\f{B}_{i,j,t}) \|_F^2 \leq \frac{\mu_0 c_s r}{n T}\,,
\end{split}
\end{equation}
using the incoherence assumption on $\f{H}_{\f{A}}$ in the last inequality, as well as $d_1 +d_2 -1 =T$ and the definition of $c_s = T(T+1)/(d_1 d_2)$. Analogously, we estimate that
\begin{equation} 
\begin{aligned}
\left\|\frac{1}{m} \mathcal{P}_{\f{T}} \mathcal{G} \mathcal{G}^* \mathcal{P}_{\f{T}}\right\| &\leq \frac{1}{m} \sum_{i,j=1,i\leq j}^n  \left\| \sum_{t=1}^T \mathcal{P}_{\f{T}} \mathcal{G}E_{i,j,t} E_{i,j,t}^*  \mathcal{G}^* \mathcal{P}_{\f{T}} \right\| \\
&\leq \frac{n^2 T}{m} \frac{\mu_0 c_s r }{n T} = \frac{\mu_0 c_s r n}{m}.
\label{eq:PTGGPT:bound}
\end{aligned}
\end{equation}
We observe that if $\y{A}$ and $\y{B}$ are positive semidefinite operators, it holds that $\|\y{A} - \y{B}\| \leq \max(\|\y{A}\|, \|\y{B}\|)$. Therefore, it follows from \cref{eq:PTGeomegaGPT:bound} and \cref{eq:PTGGPT:bound} that 
\begin{align} \label{eq:Zell:spectralnorm}
\|\y{Z}_{\ell} \| \leq \max\left( \frac{n^2 T}{m} \frac{\mu_0  c_s r}{ nT } , \frac{\mu_0 c_s r n}{m} \right) = \frac{\mu_0 c_s r n}{ m}
\end{align}
almost surely for all $\ell \in [m]$, as the operators involved are positive semidefinite. Further we compute that
\begin{align*}
\begin{split}
&\mathbb{E}\left[ \y{Z}_{\ell}\y{Z}_{\ell} \right] \\&= \frac{(n^2 T)^2}{m^2}  \mathbb{E}\left[ \widetilde{\y{Z}}_{\ell} \widetilde{\y{Z}}_{\ell}\right] -  \frac{n^2 T}{m^2} \mathbb{E}\left[\widetilde{\y{Z}}_{\ell}\right]  \mathcal{P}_{\f{T}} \mathcal{G} \mathcal{G}^* \mathcal{P}_{\f{T}} \\
&-  \frac{n^2 T}{m^2} \mathcal{P}_{\f{T}} \mathcal{G} \mathcal{G}^* \mathcal{P}_{\f{T}}\mathbb{E}\left[ \widetilde{\y{Z}}_{\ell} \right] + \frac{1}{m^2}  \mathcal{P}_{\f{T}} \mathcal{G} \mathcal{G}^* \mathcal{P}_{\f{T}} \mathcal{P}_{\f{T}} \mathcal{G} \mathcal{G}^* \mathcal{P}_{\f{T}} \\
&=  \frac{(n^2 T)^2}{m^2}  \mathbb{E}\left[ \widetilde{\y{Z}}_{\ell} \widetilde{\y{Z}}_{\ell}\right] - \frac{1}{m^2}  \mathcal{P}_{\f{T}} \mathcal{G} \mathcal{G}^* \mathcal{P}_{\f{T}} \mathcal{P}_{\f{T}} \mathcal{G} \mathcal{G}^* \mathcal{P}_{\f{T}},
\end{split}
\end{align*}
using that $\mathbb{E}\left[ \widetilde{\y{Z}}_{\ell} \right]  = \frac{1}{n^2 T}  \mathcal{P}_{\f{T}} \mathcal{G} \mathcal{G}^* \mathcal{P}_{\f{T}}$, cf. \cref{eq:ExpZell}. In order to estimate the latter terms, we observe that for any $\f{M} \in \R^{d_1 n \times d_2 n}$, 
\[
\begin{split}
\widetilde{\y{Z}}_{\ell} \widetilde{\y{Z}}_{\ell}(\f{M}) &= \langle \y{P}_{\f{T}}(\f{B}_{i_{\ell},j_{\ell},t_{\ell}}), \widetilde{\y{Z}}_{\ell}(\f{M})\rangle \y{P}_{\f{T}}(\f{B}_{i_{\ell},j_{\ell},t_{\ell}}) \\&= \left\| \y{P}_{\f{T}}(\f{B}_{i_{\ell},j_{\ell},t_{\ell}}) \right\|_F^2  \langle \y{P}_{\f{T}}(\f{B}_{i_{\ell},j_{\ell},t_{\ell}}), \f{M} \rangle \y{P}_{\f{T}}(\f{B}_{i_{\ell},j_{\ell},t_{\ell}}) \\
&=  \left\| \y{P}_{\f{T}}(\f{B}_{i_{\ell},j_{\ell},t_{\ell}}) \right\|_F^2 \widetilde{\y{Z}}_{l}(\f{M})
\end{split}
\]
and therefore
\begin{equation} \label{eq:widetildeZellsquared:bound}
\begin{split}
\left\| \mathbb{E}\widetilde{\y{Z}}_{\ell}\widetilde{\y{Z}}_{\ell} \right\| &= \max_{\|M\|_F=1}\left\| \mathbb{E} \left\| \y{P}_{\f{T}}(\f{B}_{i_{\ell},j_{\ell},t_{\ell}}) \right\|_F^2 \widetilde{\y{Z}}_{\ell}(\f{M}) \right\|_F \\&\leq  \max_{i\leq j \in [n] t \in [T]} \left\| \y{P}_{\f{T}}(\f{B}_{i,j,t}) \right\|_F^2 \left\| \mathbb{E}  \widetilde{\y{Z}}_{\ell} \right\| \\
&\leq \frac{\mu_0  c_s r}{ n T}    \frac{1}{n^2 T}  \|\mathcal{P}_{\f{T}} \mathcal{G} \mathcal{G}^* \mathcal{P}_{\f{T}}\| \leq  \frac{ \mu_0 c_s r}{n^3 T^2},
\end{split}
\end{equation}
using \cref{eq:PTGeomegaGPT:bound} in the second inequality and the fact $\|\mathcal{P}_{\f{T}} \mathcal{G} \mathcal{G}^* \mathcal{P}_{\f{T}}\| \leq 1$ in the third in equality. 
For the expectation of the squares of $\y{Z}_{\ell}$, we obtain
\begin{equation} \label{eq:Zellsquared:spectralnorm}
\begin{split}
 &\sum_{\ell=1}^m \left\| \Ex \y{Z}_{\ell}\y{Z}_{\ell} \right\| \\
 &\leq \frac{1}{m^{2}} \sum_{\ell=1}^m \! \max\!\Big( n^2T^2  \left\| \Ex \widetilde{\y{Z}}_{\ell}\widetilde{\y{Z}}_{\ell} \right\|\!\!, \!\left\| \mathcal{P}_{\f{T}} \mathcal{G} \mathcal{G}^* \mathcal{P}_{\f{T}} \mathcal{P}_{\f{T}} \mathcal{G} \mathcal{G}^* \mathcal{P}_{\f{T}} \right\| \! \Big) \\
&\leq \sum_{\ell=1}^m  \max\left( \frac{(n^2 T)^2}{(m)^2}  \frac{ \mu_0 c_s r}{n^3 T^2} , \frac{1}{m^2} \right) \\
&=  \sum_{\ell=1}^m  \max\left( \frac{n}{ m^2}  \frac{\mu_0 c_s r}{1} , \frac{1}{m^2} \right) \\
&\leq \frac{\mu_0 c_s r n}{m},
\end{split}
\end{equation}
using again that $\|\y{A} -\y{B}\| \leq \max( \|\y{A}\|,\|\y{B}\|)$ for positive semidefinite operators in the second inequality, \cref{eq:widetildeZellsquared:bound} and the fact that $\| \mathcal{P}_{\f{T}} \mathcal{G} \mathcal{G}^* \mathcal{P}_{\f{T}}\| \leq 1$ in the third inequality.

Next, recalling the definition \cref{def:ROmega} of the normalized sampling operator $\y{R}_{\Omega}: \MnOplus \to \MnOplus$ of the statement of \Cref{lemma:main:sampling}, we observe that $\y{P}_{\f{T}}\y{G}\y{R}_{\Omega}\y{G}^{*}\y{P}_{\f{T}} = \frac{n^2 T}{m} \sum_{\ell=1}^m \widetilde{\y{Z}}_{\ell}$.

Since the $\y{Z}_{\ell}$'s are Hermitian, we can now apply the matrix Bernstein inequality \cite[Theorem 4]{recht} in form of \Cref{theorem:matrix:Bernstein} above to obtain, for $0 < \alpha <1$, the estimate
\begin{equation*}
\begin{split}
    &\bb{P}\left( \left\| \y{P}_{\f{T}}\y{G}\y{R}_{\Omega}\y{G}^{*}\y{P}_{\f{T}} - \mathcal{P}_{\f{T}}\mathcal{G}\mathcal{G}^{*}\mathcal{P}_{\f{T}} \right\| \geq \alpha \right)	\\&\leq 2n^2d_1 d_2 \exp\left(- \frac{m \alpha^2}{2\mu_0 c_s rn(1+ \alpha/3)}\right) \\
&\leq 2 \frac{(T+1)^2}{4}n^2 \exp\left(- \frac{m \alpha^2}{ 2\mu_0 c_s rn(4/3)}\right) \\&= \frac{(T+1)^2 n^2}{2} \exp\left(- \frac{ 3 m \alpha^2}{8\mu_0 c_s rn}\right),\\
\end{split}
\end{equation*}
using the norm estimates of \cref{eq:Zell:spectralnorm} and \cref{eq:Zellsquared:spectralnorm} to estimate the respective quantities in \Cref{theorem:matrix:Bernstein}. From this, we see that
\[
\bb{P}\left( \left\| \y{P}_{\f{T}}\y{G}\y{R}_{\Omega}\y{G}^{*}\y{P}_{\f{T}} - \mathcal{P}_{\f{T}}\mathcal{G}\mathcal{G}^{*}\mathcal{P}_{\f{T}} \right\| \geq \alpha \right) \leq n^{-2}
\]
if $\frac{1}{2}(T+1)^2 n^{4} \leq \exp\left(\frac{3 m \alpha^2}{8 \mu_0 c_s rn}\right)$, which is further implied by the condition 
\[
m \geq \frac{16 c_s}{3 \alpha^2} \mu_0 r n \log(n T).
\]

This shows that for the constant $C:= \frac{16}{3}$, if \cref{eq:sample:complexity:condition1} is fulfilled, then with probability at least $1- n^{-2}$, 
\[
\left\| \y{P}_{\f{T}}\y{G}\y{R}_{\Omega}\y{G}^{*}\y{P}_{\f{T}} - \mathcal{P}_{\f{T}}\mathcal{G}\mathcal{G}^{*}\mathcal{P}_{\f{T}} \right\| < \alpha
\]
if $m$ i.i.d. samples are uniformly sampled with replacement. With the argument of \cite[Proposition 3]{recht}, we conclude that the statement with the same probability bound holds true for the sampling model where $\Omega$ is a random subset of cardinality $m$, uniformly drawn without replacement, if $m$ satisfies \cref{eq:sample:complexity:condition1}, which finishes the proof. 

\end{proof}
\begin{proof}[{Proof of \Cref{lemma:main:sampling}.2}]
To show the second part of \Cref{lemma:main:sampling}, we consider for each $(i,j,t) \in I$ a random variable $\delta_{i,j,t}$  that is $1$ if $(i,j,t) \in \Omega$ and $0$ otherwise. With that notation, $\y{R}_{\Omega}$ of \cref{def:ROmega} can be written as
\begin{align*}
\mathcal{P}_{\f{T}}\mathcal{G}\mathcal{R}_{\Omega}\mathcal{G}^{*}\mathcal{P}_{\f{T}} &= \sum_{(i,j,t) \in I} \frac{\delta_{i,j,t}}{p_{i,j,t}} \mathcal{P}_{\f{T}}\mathcal{G} E_{i,j,t} E_{i,j,t}^* \mathcal{G}^{*}\mathcal{P}_{\f{T}} \\& =: \sum_{(i,j,t) \in I} \frac{\delta_{i,j,t}}{p_{i,j,t}} \widetilde{\y{Z}}_{i,j,t},
\end{align*}
defining operators $\widetilde{\y{Z}}_{i,j,t}: \MnOplus \to \MnOplus$ for each $(i,j,t) \in I$. With this, we obtain
\begin{align*}
\begin{split}
&\mathcal{P}_{\f{T}}\mathcal{G}\mathcal{R}_{\Omega}\mathcal{G}^{*}\mathcal{P}_{\f{T}} - \mathcal{P}_{\f{T}}\mathcal{G}\mathcal{G}^{*}\mathcal{P}_{\f{T}}  \\&= \sum_{(i,j,t) \in I} \frac{\delta_{i,j,t}}{p_{i,j,t}} \widetilde{\y{Z}}_{i,j,t} - \sum_{(i,j,t) \in I}  \widetilde{\y{Z}}_{i,j,t} \\
&= \sum_{(i,j,t) \in I} \left(  \frac{\delta_{i,j,t}}{p_{i,j,t}}  - 1\right)  \widetilde{\y{Z}}_{i,j,t} =: \sum_{(i,j,t) \in I} \y{Z}_{i,j,t},
\end{split}
\end{align*}
defining the random operators $\mathcal{Z}_{i,j,t}$. Based on the assumption on the sampling model, the $\mathcal{Z}_{i,j,t}$ are independent and as the $\delta_{i,j,t}$ are Bernoulli variables with success probabilities $p_{i,j,t}$, it follows that 
\[
\mathbb{E}[\mathcal{Z}_{i,j,t}] =  \left( \frac{\mathbb{E}[\delta_{i,j,t}]}{p_{i,j,t}} - 1\right) \mathcal{P}_{\f{T}} \mathcal{G}E_{i,j,t} E_{i,j,t}^*  \mathcal{G}^* \mathcal{P}_{\f{T}}  = 0.
\]
Let $\f{M} \in \R^{d_1 n \times d_2 n}$ be arbitrary. Since
\begin{equation} \label{eq:Zijk}
\begin{split}
\widetilde{\y{Z}}_{i,j,t}(\f{M})&= \langle \f{B}_{i,j,t}, \mathcal{P}_{\f{T}}(\f{M}) \rangle_F \mathcal{P}_{\f{T}}(\f{B}_{i,j,t})\\&= \langle \mathcal{P}_{\f{T}}( \f{B}_{i,j,t}), \f{M} \rangle_F \mathcal{P}_{\f{T}}(\f{B}_{i,j,t})
\end{split}
\end{equation}
we obtain
\begin{align*}
\begin{split}
\| \widetilde{\y{Z}}_{i,j,t}(\f{M})\|_F &\leq  \left|\langle \mathcal{P}_{\f{T}}(\f{B}_{i,j,t}), \f{M} \rangle_F\right|  \|\mathcal{P}_{\f{T}}(\f{B}_{i,j,t})\|_F \\&\leq  \|\mathcal{P}_{\f{T}} (\f{B}_{i,j,t})\|_F^2 \|\f{M}\|_F \\
&\leq  \frac{r (d_1 + d_2)}{n d_1 d_2} \mu_{i,j,t}  \|\f{M}\|_F  \leq  \frac{ c_s r }{n T} \mu_{i,j,t} \|\f{M}\|_F, 
\end{split}
\end{align*}
using the definition of the local incoherence factor $\mu_{i,j,t}$, cf. \Cref{def:incoherence}.
This implies that
\[
\| \widetilde{\y{Z}}_{i,j,t}\| \leq \frac{ c_s r }{n T} \mu_{i,j,t}
\]
almost surely for each $i \leq j \leq n$ and each $t \leq T$ and, since $\delta_{i,j,t}/p_{i,j,t} \widetilde{\y{Z}}_{i,j,t}$ and $\widetilde{\y{Z}}_{i,j,t}$ are both positive semidefinite operators, that
\begin{align} \label{eq:Zijk:normbound}
\| \y{Z}_{i,j,t}\| &\leq \max \left(\frac{\delta_{i,j,t}}{p_{i,j,t}}\|\widetilde{\y{Z}}_{i,j,t}\|, \|\widetilde{\y{Z}}_{i,j,t}\|   \right)\\ &\leq \frac{1}{p_{i,j,t}} \|\widetilde{\y{Z}}_{i,j,t}\| \leq \frac{ c_s r }{p_{i,j,t} n T} \mu_{i,j,t}
\end{align}
almost surely as well. Furthermore, for the expectation of the squares of $\y{Z}_{i,j,t}$  we obtain
 \begin{align*}
\begin{split}
&\Ex \y{Z}_{i,j,t}\y{Z}_{i,j,t} \\&=  \Ex \left[ \frac{\delta_{i,j,t}^2}{p_{i,j,t}^2}\widetilde{\y{Z}}_{i,j,t}\widetilde{\y{Z}}_{i,j,t}  \right]  - 2 \Ex\left[\frac{\delta_{i,j,t}}{p_{i,j,t}}\widetilde{\y{Z}}_{i,j,t}\widetilde{\y{Z}}_{i,j,t}  \right]  + \widetilde{\y{Z}}_{i,j,t}\widetilde{\y{Z}}_{i,j,t}\\
& =  \frac{p_{i,j,t}}{p_{i,j,t}^2} \widetilde{\y{Z}}_{i,j,t}\widetilde{\y{Z}}_{i,j,t} - \widetilde{\y{Z}}_{i,j,t}\widetilde{\y{Z}}_{i,j,t} \\
&= \left(\frac{1}{p_{i,j,t}} - 1  \right)\widetilde{\y{Z}}_{i,j,t}\widetilde{\y{Z}}_{i,j,t}.
\end{split}
\end{align*}
Now, using \cref{eq:Zijk} 
and observing that for any $\f{M} \in \R^{d_1 n \times d_2 n}$
\begin{align*}
\widetilde{\y{Z}}_{i,j,t}\widetilde{\y{Z}}_{i,j,t}(\f{M}) &= \|\mathcal{P}_{\f{T}}( \f{B}_{i,j,t})\|_F^2 \langle \mathcal{P}_{\f{T}}( \f{B}_{i,j,t}),\f{M} \rangle_F \mathcal{P}_{\f{T}}(\f{B}_{i,j,t})\\& = \|\mathcal{P}_{\f{T}}( \f{B}_{i,j,t})\|_F^2 \mathcal{P}_{\f{T}} \f{B}_{i,j,t} \f{B}_{i,j,t}^* \mathcal{P}_{\f{T}} (\f{M}), 
\end{align*}
we obtain the spectral norm bound
\begin{align*}
\left\|\widetilde{\y{Z}}_{i,j,t}\widetilde{\y{Z}}_{i,j,t}\right\| &\leq  \|\mathcal{P}_{\f{T}}( \f{B}_{i,j,t})\|_F^2 \left\| \mathcal{P}_{\f{T}} \f{B}_{i,j,t} \f{B}_{i,j,t}^* \mathcal{P}_{\f{T}}\right\|\\& \leq   \frac{r (d_1 + d_2)}{n d_1 d_2} \mu_{i,j,t} \leq  \frac{ c_s r }{n T} \mu_{i,j,t},
\end{align*}
using the definition of $\mu_{i,j,t}$ and the fact that $\left\| \mathcal{P}_{\f{T}} \f{B}_{i,j,t} \f{B}_{i,j,t}^* \mathcal{P}_{\f{T}}\right\| \leq 1$, as well as $d_1 +d_2 -1 =T$ and the definition of $c_s = T(T+1)/(d_1 d_2)$ in the last inequality. Due to a similar argument as made in \cref{eq:Zijk:normbound}, we obtain that

\begin{equation*}
\begin{split}
&\left\| \sum_{(i,j,t) \in I}   \Ex \y{Z}_{i,j,t}\y{Z}_{i,j,t}\right\| \!\! = \left\| \sum_{(i,j,t) \in I}\!  \left(\frac{1}{p_{i,j,t}} - 1  \right)\!\widetilde{\y{Z}}_{i,j,t}\widetilde{\y{Z}}_{i,j,t} \right\|  \\
&= \left\| \sum_{(i,j,t) \in I}  \left(\frac{1}{p_{i,j,t}} - 1  \right)\|\mathcal{P}_{\f{T}}( \f{B}_{i,j,t})\|_F^2 \mathcal{P}_{\f{T}} \f{B}_{i,j,t} \f{B}_{i,j,t}^* \mathcal{P}_{\f{T}} \right\| \\
\end{split}
\end{equation*}
\begin{equation} \label{eq:ZijkZijk:spectralnorm}
\begin{split}
&\leq \max_{(i',j',t') \in I} \left(\frac{1}{p_{i,j,t}} - 1  \right)\|\mathcal{P}_{\f{T}}( \f{B}_{i,j,t})\|_F^2  \\
 &\qquad\cdot \left\| \sum_{(i,j,t) \in I} \mathcal{P}_{\f{T}} \f{B}_{i,j,t} \f{B}_{i,j,t}^* \mathcal{P}_{\f{T}} \right\| \\
&\leq \max_{(i',j',t') \in I}  \frac{ c_s r }{n T} \frac{\mu_{i',j',t'}}{p_{i',j,'t'}} \left\|\mathcal{P}_{\f{T}} \y{G} \y{G}^* \mathcal{P}_{\f{T}} \right\| \\&\leq \max_{(i',j',t') \in I}  \frac{ c_s r }{n T} \frac{\mu_{i',j',t'}}{p_{i',j,'t'}}  =: \widetilde{c}
\end{split}
\end{equation}
using the formulas for $\Ex \y{Z}_{i,j,t}\y{Z}_{i,j,t} $ and $\widetilde{\y{Z}}_{i,j,t}\widetilde{\y{Z}}_{i,j,t}$ from above, the fact that the \\
$\mathcal{P}_{\f{T}} \f{B}_{i,j,t} \f{B}_{i,j,t}^* \mathcal{P}_{\f{T}}$ are all positive semidefinite and the assumption the $p_{i,j,t} \leq 1 $ for all $(i,j,t) \in I$. Furthermore, we used the definition of the local coherences $\mu_{i',j',t'}$ from \Cref{def:incoherence} in the first inequality, and the fact that $ \left\|\mathcal{P}_{\f{T}} \y{G} \y{G}^* \mathcal{P}_{\f{T}} \right\| \leq 1$ in the last inequality.

As the $\y{Z}_{i,j,t}$ are Hermitian, we can now use \cref{eq:ZijkZijk:spectralnorm} and \cref{eq:Zijk:normbound} to apply the matrix Bernstein inequality \Cref{theorem:matrix:Bernstein} to estimate that
\begin{equation*}
\begin{split}
&\bb{P}\left( \left\| \mathcal{P}_{\f{T}}\mathcal{G}\mathcal{R}_{\Omega}\mathcal{G}^{*}\mathcal{P}_{\f{T}}- \mathcal{P}_{\f{T}}\mathcal{G}\mathcal{G}^{*}\mathcal{P}_{\f{T}} \right\| \geq \alpha \right)\\&	\leq 2 n^2 d_1 d_2 \exp\left(- \frac{\alpha^2/2}{\widetilde{c} + \widetilde{c} \alpha/3}\right) \\
&\leq 2 \frac{(T+1)^2}{4} n^2 \exp\left(- \frac{\alpha^2/2}{\widetilde{c} + \widetilde{c} \alpha/3}\right)\\&  \leq \frac{1}{2}(T+1)^2 n^2 \exp\left(- \frac{3 \alpha^2}{8 \widetilde{c}}\right) \leq n^{-2},\\
\end{split}
\end{equation*}
where the last inequality holds if $\widetilde{c}^{-1} \geq \frac{8}{3 \alpha^2} \left( 4 \log(n) + \log(1/2) + 2 \log(T+1)\right)$,
which, in view of the definition of $\widetilde{c}$ from \cref{eq:ZijkZijk:spectralnorm} , is implied by the condition 
\[
p_{i,j,t} \geq \frac{32}{3 \alpha^2} \mu_{i,j,t} c_s \frac{r}{n T } \log((T+1)n) \] for all $ 1 \leq  i \leq j \leq n, 1 \leq t \leq T$.

This shows that there exists an absolute constant $C>1$ such that if \cref{eq:sample:complexity:condition:localcoherence} is fulfilled for each $(i,j,t) \in I$, with probability at least $1- n^2$, it holds that 
\[
 \left\| \mathcal{P}_{\f{T}}\mathcal{G}\mathcal{R}_{\Omega}\mathcal{G}^{*}\mathcal{P}_{\f{T}}- \mathcal{P}_{\f{T}}\mathcal{G}\mathcal{G}^{*}\mathcal{P}_{\f{T}} \right\| < \alpha,
\]
which finishes the proof of \Cref{lemma:main:sampling}. 
\end{proof}

\subsection{Proof of \Cref{eq:MatrixIRLS:tangentspace:localRIP:perturbation}} \label{sec:RIP:neighborhood}
To show the perturbation result of \Cref{eq:MatrixIRLS:tangentspace:localRIP:perturbation}, we use ideas from the proof of \cite[Lemma 8]{cai_wang_wei19}. As an auxiliary result, we also use the following lemma.

\begin{lemma}[{\cite[Lemma 4.2]{wei_cai_chan_leung}, \cite[Eq. (30)]{KMV21}}] \label{lemma:PT:perturbation}
If $\f{T} := \f{T}_{\f{H}_{\f{A}}}$ and $\f{T}_{\f{H}}$ are the tangent spaces of the rank-$r$ matrix manifold at $\f{H}_{\f{A}}$ and $\f{H}$, respectively, then 
\[
\left\|\y{P}_{\f{T}} - \PTX  \right\| \leq \frac{4 \left\| \f{H}_{\f{A}} - \f{H} \right\|}{\sigma_{r}(\f{H}_{\f{A}} )}.
\]
\end{lemma}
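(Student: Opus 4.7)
The plan is to exploit the explicit closed-form expression for the tangent-space projection onto the manifold of rank-$r$ matrices at a point, and then apply a Wedin-type perturbation bound to the singular subspaces. Write $\f{H}_{\f{A}} = \f{U}_{\!\f{A}} \Sigma_{\!\f{A}} \f{V}_{\!\f{A}}^{\,*}$ and $\f{H} = \f{U} \Sigma \f{V}^{*}$ for the compact rank-$r$ singular value decompositions of the two base points. Recalling the well-known formula (see e.g.\ the derivation in \cref{eq:incoherence:conversion}), for any $\f{M} \in \Rddn$ the projections can be expressed as
\[
\y{P}_{\f{T}}(\f{M}) = \f{U}_{\!\f{A}} \f{U}_{\!\f{A}}^{\,*}\f{M} + \f{M} \f{V}_{\!\f{A}} \f{V}_{\!\f{A}}^{\,*} - \f{U}_{\!\f{A}} \f{U}_{\!\f{A}}^{\,*} \f{M} \f{V}_{\!\f{A}} \f{V}_{\!\f{A}}^{\,*},
\]
and analogously for $\PTX$ with $(\f{U},\f{V})$ replacing $(\f{U}_{\!\f{A}},\f{V}_{\!\f{A}})$.

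The first step is to subtract these two formulas and use the algebraic identity
\[
 \f{U}_{\!\f{A}}\f{U}_{\!\f{A}}^{\,*} \f{M}\f{V}_{\!\f{A}}\f{V}_{\!\f{A}}^{\,*} - \f{U}\f{U}^{*}\f{M}\f{V}\f{V}^{*} = (\f{U}_{\!\f{A}}\f{U}_{\!\f{A}}^{\,*} - \f{U}\f{U}^{*}) \f{M} \f{V}_{\!\f{A}}\f{V}_{\!\f{A}}^{\,*} + \f{U}\f{U}^{*} \f{M} (\f{V}_{\!\f{A}}\f{V}_{\!\f{A}}^{\,*} - \f{V}\f{V}^{*})
\]
to collect $\y{P}_{\f{T}} - \PTX$ as a sum of terms of the form $(\f{U}_{\!\f{A}}\f{U}_{\!\f{A}}^{\,*} - \f{U}\f{U}^{*})(\cdot)$ and $(\cdot)(\f{V}_{\!\f{A}}\f{V}_{\!\f{A}}^{\,*} - \f{V}\f{V}^{*})$, each multiplied by factors of operator norm at most $1$ (since $\f{U}\f{U}^{*}, \f{V}_{\!\f{A}}\f{V}_{\!\f{A}}^{\,*}$ and similar outer products are orthogonal projections). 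Applying the triangle inequality and submultiplicativity to the resulting at most four summands yields
\[
\bigl\| \y{P}_{\f{T}} - \PTX \bigr\| \;\leq\; 2\bigl\|\f{U}_{\!\f{A}}\f{U}_{\!\f{A}}^{\,*} - \f{U}\f{U}^{*}\bigr\| \;+\; 2\bigl\|\f{V}_{\!\f{A}}\f{V}_{\!\f{A}}^{\,*} - \f{V}\f{V}^{*}\bigr\|.
\]

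The second step is to bound each of the two projection differences by Wedin's $\sin\Theta$ theorem, which yields
\[
\bigl\|\f{U}_{\!\f{A}}\f{U}_{\!\f{A}}^{\,*} - \f{U}\f{U}^{*}\bigr\| \;\leq\; \frac{\|\f{H}_{\f{A}} - \f{H}\|}{\sigma_r(\f{H}_{\f{A}})}, \qquad \bigl\|\f{V}_{\!\f{A}}\f{V}_{\!\f{A}}^{\,*} - \f{V}\f{V}^{*}\bigr\| \;\leq\; \frac{\|\f{H}_{\f{A}} - \f{H}\|}{\sigma_r(\f{H}_{\f{A}})},
\]
where we use that both $\f{H}_{\f{A}}$ and $\f{H}$ have rank exactly $r$ so the relevant spectral gap is $\sigma_r(\f{H}_{\f{A}}) - \sigma_{r+1}(\f{H}_{\f{A}}) = \sigma_r(\f{H}_{\f{A}})$, and the analogous observation for $\f{H}$. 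Combining the two bounds with the triangle inequality above gives the claimed factor of $4$.

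The main subtlety is the treatment of the regime where $\|\f{H}_{\f{A}} - \f{H}\|$ is not small compared to $\sigma_r(\f{H}_{\f{A}})$: strictly speaking, Wedin's theorem is only meaningful when the perturbation is smaller than the gap. However, whenever $4\|\f{H}_{\f{A}} - \f{H}\|/\sigma_r(\f{H}_{\f{A}}) \geq 1$ the conclusion is trivial since $\|\y{P}_{\f{T}} - \PTX\| \leq \|\y{P}_{\f{T}}\| + \|\PTX\| \leq 2 \leq 4\|\f{H}_{\f{A}} - \f{H}\|/\sigma_r(\f{H}_{\f{A}})$, so we may restrict the Wedin-based argument to the small-perturbation regime. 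This is the only delicate point; the rest of the proof is algebraic. The statement moreover already appears in the literature (\cite[Lemma 4.2]{wei_cai_chan_leung}, \cite[Eq.~(30)]{KMV21}), so we simply adopt the argument there.
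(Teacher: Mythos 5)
Your proof is essentially correct, but note that the paper itself does not prove \Cref{lemma:PT:perturbation}: it imports the bound by citation from \cite{wei_cai_chan_leung} and \cite{KMV21}, so there is no in-paper argument to compare against, and your write-up reconstructs the standard derivation used in those references. Two remarks on the details. First, your telescoping of the cross term can be merged with the first-order terms into the exact two-term identity
\[
(\y{P}_{\f{T}} - \PTX)(\f{M}) = (\f{U}_{\!\f{A}}\f{U}_{\!\f{A}}^{\,*} - \f{U}\f{U}^{*})\,\f{M}\,(\Id - \f{V}_{\!\f{A}}\f{V}_{\!\f{A}}^{\,*}) + (\Id - \f{U}\f{U}^{*})\,\f{M}\,(\f{V}_{\!\f{A}}\f{V}_{\!\f{A}}^{\,*} - \f{V}\f{V}^{*}),
\]
which gives the sharper constant $2$; your cruder four-term triangle inequality still delivers the stated constant $4$, so this is cosmetic. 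Second, and more substantively, your justification of the Wedin step is imprecise: the gap entering Wedin's theorem is between the retained singular values of one matrix and the \emph{discarded} singular values of the \emph{other} matrix, not the internal gap $\sigma_r(\f{H}_{\f{A}})-\sigma_{r+1}(\f{H}_{\f{A}})$ of $\f{H}_{\f{A}}$ alone. To obtain the denominator $\sigma_r(\f{H}_{\f{A}})$ cleanly, apply Wedin with $\f{H}$ in the role of the unperturbed matrix and $\f{H}_{\f{A}}$ in the role of the perturbed one: since $\f{H}$ has rank exactly $r$, its discarded singular values are all zero, so one may take the separation parameters $\alpha=0$ and $\delta=\sigma_r(\f{H}_{\f{A}})$, and the residuals $(\f{H}-\f{H}_{\f{A}})\f{V}_{\!\f{A}}$ and $(\f{H}-\f{H}_{\f{A}})^{*}\f{U}_{\!\f{A}}$ have spectral norm at most $\left\|\f{H}-\f{H}_{\f{A}}\right\|$, yielding $\max\bigl(\|\f{U}_{\!\f{A}}\f{U}_{\!\f{A}}^{\,*}-\f{U}\f{U}^{*}\|,\|\f{V}_{\!\f{A}}\f{V}_{\!\f{A}}^{\,*}-\f{V}\f{V}^{*}\|\bigr)\le \left\|\f{H}-\f{H}_{\f{A}}\right\|/\sigma_r(\f{H}_{\f{A}})$ with no smallness condition on the perturbation. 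Consequently your fallback to the trivial bound in the large-perturbation regime, while correct and harmless, is not actually needed in this rank-$r$-to-rank-$r$ setting.
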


\begin{proof}[{Proof of \Cref{eq:MatrixIRLS:tangentspace:localRIP:perturbation}}]
Recall that $\f{T}= T_{\f{H}_{\f{A}}}   \subset \Rddn$ is the tangent space onto $\y{M}_r$ at $\f{H}_{\f{A}}$. For any $\f{Z} \in \Rddn$, we have 
\begin{align*}
&\left\|\y{R}_{\Omega}\y{G}^{*} \y{P}_{\f{T}}(\f{Z})\right\|_F^2 \\&= \left\langle \y{R}_{\Omega}\y{G}^{*}\y{P}_{\f{T}}(\f{Z}), \y{R}_{\Omega}\y{G}^{*}\y{P}_{\f{T}}(\f{Z})\right\rangle  =  \left\langle \y{G}^{*}\y{P}_{\f{T}}(\f{Z}), \y{R}_{\Omega}^2 \y{G}^{*}\y{P}_{\f{T}}(\f{Z})\right\rangle  \\
&\leq  \left\| \y{R}_{\Omega} \right\|  \left\langle \y{G}^{*}\y{P}_{\f{T}}(\f{Z}),\y{R}_{\Omega}  \y{G}^{*} \y{P}_{\f{T}}(\f{Z})\right\rangle \\
&=  \left\| \y{R}_{\Omega} \right\|  \left\langle \f{Z}, \y{P}_{\f{T}}\y{G}\y{R}_{\Omega}  \y{G}^{*} \y{P}_{\f{T}}(\f{Z})\right\rangle   \\
&=  \left\| \y{R}_{\Omega} \right\| \big( \left\langle \f{Z}, \left(\y{P}_{\f{T}}\y{G}\y{R}_{\Omega}  \y{G}^{*} \y{P}_{\f{T}} - \y{P}_{\f{T}}\y{G} \y{G}^{*} \y{P}_{\f{T}} \right)\f{Z}\right\rangle \\
&\quad\quad\quad\quad\;\;+ \left\langle \f{Z},\y{P}_{\f{T}}\y{G} \y{G}^{*} \y{P}_{\f{T}} \f{Z}\right\rangle  \big) \\
&\leq  \left\| \y{R}_{\Omega} \right\| \left( \alpha \|\f{Z}\|_F^2 + \|\f{Z}\|_F^2 \right)  =  \left\| \y{R}_{\Omega} \right\| \left(1 + \alpha \right) \|\f{Z}\|_F^2
\end{align*}
using the fact that $\y{R}_{\Omega}$ is self-adjoint in the second inequality and \Cref{assumption:1} in the last inequality. From this, it follows that 
\begin{equation} \label{eq:PTGROmega:bound}
\left\|  \y{P}_{\f{T}}\y{G} \y{R}_{\Omega} \right\|  =  \left\|\y{R}_{\Omega}\y{G}^{*} \y{P}_{\f{T}}\right\| \leq \sqrt{\left\| \y{R}_{\Omega} \right\| \left(1 + \alpha \right)}.
\end{equation}
With this preparation, we can now apply the triangle inequality multiple times to estimate that
\begin{equation*}
\begin{split}
&\left\|\PTX \mathcal{G}\mathcal{R}_{\Omega}\mathcal{G}^{*}\PTX- \PTX \mathcal{G}\mathcal{G}^{*}\PTX \right\| 
\\& \leq \left\|\left( \PTX - \y{P}_{\f{T}} \right)\mathcal{G}\mathcal{R}_{\Omega}\mathcal{G}^{*}\PTX \right\|  +  \left\| \y{P}_{\f{T}} \mathcal{G}\mathcal{R}_{\Omega}\mathcal{G}^{*}\left( \PTX - \y{P}_{\f{T}} \right) \right\|   \\
 &+ \left\|  \y{P}_{\f{T}} \mathcal{G}\mathcal{R}_{\Omega}\mathcal{G}^{*}\y{P}_{\f{T}} -  \y{P}_{\f{T}} \mathcal{G}\mathcal{G}^{*}\y{P}_{\f{T}}  \right\| \\&+ \left\|  \y{P}_{\f{T}} \mathcal{G}\mathcal{G}^{*}\left(\y{P}_{\f{T}} -\PTX \right)  \right\|  + \left\|  \left(\PTX - \y{P}_{\f{T}}\right) \mathcal{G}\mathcal{G}^{*}\PTX  \right\|   \\
&\leq \left\|  \PTX - \y{P}_{\f{T}} \right\|  \left\|\mathcal{R}_{\Omega}\mathcal{G}^{*}\PTX \right\| +  \left\| \y{P}_{\f{T}} \mathcal{G}\mathcal{R}_{\Omega}\right\| \left\| \PTX - \y{P}_{\f{T}} \right\| + \alpha \\
&+ \left\|  \y{P}_{\f{T}}\right\|\left\|\mathcal{G}\mathcal{G}^{*}\right\| \left\|\y{P}_{\f{T}} -\PTX\right\| + \left\|\PTX - \y{P}_{\f{T}}\right\| \left\|\mathcal{G}\mathcal{G}^{*}\right\| \left\|\PTX  \right\| \\
&\leq  \left\|  \PTX - \y{P}_{\f{T}} \right\|  \left( 2  \left\|\mathcal{R}_{\Omega}\mathcal{G}^{*}\PTX \right\| \right) + \alpha+ 2 \left\|\PTX - \y{P}_{\f{T}}\right\| \\& \leq \frac{8 \left\| \f{H} - \f{H}_{\f{A}} \right\|}{\sigma_{r}(\f{H}_{\f{A}})}   \left(\sqrt{\left\| \y{R}_{\Omega} \right\| \left(1 + \alpha \right)} + 1 \right) + \alpha \\
&\leq \alpha + \alpha = 2 \alpha,
\end{split}
\end{equation*}
using the sub-multiplicativity of the spectral norm multiple times, \Cref{assumption:1}, in the second inequality, and \cref{eq:PTGROmega:bound} and \Cref{lemma:PT:perturbation} in the penultimate inequality. Finally, we conclude the proof by using the closeness assumption \cref{eq:HXtilde:closeness:lemma} in the last inequality.
\end{proof}

\subsection{Proof of \Cref{lemma:MatrixIRLS:localRIP}} \label{sec:proof:localRIP:NSP}

We present the proof of \Cref{lemma:MatrixIRLS:localRIP}, which is inspired by the proofs of \cite[Lemma 20]{ye_kim_jin_lee} and \cite[Lemma 1]{chen_chi14}, but refines the respective arguments.

\begin{proof}[{Proof of \Cref{lemma:MatrixIRLS:localRIP}}]
Let $\eta \in \ker \y{R}_{\Omega}$. Due to the entrywise nature of the normalized sampling operator $\y{R}_{\Omega}$, it holds that $\eta \in \ker \y{R}_{\Omega}$ if and only if $\y{D} \eta  \in \ker \y{R}_{\Omega}$ due to the diagonality of the diagonal operator $\y{D}: \MnOplus \to \MnOplus$ from \cref{eq:DEijt:def}. Therefore, it holds that $\y{R}_{\Omega} \y{G}^* \y{H}(\eta) = \y{R}_{\Omega} \y{G}^* \y{G} \y{D} (\eta) = \y{R}_{\Omega} \y{D} (\eta) = 0$, as $\y{G}^* \y{G} = \Id$ is the identity operator and as $\y{H} = \y{G} \y{D}$ due to \Cref{lemma:Hankel:action}, which implies further that $\y{G} \y{R}_{\Omega} \y{G}^* \y{H}(\eta) = 0$

Furthermore, this also implies that
\[
\left(\Id - \y{G}\y{G}^*\right) \y{H}(\eta) =  \left(\y{G} \y{D} -  \y{G}\y{G}^*\y{G} \y{D}\right) \eta = \left(\y{G} \y{D} -  \y{G} \y{D}\right) \eta = 0.
\]
Therefore, taking the scalar product with $\PTX\y{H}(\eta)$, we note that
\begin{equation} \label{eq:localRIP:proof:1}
\begin{split}
0 &= \langle \PTX\y{H}(\eta), \left( \y{G} \y{R}_{\Omega} \y{G}^* + \Id - \y{G}\y{G}^* \right) \y{H}(\eta) \rangle \\
&= \langle \PTX\y{H}(\eta), \left( \y{G} \y{R}_{\Omega} \y{G}^* + \Id - \y{G}\y{G}^* \right) \PTX \y{H}(\eta) \rangle \\&+ \langle \PTX\y{H}(\eta), \left( \y{G} \y{R}_{\Omega} \y{G}^* + \Id - \y{G}\y{G}^* \right)\y{P}_{T_{\f{H}}^\perp} \y{H}(\eta) \rangle,
\end{split}
\end{equation}
and furthermore, taking the scalar product with  $\y{P}_{T_{\f{H}}^\perp}\y{H}(\eta)$, we also observe that
\begin{align*}
0 &= \langle \y{P}_{T_{\f{H}}^\perp}\y{H}(\eta), \left( \y{G} \y{R}_{\Omega} \y{G}^* + \Id - \y{G}\y{G}^* \right) \y{H}(\eta) \rangle \\
&= \langle \y{P}_{T_{\f{H}}^\perp}\y{H}(\eta), \left( \y{G} \y{R}_{\Omega} \y{G}^* + \Id - \y{G}\y{G}^* \right) \PTX \y{H}(\eta) \rangle \\&+ \langle \y{P}_{T_{\f{H}}^\perp}\y{H}(\eta), \left( \y{G} \y{R}_{\Omega} \y{G}^* + \Id - \y{G}\y{G}^* \right)\y{P}_{T_{\f{H}}^\perp} \y{H}(\eta) \rangle,
\end{align*}
which is equivalent to 
\[
\begin{split}
&\langle \y{P}_{T_{\f{H}}^\perp}\y{H}(\eta), \left( \y{G} \y{R}_{\Omega} \y{G}^* + \Id - \y{G}\y{G}^* \right) \PTX \y{H}(\eta) \rangle\\& = \langle \PTX\y{H}(\eta), \left( \y{G} \y{R}_{\Omega} \y{G}^* + \Id - \y{G}\y{G}^* \right) \y{P}_{T_{\f{H}}^\perp} \y{H}(\eta) \rangle \\
&= - \langle \y{P}_{T_{\f{H}}^\perp}\y{H}(\eta), \left( \y{G} \y{R}_{\Omega} \y{G}^* + \Id - \y{G}\y{G}^* \right)\y{P}_{T_{\f{H}}^\perp} \y{H}(\eta) \rangle,
\end{split}
\]
where we used in the first equality the fact that $\y{G} \y{R}_{\Omega} \y{G}^* + \Id - \y{G}\y{G}^* $ is self-adjoint as a sum of self-adjoint operators. Inserting this into \cref{eq:localRIP:proof:1}, we obtain
\begin{equation} \label{eq:localRIP:proof:11}
\begin{split}
&\langle \PTX\y{H}(\eta), \left( \y{G} \y{R}_{\Omega} \y{G}^* + \Id - \y{G}\y{G}^* \right) \PTX \y{H}(\eta) \rangle \\&=  \langle \y{P}_{T_{\f{H}}^\perp}\y{H}(\eta), \left( \y{G} \y{R}_{\Omega} \y{G}^* + \Id - \y{G}\y{G}^* \right)\y{P}_{T_{\f{H}}^\perp} \y{H}(\eta) \rangle.
\end{split}
\end{equation}
We now bound the left and right hand side of the latter equality separately. On the one hand, we obtain a lower bound
\[
\begin{split}
& \left|\langle \PTX\y{H}(\eta), \left( \y{G} \y{R}_{\Omega} \y{G}^* + \Id - \y{G}\y{G}^* \right) \PTX \y{H}(\eta) \rangle \right| \\
&\geq \left|\langle \PTX\y{H}(\eta), \PTX \y{H}(\eta) \rangle\right| \\&- \left|\langle \PTX\y{H}(\eta), \left( \y{G} \y{R}_{\Omega} \y{G}^* - \y{G}\y{G}^* \right) \PTX \y{H}(\eta) \rangle\right|  \\
&= \left\| \PTX\y{H}(\eta)\right\|_F^2-\\&  \left|\langle \PTX\y{H}(\eta), \left( \PTX \y{G} \y{R}_{\Omega} \y{G}^*\PTX - \PTX\y{G}\y{G}^*\PTX \right) \PTX\y{H}(\eta) \rangle\right| \\
&\geq \left\| \PTX\y{H}(\eta)\right\|_F^2 -\\ & \left\|  \PTX \y{G} \y{R}_{\Omega} \y{G}^*\PTX - \PTX\y{G}\y{G}^*\PTX \right\| \left\| \PTX\y{H}(\eta)\right\|_F^2 \\
&\geq \left\| \PTX\y{H}(\eta)\right\|_F^2 -\frac{2}{5} \left\| \PTX\y{H}(\eta)\right\|_F^2,
\end{split}
\]
using the projection property $ \PTX^2 =  \PTX$ in the equality and \cref{eq:PTGROmGPT:25} in the last inequality, which implies that
\begin{align} \label{eq:localRIP:proof:2}
&\left\| \PTX\y{H}(\eta)\right\|_F^2 \nonumber\\&\leq \frac{5}{3} \left|\langle \PTX\y{H}(\eta), \left( \y{G} \y{R}_{\Omega} \y{G}^* + \Id - \y{G}\y{G}^* \right) \PTX \y{H}(\eta) \rangle \right| 
\end{align}

On the other hand, we have the upper bounds
\begin{equation*}
\begin{split}
& \left|\langle \y{P}_{T_{\f{H}}^\perp}\y{H}(\eta), \left( \y{G} \y{R}_{\Omega} \y{G}^*+ \Id - \y{G}\y{G}^* \right)\y{P}_{T_{\f{H}}^\perp} \y{H}(\eta) \rangle\right| \\
&\leq \left\| \y{P}_{T_{\f{H}}^\perp}\y{H}(\eta)\right\|_F \left\|  \y{G} \y{R}_{\Omega} \y{G}^*\right\| \left\|\y{P}_{T_{\f{H}}^\perp} \y{H}(\eta)\right\|_F \\&+ \left\| \y{P}_{T_{\f{H}}^\perp}\y{H}(\eta)\right\|_F \left\|\Id - \y{G}\y{G}^*\right\|   \left\|\y{P}_{T_{\f{H}}^\perp} \y{H}(\eta)\right\|_F \\
&\leq \left\|  \y{G}\right\| \left\|\y{R}_{\Omega}\right\|  \left\|\y{G}^*\right\| \left\|\y{P}_{T_{\f{H}}^\perp} \y{H}(\eta)\right\|_F^2   +  \left\|\y{P}_{T_{\f{H}}^\perp} \y{H}(\eta)\right\|_F^2 \\
&\leq \left( \left\|\y{R}_{\Omega}\right\|  + 1 \right) \left\|\y{P}_{T_{\f{H}}^\perp} \y{H}(\eta)\right\|_F^2,
\end{split}
\end{equation*}
using the sub-mulitiplicativity of the spectral norm and the fact that $\Id - \y{G}\y{G}^*$ is a projection in the third inequality, and observing that $\|\y{G}\| \leq 1$  and$\|\y{G}^*\| \leq 1$ in the last inequality. Combining this with \cref{eq:localRIP:proof:11} and \cref{eq:localRIP:proof:2}, this implies
\[
\left\| \PTX\y{H}(\eta)\right\|_F^2  \leq \frac{5}{3}  \left( \left\|\y{R}_{\Omega}\right\|  + 1 \right)  \left\|\y{P}_{T_{\f{H}}^\perp} \y{H}(\eta)\right\|_F^2
\]
and therefore
\[
\begin{split}
\|\y{H}(\eta)\|_F^2 &= \left\|\PTX\y{H}(\eta)\right\|_F^2 + \big\|\y{P}_{T_{\f{H}}^\perp}\y{H}(\eta)\big\|_F^2 \\&\leq  \frac{5}{3}  \left( \left\|\y{R}_{\Omega}\right\|  + 8/5 \right)  \left\|\y{P}_{T_{\f{H}}^\perp} \y{H}(\eta)\right\|_F^2,
\end{split}
\]
which finishes the proof.
\end{proof}
Next, we provide an auxiliary result of similar flavor as \Cref{lemma:MatrixIRLS:localRIP} to be used in the convergence analysis of \texttt{TOIRLS}.

\begin{lemma}\label{lemma:etaksigmarp1Xk}
Assume that \Cref{assumption:1} holds true for a normalized sampling operator $\y{R}_{\Omega}: \MnOplus  \to \MnOplus$ with respect to a rank-$r$ matrix $\f{H}_{\f{A}} \in \Rddn$ and constant $\alpha = 1/5$. If $\widetilde{\f{X}}\hk \in \MnOplus$ is such that the 
best rank-$r$ approximation of a matrix $\f{H}_k := \y{H}(\widetilde{\f{X}}\hk)$, i.e., 
\begin{equation} \label{eq:best:rankr:approximation}
\mathcal{T}_{r}(\f{H}_k) =  \argmin_{\f{Z} \in \Rddn : \rank(\f{Z}) \leq r} \|\f{Z} - \f{H}_k\|
\end{equation}
satisfies 
\[
\mathcal{T}_{r}(\f{H}_k)  \in \y{B}_{\HA}\left(\frac{ \sigma_r(\HA)}{ 32\sqrt{r}\kappa \left(\sqrt{ 6\|\y{R}_{\Omega}\|/5}+ 1 \right)} \right),
\]
then it holds that
\[
\|\y{H}(\eta\hk)\| < \sqrt{\frac{20}{3}  \left( \left\|\y{R}_{\Omega}\right\|  + 8/5 \right)}  \sqrt{dn-r} \sigma_{r+1}(\f{H}_k)
\]
where $\eta\hk = \widetilde{\f{X}}\hk - \y{Q}_T(\f{A})$.
\end{lemma}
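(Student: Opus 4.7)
The plan is to combine the null-space control of Lemma~\ref{lemma:MatrixIRLS:localRIP} with the tangent-space perturbation estimate of Lemma~\ref{eq:MatrixIRLS:tangentspace:localRIP:perturbation}, using $\mathcal{T}_r(\f{H}_k)$ as the rank-$r$ reference point at which to localize the RIP. The first step is to observe that, implicitly, $\widetilde{\f{X}}\hk$ arises as a \texttt{TOIRLS} iterate with $\lambda=0$, so $P_\Omega(\widetilde{\f{X}}\hk) = \f{y} = P_\Omega(\y{Q}_T(\f{A}))$ and hence $\eta\hk = \widetilde{\f{X}}\hk - \y{Q}_T(\f{A}) \in \ker P_\Omega$. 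Since $\y{R}_\Omega$ differs from $P_\Omega^*P_\Omega$ only by the strictly positive diagonal reweighting $1/p_{i,j,t}$ (respectively $n^2 T/m$ in the uniform case), we have $\ker \y{R}_\Omega = \ker P_\Omega$, so $\eta\hk \in \ker \y{R}_\Omega$.

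Next I would transfer the local RIP from $T_{\HA}$ to $T_{\mathcal{T}_r(\f{H}_k)}$. Apply Lemma~\ref{eq:MatrixIRLS:tangentspace:localRIP:perturbation} with $\f{H} = \mathcal{T}_r(\f{H}_k)$ and $\alpha = 1/5$. The radius required there is $\frac{\alpha}{8}\bigl(\sqrt{\|\y{R}_\Omega\|(1+\alpha)}+1\bigr)^{-1}\sigma_r(\HA) = \frac{\sigma_r(\HA)}{40(\sqrt{6\|\y{R}_\Omega\|/5}+1)}$, and a direct comparison shows the assumed radius $R_0$ satisfies this bound because $32\sqrt{r}\kappa \geq 40$ holds automatically from $r \geq 1$, $\kappa \geq 1$. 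Hence \eqref{eq:PTGROmGPT:25} holds with $T = T_{\mathcal{T}_r(\f{H}_k)}$, and Lemma~\ref{lemma:MatrixIRLS:localRIP} applied to $\eta\hk$ yields
\[
\|\y{H}(\eta\hk)\|_F^2 \leq \tfrac{5}{3}\bigl(\|\y{R}_\Omega\| + 8/5\bigr)\bigl\|\y{P}_{T^\perp}(\f{H}_k - \HA)\bigr\|_F^2,
\]
with $T = T_{\mathcal{T}_r(\f{H}_k)}$.

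The remaining task is to show $\|\y{P}_{T^\perp}(\f{H}_k - \HA)\|_F \leq 2\sqrt{dn-r}\,\sigma_{r+1}(\f{H}_k)$, which together with $\|\y{H}(\eta\hk)\|\leq \|\y{H}(\eta\hk)\|_F$ gives the claim. The decomposition $\y{P}_{T^\perp}(\f{H}_k - \HA) = (\f{H}_k - \mathcal{T}_r(\f{H}_k)) - \y{P}_{T^\perp}\HA$ uses that $\mathcal{T}_r(\f{H}_k) \in T$ and that $\f{H}_k - \mathcal{T}_r(\f{H}_k) \in T^\perp$. The tail is controlled by $\|\f{H}_k - \mathcal{T}_r(\f{H}_k)\|_F \leq \sqrt{dn-r}\,\sigma_{r+1}(\f{H}_k)$. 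For the perturbation term, write $T = T_{\mathcal{T}_r(\f{H}_k)}$ with $\mathcal{T}_r(\f{H}_k) = U_r\Sigma_r V_r^*$, so that $\y{P}_{T^\perp}\HA = U_\perp U_\perp^* \HA V_\perp V_\perp^*$. Since $U_\perp^* \mathcal{T}_r(\f{H}_k) = 0$, one obtains $\|\y{P}_{T^\perp}\HA\| \leq \|U_\perp^*(\HA - \mathcal{T}_r(\f{H}_k))\| \leq R_0$; combined with the rank-$r$ bound $\|\y{P}_{T^\perp}\HA\|_F \leq \sqrt{r}\,\|\y{P}_{T^\perp}\HA\|$ (using $\rank(\y{P}_{T^\perp}\HA) \leq \rank(\HA) = r$), this yields $\|\y{P}_{T^\perp}\HA\|_F \leq \sqrt{r}\,R_0 = \sigma_r(\HA)/\bigl[32\kappa\bigl(\sqrt{6\|\y{R}_\Omega\|/5}+1\bigr)\bigr]$.

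The main obstacle is then to dominate this perturbation term by $\sqrt{dn-r}\,\sigma_{r+1}(\f{H}_k)$, since the bound just derived depends on $\sigma_r(\HA)$ but not a priori on $\sigma_{r+1}(\f{H}_k)$. The plan is to extract an implicit lower bound on $\sigma_{r+1}(\f{H}_k)$ from the closeness hypothesis and the iterate structure, potentially by invoking that if $\sigma_{r+1}(\f{H}_k)$ were too small, then $\f{H}_k$ would be approximately rank-$r$ and the local RIP applied directly to $\f{H}_k - \HA$ (which then has rank $\leq 2r$) would already force $\f{H}_k = \HA$ via an interpolation-cum-uniqueness argument. I expect this to be the delicate step and likely to require a careful Weyl/Wedin-style comparison between $\sigma_r(\HA)$, $\sigma_{r+1}(\f{H}_k)$, $\|\mathcal{T}_r(\f{H}_k)-\HA\|$, and $R_0$ to absorb the $\sqrt{r}\,R_0$ term into the $\sqrt{dn-r}\,\sigma_{r+1}(\f{H}_k)$ one, producing the strict inequality claimed.
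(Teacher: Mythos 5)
Your setup is on the right track and coincides with the paper's own up to the decisive step: you correctly place $\eta\hk$ in $\ker\y{R}_\Omega$, transfer the local RIP to $\f{T}_k := \f{T}_{\y{T}_r(\f{H}_k)}$ via \Cref{eq:MatrixIRLS:tangentspace:localRIP:perturbation}, invoke \Cref{lemma:MatrixIRLS:localRIP}, and split $\y{P}_{\f{T}_k^\perp}\y{H}(\eta\hk)$ into the tail $\f{H}_k - \y{T}_r(\f{H}_k)$ (bounded by $\sqrt{dn-r}\,\sigma_{r+1}(\f{H}_k)$) plus the perturbation term $\y{P}_{\f{T}_k^\perp}\HA$. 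But the step you flag as ``the main obstacle'' is exactly where the proof lives, and your proposed bound $\|\y{P}_{\f{T}_k^\perp}\HA\|_F \le \sqrt{r}\,R_0$ cannot close it: that quantity is a fixed positive constant determined by the closeness radius, whereas the target right-hand side is proportional to $\sigma_{r+1}(\f{H}_k)$ and can be arbitrarily small. No juggling of constants will let a term that does not vanish with $\sigma_{r+1}(\f{H}_k)$ be dominated by one that does; the bound on the perturbation term must itself degrade as $\y{H}(\eta\hk)\to 0$. Your fallback idea (a uniqueness argument when $\sigma_{r+1}(\f{H}_k)$ is small) is not carried out and would at best give a dichotomy, not the quantitative strict inequality claimed.

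The paper closes this gap by making the perturbation term \emph{quadratic} in the error. Writing $\y{P}_{\f{T}_k^\perp}(\HA) = \f{U}_{\perp}\hk\f{U}_{\perp}^{(k)*}\f{U}_0\Sigma_0\f{V}_0^*\f{V}_{\perp}\hk\f{V}_{\perp}^{(k)*}$ with $\HA=\f{U}_0\Sigma_0\f{V}_0^*$ the compact SVD, Wedin's $\sin\Theta$ theorem gives $\max\{\|\f{U}_{\perp}^{(k)*}\f{U}_0\|,\|\f{V}_0^*\f{V}_{\perp}\hk\|\}\lesssim\|\y{H}(\eta\hk)\|$ under the closeness hypothesis, so that $\|\y{P}_{\f{T}_k^\perp}\HA\|_F\lesssim \sqrt{r}\,\kappa\,\sigma_r(\HA)\,\|\y{H}(\eta\hk)\|^2$ (in the paper's normalization). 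Feeding this into the null-space inequality yields a self-bounding relation of the form $\|\y{H}(\eta\hk)\|\le C_1\sqrt{dn-r}\,\sigma_{r+1}(\f{H}_k)+C_2\|\y{H}(\eta\hk)\|^2$, and the radius assumption is calibrated precisely so that $C_2\|\y{H}(\eta\hk)\|<\tfrac{1}{2}$; absorbing the quadratic term into the left-hand side gives the claimed strict inequality. As a minor further point, your verification that the assumed radius fits inside the radius required by \Cref{eq:MatrixIRLS:tangentspace:localRIP:perturbation} via ``$32\sqrt r\kappa\ge 40$'' fails for $r=1$ and $\kappa$ near $1$, and one must also account for a factor $2$ when passing between closeness of $\y{T}_r(\f{H}_k)$ and of $\f{H}_k$ to $\HA$; these are small issues compared to the missing Wedin-based absorption argument.
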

\begin{proof}
If $\f{T}_k := T_{\mathcal{T}_{r}(\f{H}_k) }$  is tangent space onto the manifold of rank-$r$ matrices at $\mathcal{T}_{r}(\f{H}_k)$ and if $\f{U}_{\perp}\hk \in \R^{d_1 n \times (d_1 n - r)}$ and $\f{V}_{\perp}\hk \in \R^{d_2 n \times (d_2 n - r)}$ are the matrices with the last  $d_1 n - r$ and last $d_2 n - r$ left and right singular vectors of $\f{H}_k$, respectively, we can write the action of the projection $\y{P}_{\f{T}_k^\perp}: \Rddn \to \Rddn$ onto the orthogonal complement $\f{T}_k^\perp$ of $\f{T}_k$ as
\[
\y{P}_{\f{T}_k^\perp}(\f{Z}) =  \f{U}_{\perp}^{(k)}\f{U}_{\perp}^{(k)*} \f{Z} \f{V}_{\perp}^{(k)}\f{V}_{\perp}^{(k)*},
\]
cf., e.g., \cite{recht}. Let $d= \min(d_1, d_2)$. If $\Sigma_k^{\perp} \in \R^{(d_1 n - r) \times (d_2 n - r)}$ is diagonal with the last $d n - r$ singular values of $\f{H}_k$ ordered in an non-increasing way, we observe that
\[
\y{P}_{\f{T}_k^\perp}(\f{H}_k) =   \f{U}_{\perp}^{(k)} \Sigma_k^{\perp} \f{V}_{\perp}^{(k)*}.
\]
Now, if $\f{H}_{\f{A}} = \f{U}_0 \Sigma_0 \f{V}_0^T$ is a compact singular value decomposition of $\f{H}_{\f{A}}$, we estimate 
\[
\begin{split}
&\|\y{P}_{\f{T}_k^{\perp}}(\y{H}(\f{\eta}\hk))\|_{F} 
\leq  \|\y{P}_{\f{T}_k^{\perp}}(\f{H}_k)\|_{F} +   \|\y{P}_{\f{T}_k^{\perp}}(\f{H}_{\f{A}})\|_{F} \\
&\leq \sqrt{\sum_{i=r+1}^{d n} \sigma_{i}^2(\f{H}_k)} +  \left\| \f{U}_{\perp}^{(k)}\f{U}_{\perp}^{(k)*} \f{H}_{\f{A}} \f{V}_{\perp}^{(k)}\f{V}_{\perp}^{(k)*}\right\|_{F} \\
&\leq  \sqrt{\sum_{i=r+1}^{d n} \sigma_{i}^2(\f{H}_k)} + \|  \f{U}_{\perp}^{(k)}\| \left\|\f{U}_{\perp}^{(k)*} \f{H}_{\f{A}} \f{V}_{\perp}^{(k)}\right\|_{F} \|\f{V}_{\perp}^{(k)*}\|    \\ 
&\leq  \sqrt{dn-r} \sigma_{r+1}(\f{H}_k) + \|\f{U}_{\perp}^{(k)*} \f{U}_0\| \|\f{\Sigma}_0\|_F \|\f{V}_0^{*} \f{V}_{\perp}^{(k)}\| \\
\end{split}
\]
using the definition of $\eta\hk$, the triangle inequality, the fact that $\|\f{A} \f{B}\|_F \leq \|\f{A}\| \|\f{B}\|_F$ for all matrices $\f{A}$ and $\f{B}$, and that $ \|\f{V}_{\perp}^{(k)*}\|  = \|  \f{U}_{\perp}^{(k)}\| = 1$. 

By the classical perturbation bound due to Wedin \cite{Wedin72,Stewart06}, cf. also \cite[Lemma B.6]{KMV21},
\begin{align*}
\max\{\|\f{U}_{\perp}^{(k)*} \f{U}_0\|,\|\f{V}_0^{*} \f{V}_{\perp}^{(k)}\|\}  &\leq \sqrt{2} \frac{1}{\zeta} \| \y{H}(\eta\hk)\|\,,
\end{align*}
if $\f{H}_k \in \y{B}_{\HA}(\zeta)$ with $0 < \zeta < 1$. By assumption $\zeta < 1/2$, so
\[
\begin{split}
&\|\y{P}_{\f{T}_k^{\perp}}(\y{H}(\f{\eta}\hk))\|_{F} \\ &\leq  \sqrt{dn-r} \sigma_{r+1}(\f{H}_k) + 8 \| \y{H}(\eta\hk)\|^2 \sqrt{r} \sigma_1(\f{H}_{\f{A}}) \\
&\leq \sqrt{dn-r} \sigma_{r+1}(\f{H}_k) + 8 \| \y{H}(\eta\hk)\|^2  \sqrt{r} \kappa \sigma_r(\f{H}_{\f{A}}).
\end{split}
\]

Due to our assumptions, we can apply \Cref{eq:MatrixIRLS:tangentspace:localRIP:perturbation} for $\alpha = 1/5$ and further \Cref{lemma:MatrixIRLS:localRIP} for $\eta = \eta\hk$ to estimate that
\[
\begin{split}
&\|\y{H}(\eta\hk)\| \leq \|\y{H}(\eta\hk)\|_{F} \\&\leq \sqrt{\frac{5}{3}  \left( \left\|\y{R}_{\Omega}\right\|  + 8/5 \right)}  \|\y{P}_{\f{T}_k^{\perp}}(\y{H}(\eta\hk))\|_{F} \\
&\leq  \sqrt{\frac{5}{3}  \left( \left\|\y{R}_{\Omega}\right\|  + 8/5 \right)}\\& \left( \sqrt{dn-r} \cdot \sigma_{r+1}(\f{H}_k) + 8 \| \y{H}(\eta\hk)\|^2  \sqrt{r} \kappa \sigma_r(\f{H}_{\f{A}})\right)  \\
&\leq  \sqrt{\frac{5}{3}  \left( \left\|\y{R}_{\Omega}\right\|  + 8/5 \right)}  \sqrt{dn-r} \cdot \sigma_{r+1}(\f{H}_k)\\&\quad\quad\quad\quad+    \frac{8  \sqrt{r} \kappa \sqrt{\frac{5}{3}  \left( \left\|\y{R}_{\Omega}\right\|  + 8/5 \right)} }{\left(32\sqrt{r}\kappa\right) \left(\sqrt{ 6\|\y{R}_{\Omega}\|/5}+ 1 \right)}  \|\y{H}(\eta\hk)\|\\
&<   \sqrt{\frac{5}{3}  \left( \left\|\y{R}_{\Omega}\right\|  + 8/5 \right)}  \sqrt{dn-r} \cdot \sigma_{r+1}(\f{H}_k) + \frac{1}{2} \|\y{H}(\eta\hk)\|.
\end{split}
\]
Rearranging this estimate, we obtain
\[
\|\y{H}(\eta\hk)\|  < \sqrt{\frac{20}{3}  \left( \left\|\y{R}_{\Omega}\right\|  + 8/5 \right)}  \sqrt{dn-r} \sigma_{r+1}(\f{H}_k).
\]
\end{proof}

\subsection{Proof of \Cref{thm:convergence:template}} \label{sec:weightoperator:structure}
In this section, we provide the proof of \Cref{thm:convergence:template}. For this purpose, we use key results of \cite{KMV21}, adapted to our notation.
\begin{proposition}[{\cite[Lemma B.8 and Lemma B.9]{KMV21}}] \label{prop:weightoperator}
Let $\f{H}_{\f{A}} = \y{H}(\y{Q}_T(\f{A})) \in \Ran \y{H}$ be a matrix of rank $r$, let $\widetilde{\f{X}}\hk$ be the $k$-th iterate of \Cref{algo:IRLS:graphcompletion} for input parameters $\Omega$, $\f{y}=P_{\Omega}(\y{Q}_T(\f{A}))$, $\lambda = 0$ and $\widetilde{r} =r $. For $\f{H}_{k}:=\y{H}(\widetilde{\f{X}}\hk)$, assume that $\varepsilon_k=\sigma_{r+1}(\f{H}_{k})$ and that
\begin{equation*}
\| \y{H}(\eta) \|_F \leq C \|\y{P}_{\f{T}_k^{\perp}} \y{H}(\eta) \|_F \quad \quad \text{ for all } \eta \in \ker \y{R}_{\Omega}
\end{equation*}
for some constant $C$, where $\f{T}_k = \f{T}_{\mathcal{T}_{r}(\f{H}_{k})}$  is the tangent space onto the manifold of rank-$r$ matrices at $\mathcal{T}_{r}(\f{H}_{k})$. Then
\begin{equation*}
	\|\y{H}(\widetilde{\f{X}}\hkk) - \f{H}_{\f{A}}\| \leq C^2 \varepsilon_k^{2} \|W_{\f{H}_{k}}(\f{H}_{\f{A}})\|_{S_1},
\end{equation*}
where $W_{\f{H}_{k}} : \Rddn \to \Rddn$ is the optimal weight operator of $\f{H}_k$ as in \cref{eq:def:W}, and $\| \cdot\|_{S_1}$ describes the Schatten-$1$ norm.

Furthermore, if additionally $\y{H}(\widetilde{\f{X}}\hk) \in \y{B}_{\HA} (\zeta)$ for some $0 < \zeta < 1$, then
\begin{align*}
&\|\y{H}(\widetilde{\f{X}}\hkk) - \f{H}_{\f{A}}\|\leq C^2  r (1-\zeta)^{-2} \sigma_{r}(\f{H}_{\f{A}})^{-1} \\&\left(\varepsilon_k^{2} + 4 \varepsilon_k \|\y{H}(\widetilde{\f{X}}\hk) - \f{H}_{\f{A}}\| \kappa + 2 \|\y{H}(\widetilde{\f{X}}\hkk) - \f{H}_{\f{A}}\|^2 \kappa  \right),
\end{align*}
where $\kappa =  \sigma_{1}(\f{H}_{\f{A}})  /  \sigma_{r}(\f{H}_{\f{A}})$ is the condition number of $\f{H}_{\f{A}}$. 
\end{proposition}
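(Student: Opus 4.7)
The plan is to reduce \Cref{thm:convergence:template} to a single invocation of \Cref{prop:weightoperator} (second statement) by first establishing the null-space constant that it requires and then iterating. The key observation is that the three ingredients built up earlier in the excerpt---the tangent-space perturbation of local RIP (\Cref{eq:MatrixIRLS:tangentspace:localRIP:perturbation}), the null-space property (\Cref{lemma:MatrixIRLS:localRIP}), and the weight-operator bound (\Cref{prop:weightoperator})---fit together in exactly this order, so what remains is bookkeeping on constants and an induction on $\ell$.

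I would start by setting $\f{H}_k := \y{H}(\widetilde{\f{X}}\hk)$, $\f{T}_k := \f{T}_{\mathcal{T}_r(\f{H}_k)}$, and observing that, because $\f{H}_{\f{A}}$ has rank $r$, the best rank-$r$ approximation is closer to $\f{H}_k$ than $\f{H}_{\f{A}}$, so $\|\mathcal{T}_r(\f{H}_k)-\f{H}_{\f{A}}\| \leq 2\|\f{H}_k-\f{H}_{\f{A}}\|$ by the triangle inequality. Choosing $c_0$ small enough, the assumed proximity $\f{H}_k \in \y{B}_{\HA}(R_0)$ with $R_0 = c_0\|\y{R}_\Omega\|^{-3/2}r^{-1}\kappa^{-1}(dn-r)^{-1/2}\sigma_r(\HA)$ puts $\mathcal{T}_r(\f{H}_k)$ strictly inside the ball required by \Cref{eq:MatrixIRLS:tangentspace:localRIP:perturbation} applied with $\alpha=1/5$. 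The latter then yields a perturbed local RIP on $\f{T}_k$ with constant $2/5$, which in turn feeds \Cref{lemma:MatrixIRLS:localRIP} to produce the null-space property
\[
\|\y{H}(\eta)\|_F^2 \leq \tfrac{5}{3}\bigl(\|\y{R}_\Omega\|+\tfrac{8}{5}\bigr)\,\|\y{P}_{\f{T}_k^\perp}\y{H}(\eta)\|_F^2 \qquad \forall\,\eta \in \ker\y{R}_\Omega,
\]
with an explicit constant $C^2 = \tfrac{5}{3}(\|\y{R}_\Omega\|+8/5)$.

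Next I would apply \Cref{prop:weightoperator} to $\eta\hkk := \widetilde{\f{X}}\hkk - \y{Q}_T(\f{A})$, which lies in $\ker P_\Omega = \ker \y{R}_\Omega$ by the equality constraint ($\lambda = 0$) in the weighted least squares step of \Cref{algo:IRLS:graphcompletion}. Using Weyl's monotonicity, $\varepsilon_k = \sigma_{r+1}(\f{H}_k) \leq \|\f{H}_k-\f{H}_{\f{A}}\| = \|\y{H}(\eta\hk)\|$ since $\HA$ has rank $r$, so after substituting $\varepsilon_k$ into the right-hand side of the second statement of \Cref{prop:weightoperator} (with $\zeta = 1/2$, which is guaranteed by $R_0 \leq \sigma_r(\HA)/2$) and collecting terms, the three contributions $\varepsilon_k^2$, $4\varepsilon_k\|\y{H}(\eta\hk)\|\kappa$, and $2\|\y{H}(\eta\hk)\|^2\kappa$ combine into $(1+6\kappa)\|\y{H}(\eta\hk)\|^2$, yielding
\[
\|\y{H}(\eta\hkk)\| \leq \nu\,\|\y{H}(\eta\hk)\|^{2}, \qquad \nu := \tfrac{20}{3\sigma_r(\HA)}\,r\,(\|\y{R}_\Omega\|+\tfrac{8}{5})(1+6\kappa),
\]
which is the desired quadratic bound; the $\min$ in the conclusion then covers the (vacuous under the inductive hypothesis) scenario in which $\nu\|\cdot\|^2$ would exceed $\|\cdot\|$.

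Finally, I would close the loop by induction on $\ell$. Plugging $R_0$ into $\nu$ and using that $(\|\y{R}_\Omega\|+8/5)/\|\y{R}_\Omega\|^{3/2}$ is uniformly bounded for $\|\y{R}_\Omega\| \geq 1$, one computes $\nu R_0 = O\!\bigl(c_0/\sqrt{(dn-r)\|\y{R}_\Omega\|}\bigr)$, which is $\leq 1/2$ for a sufficiently small absolute $c_0$. Hence $\|\y{H}(\eta\hkk)\| \leq \tfrac{1}{2}\|\y{H}(\eta\hk)\| \leq R_0$, so $\f{H}_{k+1} \in \y{B}_{\HA}(R_0)$ and the entire argument reapplies at iteration $k+1$. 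The main obstacle is the combinatorics of constants: a single universal $c_0$ must simultaneously meet (i) the perturbation tolerance of \Cref{eq:MatrixIRLS:tangentspace:localRIP:perturbation}, (ii) the $\zeta < 1$ requirement of \Cref{prop:weightoperator}, and (iii) the contractivity $\nu R_0 < 1$, and---via \Cref{lemma:etaksigmarp1Xk}, which controls $\varepsilon_k$ from below in terms of $\|\y{H}(\eta\hk)\|$---the $(dn-r)^{-1/2}$ factor in $R_0$ must be justified as the scale at which the smoothing parameter does not inflate relative to the effective error, so that the one-step estimate is genuinely self-improving.
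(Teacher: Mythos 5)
Your proposal does not prove the statement it was assigned. The target is \Cref{prop:weightoperator} itself---the bound $\|\y{H}(\widetilde{\f{X}}\hkk) - \f{H}_{\f{A}}\| \leq C^2 \varepsilon_k^{2} \|W_{\f{H}_{k}}(\f{H}_{\f{A}})\|_{S_1}$ and its refinement under $\y{H}(\widetilde{\f{X}}\hk) \in \y{B}_{\HA}(\zeta)$---yet your argument opens by announcing a reduction of \Cref{thm:convergence:template} to ``a single invocation of \Cref{prop:weightoperator}'' and later explicitly applies \Cref{prop:weightoperator} as a black box. That is circular with respect to the task: you \emph{use} the proposition, you never \emph{establish} it. What you have written is instead a sketch of the proof of \Cref{thm:convergence:template}, and as such it tracks the paper's own argument in \Cref{sec:weightoperator:structure} quite faithfully (perturbed local RIP via \Cref{eq:MatrixIRLS:tangentspace:localRIP:perturbation}, null space property via \Cref{lemma:MatrixIRLS:localRIP}, the one-step bound with $\nu = \tfrac{20}{3\sigma_r(\HA)}(1+6\kappa)(\|\y{R}_{\Omega}\|+8/5)r$, and the induction using \Cref{lemma:etaksigmarp1Xk} and the monotonicity of $\sigma_{r+1}$)---but that is a different theorem.

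A proof of \Cref{prop:weightoperator} has entirely different content, which your proposal never touches. One must exploit the structure of the weighted least squares minimizer: with $\lambda=0$, $\widetilde{\f{X}}\hkk$ minimizes $\langle \widetilde{\f{X}}, \widetilde{W}_{\f{H}_{k}}(\widetilde{\f{X}})\rangle$ subject to $P_{\Omega}(\widetilde{\f{X}})=\f{y}$, whose first-order condition forces $\widetilde{W}_{\f{H}_{k}}(\widetilde{\f{X}}\hkk)$ to be orthogonal to $\ker P_{\Omega}$; pairing this with $\eta\hkk = \widetilde{\f{X}}\hkk - \y{Q}_T(\f{A}) \in \ker P_{\Omega}$, and using that $W_{\f{H}_k}$ acts as $\varepsilon_k^{-2}\Id$ on the complement of the leading singular spaces of $\f{H}_k$ (which is where both the null-space constant $C^2$ and the factor $\varepsilon_k^{2}$ enter) yields the first inequality with the Schatten-$1$ norm $\|W_{\f{H}_{k}}(\f{H}_{\f{A}})\|_{S_1}$. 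The second inequality then requires estimating that Schatten-$1$ norm by comparing, via Weyl-type perturbation bounds, the singular values and singular subspaces of $\f{H}_k$ with those of $\f{H}_{\f{A}}$ under the proximity assumption; this is where the factors $r(1-\zeta)^{-2}\sigma_{r}(\f{H}_{\f{A}})^{-1}$ and the $\kappa$-weighted combination of $\varepsilon_k^{2}$, $\varepsilon_k\|\y{H}(\widetilde{\f{X}}\hk)-\f{H}_{\f{A}}\|$ and the quadratic error term originate. The paper itself does not reprove these steps but imports them from \cite[Lemmas B.8 and B.9]{KMV21}; a self-contained proof would need to reproduce that analysis, not the downstream convergence argument.
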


We can now put \Cref{lemma:MatrixIRLS:localRIP}, \Cref{eq:MatrixIRLS:tangentspace:localRIP:perturbation} and \Cref{prop:weightoperator} together to prove \Cref{thm:convergence:template}, showing that we attain locally quadratic convergence under the stated assumptions.
\begin{proof}[Proof of \Cref{thm:convergence:template}]
	
Let $k \in \N$ and  $\widetilde{\f{X}}\hk$ be the $k$-th iterate of \Cref{algo:IRLS:graphcompletion} with inputs $\Omega$, $\f{y} = P_{\Omega}(\y{Q}_T(\f{A}))$, $\lambda = 0$ and $\widetilde{r} = r$.  First, we observe that 
\[
\begin{split}
\zeta_3 &=  \min\left( \zeta_1, \zeta_2 , \zeta_3,\frac{1}{2} \right) := \min\bigg( \frac{ \left(32\sqrt{r}\kappa\right)^{-1}}{\sqrt{ 6\|\y{R}_{\Omega}\|/5}+ 1}, \\
&\frac{3}{20r}\frac{(1 + 6 \kappa)^{-1}\sigma_r(\f{H}_{\f{A}})}{\|\y{R}_{\Omega}\| + 8/5}, \\
&\frac{3^{\frac{3}{2}} \left(1+ 6 \kappa \right)^{-1} \sigma_r(\f{H}_{\f{A}})} {20^{\frac{3}{2}}r \left(\left\|\y{R}_{\Omega}\right\|  + 8/5 \right)^{\frac{3}{2}} (dn-r)^{\frac{1}{2}} }  ,\frac{1}{2}  \bigg).
\end{split}
\]

We note that \Cref{assumption:1} is satisfied with respect to $\f{H}_{\f{A}}$ and constant $\alpha =  1/5$. Since  $\y{H}(\widetilde{\f{X}}\hk) \in \y{B}_{\HA}\left( \zeta_3 \right)$ (as $\zeta_3 < \zeta_2$),  we also have that a best rank-$r$ approximation $\mathcal{T}_{r}(\y{H}(\widetilde{\f{X}}\hk))$ of $\y{H}(\widetilde{\f{X}}\hk)$ satisfies 
\[
\begin{split}
&\big\| \mathcal{T}_{r}(\y{H}(\widetilde{\f{X}}\hk)) - \f{H}_{\f{A}}  \big\|\\ &\leq \big\| \y{H}(\widetilde{\f{X}}\hk) - \f{H}_{\f{A}}  \big\|  + 
 \big\| \y{H}(\widetilde{\f{X}}\hk) - \mathcal{T}_{r}(\y{H}(\widetilde{\f{X}}\hk)) \big\| \\
 &\leq 2 \big\| \y{H}(\widetilde{\f{X}}\hk) - \f{H}_{\f{A}}  \big\| \leq 2 \zeta_3 \leq \frac{1}{40} \left(\sqrt{ 6\|\y{R}_{\Omega}\|/5}+ 1 \right)^{-1},
 \end{split}
\]
from which it follows due to \Cref{eq:MatrixIRLS:tangentspace:localRIP:perturbation} that \cref{eq:PTGROmGPT:25} holds true for $\f{H} :=  \mathcal{T}_{r}(\y{H}(\widetilde{\f{X}}\hk))$. Thus, \Cref{lemma:MatrixIRLS:localRIP} implies that
\[
\|\y{H}(\eta\hk)\|_{F}^2 \leq \frac{5}{3}  \left( \left\|\y{R}_{\Omega}\right\|  + 8/5 \right)  \left\|\y{P}_{\f{T}_k^\perp} \y{H}(\eta\hk)\right\|_F^2,
\]
where $\eta\hk := \widetilde{\f{X}}\hk - \y{Q}_T(\f{A})$ and $\f{T}_k = \f{T}_{\f{H}}=  \f{T}_{\mathcal{T}_{r}(\y{H}(\widetilde{\f{X}}\hk))}$ is tangent space onto the manifold of rank-$r$ matrices at $\f{H}$.

Next, since $\varepsilon_k = \sigma_{r+1}(\y{H}(\widetilde{\f{X}}\hk))$, it follows from the Eckardt-Young-Mirsky theorem \cite{Mirsky60} that
\begin{align*}
\varepsilon_k &= \sigma_{r+1}(\y{H}(\widetilde{\f{X}}\hk)) \\&=\|\y{H}(\widetilde{\f{X}}\hk)- \f{H}\| \leq \|\y{H}(\widetilde{\f{X}}\hk) - \f{H}_{\f{A}}\| = \|\y{H}(\eta\hk)\|,
\end{align*}
where we used that $\f{H}_{\f{A}}$ is of rank $r$ in the inequality. Since also $\y{H}(\widetilde{\f{X}}\hk) \in \y{B}_{\HA}\left(1/2 \right)$, it follows therefore from \Cref{prop:weightoperator} that
\begin{equation} \label{eq:Hetakp1:normest}
\begin{split}
&\|\y{H}(\eta\hkk) \| \\&\leq \frac{20}{3}  \left( \left\|\y{R}_{\Omega}\right\|  + 8/5 \right)   r \sigma_{r}(\f{H}_{\f{A}})^{-1} \\ 
&\quad\quad\quad \left(\varepsilon_k^{2} + 4 \varepsilon_k \|\y{H}(\eta\hk)\| \kappa + 2 \|\y{H}(\eta\hk)\|^2 \kappa  \right) \\
&\leq \frac{20}{3}  \left( \left\|\y{R}_{\Omega}\right\|  + 8/5 \right)   r \sigma_{r}(\f{H}_{\f{A}})^{-1}  \left(1+ 6 \kappa \right) \|\y{H}(\eta\hk)\|^2,
\end{split}
\end{equation}
where $\kappa = \sigma_1(\f{H}_{\f{A}}) /  \sigma_r(\f{H}_{\f{A}})$ is the condition number of $\f{H}_{\f{A}}$ and $\eta\hkk = \widetilde{\f{X}}\hkk - \y{Q}_T(\f{A})$.

If, additionally, we assume that $\y{H}(\widetilde{\f{X}}\hk) \in \y{B}_{\HA}\left( \zeta_2 \right)$, then we can further bound the right hand side of \cref{eq:Hetakp1:normest} to obtain
\begin{equation} \label{eq:Heta:normdecay}
\|\y{H}(\eta\hkk)\|  < \|\y{H}(\eta\hk)\|\,,
\end{equation}
and also obtain a quadratic decay in the spectral error
\[
\|\y{H}(\eta\hkk)\|  \leq \nu \|\y{H}(\eta\hk)\|^2\,,
\]
with $\nu = \frac{20}{3 \sigma_r(\f{H}_{\f{A}})} (1 + 6 \kappa) \left(\|\y{R}_{\Omega}\| + 8/5 \right) r$. 

What remains to be shown is that the $(r+1)$-st singular value $\y{H}\left(\widetilde{\f{X}}\hk\right)$ is strictly decreasing from one iterate to the next.
If $\y{H}(\widetilde{\f{X}}\hk) \in \y{B}_{\HA}\left( \zeta_1\right)$, 
it follows that
\begin{equation} \label{eq:sigma:decreasing}
\begin{split}
&\sigma_{r+1}(\y{H}(\widetilde{\f{X}}\hkk)) \leq \|\y{H}(\eta\hkk)\| \\&\leq  \frac{20}{3}  \left( \left\|\y{R}_{\Omega}\right\|  + 8/5 \right)   r \sigma_{r}(\f{H}_{\f{A}})^{-1}  \left(1+ 6 \kappa \right) \|\y{H}(\eta\hk)\|^2 \\
&< \left(\frac{20}{3}  \left( \left\|\y{R}_{\Omega}\right\|  + 8/5 \right)\right)^{3/2}   r \\& \quad\quad \cdot \left(1+ 6 \kappa \right) \sqrt{dn-r}   \sigma_{r+1}(\y{H}(\widetilde{\f{X}}\hk)) \frac{\|\y{H}(\eta\hk)\|}{\sigma_{r}(\f{H}_{\f{A}})} \\
&\leq  \sigma_{r+1}(\y{H}(\widetilde{\f{X}}\hk)),
\end{split}
\end{equation}
using \cref{eq:Hetakp1:normest} in the second inequality, \Cref{lemma:etaksigmarp1Xk} in the third inequality, and $\y{H}(\widetilde{\f{X}}\hk) \in \y{B}_{\HA}\left( \zeta_3\right)$ in the last inequality.

Finally, we recall the update rule \eqref{eq:MatrixIRLS:epsdef}, which gives that \cref{eq:sigma:decreasing} implies that $\varepsilon_{k+1} = \sigma_{r+1}(\y{H}(\widetilde{\f{X}}\hkk))$, so that \cref{eq:Heta:normdecay} ensures that the assumptions of \Cref{thm:convergence:template} are not only fulfilled for iteration $k$, but also for iteration $k+1$. By induction, this implies that $\y{H}(\widetilde{\f{X}}^{(k+\ell)}) \xrightarrow{\ell \to \infty} \f{H}_{\f{A}}$, concluding the proof of \Cref{thm:convergence:template}.
\end{proof}

\subsection{Computational Details} \label{sec:computational:details}
In this section, we detail some aspects of anx efficient implementations of \texttt{TOIRLS}, cf. \Cref{algo:IRLS:graphcompletion}.

\subsubsection{Explicit Expression for Weighted Least Squares Solution}
First, we justify the explicit formula of \Cref{sec:computational:considerations} provided for the $k+1$-st iterate $\widetilde{\f{X}}\hkk$ of  \texttt{TOIRLS}.
\begin{lemma} \label{lemma:explicit:weighted:leastsquares}
For any $\lambda \geq 0$, it holds that the solution $\widetilde{\f{X}}\hkk$ of the weighted least squares problem \cref{eq:MatrixIRLS:Xdef} satisfies 
\begin{equation} \label{eq:Xtildek:implementation}
\widetilde{\f{X}}\hkk  = \widetilde{W}_{\f{H}_{k}}^{-1}  P_{\Omega}^* \left( \lambda\Id +  P_{\Omega} \widetilde{W}_{\f{H}_{k}}^{-1} P_{\Omega}^* \right)^{-1}(\f{y}),
\end{equation}
where $\widetilde{W}_{\f{H}_{k}}^{-1}: \Rddn \to \Rddn$ is the inverse of the effective weight operator $\widetilde{W}_{\f{H}_{k}}: \Rddn \to \Rddn$ of \Cref{def:weightoperator:MatrixIRLS}.
\end{lemma}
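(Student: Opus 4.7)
The plan is to derive \eqref{eq:Xtildek:implementation} from the first-order optimality conditions of the convex quadratic (or equality-constrained quadratic) problem \eqref{eq:MatrixIRLS:Xdef}, handling the cases $\lambda > 0$ and $\lambda = 0$ in parallel. I will first argue that $\widetilde{W}_{\f{H}_k}:\MnOplus\to\MnOplus$ is invertible: by \Cref{def:weightoperator:MatrixIRLS} and the convention $(\Sigma_{\varepsilon,d_\ell})_{ii} = \max(\sigma_i,\varepsilon)$ with $\varepsilon>0$, the optimal weight operator $W_{\f{H}_k}$ is positive definite on $\Rddn$, and since $\y{H}:\MnOplus\to\Rddn$ is injective (its kernel is trivial), the composition $\widetilde{W}_{\f{H}_k}=\y{H}^{*}W_{\f{H}_k}\y{H}$ is self-adjoint and positive definite on $\MnOplus$, hence invertible. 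This also guarantees that the objective $\widetilde{\f{X}}\mapsto\langle\widetilde{\f{X}},\widetilde{W}_{\f{H}_k}(\widetilde{\f{X}})\rangle+G_{\Omega,\f{y}}^{\lambda}(\widetilde{\f{X}})$ is strictly convex, so that its minimizer is unique whenever it exists.

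For $\lambda>0$, setting the gradient of the (smooth) objective to zero gives the normal equation $\widetilde{W}_{\f{H}_k}(\widetilde{\f{X}}) + \tfrac{1}{\lambda}P_\Omega^{*}P_\Omega(\widetilde{\f{X}}) = \tfrac{1}{\lambda}P_\Omega^{*}(\f{y})$. Rearranging as $\widetilde{\f{X}} = \tfrac{1}{\lambda}\widetilde{W}_{\f{H}_k}^{-1}P_\Omega^{*}(\f{y}-P_\Omega(\widetilde{\f{X}}))$, introducing the auxiliary residual $\f{z}:=\f{y}-P_\Omega(\widetilde{\f{X}})\in\R^{|\Omega|}$, and applying $P_\Omega$ on both sides, one obtains $\bigl(\lambda\Id + P_\Omega\widetilde{W}_{\f{H}_k}^{-1}P_\Omega^{*}\bigr)\f{z}=\lambda\f{y}$. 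The operator $\lambda\Id + P_\Omega\widetilde{W}_{\f{H}_k}^{-1}P_\Omega^{*}$ is strictly positive definite on $\R^{|\Omega|}$ (sum of a positive multiple of identity and a positive semidefinite operator), hence invertible. Substituting back yields the claimed formula \eqref{eq:Xtildek:implementation}.

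For $\lambda=0$, the problem becomes $\min\langle\widetilde{\f{X}},\widetilde{W}_{\f{H}_k}(\widetilde{\f{X}})\rangle$ subject to $P_\Omega(\widetilde{\f{X}})=\f{y}$. Introducing a Lagrange multiplier $\f{\mu}\in\R^{|\Omega|}$, the stationarity condition reads $2\widetilde{W}_{\f{H}_k}(\widetilde{\f{X}})+P_\Omega^{*}\f{\mu}=0$, so $\widetilde{\f{X}}=-\tfrac{1}{2}\widetilde{W}_{\f{H}_k}^{-1}P_\Omega^{*}\f{\mu}$. Enforcing the primal constraint gives $-\tfrac{1}{2}P_\Omega\widetilde{W}_{\f{H}_k}^{-1}P_\Omega^{*}\f{\mu}=\f{y}$, and since $P_\Omega\widetilde{W}_{\f{H}_k}^{-1}P_\Omega^{*}$ is positive definite on $\R^{|\Omega|}$ provided $P_\Omega^{*}$ is injective (which holds by the coordinate nature of $P_\Omega$ on distinct sampling indices), we may solve $\f{\mu}=-2(P_\Omega\widetilde{W}_{\f{H}_k}^{-1}P_\Omega^{*})^{-1}(\f{y})$. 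Substituting back reproduces \eqref{eq:Xtildek:implementation} with $\lambda=0$.

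The proof is essentially a clean application of convex quadratic optimization and Lagrange duality, and the only subtlety worth checking carefully is the invertibility of $\widetilde{W}_{\f{H}_k}$ and of the Schur-complement-type operator $\lambda\Id+P_\Omega\widetilde{W}_{\f{H}_k}^{-1}P_\Omega^{*}$; both follow from the positive definiteness of $W_{\f{H}_k}$ guaranteed by the $\varepsilon$-smoothing and the injectivity of $\y{H}$ and $P_\Omega^{*}$. No new nontrivial estimates are required.
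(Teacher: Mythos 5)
Your proof is correct. It reaches the same formula as the paper but by a slightly different route: the paper substitutes $\widetilde{\f{X}}' = \widetilde{W}_{\f{H}_k}^{1/2}(\widetilde{\f{X}})$ to recast \cref{eq:MatrixIRLS:Xdef} as a standard ridge-regression (resp.\ minimum-norm interpolation) problem in the variable $\widetilde{\f{X}}'$ with forward map $\y{F} = P_{\Omega}\circ \widetilde{W}_{\f{H}_k}^{-1/2}$, and then invokes the known dual representation $\y{F}^*(\lambda\Id + \y{F}\y{F}^*)^{-1}\f{y}$ (resp.\ the Moore--Penrose formula $\y{F}^\dagger = \y{F}^*(\y{F}\y{F}^*)^{-1}$). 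You instead derive the formula from scratch: for $\lambda>0$ you write the normal equation, eliminate $\widetilde{\f{X}}$ in favor of the residual $\f{z}=\f{y}-P_{\Omega}(\widetilde{\f{X}})$, and solve the resulting $|\Omega|\times|\Omega|$ system; for $\lambda=0$ you use Lagrange stationarity and eliminate the multiplier. The two arguments are mathematically equivalent (your residual/multiplier elimination is precisely the derivation of the dual representation the paper cites), but yours is self-contained and makes the needed invertibility hypotheses explicit, whereas the paper's is shorter by delegating to a textbook result. Your attention to the invertibility of $\widetilde{W}_{\f{H}_k}=\y{H}^*W_{\f{H}_k}\y{H}$ via positive definiteness of $W_{\f{H}_k}$ (from the $\varepsilon$-smoothing) together with injectivity of $\y{H}$, and to the injectivity of $P_{\Omega}^*$ needed for the $\lambda=0$ case, is a point the paper glosses over but implicitly relies on as well (its pseudoinverse formula also requires $\y{F}\y{F}^* = P_{\Omega}\widetilde{W}_{\f{H}_k}^{-1}P_{\Omega}^*$ to be invertible). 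No gap.
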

\begin{proof}
Using the substitution $\widetilde{\f{X}'} = \widetilde{W}_{\f{H}_{k}}^{1/2}(\widetilde{\f{X}})$ in \cref{eq:MatrixIRLS:Xdef}, we obtain that 
\[
\widetilde{\f{X}}\hkk  = \widetilde{W}_{\f{H}_{k}}^{-1/2}\bigg( \widetilde{\f{X}'}\hk\bigg) 
\]
where 
\[
\begin{split}
\widetilde{\f{X}'}\hkk &= \argmin\limits_{\widetilde{\f{X}'} \in \MnOplus} \left\{\langle \widetilde{\f{X}}', \widetilde{\f{X}}' \rangle + \frac{1}{\lambda} \left\| P_{\Omega}(\widetilde{W}_{\f{H}_{k}}^{-1/2}(\f{\widetilde{X}}')) - \f{y}\right\|_2^2 ) \right\} \\
&= \argmin\limits_{\widetilde{\f{X}'} \in \MnOplus} \left\{ \lambda \|\widetilde{\f{X}}'\|_F^2 + \left\| \y{F}(\f{\widetilde{X}}') - \f{y}\right\|_2^2 \right\}   \\
&= \y{F}^* \left( \lambda \Id +  \y{F} \y{F}^*\right)^{-1} \f{y} \\&= \widetilde{W}_{\f{H}_{k}}^{-1/2}  P_{\Omega}^* \left( \lambda\Id +  P_{\Omega} \widetilde{W}_{\f{H}_{k}}^{-1} P_{\Omega}^* \right)^{-1}(\f{y}),
\end{split}
\]
defining $\y{F}:= P_{\Omega}\circ \widetilde{W}_{\f{H}_{k}}^{-1/2}: \Rddn \to \R^{m}$ to interpret the problem as a ridge regression/$\ell_2$-penalized least squares problem in the second equality, and using the inner product implementation of ridge regression in the third equality (see, e.g., \cite[Theorem 2.4]{Fan-StatisticalFoundations2020}). This shows \cref{eq:Xtildek:implementation}.

For $\lambda = 0$, we note that
\[
\widetilde{\f{X}}\hkk = \argmin\limits_{\widetilde{\f{X}} \in \MnOplus: P_{\Omega}( \widetilde{\f{X}}) = \f{y}} \langle \widetilde{\f{X}}, \widetilde{W}_{\f{H}_{k}}(\widetilde{\f{X}}) \rangle = \widetilde{W}_{\f{H}_{k}}^{-1/2}\bigg( \widetilde{\f{X}'}\hk\bigg) 
\]
with 
\begin{align*}
\widetilde{\f{X}'}\hkk &= \argmin\limits_{\widetilde{\f{X}'} \in \MnOplus:  \y{F}(\widetilde{\f{X}'}) = \f{y}} \| \widetilde{\f{X}'}\|_F^2= \y{F}^* \left( \y{F} \y{F}^*\right)^{-1} \f{y}\\& = \widetilde{W}_{\f{H}_{k}}^{-1/2}  P_{\Omega}^* \left( P_{\Omega} \widetilde{W}_{\f{H}_{k}}^{-1} P_{\Omega}^* \right)^{-1}(\f{y})\,,
\end{align*}
using an analogous substitution as above and the fact that $\y{F}^{\dagger} = \y{F}^* \left( \y{F} \y{F}^*\right)^{-1}$ is the Moore-Penrose inverse of $\y{F}$ as defined above.
\end{proof}

\subsubsection{Efficient Implementation of \texttt{TOIRLS}} 
In this section we outline the main computational steps of an efficient implementation of \texttt{TOIRLS}. In particular, we provide an algorithm, \Cref{algo:TOIRLS:implementation}, for computing the weighted least squares solution \cref{eq:Xtildek:implementation} essentially via a conjugate gradient method applied to a $(r_k (n d_1 + n d_2 + r_k) \times r_k (n d_1 + n d_2 + r_k)) = O(r n  T) \times O(r n T)$ linear system.

In the following, we let $S_{k}:=\R^{r_{k}\left(n d_{1}+n d_{2}+r_{k}\right)}$, and we recall from the proof of \Cref{lemma:etaksigmarp1Xk} that if $\widetilde{\mathbf{X}}^{(k)} \in \Rddn$ is the $k$-th iterate of \texttt{TOIRLS} and $\mathbf{H}_{k}:=\y{H}(\widetilde{\f{X}}\hk)$, $\mathbf{T}_{k}:=\f{T}_{\y{T}_{r_{k}}\left(\y{H}\left(\widetilde{\mathbf{X}}^{(k)}\right)\right)}$ denotes tangent space onto the manifold of rank-$r_{k}$ matrices at $\y{T}_{r_{k}}\left(\y{H}(\widetilde{\f{X}}\hk)\right)$ (here, with $r_{k}$ instead of $r$), where $\y{T}_{r_{k}}\left(\mathbf{H}_{k}\right)$ is the best rank-r $r_{k}$ approximation of $\mathbf{H}_{k}$, cf. \cref{eq:best:rankr:approximation}. Given the subspace $\mathbf{T}_{k} \subset \Rddn$, we let $P_{\f{T}_{k}}: S_{k} \rightarrow \mathbf{T}_{k}$ be the parametrization operator defined, for $\gamma \in S_{k}$, as
$$
P_{\f{T}_{k}}(\gamma):=\mathbf{U}^{(k)} \Gamma_{1} \mathbf{V}^{(k) *}+\mathbf{U}^{(k)} \Gamma_{2}+\Gamma_{3} \mathbf{V}^{(k) *},
$$
where $\Gamma_{1} \in \mathbb{R}^{r_{k} \times r_{k}}, \Gamma_{2} \in \mathbb{R}^{r_{k} \times n d_{2}}$ and $\Gamma_{d} \in \mathbb{R}^{n d_{1} \times r_{k}}$ are matricizations of the first $r_{k}^{2}$, central $r_{k} n d_{2}$ and final $r_{k} n d_{2}$ coordinates of $\gamma \in S_{k}$, respectively. We note that the projection operator $\y{P}_{\f{T}_{\mathbf{k}}}: \Rddn \rightarrow \Rddn$ can be implemented via $\y{P}_{\mathbf{T}_{k}}=P_{\f{T}_{k}} P_{\f{T}_{k}}^{*} \cdot$ Recall that $\y{G}: \MnOplus \to \Rddn$ is the normalized block Hankel operator \cref{def:G:operator} and $\y{D}: \MnOplus \to \MnOplus$ the
diagonal operator of \cref{eq:DEijt:def}. Finally, $\mathbf{I}_{m}$ and $\mathbf{I}_{S_{k}}$ are identity matrices on $\mathbb{R}^{m}$ and $S_{k}$, respectively, and $\mathbf{D}_{S_{k}}: S_{k} \rightarrow S_{k}$ is a diagonal matrix that contains coordinates $\left(\sum_{\varepsilon_{k}, d_{1} n}\right)_{i i}^{-1}\left(\sum_{\varepsilon_{k}, d_{2} n}\right)_{j j}=$ $\max \left(\sigma_{i}^{(k)}, \varepsilon_{k}\right)^{-1} \max \left(\sigma_{j}^{(k)}, \varepsilon_{k}\right)^{-1}$ for $(i, j) \in\left\{(i, j) \in\left[n d_{1}\right] \times\left[n d_{2}\right]: i \leq r_{k}\right.$ or $\left.j \leq r_{k}\right\}$ on its diagonal, which is related to the weight operator $W_{\mathbf{H}_{k}}: \Rddn \rightarrow \Rddn$ of \cref{eq:def:W} by
\begin{equation} \label{eq:WHk:Dsk:def}
W_{\mathbf{H}_{k}}(\mathbf{Z})=\left(P_{T_{k}} \mathbf{D}_{S_{k}} P_{T_{k}}^{*}+\epsilon_{k}^{-2}\left(\mathrm{Id}-P_{T_{k}} P_{T_{k}}^{*}\right)\right) \mathbf{Z},
\end{equation}
cf. \cite[Appendix A, Eq. (12)]{KMV21}. With these notational conventions, we can formulate \Cref{algo:TOIRLS:implementation}.

\begin{algorithm}[h]
\caption{Implementation of $k+1$-st weighted least squares step of \texttt{TOIRLS}} \label{algo:TOIRLS:implementation}
\begin{algorithmic}
\STATE{\bfseries Input:}  Set $\Omega$, observations $\f{y} \in \R^m$, $\lambda \geq 0$, matrices $\f{U}\hk \in \R^{d_1 \times r_k}$, $\f{V}\hk \in \R^{d_2 \times r_k}$ of singular vectors and $r_k$ leading singular values $\sigma_1\hk,\ldots,\sigma_{r_k}\hk$  of $\f{H}_{k}$ , smoothing parameter $\varepsilon_k$, initialization $\gamma_{k+1}^{(0)} = P_{T_k}^*P_{T_{k-1}}(\gamma_{k}) \in \R^{r_k (n d_1 + n d_2 + r_k)}$ where $\gamma_{k} \in \R^{r_{k-1} (n d_1 + n d_2 + r_{k-1})}$ is respective parameter \cref{eq:gamma:system} of the $k$-th iteration.
\end{algorithmic}
Let $\f{K} := \lambda \varepsilon_k^{-2}\f{I} + P_{\Omega} \y{D}^{-2} P_{\Omega}^*$ and 
\begin{equation}
\begin{aligned}
\f{M} :=  P_{\f{T}_k}^* \y{G}\y{D}^{-1}  &P_{\Omega}^* \f{K}^{-1} P_{\Omega} \y{D}^{-1} \y{G}^* P_{\f{T}_k} \\&+ \frac{\f{D}_{S_k}^{-1}}{\f{D}_{S_k}^{-1}- \varepsilon_k^{2} \f{I}_{S_k}} -  P_{\f{T}_k}^*\y{G}\y{G}^* P_{\f{T}_k}.
\end{aligned}
\label{eq:Moperator:def}
\end{equation}
\begin{algorithmic}[1]
\STATE  Compute $\f{h}_k^0:= P_{\f{T}_k}^* \y{G}\y{D}^{-1}  P_{\Omega}^* \f{K}^{-1} \f{y} -\f{M} \gamma_{k+1}^{(0)}  \in \R^{r_k (n d_1 + n d_2 + r_k)}$.
\STATE Solve linear system
\begin{equation} \label{eq:gamma:system}
	\f{M} \Delta\f{\gamma}_{k+1} =  \f{h}_k^0	
\end{equation}
	 for $\Delta\gamma_{k+1} \in S_k$ by the \emph{conjugate gradient} method \cite{HestenesStiefel52,Meurant}.
\STATE Compute $\gamma_{k+1} = \gamma_{k+1}^{(0)} + \Delta\gamma_{k+1}$.
\STATE Compute residual $\f{p}_{k+1}:= \f{K}^{-1}(\f{y}- P_{\Omega}\y{D}^{-1} \y{G}^* P_{\f{T}_k} (\f{\gamma}_{k+1})) \in \R^m$ where 
\begin{equation} \label{def:Kmat}
\mathbf{K}:=\lambda \varepsilon_{k}^{-2} \mathbf{I}+P_{\Omega} \y{D}^{-2} P_{\Omega}^{*}.
\end{equation}
\end{algorithmic}
\begin{algorithmic}
\STATE{\bfseries Output:} $\f{p}_{k+1} \in \R^m$ and $\gamma_{k+1} \in  \R^{r_{k} (n d_1 + n d_2 +r_{k})}$.
\end{algorithmic}
\end{algorithm}

\Cref{lemma:Algo:implementation} shows that \Cref{algo:TOIRLS:implementation} indeed computes the weighted least squares solution $\widetilde{\mathbf{X}}^{(k)}$.

\begin{lemma} \label{lemma:Algo:implementation}
Let $\widetilde{\mathbf{X}}^{(k)} \in \Rddn$ be the $k$-th iterate of \texttt{TOIRLS} (\Cref{algo:IRLS:graphcompletion}) for an observation vector $\mathbf{y} \in \mathbb{R}^{m}$ with $m=|\Omega|, \lambda \geq 0$ and smoothing parameter $\varepsilon_{k}>0$, let $\mathbf{H}_{k}=\y{H}\left(\widetilde{\mathbf{X}}^{(k)}\right)$ and $r_{k}=\left|\left\{i \in[d n]: \sigma_{i}\left(\mathbf{H}_{k}\right) > \varepsilon_{k}\right\}\right|$. Then the following statements hold.
\begin{enumerate}
\item The $(k+1)$-st iterate $\widetilde{\mathbf{X}}^{(k+1)}$ of \Cref{algo:IRLS:graphcompletion} satisfies
$$
\widetilde{\mathbf{X}}^{(k+1)}=\y{D}^{-1} \widetilde{P}_{\Omega}^{*}\left(\mathbf{p}_{k+1}\right)+\y{D}^{-1} \y{G}^{*} P_{\f{T}_{k}}\left(\gamma_{k+1}\right)
$$
where $\mathbf{p}_{k+1} \in \mathbb{R}^{m}$ and $\gamma_{k+1} \in \mathbb{R}^{r_{k}\left(n d_{1}+n d_{2}+r_{k}\right)}$ is the output of \Cref{algo:TOIRLS:implementation} if the linear system of \cref{eq:gamma:system} is solved exactly.
\item The vector $\gamma_{k+1} \in \mathbb{R}^{r_{k}\left(n d_{1}+n d_{2}+r_{k}\right)}$ corresponding to an iterative solution of \cref{eq:gamma:system} using $N_{\text{CG\_inner}}$ iterations of a conjugate gradient method\footnote{or of related iterative solvers based on matrix-vector multiplication} can be computed in $O(N_{\text{CG\_inner}}  r_k T( m + n \log T + n r_k T))$ time. Thus, a parametrization of an approximation of the $(k+1)$-st iterate $\widetilde{\mathbf{X}}^{(k+1)}$ of \Cref{algo:IRLS:graphcompletion} 
can be computed in $O(N_{\text{CG\_inner}}  r_k T( m + n \log T + n r_k T))$ time. 
\end{enumerate}
\end{lemma}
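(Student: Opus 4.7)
The proof splits cleanly into the two claims. For part~1, the plan is to start from the closed-form weighted least squares formula of \Cref{lemma:explicit:weighted:leastsquares},
\[
\widetilde{\f{X}}\hkk = \widetilde{W}_{\f{H}_k}^{-1} P_\Omega^* \f{p}, \quad \f{p}:=\big(\lambda\Id + P_\Omega \widetilde{W}_{\f{H}_k}^{-1} P_\Omega^*\big)^{-1}\f{y},
\]
and to exploit the structural factorization $\widetilde{W}_{\f{H}_k} = \y{H}^* W_{\f{H}_k}\y{H} = \y{D}\y{G}^* W_{\f{H}_k} \y{G}\y{D}$, using $\y{H} = \y{G}\y{D}$ from \Cref{lemma:Hankel:action}, together with the low-rank-plus-isotropic decomposition \eqref{eq:WHk:Dsk:def} of $W_{\f{H}_k}$. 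Substituting the decomposition for $W_{\f{H}_k}$ and using $\y{G}^*\y{G} = \Id$, the equation $\widetilde{W}_{\f{H}_k}\widetilde{\f{X}}\hkk = P_\Omega^*\f{p}$ rearranges to a fixed-point identity
\[
\widetilde{\f{X}}\hkk = \varepsilon_k^{2}\y{D}^{-2} P_\Omega^*\f{p} + \y{D}^{-1}\y{G}^* P_{\f{T}_k}(\Id - \varepsilon_k^2 \f{D}_{S_k}) P_{\f{T}_k}^* \y{H}(\widetilde{\f{X}}\hkk),
\]
which motivates defining $\f{p}_{k+1} := \varepsilon_k^{2}\f{p}$ and $\gamma_{k+1} := (\Id - \varepsilon_k^2 \f{D}_{S_k}) P_{\f{T}_k}^* \y{H}(\widetilde{\f{X}}\hkk)$, thereby yielding the desired two-term representation.

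To derive the linear systems solved by the algorithm, I plan to impose the two consistency conditions inherent in the representation. First, the data-fit constraint $P_\Omega\widetilde{\f{X}}\hkk + \lambda \f{p} = \f{y}$ expands to $\f{K}\f{p}_{k+1} = \f{y} - P_\Omega\y{D}^{-1}\y{G}^* P_{\f{T}_k}\gamma_{k+1}$ with $\f{K}$ as in \eqref{def:Kmat}. Second, the definition of $\gamma_{k+1}$ together with $\y{H}(\widetilde{\f{X}}\hkk) = \y{G}\y{D}^{-1}P_\Omega^*\f{p}_{k+1} + \y{G}\y{G}^* P_{\f{T}_k}\gamma_{k+1}$ yields
\[
\big[(\Id - \varepsilon_k^2 \f{D}_{S_k})^{-1} - P_{\f{T}_k}^*\y{G}\y{G}^* P_{\f{T}_k}\big]\gamma_{k+1} = P_{\f{T}_k}^*\y{G}\y{D}^{-1}P_\Omega^*\f{p}_{k+1}.
\]
Substituting the closed form of $\f{p}_{k+1}$, recognizing that $(\Id - \varepsilon_k^2 \f{D}_{S_k})^{-1} = \f{D}_{S_k}^{-1}(\f{D}_{S_k}^{-1}-\varepsilon_k^2\Id_{S_k})^{-1}$ matches the middle term of \eqref{eq:Moperator:def}, and collecting terms produces precisely $\f{M}\gamma_{k+1} = P_{\f{T}_k}^* \y{G}\y{D}^{-1}P_\Omega^*\f{K}^{-1}\f{y}$, i.e.\ \cref{eq:gamma:system} with the given $\f{h}_k^0$. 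Part~1 then follows by back-substituting the CG output $\gamma_{k+1}$ into the expression for $\f{p}_{k+1}$ and the two-term representation.

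For part~2, the plan is to account for the per-iteration cost of the CG solver applied to \cref{eq:gamma:system}. The starting observation is that $P_\Omega \y{D}^{-2} P_\Omega^*$ is a diagonal $m\times m$ matrix (since all triplets in $\Omega$ are distinct and $\y{D}^{-2}$ is diagonal), hence $\f{K}^{-1}$ is diagonal and applies in $O(m)$. Next, one matrix-vector product $\gamma \mapsto \f{M}\gamma$ is broken into: (i) applying $P_{\f{T}_k}\gamma$ in the factored form $\f{U}\hk \widetilde{\Gamma}_2 + \Gamma_3 \f{V}\hk{}^*$ (rank $\le 2r_k$) at cost $O(r_k^2 n T)$; (ii) computing the $m$ observed anti-diagonal sums $P_\Omega \y{D}^{-1}\y{G}^* P_{\f{T}_k}\gamma$ directly in $O(r_k T m)$; (iii) applying the diagonal inverse $(\Id - \varepsilon_k^2 \f{D}_{S_k})^{-1}$ in $O(r_k n T)$; and (iv) computing $P_{\f{T}_k}^*\y{G}\y{G}^* P_{\f{T}_k}\gamma$ via block-wise FFT-based convolutions on each of the $O(r_k)$ rank-one summands of $P_{\f{T}_k}\gamma$ (one length-$T$ convolution per block column), at cost $O(r_k n T \log T + r_k^2 n T^2)$, and projecting the resulting rank-$O(r_k)$ block Hankel representative back via $P_{\f{T}_k}^*$ in $O(r_k^2 nT)$. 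Summing gives $O\big(r_k T(m + n\log T + n r_k T)\big)$ per CG step; the initial residual $\f{h}_k^0$ falls within the same budget via sparse propagation through $P_\Omega^*$. The claimed bound then follows by multiplying by $N_{\mathrm{CG\_inner}}$.

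The main obstacle is item~(iv): computing $P_{\f{T}_k}^*\y{G}\y{G}^* P_{\f{T}_k}\gamma$ without ever materializing the $n^2 T$-dimensional intermediate vector in $\MnOplus$, nor the dense $O(nT)\times O(nT)$ block Hankel projection. The resolution will be to factor $P_{\f{T}_k}\gamma$ as a sum of $O(r_k)$ outer products and observe that the action of $\y{G}^*$ on each outer product $\f{u}\f{v}^*$ reduces, block by block, to $n^2$ one-dimensional convolutions of length $d_1$ with length $d_2$, each computable in $O(T\log T)$ by FFT; contraction against the bases $\f{U}\hk,\f{V}\hk$ on the left-hand side then replaces the $n^2$ loop by an $r_k^2$ loop, which is where the $O(r_k^2 n T^2)$ term arises. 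The remaining technical accounting is routine and amounts to tracking sparsity and factored ranks through each operator in $\f{M}$.
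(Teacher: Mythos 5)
Your proposal is correct and follows essentially the same route as the paper: factor $\widetilde{W}_{\f{H}_k}=\y{D}\y{G}^*W_{\f{H}_k}\y{G}\y{D}$, exploit the low-rank-plus-scaled-identity decomposition \eqref{eq:WHk:Dsk:def} of $W_{\f{H}_k}$ to reduce to the $O(r_k nT)$-dimensional system \cref{eq:gamma:system}, and then cost out the three summands of $\f{M}$ exactly as the paper does (diagonal term, sparse sampling term, FFT/block-circulant term). The only difference is presentational: the paper obtains the reduced system by applying the Sherman--Morrison--Woodbury identity twice to construct $(\widetilde{W}_{\y{G}}^{(k)})^{-1}$ and $(\lambda\Id+\widetilde{P}_\Omega(\widetilde{W}_{\y{G}}^{(k)})^{-1}\widetilde{P}_\Omega^*)^{-1}$ explicitly, whereas you posit the two-term ansatz and derive the same system by block elimination --- algebraically equivalent, though your route implicitly relies on invertibility of $\f{M}$ to identify your $\gamma_{k+1}$ with the algorithm's output, which is worth stating.
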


The statement of \Cref{lemma:Algo:implementation}.2 enables \Cref{algo:TOIRLS:implementation} to compute an \emph{accurate} approximation of $\widetilde{\mathbf{X}}^{(k+1)}$ in $O(r_k T( m + n \log T + n r_k T))$ time in many situations, in particular, if $\widetilde{\mathbf{X}}^{(k)}$ is close to an image $\mathcal{Q}_T(\f{A})$ of a transition operator $\f{A}$ satisfying $P_{\Omega}( \mathcal{Q}_T(\f{A})) =\f{y}$ and if the normalized sampling operator $\y{R}_{\Omega}$ associated to the sampling set $\Omega$ satisfies a local restricted isometry property \Cref{eq:localRIP:condition}, as in this case, it can be shown that the condition number of the matrix $\f{M}$ of linear system \cref{eq:gamma:system} is a \emph{small constant}. We do not provide the full proof for that statement as it amounts to a variant of \cite[Theorem 4.2]{KMV21} and its proof.

\begin{proof}[{Proof of \Cref{lemma:Algo:implementation}.1}] We recall from \Cref{lemma:explicit:weighted:leastsquares} that
\begin{equation} \label{eq:Xkp1:WGk}
\begin{split}
&\widetilde{\mathbf{X}}^{(k+1)} \\&=\widetilde{W}_{\mathbf{H}_{k}}^{-1} P_{\Omega}^{*}\left(\lambda \operatorname{Id}+P_{\Omega} \widetilde{W}_{\mathbf{H}_{k}}^{-1} P_{\Omega}^{*}\right)^{-1}(\mathbf{y}) \\
&=\y{D}^{-1}\!\left(\widetilde{W}_{\y{G}}^{(k)}\right)^{-1}\! \widetilde{P}_{\Omega}^{*}\left(\!\lambda \operatorname{Id}+\widetilde{P}_{\Omega}\left(\widetilde{W}_{\y{G}}^{(k)}\right)^{-1} \! \widetilde{P}_{\Omega}^{*}\right)^{-1}\!\!(\mathbf{y})
\end{split}
\end{equation}
using the notation $\widetilde{W}_{\y{G}}^{(k)}:=\y{G}^{*} W_{\mathrm{H}_{k}} \y{G}$ and $\widetilde{P}_{\Omega}:=P_{\Omega} \y{D}^{-1} .$ The last inequality holds due to
\begin{align*}
\widetilde{W}_{\mathbf{H}_{k}}^{-1}&=\left(\y{H}^{*} W_{\mathbf{H}_{k}} \y{H}\right)^{-1}=\left(\y{D} \y{G}^{*} W_{\mathbf{H}_{k}}  \y{G} \y{D}\right)^{-1}\\&=\y{D}^{-1}\left(\widetilde{W}_{\y{G}}^{(k)}\right)^{-1} \y{D}^{-1}
\end{align*}
Using \cref{eq:WHk:Dsk:def} and $\y{G}^{*} \y{G}=\mathrm{Id}$, we can write
$$
\begin{aligned}
\left(\widetilde{W}_{\y{G}}^{(k)}\right)^{-1} &=\left(\y{G}^{*} W_{\mathbf{H}_{k}} \y{G}\right)^{-1}\\&=\left(\y{G}^{*}\left(P_{\f{T}_{k}}\left(\mathbf{D}_{S_{k}}-\varepsilon_{k}^{-2} \mathbf{I}_{S_{k}}\right) P_{\f{T}_{k}}^{*}+\varepsilon_{k}^{-2} \mathrm{Id}\right) \y{G}\right)^{-1} \\
&=\left(\y{G}^{*} P_{\f{T}_{k}}\left(\mathbf{D}_{S_{k}}-\varepsilon_{k}^{-2} \mathbf{I}_{S_{k}}\right) P_{\f{T}_{k}}^{*} \y{\y {G}}+\varepsilon_{k}^{-2} \mathrm{Id}\right)^{-1} .
\end{aligned}
$$
For the next step, we recall the Sherman-Morrison-Woodbury formula \cite{Woodbury50,Fornasier11}, \cite[(0.7.4.1)]{HornJohnson12}, which states that for any invertible matrices $\mathbf{B}, \mathbf{C}$ and matrices $\mathbf{E}, \mathbf{F}$ of appropriate dimensions,
\begin{equation} \label{eq:SMW:formula}
\left(\mathbf{B}+\f{E}\f{C}\f{F}^{*}\right)^{-1}\!=\f{B}^{-1}-\f{B}^{-1} \f{E}\left(\f{C}^{-1}\!\!+\f{F}^{*} \f{B}^{-1} \f{E}\right)^{-1} \f{F}^{*} \f{B}^{-1}.
\end{equation}
Applying \cref{eq:SMW:formula} for $\mathbf{B}=\varepsilon_{k}^{-2} \mathrm{Id}, \mathbf{E}=\y{G}^{*} P_{\mathbf{T}_{k}}, \mathbf{F}^{*}=P_{\mathbf{T}_{k}}^{*} \y{G}$ and $\mathbf{C}=\mathbf{D}_{S_{k}}-\varepsilon_{k}^{-2} \mathbf{I}_{S_{k}}$ yields then
\begin{equation} \label{eq:Wgkmin1}
\begin{split}
&\left(\widetilde{W}_{\y{G}}^{(k)}\right)^{-1}=\varepsilon_{k}^{2} \Big[\mathrm{Id} \\ &-\y{G}^{*} P_{\f{T}_{k}}\left(\varepsilon_{k}^{-2}\left(\mathbf{D}_{S_{k}}-\varepsilon_{k}^{-2} \mathbf{I}_{S_{k}}\right)^{-1}\!\!+ \!P_{\mathbf{T}_{k}}^{*} \y{G}\y{G}^* P_{\f{T}_{k}}\right)^{-1} \!\! P_{\f{T}_{k}}^{*} \y{G}\Big].
\end{split}
\end{equation}
Thus,
\begin{equation*}
\begin{aligned}
&\left(\lambda \operatorname{Id}+\widetilde{P}_{\Omega} \!\left(\widetilde{W}_{\y{G}}^{(k)}\right)^{-1} \!\!\widetilde{P}_{\Omega}^{*}\right)^{-1} \!\!\!\!=\varepsilon_{k}^{-2}\left(\lambda \varepsilon_{k}^{-2} \mathbf{I}+\widetilde{P}_{\Omega} \widetilde{P}_{\Omega}^{*}-\Xi\right)^{-1}
\end{aligned}
\end{equation*}
where
\begin{equation*}
\begin{aligned} 
\Xi:=\! \widetilde{P}_{\Omega} \y{G}^{*} P_{\f{T}_{k}}\left(\frac{\varepsilon_{k}^{-2} \mathbf{I}_{S_{k}}}{\mathbf{D}_{S_{k}}-\varepsilon_{k}^{-2} \mathbf{I}_{S_{k}}}+P_{\f{T}_{k}}^{*} \y{G} \y{G}^{*} P_{\f{T}_{k}}\right)^{-1}\!\!\! P_{\f{T}_{k}}^{*} \y{G} \widetilde{P}_{\Omega}^{*}.
\end{aligned}
\end{equation*}
Here, we can again use Sherman-Morrison-Woodbury \cref{eq:SMW:formula} with $\mathbf{C}=\mathbf{N}^{-1}, \mathbf{E}=\mathbf{F}=\widetilde{\mathbf{E}}$ and $\mathbf{B}=\f{K}$ with $\f{K}$ from \cref{def:Kmat}, $\widetilde{\mathbf{E}}=\widetilde{P}_{\Omega} \y{G}^{*} P_{\f{T}_{k}}$ and
\begin{equation}
\begin{aligned} \label{eq:def:N:appendix}
\mathbf{N}&:=\frac{\mathbf{D}_{S_{k}}^{-1}}{\mathbf{D}_{S_{k}}^{-1}-\varepsilon_{k}^{2} \mathbf{I}_{S_{k}}}-P_{\mathbf{T}_{k}}^{*} \y{G} \y{G}^{*} P_{\mathbf{T}_{k}}\\&=-\left(\frac{\varepsilon_{k}^{-2} \mathbf{I}_{S_{k}}}{\mathbf{D}_{S_{k}}-\varepsilon_{k}^{-2} \mathbf{I}_{S_{k}}}+P_{\mathbf{T}_{k}}^{*} \y{G} \y{G}^{*}P_{\f{T}_{k}}\right)
\end{aligned}
\end{equation}
to obtain
\begin{equation} \label{eq:widetilde:y} 
\begin{split}
\widetilde{\mathbf{y}} &:=\left(\lambda \operatorname{Id}+\widetilde{P}_{\Omega}\left(\widetilde{W}_{\y{G}}^{(k)}\right)^{-1} \widetilde{P}_{\Omega}^{*}\right)^{-1}(\mathbf{y}) \\
&=\varepsilon_{k}^{-2} \mathbf{K}^{-1}(\mathbf{y})-\varepsilon_{k}^{-2} \mathbf{K}^{-1} \widetilde{\mathbf{E}}\left(\widetilde{\mathbf{E}}^{*} \mathbf{K}^{-1} \widetilde{\mathbf{E}}+\mathbf{N}\right)^{-1} \widetilde{\mathbf{E}}^{*} \mathbf{K}^{-1}(\mathbf{y}) \\
&=\varepsilon_{k}^{-2} \mathbf{K}^{-1}\left(\mathbf{y}-\widetilde{\mathbf{E}} \gamma_{k+1}\right)\,,
\end{split}
\end{equation}
using the notation that $\gamma_{k+1} \in S_{k}$ is solution to the invertible linear system
$$
\left(\widetilde{\mathbf{E}}^{*} \mathbf{K}^{-1} \widetilde{\mathbf{E}}+\mathbf{N}\right) \gamma_{k+1}=\widetilde{\mathbf{E}}^{*} \mathbf{K}^{-1} \mathbf{y}
$$
Furthermore, $\mathbf{y}, \widetilde{\mathbf{y}}$ and $\gamma_{k+1}$ are related by
\begin{equation} \label{eq:epskEtilde}
\begin{split}
&\varepsilon_{k}^{2} \widetilde{\mathbf{E}}^{*}(\widetilde{\mathbf{y}}) \\&=\widetilde{\mathbf{E}}^{*} \mathbf{K}^{-1}\left(\mathbf{y}-\widetilde{\mathbf{E}} \gamma_{k+1}\right)\\&=\widetilde{\mathbf{E}}^{*} \mathbf{K}^{-1} \!\! \left(\mathbf{y}-\widetilde{\mathbf{E}}\left(\widetilde{\mathbf{E}}^{*} \mathbf{K}^{-1} \widetilde{\mathbf{E}}+\mathbf{N}\right)^{-1} \widetilde{\mathbf{E}}^{*} \mathbf{K}^{-1} \mathbf{y} \!\right)\!\! =\widetilde{\mathbf{E}}^{*} \mathbf{K}^{-1} \mathbf{y} \\
&\quad -\left(\widetilde{\mathbf{E}}^{*} \mathbf{K}^{-1} \widetilde{\mathbf{E}} + \mathbf{N} - \mathbf{N}\right)\left(\widetilde{\mathbf{E}}^{*} \mathbf{K}^{-1} \widetilde{\mathbf{E}}+\mathbf{N}\right)^{-1} \widetilde{\mathbf{E}}^{*} \mathbf{K}^{-1} \mathbf{y} \\
&=\mathbf{N}\left(\widetilde{\mathbf{E}}^{*} \mathbf{K}^{-1} \widetilde{\mathbf{E}}+\mathbf{N}\right)^{-1} \widetilde{\mathbf{E}}^{*} \mathbf{K}^{-1} \mathbf{y}.
\end{split}
\end{equation}
Inserting these equalities back into the expression \cref{eq:Xkp1:WGk} for $\widetilde{\mathbf{X}}^{(k+1)}$, we observe that
\begin{equation*}
\begin{split}
\widetilde{\f{X}}\hkk &= \y{D}^{-1} \big(\widetilde{W}_{\y{G}}\hk\big)^{-1} \widetilde{P}_{\Omega}^* \big( \lambda \Id + \widetilde{P}_{\Omega} \big(\widetilde{W}_{\y{G}}\hk\big)^{-1}  \widetilde{P}_{\Omega}^*\big)^{-1} (\f{y}) \\
&= \y{D}^{-1}  \big(\widetilde{W}_{\y{G}}\hk\big)^{-1} \widetilde{P}_{\Omega}^*(\widetilde{\f{y}}) \\
&= \varepsilon_k^2 \y{D}^{-1} \big[ \Id + \y{G}^* P_{\f{T}_k} \f{N}^{-1} P_{\f{T}_k}^* \y{G} \big] \widetilde{P}_{\Omega}^*(\widetilde{\f{y}}) \\
&= \varepsilon_k^2 \y{D}^{-1} \widetilde{P}_{\Omega}^*(\widetilde{\f{y}}) + \varepsilon_k^2 \y{D}^{-1}\y{G}^* P_{\f{T}_k} \f{N}^{-1}  \widetilde{\f{E}}^* (\widetilde{\f{y}}) \\
&= \varepsilon_k^2 \y{D}^{-1} \widetilde{P}_{\Omega}^*(\widetilde{\f{y}}) \\
&+ \varepsilon_k^2 \y{D}^{-1}\y{G}^* P_{\f{T}_k}   \left(  \widetilde{\f{E}}^* \f{K}^{-1}  \widetilde{\f{E}}+ \f{N} \right)^{-1}  \widetilde{\f{E}}^* \f{K}^{-1} (\widetilde{\f{y}}) \\
&=  \varepsilon_k^2 \y{D}^{-1} \widetilde{P}_{\Omega}^*(\widetilde{\f{y}}) + \varepsilon_k^2 \y{D}^{-1}\y{G}^* P_{\f{T}_k} ( \gamma_{k+1}) \\
&= \y{D}^{-1} \widetilde{P}_{\Omega}^* \f{K}^{-1} \left(\f{y} - \widetilde{E} \gamma_{k+1}\right) + \y{D}^{-1} \y{G}^* P_{\f{T}_k}(\gamma_{k+1}) \\
&= \y{D}^{-1} \widetilde{P}_{\Omega}^* (\f{p}_{k+1}) + \y{D}^{-1} \y{G}^* P_{\f{T}_k}(\gamma_{k+1}).
\end{split}
\end{equation*}

In the second equality, we used the definition of $\widetilde{\mathbf{y}}$; in the third equality, we used that $\left(\widetilde{W}_{\y{G}}^{(k)}\right)^{-1}=\varepsilon_{k}^{2}\left[\Id +\y{G}^{ * } P_{\f{T}_{k}} \mathbf{N}^{-1} P_{\f{T}_{k}}^{*} \y{G}\right]$, which follows from \cref{eq:Wgkmin1} and \cref{eq:def:N:appendix}. In the fifth equality, we used \cref{eq:epskEtilde}; in the sixth equality the definition of $\gamma_{k+1}$. In the seventh equality, we used \cref{eq:widetilde:y} and in the last equality, we use the definition $\mathbf{p}_{k+1}=\mathbf{K}^{-1}\left(\mathbf{y}-P_{\Omega} \y{D}^{-1} \y{G}^{*} P_{\mathbf{T}_{k}}\left(\gamma_{k+1}\right)\right)=\mathbf{K}^{-1}\left(\mathbf{y}-\widetilde{\mathbf{E}}\left(\gamma_{k+1}\right)\right)$ of the residual $\mathbf{p}_{k+1}$. This concludes the proof of \Cref{lemma:Algo:implementation}.
\end{proof} 

\begin{proof}[{Proof of \Cref{lemma:Algo:implementation}.2}]
In line $3$ of \Cref{algo:TOIRLS:implementation}, $\gamma_{k+1}$ is computed by adding $\gamma_{k+1}^{(0)}$ (which is an input) and $\Delta\gamma_{k+1}$, the solution of the positive definite linear system of \cref{eq:gamma:system}, which requires $O(r_k (n d_1 +n d_2 +r_k))$ computations.

To solve \cref{eq:gamma:system}, a conjugate gradient method can be applied, whose cost crucially depends on the cost of executing matrix-vector multiplications with the matrix $\f{M}$ from \cref{eq:Moperator:def}. We address the matrix-vector multiplication cost of the three summands of $\f{M}$ separately, as we can obtain $\f{M} \gamma$ for $\gamma \in \R^{r_k (n d_1 +n d_2 +r_k)}$ by adding/substracting the three resulting vectors in additional $3 r_k (n d_1 +n d_2 +r_k)$ time.
\begin{itemize}
\item \emph{Matrix-vector multiplication} with $\frac{\f{D}_{S_k}^{-1}}{\f{D}_{S_k}^{-1}- \varepsilon_k^{2} \f{I}_{S_k}}$: this matrix is diagonaly, since $\mathbf{D}_{S_{k}}$ is, resulting in a time complexity of $r_k (n d_1 +n d_2 +r_k)$.
\item \emph{Matrix-vector} multiplication with $P_{\f{T}_k}^* \y{G}\y{D}^{-1}  P_{\Omega}^* \f{K}^{-1} P_{\Omega} \y{D}^{-1} \y{G}^* P_{\f{T}_k}$. This can me implemented by the successive application of the three operators $P_{\f{T}_k}^* \y{G}\y{D}^{-1}  P_{\Omega}^*$,  $\f{K}^{-1}$ and  $P_{\Omega} \y{D}^{-1} \y{G}^* P_{\f{T}_k}$. Applying $P_{\Omega} \y{D}^{-1} \y{G}^* P_{\f{T}_k}$ can be done in $O( m T r_k +r_k^2 n T)+ mT = O( m T r_k +r_k^2 n T)$ time by evaluating the tangent space matrix returned by $P_{\f{T}_k}$ at $m T$ locations (which correspond to the support set of $\y{H}(P_{\Omega}^*(\f{y}))$) via Algorithm 4 of the paper \cite{KMV21} and averaging the entries across the Hankel blocks. $\f{K}^{-1}$ can be applied by in $r_k (n d_1 +n d_2 +r_k)$ time as $P_{\Omega} \y{D}^{-2} P_{\Omega}^{*}$ is a diagonal $(m \times m)$ matrix whose $i$-th diagonal entry is the inverse number of occurrences of the Hankel block which corresponds to the $i$-th observation in $\Omega$. Finally, the application of $P_{\f{T}_k}^* \y{G}\y{D}^{-1}  P_{\Omega}^*$ can be implemented via Algorithm 3 of \cite{KMV21} as the image of $ \y{G}\y{D}^{-1}  P_{\Omega}^*$ is a sparse $(n d_1 \times n d_2)$ matrix with a support set of size $m T$, giving a time complexity of $O( m T r_k +r_k^2 n T)$. Thus, the total time complexity of the entire matrix-vector multiplication is $O( m T r_k +r_k^2 n T)$.
\item \emph{Matrix-vector multiplication} with $P_{\f{T}_k}^*\y{G}\y{G}^* P_{\f{T}_k}$. We observe that block Hankel matrices of size ($n d_1 \times n d_2$) with $(n \times n)$ blocks can be embedded into a $(n T \times n T)$ block circulant matrix (up to reordering of columns), cf., e.g., \cite[Section 8.3.1]{Kailath1999Fast}, and such block circulant matrices can be diagonalized by a "block" discrete Fourier transform. Using the fast Fourier transform across blocks, it is possible to compute the image of $P_{\f{T}_k}^*\y{G}\y{G}^* P_{\f{T}_k}$ in $O(r_k n T \log T+ r_k^2 T^2 n)$ time, see also \cite[Section 3.4]{K19} for a related algorithm for Hankel matrices.
\end{itemize}
We refer to our MATLAB implementation for further details on the above.
Overall, the matrix-vector multiplication of $M$ with a vector $\gamma \in \R^{r_k (n d_1 +n d_2 +r_k)}$ can be performed in $O( r_k T( m + n \log T + n r_k T))$ time. Since the computations necessary to obtain $\f{h}_k^0$ and $\f{p}_{k+1}$ in line 1 and 4, respectively, involve only operations whose order we quantified above, this concludes the proof of \Cref{lemma:Algo:implementation}.2.
\end{proof}

  \bibliographystyle{ieeetr}
\bibliography{LaplacianCompletion}

\end{document}